\DeclareMathAccent{\wtilde}{\mathord}{largesymbols}{"65}
\DeclareSymbolFont{tipa}{T3}{cmr}{m}{n}
\DeclareMathAccent{\invbreve}{\mathalpha}{tipa}{16}
\theoremstyle{plain}
\newtheorem{lem}{Lemma}
\newtheorem{corollary}{Corollary}
\newtheorem{proposition}{Proposition}
\begin{document}
\vskip 1cm

\thispagestyle{empty} \vskip 1cm


\title{{Asynchronous Data Transmission over \\Gaussian Interference Channels with \\Stochastic Data Arrival }}
\author{ Kamyar Moshksar\\
\small Department of Pure Mathematics \\University of Waterloo\\ Waterloo, ON, Canada, N2L 3G1\\e-mail: kmoshksar@uwaterloo.ca} \maketitle

\begin{abstract}
This paper addresses a Gaussian interference channel with two transmitter-receiver~(Tx-Rx) pairs under stochastic data arrival~(GIC-SDA). Information bits arrive at the transmitters according to independent and asynchronous Bernoulli processes~(Tx-Tx~asynchrony). Each information source turns off after generating a given total number of bits. The transmissions are \textit{asynchronous} (Tx-Rx~asynchrony) in the sense that each Tx sends a codeword to its Rx immediately after there are enough bits available in its buffer. Such asynchronous style of transmission is shown to significantly reduce the transmission delay in comparison with the existing Tx-Rx synchronous transmission schemes. The receivers learn the activity frames of both transmitters by employing sequential joint-typicality detection. As a consequence, the GIC-SDA under Tx-Rx asynchrony is represented by a standard GIC with state known at the receivers. The cardinality of the state space is $\binom{2N_1+2N_2}{2N_2}$ in which $N_1, N_2$ are the numbers of transmitted codewords by the two transmitters. Each realization of the state imposes two sets of constraints on $N_1, N_2$ referred to as the geometric and reliability constraints.  In a scenario where the transmitters are only aware of the statistics of Tx-Tx~asynchrony, it is shown how one designs $N_1,N_2$ to achieve target transmission rates for both users and minimize the probability of unsuccessful decoding. An achievable region is characterized for the codebook~rates in a two-user GIC-SDA under the requirements that the transmissions be immediate and the receivers treat interference as noise. This region is described as the union of uncountably many polyhedrons and is in general disconnected and non-convex due to infeasibility of time~sharing. Special attention is given to the symmetric case where closed-form expressions are developed for the achievable codebook~rates.   
\end{abstract}
\section{Introduction}
\subsection{Summary of prior art}
The interference channel is the basic building block in modelling ad hoc wireless networks of separate Tx-Rx pairs. The Shannon capacity region of interference channels has been unknown for more than thirty years. It is shown in \cite{et1} that the classical random coding scheme developed by Han and Kobayashi~\cite{HK} achieves within one bit of the capacity region of the two-user Gaussian interference channel~(GIC) for all ranges of channel coefficients and signal-to-noise ratio~(SNR) values. 

One key assumption made in \cite{et1} and the references therein is that data is constantly available at the encoders. This is in contrast with the reality of communication networks where the incoming bit streams at the transmitters are burst-like in nature. A detailed account about why information theory has not yet left a mark in communication networks and several areas of common interest in both fields is given in~\cite{eph-haj}.  Recently, the authors in~\cite{sim} study a cognitive interference channel~\cite{dev} where the data arrival is burst-like at both the primary and secondary transmitters. The secondary transmitter regulates its average transmission power in order to maximize its throughput subject to two conditions, namely, the throughput of the primary user does not fall below a given threshold and the queues of both transmitters are stable. Reference~\cite{dash} studies a cognitive interference channel where the packets arrive at the primary transmitter according to a Poisson process. A protocol called sense-and-send is developed~in~\cite{dash} where the secondary transmitter breaks its message into several bursts and switches between sensing and sending while transmitting these bursts. It is shown that for high rates of data arrival at the primary transmitter, a decaying power profile at the secondary transmitter outperforms a strategy with constant power level during the transmission mode. 

The authors in~\cite{khude} address the burst-like~nature of interference using the concept of degraded message sets in multiuser information theory. In this setting, two sets of messages are mapped into the set of codewords at the transmitter side. If interference is present, the receiver decodes only one of the messages and if there is no interference, both messages are decoded. A similar idea of transmitting information in layers is adopted in~\cite{minero} in the context of the random access channel.

A two-user GIC is studied in~\cite{wang1} where the signal at each transmitter is randomly present or absent from time~slot to time~slot according to stationary and ergodic Bernoulli processes. This setup is referred to as a GIC with bursty traffic for which the capacity region with or without feedback is characterized within a constant gap. Reference~\cite{kim} assumes the same model for a GIC with burst-like traffic as in~\cite{wang1} and demonstrates the benefits in utilizing an in-band relay and multiple antennas at the transmitters and receivers.

 Another major assumption in \cite{et1} is that both transmitters are block-synchronous, i.e., they start to transmit their codewords in the same symbol interval or time slot in a time-slotted channel. This assumption is not necessarily valid in practice due to the fact that different transmitters do not become active simultaneously. Information theoretic studies on a network of block-asynchronous users is investigated in \cite{cov,hui} in the context of the Multiple Access Channel~(MAC) with two transmitters. In case the amount of mutual delay between the transmitted codewords by the two transmitters in negligible compared to the length of codewords, it is shown in \cite{cov} that the capacity region of a block-asynchronous~MAC coincides with the capacity region of a block-synchronous~MAC. If the amount of delay is comparable with the block-length, the authors in \cite{hui} show that the capacity region is still similar to the capacity region of a block-synchronous~MAC except that time-sharing is no longer feasible.  The collision MAC without feedback is introduced in \cite{massey1} where it is shown how transmitters can jointly design the so-called protocol sequences in order to reliably communicate in the presence of unknown delays between the blocks of any two transmitters. An approach is taken in \cite{calvo} to study a two-user interference channel with block-asynchronous users. Using a general formula for capacity in \cite{verdu}, the authors derive an expression for the capacity region of such channels. This expression is not a single-letter formulation of the capacity region and hence, it is not computable. By deriving a single-letter inner bound, the authors present an analysis of the block-asynchronous two-user interference channel where the main focus is to show that using Gaussian codewords is in general suboptimal. 

 \begin{figure}[t]
  \centering
  \includegraphics[scale=.7] {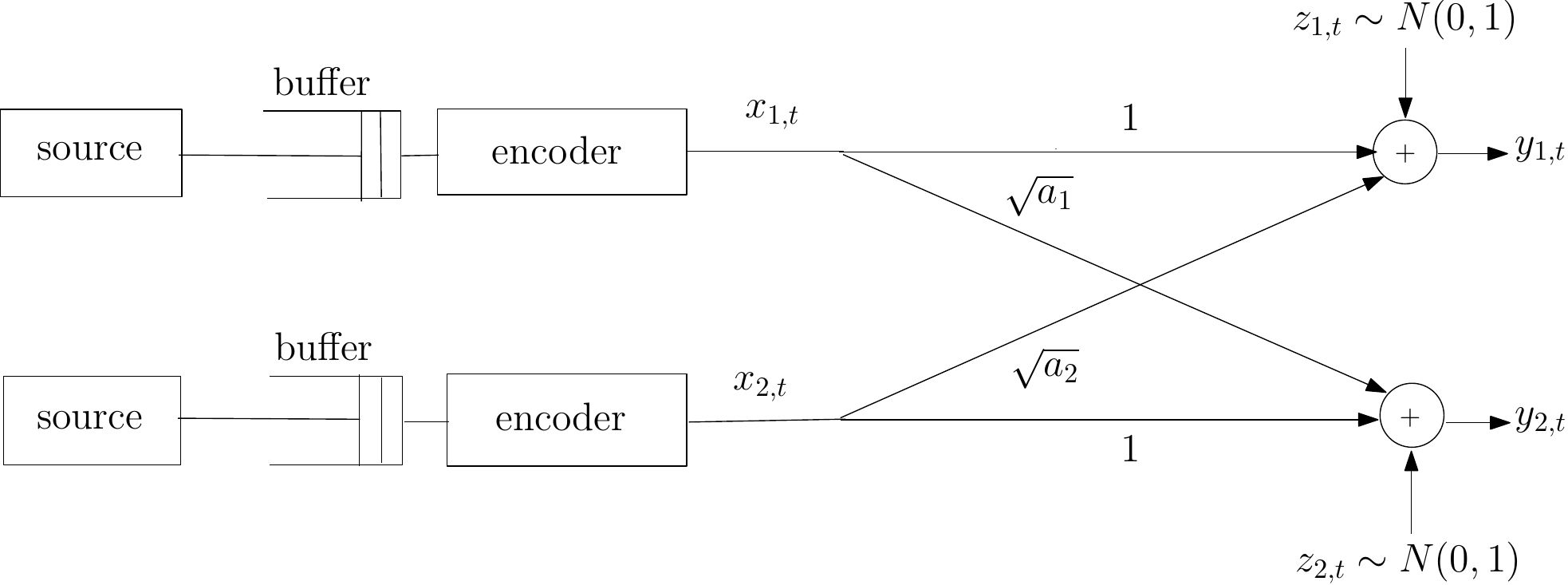}
  \caption{This figure shows a two-user GIC with stochastic data arrival~(GIC-SDA). The source of Tx~$i$ generates $k_i$ bits per time slot with a probability of $q_i$ and turns off after a total of $k_in$ bits are generated. The links from each transmitter to each receiver are modelled by static and non-frequency selective coefficients. The signals at the transmitters are subject to an average power constraint and the noise at each receiver is an AWGN process with unit variance.}
  \label{fig1}
 \end{figure} 
 
\subsection{Contributions}
We study a two-user GIC with stochastic data arrival (SDA) shown in Fig.~\ref{fig1}. The input bit streams at the transmitters are independent and asynchronous Bernoulli processes. The information source  at each transmitter turns off after randomly generating a given total number of bits. Let us consider two transmission schemes: 

1- Each transmitter begins to send a codeword only at a predetermined set of time slots $t_1<t_2<\cdots$ agreed upon between both Tx-Rx pairs. Transmission of a codeword at time instant $t_1$ is subject to availability of enough data bits to represent a codeword. If the number of available information bits is not enough, the transmitter waits until the earliest time slot $t_m$ for $m\geq 2$ when enough data is gathered in its buffer. This scheme is introduced in \cite{gamkim} which we refer to as the Tx-Rx synchronous scheme. 

2- Each transmitter begins to send a codeword immediately when there are enough bits gathered in its buffer. In this case, each receiver does not know a priori the time slots when the codewords are dispatched by the transmitters. We refer to this style of transmission immediate or Tx-Rx asynchronous. 

Under both the Tx-Rx synchronous and the Tx-Rx asynchronous schemes, the data sent by both transmitters can potentially look like intermittent bursts along the time axis. In the synchronous scheme, all symbols in a transmitted codeword are received either in the absence of interference or in the presence of interference. In the asynchronous scheme, however, a number of symbols per transmitted codeword may be received interference-free while the rest are received in the presence of interference. 

We compare both schemes in terms of the underlying \textit{relative delay} induced in the transmission process. More precisely, if the first symbol of the $j^{th}$ transmitted codeword is sent at time slots $T^{(j)}_{\mathrm{synch}}$ and $T^{(j)}_{\mathrm{asynch}}$  under the synchronous and asynchronous schemes, respectively, then there exists a $\delta>0$ such that the probability of $\frac{T^{(j)}_{\mathrm{synch}}-T^{(j)}_{\mathrm{asynch}}}{T^{(j)}_{\mathrm{asynch}}}>\delta$ occurring grows to $1$ in the asymptote of large codeword length. 

Applying sequential joint typicality decoding~\cite{Chandar11},  the receivers estimate and learn about the positions of the transmitted bursts along the time axis. We study fundamental constraints on the codebook~rates in order to guarantee immediate transmission at the transmitters and successful decoding at the receivers. For simplicity of presentation, we employ random Gaussian codebooks and assume all receivers treat interference as noise. The achievable region for codebook~rates is characterized as the union of uncountably many polyhedrons which is in general non-convex and disconnected due to infeasibility of time sharing. In a setup where the exact asynchrony between the input bit streams is unknown to the transmitters, the number of transmitted codewords at each transmitter is optimized to achieve a target transmission rate and minimize the probability of unsuccessful decoding at the receivers. 

Our analysis directly incorporates the burst-like nature of incoming data in the standard information-theoretic framework for reliable communications.     

\subsection{Notation}
Random quantities are shown in bold such that $\boldsymbol{x}$ with realizations $x$. Sets and in particular, events are shown using capital calligraphic or cursive letters such that $\mathcal{A}$ or $\mathscr{A}$. The set difference for two sets $\mathcal{A}$ and $\mathcal{B}$ is denoted by $\mathcal{A}\setminus\mathcal{B}$. The underlying probability measure and the expectation operator are denoted by $\mathbb{P}(\cdot)$ and $\mathbb{E}[\cdot]$, respectively. For a real number $x$, the floor of $x$ is $\lfloor x\rfloor$ and the ceiling of $x$ is $\lceil x\rceil$. A binomial random variable with parameters $n$ (number of trials) and $p$ (probability of success) is shown by $\mathrm{Bin}(n,p)$. A negative binomial random variable with parameters $k$ and $p$, denoted by $\mathrm{NB}(k,p)$, is defined to be the number of trials until $k$ successes are observed where $p$ is the probability of success. The probability density function~(PDF) of a Gaussian random variable with zero mean and variance $\sigma^2$ is shown by $\mathrm{g}(x;\sigma^2):=\frac{1}{\sqrt{2\pi}\sigma}e^{-\frac{x^2}{2\sigma^2}}$. The differential entropy of a continuous random variable $\boldsymbol{x}$ with PDF $p(\cdot)$ is shown by $h(\boldsymbol{x})$ or $h(p)$. For two functions $f$ and $g$ of a real variable $x$, we write $f=\Theta(g)$ if there is $x_0$ and constants $c_1$ and $c_2$ such that $c_1g(x)\leq f(x)\leq c_2g(x)$ for $x\ge x_0$. We define 
\begin{eqnarray}
\label{cap_fun11}
\mathsf{C}(x):=\frac{1}{2}\log(1+x). 
\end{eqnarray}
 All logarithms have base~2. Throughout the paper, 
 \begin{itemize}
  \item Any equality or inequality involving random variables is understood in the ``almost sure'' sense unless otherwise stated. We avoid repeating ``almost surely'' throughout the paper. 
  \item ``SLLN'' stands for ``the strong law of large numbers''. 
  \item The symbol ``$:=$'' means ``is defined by''
\end{itemize}

 \section{System Model}
 \subsection{Signalling and channel model}

We consider a GIC with two users of separate~Tx-Rx pairs shown in Fig.\!~\ref{fig1}. The channel from Tx~$i$ to Rx~$j$ is modelled by a static and non-frequency selective coefficient $h_{i,j}$ where $h_{1,1}=h_{2,2}=1$, $h_{2,1}=\sqrt{a_2}$ and $h_{1,2}=\sqrt{a_1}$.  The channel from each transmitter to each receiver is slotted in time and the time slots on any of the four channels from different transmitters to different receivers coincide. Therefore, the two users are symbol-synchronous. Throughout the paper, we show the time slots using the index $t=1,2,\cdots$. If we are describing a property for user~$i$, the index $i'$ refers to the other user, i.e., $i'=3-i$ for $i=1,2$. Denoting the signal at Tx~$i$ in time slot $t$ by $x_{i,t}$, we impose the average power constraint
\begin{equation}
\label{power_11}
Q_i:=\frac{1}{|\mathcal{T}_i|}\sum_{t\in\mathcal{T}_i}x_{i,t}^2\leq P_i,
\end{equation}
where $\mathcal{T}_i$ is the communication period of interest for Tx~$i$ and $|\mathcal{T}_i|$ denotes the length of $\mathcal{T}_i$. The signal $y_{i,t}$ received at Rx~$i$ in time slot $t$ is given by 
\begin{equation}
\label{system}
      y_{i,t}=x_{i,t}+\sqrt{a_{i'}}\,x_{i',t}+z_{i,t},\,\,\,\,i=1,2,            
\end{equation}
where $z_{i,t}$ is the additive noise at Rx~$i$ in time slot $t$. The noise at each receiver is an additive white Gaussian noise~(AWGN) process with unit variance.  
\subsection{Data arrival}
 Each transmitter is connected to an information source through a buffer as shown in Fig.~\ref{fig1}. At the ``beginning'' of time slot $t=1$, the buffers are empty.  At the ``end'' of each time slot, a number of $k_i$ bits arrive at the buffer of Tx~$i$ with a probability of $q_i$ or no bits arrive with a probability of $1-q_i$.\footnote{All results in the paper are still valid as long as the incoming bit stream is a random process with independent and identically distributed inter-arrival periods with finite mean value.}   The rate of data arrival at Tx~$i$ is denoted by 
 \begin{eqnarray}
\lambda_i:= k_iq_i. 
\end{eqnarray}
 The bit streams entering the buffers of the two users are independent processes. Source~$i$ is turned off permanently after it generates a total number of $k_in$ bits where $n$ runs in the set of positive integers. To transmit its data,  Tx~$i$ employs a codebook consisting of $2^{\lfloor n\eta_i \rfloor}$ codewords of length 
 \begin{equation}
\label{theta_11}
n_i:=\lfloor n\theta_i\rfloor,
\end{equation}
 where $\eta_i,\theta_i>0$. Note that the codebook~rate for Tx~$i$ is $\frac{\lfloor n\eta_i\rfloor}{\lfloor n\theta_i\rfloor}$. Assuming $\eta_i$ has the particular expression 
 \begin{equation}
\label{ }
\eta_i=\frac{k_i}{N_i},
\end{equation}
 for integers $N_1$ and $N_2$, we see that  Tx~$i$ sends a total number of $N_i$ codewords where each codeword represents $\lfloor n\eta_i\rfloor=\lfloor \frac{k_in}{N_i}\rfloor$ of the bits stored in its buffer. The number of bits that are not transmitted is equal to $k_in-N_i\lfloor \frac{k_in}{N_i}\rfloor\leq N_i$ which is negligible in the asymptote of large $n$.  Before a codeword is transmitted over the channel, each transmitter sends a preamble sequence  consisting of $n'=o(n)$  symbols.\footnote{This means $\lim_{n\to\infty}\frac{n'}{n}=0$.} Each preamble sequence enables the receivers to identify the arrival of a codeword. Details on the preamble sequences and how they are utilized are provided in Section~III.
  \begin{figure}[t]
  \centering
  \includegraphics[scale=.7] {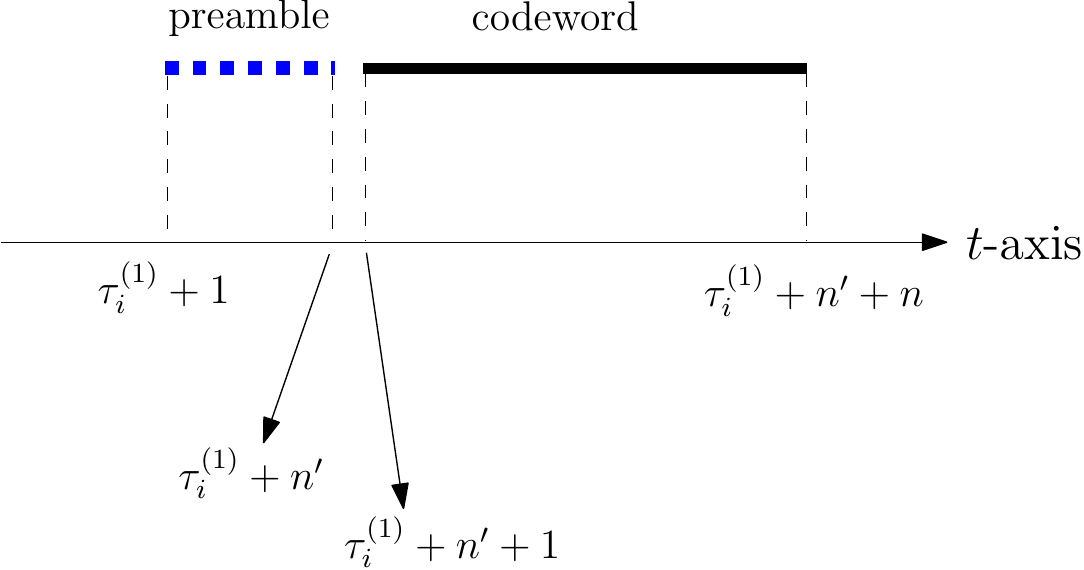}
  \caption{This figure shows the first transmission burst of Tx~$i$ along the $t$-axis. At the end of time slot $\tau_i^{(1)}$ the number of bits in the buffer of Tx~$i$ become larger than or equal to $\lfloor n\eta_i\rfloor$ for the first time. A number of $\lfloor n\eta_i\rfloor$ bits in the buffer of Tx~$i$ are represented by a codeword which together with the preamble sequence are sent during time slots $\tau^{(1)}_i+1,\cdots, \tau^{(1)}_i+n'+n_i$. }
  \label{bufpic2}
 \end{figure}

Let $b_{i,t}$ be the number of bits in the buffer of user $i$ at the ``beginning'' of time slot $t$, $b'_{i,t}$ be the number of bits entering the buffer of user $i$ at the ``end'' of  time slot $t$ and $\tau_i^{(1)}$ be the smallest index $t\geq 1$ such that $b_{i,t}+b'_{i,t}\geq \lfloor n\eta_i\rfloor$. At time slot $t=\tau^{(1)}_i+1$, a number of $\lfloor n\eta_i\rfloor$ bits in the buffer of Tx~$i$ are represented by a codeword which together with the preamble sequence are sent during time slots $\tau^{(1)}_i+1,\cdots, \tau^{(1)}_i+n'+n_i$. This is referred to as a \textit{transmission burst} or simply a burst as shown in Fig.~\ref{bufpic2}. These $\lfloor n\eta_i\rfloor$ bits are erased from the buffer of Tx~$i$, i.e., 
\begin{equation}
\label{ }
b_{i,\tau^{(1)}_i+1}=b_{i,\tau^{(1)}_i}+b'_{i,\tau^{(1)}_i}-\lfloor n\eta_i\rfloor.
\end{equation}
 Let $\tau_i^{(2)}$ be the smallest index $t\geq \tau^{(1)}_i+n'+n_i$ such that $b_{i,t}+b'_{i,t}\geq \lfloor n\eta_i\rfloor$. At time slot $t=\tau^{(2)}_i$, a second group of $\lfloor n\eta_i\rfloor$ bits in the buffer are scheduled for transmission. These bits are represented by a codeword which together with the preamble sequence are sent during time slots $\tau^{(2)}_i+1,\cdots, \tau^{(2)}_i+n'+n_i$ and we have 
\begin{equation}
\label{ }
b_{i,\tau^{(2)}_i+1}=b_{i,\tau^{(2)}_i}+b'_{i,\tau^{(2)}_i}-\lfloor n\eta_i\rfloor.
\end{equation}
 In general, $\tau_i^{(j)}$ is defined by
 \begin{equation}
\label{buffer1}
\tau_i^{(j)}=\min\left\{t\geq \tau_{i}^{(j-1)}+n'+n_i: b_{i,t}+b'_{i,t}\geq \lfloor n\eta_i\rfloor\right\}.
\end{equation}
At time slot $\tau^{(j)}_i$, a number of $\lfloor n\eta_i\rfloor$ bits in the buffer of Tx~$i$ are represented by a codeword which together with the preamble sequence are sent during time slots $\tau^{(j)}_i+1,\cdots, \tau^{(j)}_i+n'+n_i$. Moreover, 
\begin{equation}
\label{buffer2}
b_{i,\tau^{(j)}_i+1}=b_{i,\tau^{(j)}_i}+b'_{i,\tau^{(j)}_i}-\lfloor n\eta_i\rfloor.
\end{equation}
This style of transmission is \textit{Tx-Rx asynchronous} in the sense that Rx~$i$ does not know a priori the time slots $\tau_i^{(1)}+1, \tau_i^{(2)}+1,\cdots$ when Tx~$i$ begins to send its bursts.

A few remarks are in order:
\begin{enumerate}[(i)]
  \item  The Tx-Rx~asynchronous transmission considered in this paper is in contrast to the Tx-Rx~synchronous scheme\footnote{See chapter~24 on page~600.}  studied in \cite{gamkim} in the context of networking and information theory. In this scheme, Tx~$i$ sends its codewords only at time slots $mn_i+1$ where $m\geq 1$ is an integer.\footnote{The description provided here for the scheme in \cite{gamkim} is given in terms of the notations introduced in this paper. Moreover, the communication scenarios studied in \cite{gamkim} are the point~to~point channel and the multiple access channel.} The so-called augmented codebook of Tx~$i$ consists of $2^{\lfloor n\eta_i\rfloor}$ \textit{data codewords} of length $n_i$ and one additional codeword referred to as the \textit{null codeword} with the same length~$n_i$. At the ``end'' of time slot $mn_i$, if there are at least $\lfloor n\eta_i\rfloor$ bits in the buffer, a data codeword is transmitted over the channel during time slots $mn_i+1,\cdots,(m+1)n_i$. If the number of bits at the ``end'' of time slot $mn_i$ is less than $\lfloor n\eta_i\rfloor$, the null codeword is transmitted over the channel during time slots $mn_i+1,\cdots,(m+1)n_i$ and Tx~$i$ repeats this process at time slot $(m+1)n_i$. Transmission of the null codeword facilitates the synchronization between a receiver and its corresponding transmitter. Lemma~24.1 in~\cite{gamkim} guarantees that the buffer of Tx~$i$ is stable, i.e., $\mathrm{sup}_{t\geq 0}\mathbb{E}[\boldsymbol{b}_{i,t}]<\infty$, if~and~only~if 
  \begin{equation}
\label{buffstab}
\mu_i:=\frac{\eta_i}{\lambda_i}=\frac{1}{N_iq_i}>\theta_i.
\end{equation}
 In the scheme considered in this paper, stability of the buffers is not an issue because Tx~$i$ only transmits a finite number $k_in$ of bits and hence, the backlog~(buffer content) is bounded from above by $k_in$ at any time slot. However, we still impose the constraint in~(\ref{buffstab}) for $N_i>1$ because it guarantees immediate data transmission described in the next remark.  
  \item  It is desirable that the transmissions be \textit{immediate} in the following sense: 
  
  \textit{We say the transmissions of Tx~$i$ are immediate if Tx~$i$ sends a codeword immediately after there are at least $\lfloor n\eta_i\rfloor$ bits stored in its buffer.}
  
 Such immediate transmission is not possible if a previously scheduled codeword is not completely transmitted. More precisely, let 
  \begin{equation}
\label{xi1}
\widetilde{\tau}_i^{(1)}:=\min\left\{t\geq \tau^{(1)}_i+1: b_{i,t}+b'_{i,t}\geq \lfloor n\eta_i\rfloor\right\}.
\end{equation}
Then $\widetilde{\tau}^{(1)}_i$ is the earliest time slot such that the buffer of Tx~$i$ contains at least $\lfloor n\eta_i\rfloor$ bits after the transmission of the first burst begun at time slot $\tau^{(1)}_i+1$. If $\widetilde{\tau}^{(1)}_i\leq \tau^{(1)}_i+n'+n_i-1$, these $\lfloor n\eta_i\rfloor$ bits must stay in the buffer until time slot $\tau^{(1)}_i+n'+n_i$ when the transmission of the first scheduled codeword is complete. In~Appendix A it is shown that if (\ref{buffstab}) holds, then 
\begin{equation}
\label{buffer3}
\mathbb{P}\big(\boldsymbol{\widetilde{\tau}}^{(1)}_i\leq \boldsymbol{\tau}^{(1)}_i+n'+n_i-1\big)\leq e^{-c_in},
\end{equation}
 where $c_i>0$ is a constant that does not depend on $n$. In virtue of (\ref{buffer3}) and for sufficiently large $n$, the second transmission is immediate with arbitrarily large probability if $\mu_i>\theta_i$.  Next,  define 
 \begin{equation}
\label{xi2}
\widetilde{\tau}_i^{(2)}:=\min\left\{t\geq \tau^{(2)}_i+1: b_{i,t}+b'_{i,t}\geq \lfloor n\eta_i\rfloor\right\}.
\end{equation}
 Then $\widetilde{\tau}^{(2)}_i$ is the earliest time slot such that the buffer of Tx~$i$ contains at least $\lfloor n\eta_i\rfloor$ bits after the transmission of the second burst begun at time slot $\tau^{(2)}_i+1$. If $\widetilde{\tau}^{(2)}_i\leq \tau^{(2)}_i+n'+n_i-1$, then these $\lfloor n\eta_i\rfloor$ bits must stay in the buffer until time slot $\tau^{(2)}_i+n'+n_i$ when the transmission of the second scheduled codeword is complete. Similar to (\ref{buffer3}), 
\begin{equation}
\label{buffer4}
\mathbb{P}\big(\boldsymbol{\widetilde{\tau}}^{(2)}_i\leq \boldsymbol{\tau}^{(2)}_i+n'+n_i-1\big|\,\boldsymbol{\widetilde{\tau}}_i^{(1)}\geq \boldsymbol{\tau}_i^{(1)}+n'+n_i \big)\leq e^{-c_i n}
\end{equation}
holds under the condition $\mu_i>\theta_i$. By (\ref{buffer3}) and (\ref{buffer4}), the probability that both the second and third transmissions are immediate is bounded from above by $2e^{-c_in}$. Simple induction shows that the probability of all $N_i$ transmissions by Tx~$i$ being immediate is bounded from above by $N_ie^{-c_in}$.
\begin{figure}[t]
  \centering
  \includegraphics[scale=.7] {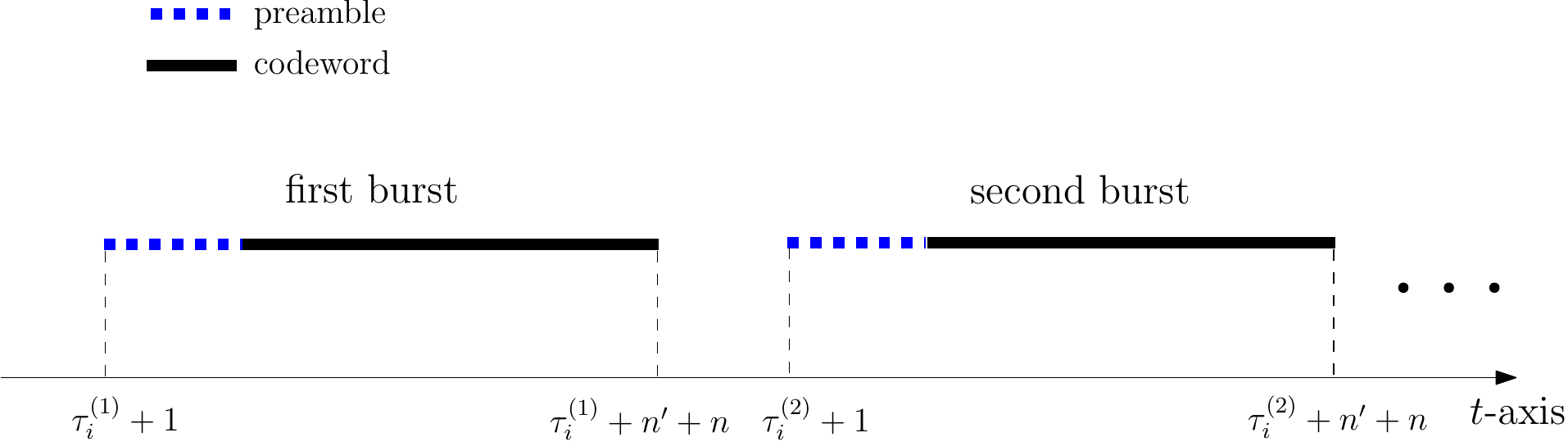}
  \caption{If $\mu_i>\theta_i$, the signals sent by Tx~$i$ look like intermittent bursts along the $t$-axis with high probability.  }
  \label{bufpic1}
 \end{figure} 
\item By the previous remark and under the constraint in (\ref{buffstab}), the signals sent by Tx~$i$ look like intermittent bursts along the $t$-axis with high probability as shown in Fig.~\ref{bufpic1}.  After sending a codeword, the transmitter must wait to receive enough bits in its buffer to transmit the next codeword. In contrast to~\cite{gamkim}, no ``null codeword'' is utilized in this paper, i.e., Tx~$i$ stays silent if it does not have enough bits in its buffer to represent a codeword. 
  \item   Since transmissions are immediate, $\boldsymbol{\tau}_{i}^{(j)}=\boldsymbol{\widetilde{\tau}}_i^{(j)}$ with high probability and one may redefine $\tau_{i}^{(j)}$ in~(\ref{buffer1})~by
 \begin{eqnarray}
 \label{buffer}
\tau_i^{(0)}:=0,\,\,\,\,\tau^{(j)}_i:=\min\left\{t\geq \tau^{(j-1)}_i+1: b_{i,t}+b'_{i,t}\ge\lfloor n\eta_i\rfloor\right\},\,\,\,\,j\ge1.
\end{eqnarray}
  Without loss of generality, let $n$ be a multiple of $N_1N_2$. Then $\lfloor n\eta_i\rfloor $ becomes divisible by $k_i$ for both $i=1,2$ and the inequality $ b_{i,t}+b'_{i,t}\ge \lfloor n\eta_i\rfloor$ in~(\ref{buffer}) can be replaced by $b_{i,t}+b'_{i,t}= \lfloor n\eta_i\rfloor$, i.e., 
    \begin{eqnarray}
 \label{buffer111}
\tau_i^{(0)}:=0,\,\,\,\,\tau^{(j)}_i:=\min\left\{t\geq \tau^{(j-1)}_i+1:  b_{i,t}+b'_{i,t}=\lfloor n\eta_i\rfloor\right\},\,\,\,\,j\ge1.
\end{eqnarray}
Moreover, the buffer dynamics can be written as 
\begin{equation}
\label{buffer321}
b_{i,t+1}=\left\{\begin{array}{cc}
      b_{i,t}+b'_{i,t} & \textrm{$b_{i,t}+b'_{i,t}<\lfloor n\eta_i\rfloor$} \\
   0   &   b_{i,t}+b'_{i,t}= \lfloor n\eta_i\rfloor 
\end{array}\right..
\end{equation} 
\end{enumerate}
The following proposition compares the times when Tx~$i$ begins to send its $j^{th}$ burst under the immediate Tx-Rx~asynchronous scheme considered in this paper and the Tx-Rx~synchronous scheme in~\cite{gamkim}:
\begin{proposition}
\label{prop_opt11}
Assume $\mu_i$ is not an integer multiple of $\theta_i,\frac{\theta_i}{2},\cdots,\frac{\theta_i}{N_i}$. Let $\boldsymbol{\varsigma}_i^{(j)}+1$ for $1\leq j\leq N_i$ be the time slot that Tx~$i$ begins to send its $j^{th}$ codeword under the Tx-Rx~synchronous scheme in~\cite{gamkim}. There exists $\delta>0$ such that 
\begin{eqnarray}
\label{sugar_11}
\lim_{n\to\infty}\mathbb{P}\big(\boldsymbol{\varsigma}_i^{(j)}>(1+\delta)\boldsymbol{\tau}_i^{(j)}\big)=1,
\end{eqnarray}
for any $1\leq j\leq N_i$.
\end{proposition}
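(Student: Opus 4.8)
The plan is to analyze the asymptotics of both $\boldsymbol{\tau}_i^{(j)}$ and $\boldsymbol{\varsigma}_i^{(j)}$ via the strong law of large numbers and then compare them. First I would pin down $\boldsymbol{\tau}_i^{(j)}$. Under the immediate scheme, the $j^{th}$ codeword requires $j\lfloor n\eta_i\rfloor$ bits to have arrived (since each burst removes exactly $\lfloor n\eta_i\rfloor$ bits from the buffer, by (\ref{buffer321})), so $\boldsymbol{\tau}_i^{(j)}$ is essentially the number of time slots until $j\lfloor n\eta_i\rfloor = j\lfloor \frac{k_in}{N_i}\rfloor$ bits have accumulated. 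Since bits arrive in batches of $k_i$ with probability $q_i$ per slot, this is $k_i$ times a negative-binomial-type waiting time for $j\lfloor n\eta_i\rfloor/k_i = jn/N_i$ batch-successes (using that $n$ is a multiple of $N_1N_2$, per (\ref{buffer111})), plus the forced silent gaps of length $n'+n_i$ between consecutive completed bursts. By SLLN, the waiting time for $m$ successes in a Bernoulli($q_i$) process is $\sim m/q_i$, so $\boldsymbol{\tau}_i^{(j)} \sim \frac{jn}{N_iq_i} + (j-1)(n'+n_i) \sim n\big(\frac{j}{N_iq_i} + (j-1)\theta_i\big)$ almost surely, using $n'=o(n)$ and $n_i\sim n\theta_i$. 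Recalling $\mu_i = \frac{1}{N_iq_i}$, this is $\boldsymbol{\tau}_i^{(j)} \sim n\big(j\mu_i + (j-1)\theta_i\big)$.

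Next I would analyze $\boldsymbol{\varsigma}_i^{(j)}$. In the synchronous scheme of \cite{gamkim}, codewords can only start at slots $mn_i+1$; a data codeword is sent in block $m$ iff at least $\lfloor n\eta_i\rfloor$ bits are present at the end of slot $mn_i$, otherwise the null codeword occupies that block and the buffer is \emph{not} flushed. So the $j^{th}$ data codeword goes out in the smallest block $m$ such that the accumulated arrivals by slot $mn_i$ minus $(j-1)\lfloor n\eta_i\rfloor$ (bits removed by the previous $j-1$ data codewords) is $\geq \lfloor n\eta_i\rfloor$, i.e. cumulative arrivals $\geq j\lfloor n\eta_i\rfloor$. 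The cumulative arrivals by slot $mn_i$ are $\sim mn_i\lambda_i = m\lfloor n\theta_i\rfloor k_iq_i \sim mn\theta_i\lambda_i$ a.s. by SLLN, so the condition becomes $m\theta_i\lambda_i \gtrsim j\eta_i$, i.e. $m \gtrsim \frac{j\eta_i}{\theta_i\lambda_i} = \frac{j\mu_i}{\theta_i}$. Thus $\boldsymbol{\varsigma}_i^{(j)} \sim \lceil \frac{j\mu_i}{\theta_i}\rceil n_i \sim n\theta_i\lceil \frac{j\mu_i}{\theta_i}\rceil$ a.s. The non-integrality hypothesis on $\mu_i$ (that $\mu_i$ is not an integer multiple of any $\frac{\theta_i}{\ell}$ for $1\le \ell\le N_i$, equivalently $\frac{j\mu_i}{\theta_i}\notin\mathbb{Z}$ for $1\le j\le N_i$) is exactly what guarantees $\lceil \frac{j\mu_i}{\theta_i}\rceil > \frac{j\mu_i}{\theta_i}$ strictly, which will be essential for a strict comparison.

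Finally I would compare the two limiting slopes (coefficients of $n$). We need $\theta_i\lceil \frac{j\mu_i}{\theta_i}\rceil > (1+\delta)\big(j\mu_i + (j-1)\theta_i\big)$ for some fixed $\delta>0$ uniform in $1\le j\le N_i$. Write $\lceil \frac{j\mu_i}{\theta_i}\rceil = \frac{j\mu_i}{\theta_i} + \epsilon_j$ with $\epsilon_j\in(0,1)$; then the left side is $j\mu_i + \theta_i\epsilon_j$. So the claim reduces to showing $j\mu_i + \theta_i\epsilon_j > (1+\delta)(j\mu_i + (j-1)\theta_i)$, i.e. that $\theta_i\epsilon_j - (j-1)\theta_i > \delta(j\mu_i+(j-1)\theta_i)$. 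This is where the real obstacle lies: for $j\ge 2$ the term $-(j-1)\theta_i$ is \emph{negative and large}, so naively the synchronous start time looks \emph{smaller}, not larger — which would contradict the proposition. I expect the resolution is that I have mis-accounted for the silent gaps in the immediate scheme, or (more likely) that in the synchronous scheme the relevant comparison index is shifted: under \cite{gamkim} each of the previous $j-1$ data codewords occupies a full block of length $n_i$ \emph{on the channel}, and between consecutive data codewords there may be additional null-codeword blocks, so $\boldsymbol{\varsigma}_i^{(j)}$ actually counts $\gtrsim \frac{j\mu_i}{\theta_i}$ blocks \emph{each of length $n_i$}, whereas in the immediate scheme the $j$ bursts occupy only $j(n'+n_i) \sim jn\theta_i$ channel slots total. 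The correct bookkeeping: $\boldsymbol{\tau}_i^{(j)}$ is the first slot the buffer hits $j\lfloor n\eta_i\rfloor$ with the $j-1$ forced gaps interleaved, giving slope $j\mu_i + (j-1)\theta_i$ as above; but $\boldsymbol{\varsigma}_i^{(j)}$ has slope $\theta_i\lceil \frac{j\mu_i}{\theta_i}\rceil$, and since the stability constraint (\ref{buffstab}) gives $\mu_i > \theta_i$, we have $\frac{\mu_i}{\theta_i}>1$ and hence $\frac{j\mu_i}{\theta_i} \geq j + (j-1)(\frac{\mu_i}{\theta_i}-1) \cdot\frac{?}{} $ — the point being that $\theta_i\lceil\frac{j\mu_i}{\theta_i}\rceil \geq \theta_i\frac{j\mu_i}{\theta_i} = j\mu_i$, and I must show $j\mu_i$ (plus the strictly positive ceiling slack $\theta_i\epsilon_j$) strictly exceeds $j\mu_i + (j-1)\theta_i$ times $(1+\delta)$; since $\mu_i>\theta_i$ gives headroom, I would argue $\frac{j\mu_i}{\theta_i} \geq j\frac{\mu_i}{\theta_i}$ and bound $\lceil\cdot\rceil$ below by $\frac{j\mu_i}{\theta_i}+\epsilon_j$ with $\epsilon_j$ bounded away from $0$ uniformly (possible since there are only finitely many $j\le N_i$ and each $\frac{j\mu_i}{\theta_i}$ is a fixed non-integer). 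Putting $\delta := \min_{1\le j\le N_i} \frac{\theta_i\lceil j\mu_i/\theta_i\rceil}{j\mu_i+(j-1)\theta_i} - 1 > 0$ (strictly positive provided the slope inequality holds, which I'd verify reduces to $\lceil j\mu_i/\theta_i\rceil > j\mu_i/\theta_i + (j-1)$, a consequence of $\mu_i>\theta_i$ together with non-integrality — this is the step I'd need to check most carefully), the SLLN asymptotics for both quantities then force $\mathbb{P}(\boldsymbol{\varsigma}_i^{(j)} > (1+\delta)\boldsymbol{\tau}_i^{(j)})\to 1$.
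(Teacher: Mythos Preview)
Your overall strategy---SLLN asymptotics for both $\boldsymbol{\tau}_i^{(j)}$ and $\boldsymbol{\varsigma}_i^{(j)}$, then compare the limiting slopes---matches the paper's approach, and your computation of the synchronous-scheme slope $\boldsymbol{\varsigma}_i^{(j)}/n \to \theta_i\lceil j\mu_i/\theta_i\rceil$ is correct. The gap is in your computation of $\boldsymbol{\tau}_i^{(j)}$: the correct slope is $j\mu_i$, not $j\mu_i + (j-1)\theta_i$. Bits continue to arrive at the buffer \emph{while} a codeword is being transmitted; the transmission interval $n'+n_i$ is not dead time from the buffer's point of view. Under the redefinition in~(\ref{buffer111}), $\boldsymbol{\tau}_i^{(j)}-\boldsymbol{\tau}_i^{(j-1)}$ is simply the waiting time for another $\lfloor n\eta_i\rfloor/k_i$ Bernoulli($q_i$) successes, i.e.\ an $\mathrm{NB}(\lfloor n\eta_i\rfloor/k_i,q_i)$ variable (cf.~(\ref{nb_hello})), so $\boldsymbol{\tau}_i^{(j)}/n\to j\mu_i$ a.s. The stability condition $\mu_i>\theta_i$ in~(\ref{buffstab}) is what guarantees (with high probability) that this waiting time exceeds $n'+n_i$, so that transmissions are immediate---it does not add $\theta_i$ to the waiting time.

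The ``obstacle'' you flagged is entirely an artifact of this miscomputation. With the correct slope $j\mu_i$, the comparison is immediate: since $j\mu_i/\theta_i\notin\mathbb{Z}$ by hypothesis, $\theta_i\lceil j\mu_i/\theta_i\rceil > j\mu_i$ strictly for each fixed $j$, so one may take $\delta_j>0$ with $(1+\delta_j)j\mu_i < \theta_i\lceil j\mu_i/\theta_i\rceil$ and set $\delta=\min_{1\le j\le N_i}\delta_j$. The SLLN limits then force $\mathbb{P}\big(\boldsymbol{\varsigma}_i^{(j)}>(1+\delta)\boldsymbol{\tau}_i^{(j)}\big)\to 1$, exactly as in the paper.
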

\begin{proof}
See Appendix~B.
\end{proof} 
\textbf{Remark}- Under the assumptions in Proposition~\ref{prop_opt11}, one can prove the existence of $\delta>0$ such that $\lim_{n\to\infty}\frac{\boldsymbol{\varsigma}_i^{(j)}-\boldsymbol{\tau}_i^{(j)}}{\boldsymbol{\tau}_i^{(j)}}>\delta$ for any $1\leq j\leq N_i$ which is stronger than the statement in (\ref{sugar_11}).
\subsection{Tx-Tx asynchrony}
In the previous section the incoming bit streams at the transmitters where assumed to be synchronous in the sense that both start to run at time slot $t=1$. In practice, the activation times for these processes are different. Let Tx~$1$ and Tx~$2$ start their activity at time slots $t=\lfloor n\nu_1\rfloor$ and $t=\lfloor n\nu_2\rfloor$, respectively, where $\nu_1,\nu_2> 0$. Then (\ref{buffer111}) is rewritten as 
\begin{equation}
\label{buffer222}
\tau_i^{(0)}:=\lfloor n\nu_i\rfloor,\,\,\,\,\tau^{(j)}_i:=\min\left\{t\geq \tau^{(j-1)}_i+1:  b_{i,t}+b'_{i,t}=\lfloor n\eta_i\rfloor\right\},\,\,\,\,j\geq1.
\end{equation}
We see that $\tau^{(j)}_i$ is the smallest $t\geq \tau^{(j-1)}_i+1$ such that  Tx~$i$ receives packets of $k_i$ bits in exactly $\frac{\lfloor n\eta_i\rfloor}{k_i}$ slots among the time slots with indices $\tau^{(j-1)}_i+1,\cdots, t$. Therefore, $\boldsymbol{\tau}_i^{(j)}-\boldsymbol{\tau}_i^{(j-1)}$ is a negative binomial random variable with parameters $\frac{\lfloor n\eta_i\rfloor}{k_i}$ and $q_i$, i.e., 
\begin{equation}
\label{nb_hello}
\boldsymbol{\tau}_i^{(j)}-\boldsymbol{\tau}_i^{(j-1)}\sim\mathrm{NB}\Big(\frac{\lfloor n\eta_i\rfloor}{k_i},q_i\Big),\,\,\,\,j\ge 1.
\end{equation}
Alternatively, if $\boldsymbol{\xi}_{i,1},\cdots, \boldsymbol{\xi}_{i,j}$ is a sequence of independent $\mathrm{NB}(\frac{\lfloor n\eta_i\rfloor}{k},q)$ random variables, one can write
\begin{equation}
\label{}
\boldsymbol{\tau}_i^{(j)}=\boldsymbol{\xi}_{i,1}+\cdots+\boldsymbol{\xi}_{i,j}+\lfloor n\nu_i\rfloor-1,\,\,\,\,j\geq 1.
\end{equation}
Defining 
\begin{equation}
\label{ }
\boldsymbol{\xi}_i^{(j)}:=\boldsymbol{\xi}_{i,1}+\cdots+\boldsymbol{\xi}_{i,j},
\end{equation}
then $\boldsymbol{\xi}_i^{(j)}\sim\mathrm{NB}(\frac{j\lfloor n\eta_i\rfloor}{k_i},q_i)$ and we get our final expression for $\boldsymbol{\tau}_i^{(j)}$, i.e.,  
\begin{equation}
\label{laws11}
\boldsymbol{\tau}_i^{(j)}=\boldsymbol{\xi}_i^{(j)}+\lfloor n\nu_i\rfloor-1.
\end{equation}
We end this subsection with the following remarks:

\textbf{Remark}- Throughout the paper, $\nu_1$ and $\nu_2$ are realizations of independent and continuous random variables $\boldsymbol{\nu}_1$ and $\boldsymbol{\nu}_2$.

\textbf{Remark}- In Remark~(iv) in the previous subsection we assumed that $n$ is a multiple of $N_1N_2$. If we do not make such an assumption, each $\boldsymbol{\xi}_{i,l}$ for $1\leq l\leq j$ turns out to be a $\mathrm{NB}\big(\lfloor\frac{\lfloor n\eta_i\rfloor}{k_i}\rfloor+m_l,q_i\big)$ random variable where $m_l$ is an integer that depends on $n, k_i$ and $N_i$ and $0\leq |m_l|<k_i$. This does not affect the results in the forthcoming sections. The assumption that $n$ is a multiple of $N_1N_2$ is made only for notational simplicity. 
\subsection{The Average transmission power and the average transmission rate}
 The incoming bit stream at the buffer of Tx~$i$ starts at time slot $\lfloor n\nu_i\rfloor$ and Tx~$i$ sends the last symbol in its $N_i^{th}$ burst (last burst) at time slot $\tau_i^{(N_i)}+n'+n_i$. Therefore, the activity period $\mathcal{T}_i$ appearing in~(\ref{power_11}) is given by
 \begin{eqnarray}
 \label{period_11}
\mathcal{T}_i=\Big\{\lfloor n\nu_i\rfloor, \lfloor n\nu_i\rfloor+1,\cdots,\tau_i^{(N_i)}+n'+n_i\Big\}.
\end{eqnarray}
   The elements of each codeword  and the preamble sequence for Tx~$i$ are realizations of independent $\mathrm{N}(0,\gamma_i)$ random variables where $\gamma_i>0$ is a constant designed to ensure that the average transmission power at Tx~$i$ does not exceed $P_i$. In the following, we compute the average transmission power~$Q_i$ and the average transmission rate~$R_i$ for Tx~$i$:
\subsubsection{Average transmission power} Tx~$i$ sends out $N_i$ bursts where the $j^{th}$ burst lasts from time slot $\boldsymbol{\tau}_{i}^{(j)}+1$ to time slot $\boldsymbol{\tau}_i^{(j)}+n'+n_i$. The average transmission power $\boldsymbol{Q}_{i}$ is a random variable given by 
\begin{eqnarray}
\label{pow11}
\boldsymbol{Q}_{i}=\frac{1}{|\boldsymbol{\mathcal{T}}_i|}\sum_{t=\lfloor n\nu_i\rfloor}^{\boldsymbol{\tau}_i^{(N_i)}+n'+n_i}\boldsymbol{x}_{i,t}^2=\frac{n'+n_i}{|\boldsymbol{\mathcal{T}}_i|}\sum_{j=1}^{N_i}\frac{1}{n'+n_i}\sum_{t=\boldsymbol{\tau}_i^{(j)}+1}^{\boldsymbol{\tau}_i^{(j)}+n'+n_i}\boldsymbol{x}_{i,t}^2.
\end{eqnarray}
By SLLN,
\begin{eqnarray}
\label{pow22}
\lim_{n\to\infty}\frac{1}{n'+n_i}\sum_{l=\boldsymbol{\tau}_i^{(j)}+1}^{\boldsymbol{\tau}_i^{(j)}+n'+n_i}\boldsymbol{x}_{i,t}^2=\gamma_i,
\end{eqnarray}
for any $1\leq j\leq N_i$. Recalling the expression for $\boldsymbol{\tau}_i^{(j)}$ in (\ref{laws11}), 
\begin{eqnarray}
\label{pow33}
\frac{n'+n_i}{|\boldsymbol{\mathcal{T}}_i|}&=&\frac{n'+n_i}{\boldsymbol{\tau}_i^{(N_i)}+n'+n_i-\lfloor n\nu_i\rfloor+1}\notag\\&=&\frac{n'+n_i}{\boldsymbol{\xi}_i^{(N_i)}+n'+n_i}\notag\\
&=&\frac{n'+n_i}{\frac{N_i\lfloor n\eta_i\rfloor}{k_i}\frac{\boldsymbol{\xi}_i^{(N_i)}}{\frac{N_i\lfloor n\eta_i\rfloor}{k_i}}+n'+n_i}.
\end{eqnarray}
Since $\boldsymbol{\xi}_i^{(N_i)}\sim \mathrm{NB}(\frac{N_i\lfloor n\eta_i\rfloor}{k_i},q_i)$ is the sum of $\frac{N_i\lfloor n\eta_i\rfloor}{k_i}$ independent geometric random variables with parameter $q_i$,  we invoke SLLN one more time to write
\begin{eqnarray}
\label{pow44}
\lim_{n\to\infty}\frac{\boldsymbol{\xi}_i^{(N_i)}}{\frac{N_i\lfloor n\eta_i\rfloor}{k_i}}=\frac{1}{q_i}.
\end{eqnarray}
By (\ref{pow11}), (\ref{pow22}), (\ref{pow33}) and (\ref{pow44}), the average transmission power in the asymptote of large $n$ is given by
\begin{eqnarray}
\label{pow55}
\lim_{n\to\infty}\boldsymbol{Q}_i=\lim_{n\to\infty}\frac{(n'+n_i)N_i\gamma_i}{\frac{N_i\lfloor n\eta_i\rfloor}{k_i}\frac{1}{q_i}+n'+n_i}=\frac{\theta_iN_i\gamma_i}{\frac{N_i\eta_i}{k_iq_i}+\theta_i}=\frac{N_i\gamma_i}{1+\frac{1}{q_i\theta_i}},
\end{eqnarray}
where we replaced $\eta_i=\frac{k_i}{N_i}$ in the last step.
\subsubsection{Average transmission rate}
Tx~$i$ sends a total number of $k_in$ bits over its whole period of activity $\mathcal{T}_i$. Then the average transmission rate $\boldsymbol{R}_i$ is a random variable given by 
\begin{eqnarray}
\label{rate55}
\boldsymbol{R}_i=\frac{k_in}{|\boldsymbol{\mathcal{T}}_i|}=\frac{k_in}{\boldsymbol{\xi}_i^{(N_i)}+n'+n_i}.
\end{eqnarray}
Using (\ref{pow44}), the average transmission rate in the asymptote of large $n$ is
\begin{equation}
\label{ }
\lim_{n\to\infty}\boldsymbol{R}_i=\frac{k_i}{\frac{N_i\eta_i}{\lambda_i}+\theta_i}=\frac{\lambda_i}{1+q_i\theta_i}.
\end{equation}
We will use the expressions in (\ref{pow55}) and (\ref{rate55}) in Section~V.A where we study system design. 
\section{Estimating the arrival times and transmitter identification at the receivers} 
 Let $(s^{(j)}_{i,l})_{l=0}^{n_i-1}$ for $1\leq j\leq N_i$ be the $N_i$ codewords of length $n_i$ sent by Tx~$i$. Also, let $(s'_{i,l})_{l=0}^{n'-1}$  be the preamble sequence for user~$i$.  The signal $x_{i,t}$ in (\ref{system}) can be written as
\begin{eqnarray}
x_{i,t}=\left\{\begin{array}{cc}
         s'_{i,t-\tau_{i}^{(j)}-1}&  \tau_{i}^{(j)}+1\leq t\leq \tau_i^{(j)}+n'   \\
      s^{(j)}_{i,t-\tau_{i}^{(j)}-n'-1} & \tau_{i}^{(j)}+n'+1\leq t\leq \tau_i^{(j)}+n'+n_i   \\
         0& \textrm{otherwise}
\end{array}\right.,
\end{eqnarray}  
for $1\leq i\leq 2$ and $1\leq j\leq N_i$. The preambles $(s'_{1,l})_{l=0}^{n'-1}$  and $(s'_{2,l})_{l=0}^{n'-1}$ are revealed to both receivers. The following assumption considerably simplifies the analysis in this section.\\
\textbf{Assumption}- For any integers $1\leq j_1\leq N_1$ and $1\leq j_2\leq N_2$,
\begin{equation}
\label{res_11}
j_2\mu_2-j_1\mu_1+\nu_2-\nu_1\notin\big\{0,\theta_1,-\theta_2,\theta_1-\theta_2\big\}.
\end{equation}
\textbf{Remark}- Since we are assuming that  $\nu_1$ and $\nu_2$ are realizations of independent and continuous random variables $\boldsymbol{\nu}_1$ and $\boldsymbol{\nu}_2$, the restrictions in (\ref{res_11}) are considered ``mild'' in the sense that the probability of $j_2\mu_2-j_1\mu_1+\boldsymbol{\nu}_2-\boldsymbol{\nu}_1$ lying in $\{0,\theta_1,-\theta_2,\theta_1-\theta_2\}$ for some $j_1$ and $j_2$ is equal to zero.  

We will use the assumption in (\ref{res_11}) throughout the paper. Its first application appears in the following proposition: 
\begin{proposition}
\label{prop1}
Assuming~(\ref{res_11}) holds, the probability of Tx~$i$ starting or ending a transmission burst while Tx~$i'$ is sending a preamble sequence tends to zero as $n$ grows. 
\end{proposition}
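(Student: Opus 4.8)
The plan is to pass to the \emph{macroscopic} time axis obtained by dividing every time index by $n$. On this axis each burst of Tx~$i$ occupies an interval of length $\theta_i$, whereas each preamble (of length $n'=o(n)$) shrinks to a single point; the constraints in (\ref{res_11}) will be used exactly to keep the relevant endpoints uniformly separated, after which a union bound finishes the argument.

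First I would record the limiting positions of the bursts. By (\ref{laws11}), $\boldsymbol{\tau}_i^{(j)}=\boldsymbol{\xi}_i^{(j)}+\lfloor n\nu_i\rfloor-1$; since $n$ is a multiple of $N_1N_2$ we have $\frac{\lfloor n\eta_i\rfloor}{k_i}=\frac{n}{N_i}$, so $\boldsymbol{\xi}_i^{(j)}\sim\mathrm{NB}\!\big(\frac{jn}{N_i},q_i\big)$ is a sum of $\frac{jn}{N_i}$ i.i.d.\ geometric$(q_i)$ variables of mean $1/q_i$. By the SLLN (as already used for (\ref{pow44})), $\frac{1}{n}\boldsymbol{\xi}_i^{(j)}\to \frac{j}{N_iq_i}=j\mu_i$ almost surely, hence
\begin{equation*}
\frac{1}{n}\boldsymbol{\tau}_i^{(j)}\longrightarrow a_i^{(j)}:=j\mu_i+\nu_i\quad\text{a.s.},\qquad 1\le j\le N_i,
\end{equation*}
and also $\frac{n_i}{n}\to\theta_i$, $\frac{n'}{n}\to 0$. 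The $j$-th burst of Tx~$i$ starts at slot $\boldsymbol{\tau}_i^{(j)}+1$ and ends at slot $\boldsymbol{\tau}_i^{(j)}+n'+n_i$, while the preamble inside the $j'$-th burst of Tx~$i'$ occupies slots $\boldsymbol{\tau}_{i'}^{(j')}+1,\dots,\boldsymbol{\tau}_{i'}^{(j')}+n'$; after division by $n$ the first two quantities tend to $a_i^{(j)}$ and $a_i^{(j)}+\theta_i$, and the preamble interval collapses to the point $a_{i'}^{(j')}$.

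Next I would invoke (\ref{res_11}). For the event that Tx~$i$ starts (resp.\ ends) its $j$-th burst during the preamble of the $j'$-th burst of Tx~$i'$ to occur for arbitrarily large $n$, one needs, in the limit, $a_i^{(j)}=a_{i'}^{(j')}$ (resp.\ $a_i^{(j)}+\theta_i=a_{i'}^{(j')}$). Writing $w(j_1,j_2):=j_2\mu_2-j_1\mu_1+\nu_2-\nu_1$, for $(i,i')=(1,2)$ these read $w(j,j')=0$ or $w(j,j')=\theta_1$, and for $(i,i')=(2,1)$ they read $w(j',j)=0$ or $w(j',j)=-\theta_2$; by (\ref{res_11}) none of these can hold for any $1\le j_1\le N_1$, $1\le j_2\le N_2$. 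Since the index ranges are finite,
\begin{equation*}
\epsilon:=\min_{\substack{1\le j_1\le N_1\\ 1\le j_2\le N_2}}\min\big\{\,|w(j_1,j_2)|,\ |w(j_1,j_2)-\theta_1|,\ |w(j_1,j_2)+\theta_2|\,\big\}>0 .
\end{equation*}
To conclude, I would union bound: if $\tfrac{n'}{n}<\tfrac{\epsilon}{8}$, $|\tfrac{n_i}{n}-\theta_i|<\tfrac{\epsilon}{8}$ and $|\tfrac{1}{n}\boldsymbol{\tau}_i^{(j)}-a_i^{(j)}|<\tfrac{\epsilon}{8}$ for $i=1,2$ and all $1\le j\le N_i$, then every burst start (distance $<\tfrac{\epsilon}{8}+O(\tfrac1n)$ from $a_i^{(j)}$) and burst end (distance $<\tfrac{\epsilon}{4}+O(\tfrac1n)$ from $a_i^{(j)}+\theta_i$) of Tx~$i$ lies, for large $n$, outside the scaled preamble interval $[\,a_{i'}^{(j')}-\tfrac{\epsilon}{8},\,a_{i'}^{(j')}+\tfrac{\epsilon}{4}\,]$ of every burst of Tx~$i'$, because the limiting positions differ by at least $\epsilon$. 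Hence the probability of the bad event is at most $\sum_{i=1,2}\sum_{j=1}^{N_i}\mathbb{P}\big(|\tfrac1n\boldsymbol{\tau}_i^{(j)}-a_i^{(j)}|\ge\tfrac{\epsilon}{8}\big)$ plus finitely many deterministic indicator terms $\mathbf{1}\{\tfrac{n'}{n}\ge\tfrac{\epsilon}{8}\}$, $\mathbf{1}\{|\tfrac{n_i}{n}-\theta_i|\ge\tfrac{\epsilon}{8}\}$ that vanish for $n$ large; the first sum tends to $0$ by the SLLN above, proving the claim.

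I expect the substantive content to be entirely in the convergence $\tfrac1n\boldsymbol{\tau}_i^{(j)}\to a_i^{(j)}$ and in the separation constant $\epsilon>0$; the rest is bookkeeping. The one place to be careful is the translation of each discrete ``start/end of a burst of one user during a preamble of the other'' event into membership of some $w(j_1,j_2)$ in $\{0,\theta_1,-\theta_2\}$ — in particular, checking that no other combination of indices or endpoints arises — together with the observation that $n'=o(n)$ is precisely what makes every preamble negligible on the macroscopic axis. (Note that only the subset $\{0,\theta_1,-\theta_2\}$ of the forbidden set in (\ref{res_11}) is needed here; the value $\theta_1-\theta_2$ is reserved for later use.)
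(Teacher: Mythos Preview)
Your proposal is correct and follows essentially the same route as the paper's Appendix~C: both arguments reduce to the SLLN limit $\tfrac{1}{n}\boldsymbol{\tau}_i^{(j)}\to j\mu_i+\nu_i$, use (\ref{res_11}) to guarantee that the limiting endpoints are strictly separated, and finish by a finite union bound; the paper handles each pair $(j_1,j_2)$ separately via Lemma~\ref{sim_lem} applied to $\min\{\boldsymbol{\rho}_n,\boldsymbol{\rho}'_n\}$, whereas you package the same idea into a single global separation constant and a single good event, which is a cosmetic rather than substantive difference. Your closing remark that only the subset $\{0,\theta_1,-\theta_2\}$ of the forbidden values in (\ref{res_11}) is actually used here is accurate and worth keeping.
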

\begin{proof}
See Appendix~C.
\end{proof}
In view of Proposition~\ref{prop1} and for given $\epsilon>0$, we assume $n$ is large enough so that the probability of Tx~$i$ starting or ending a transmission burst while Tx~$i'$ is sending a preamble sequence is less than $\epsilon$ and add $\epsilon$ to the probability of error in decoding the codewords. In other words, we assume no transmitter starts or ends a transmission burst while the other transmitter is sending a preamble sequence.

Next, we study the detection/estimation procedure at Rx~1. A similar procedure is carried out at Rx~2. Define the PDFs $p^{(1)}(\cdot, \cdot), \cdots, p^{(4)}(\cdot, \cdot)$ on $\mathbb{R}^2$ as follows:
\begin{itemize}
  \item For $1\leq j\leq N_1$ and $\tau_1^{(j)}+1\leq t\leq \tau_1^{(j)}+n'$, Tx~$1$ is sending the preamble sequence in its $j^{th}$ burst. If Tx~$2$ is not transmitting during this time interval, then $p_{\boldsymbol{x}_{1,t},\boldsymbol{y}_{1,t}}(x,y)=\mathrm{g}(x;\gamma_1)\mathrm{g}(y-x;1)$. We define 
  \begin{equation}
\label{p1}
p^{(1)}(x,y):=\mathrm{g}(x;\gamma_1)\mathrm{g}(y-x;1).
\end{equation}
  \item For $1\leq j\leq N_1$ and $\tau_1^{(j)}+1\leq t\leq \tau_1^{(j)}+n'$, Tx~$1$ is sending the preamble  sequence in its $j^{th}$ burst. If Tx~$2$ is transmitting during this time interval, then $p_{\boldsymbol{x}_{1,t},\boldsymbol{y}_{1,t}}(x,y)=\mathrm{g}(x;\gamma_1)\mathrm{g}(y-x;1+a_2\gamma_2)$. We define 
  \begin{equation}
\label{p2}
p^{(2)}(x,y):=\mathrm{g}(x;\gamma_1)\mathrm{g}(y-x;1+a_2\gamma_2).
\end{equation}
 \item For $1\leq j\leq N_2$ and $\tau_2^{(j)}+1\leq t\leq \tau_2^{(j)}+n'$, Tx~$2$ is sending the preamble sequence in its $j^{th}$ burst. If Tx~$1$ is not transmitting during this time interval, then $p_{\boldsymbol{x}_{2,t},\boldsymbol{y}_{1,t}}(x,y)=\mathrm{g}(x;\gamma_2)\mathrm{g}(y-a_2x;1)$. We define 
  \begin{equation}
\label{p3}
p^{(3)}(x,y):=\mathrm{g}(x;\gamma_2)\mathrm{g}(y-a_2x;1).
\end{equation}
\item For $1\leq j\leq N_2$ and $\tau_2^{(j)}+1\leq t\leq \tau_2^{(j)}+n'$, Tx~$2$ is sending the preamble sequence in its $j^{th}$ burst. If Tx~$1$ is transmitting during this time interval, then $p_{\boldsymbol{x}_{2,t},\boldsymbol{y}_{1,t}}(x,y)=\mathrm{g}(x;\gamma_2)\mathrm{g}(y-a_2x;1+\gamma_1)$. We define 
  \begin{equation}
\label{p4}
p^{(4)}(x,y):=\mathrm{g}(x;\gamma_2)\mathrm{g}(y-a_2x;1+\gamma_1).
\end{equation}
\end{itemize}
To identify the arrival time of a transmission burst, each receiver applies the so-called sequential joint typicality decoder~\cite{Chandar11}. Recall that for given $\epsilon>0$, $m\geq 1$ and a PDF $p(\cdot,\cdot)$ on $\mathbb{R}^2$ with marginals $p_{1}$ and $p_{2}$, the typical set $A_\epsilon^{(m)}[p]$ is the set of all pairs $(\vec{x},\vec{y}\,)$ where $\vec{x},\vec{y}\in \mathbb{R}^m$ and the three inequalities
    \begin{eqnarray}
    \label{smart0}
    \bigg| \frac{1}{m}\sum_{l=1}^m \log p_{1}(x_l)+h(p_{1})\bigg|<\epsilon,  
\end{eqnarray}
\begin{eqnarray}
\label{smart00}
 \bigg| \frac{1}{m}\sum_{l=1}^m \log p_{2}(y_l)+h(p_{2})\bigg|<\epsilon
\end{eqnarray}
and
  \begin{eqnarray}
  \label{smart1}
      \bigg| \frac{1}{m}\sum_{l=1}^m \log p(x_l,y_l)+h(p)\bigg|<\epsilon
\end{eqnarray}
hold. 
We refer to any $(\vec{x},\vec{y}\,)\in A_\epsilon^{(m)}[p]$ as an $\epsilon$-jointly typical pair with respect to $p$ \cite{16}.
\begin{figure}[t]
  \centering
  \includegraphics[scale=.7] {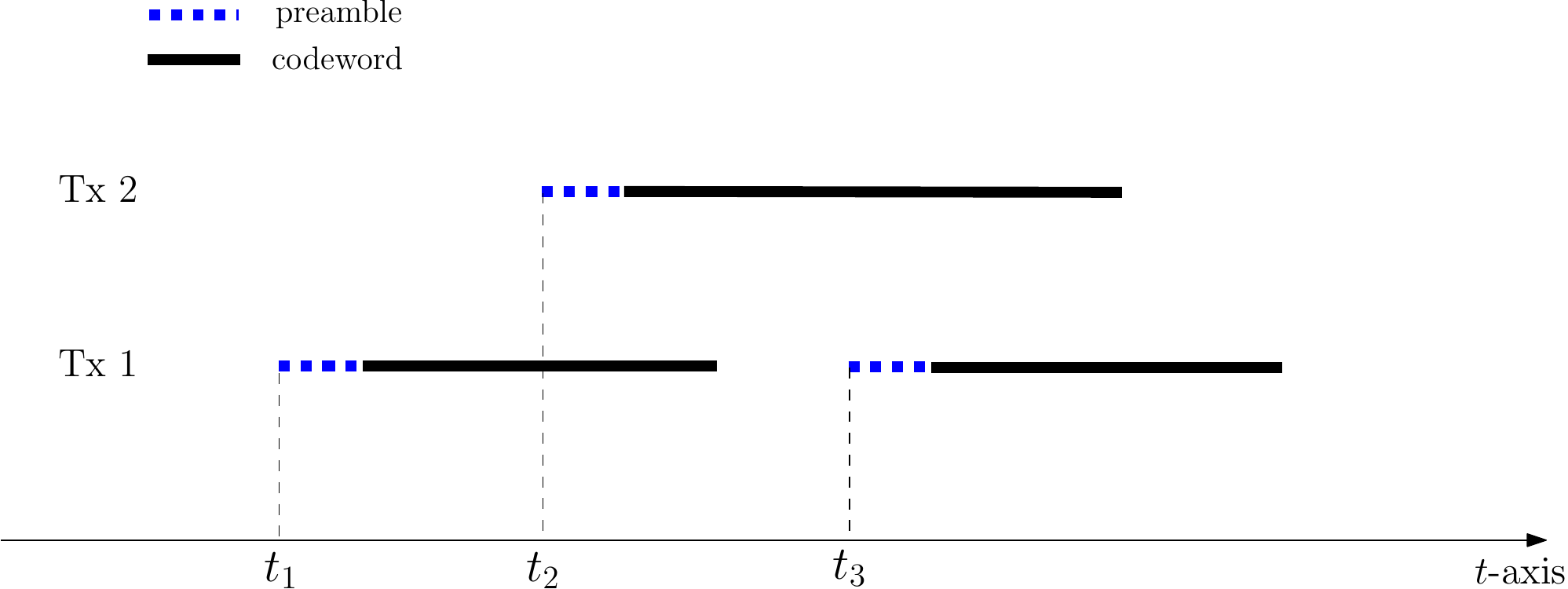}
  \caption{A scenario where where Tx~1 and Tx~2 send only two and one bursts, respectively, i.e., $N_1=2$ and $N_2=1$. It is assumed that $n_1<n_2$. For simplicity of presentation, we call $t_1:=\tau_1^{(1)}+1$, $t_2:=\tau_2^{(1)}+1$ and $t_3:=\tau_1^{(2)}+1$. }
  \label{bufpic3}
 \end{figure} 
 To describe how Rx~$1$ estimates $\tau_i^{(j)}$ for different $i$ and $j$ and without loss of generality, we find it best to consider the particular situation shown in Fig.~\ref{bufpic3} where $N_1=2$, $N_2=1$ and $n_1<n_2$. The arrival time estimation and user identification are performed in the following steps:
\begin{enumerate}[(i)]
  \item By Fig.~\ref{bufpic3}, $t_1:=\tau_{1}^{(1)}+1$ is the time slot that the first active transmitter sends the first symbol in its preamble sequence. Rx~$1$ estimates $t_1$ by 
\begin{equation}
\label{est_t1}
\hat{t}_1=\min\left\{t\geq 0: \textrm{$\big((s'_{1,l})_{l=0}^{n'-1},(y_{1,l})_{l=t}^{t+n'-1}\big)\in A^{(n')}_{\epsilon}[p^{(1)}]$ or $\big((s'_{2,l})_{l=0}^{n'-1},(y_{1,l})_{l=t}^{t+n'-1}\big)\in A^{(n')}_{\epsilon}[p^{(3)}]$} \right\},
\end{equation}
where the PDFs $p^{(1)}$ and $p^{(3)}$ are defined in (\ref{p1}) and (\ref{p3}), respectively. By Proposition~\ref{prop1}, we can assume $p_{\boldsymbol{x}_{1,l},\boldsymbol{y}_{1,l}}=p^{(1)}$ for any $t_1\leq l\leq t_1+n'-1$. Then the weak law of large numbers yields 
\begin{equation}
\label{wl1}
\lim_{n\to\infty}\mathbb{P}\Big(\big((\boldsymbol{s}'_{1,l})_{l=0}^{n'-1},(\boldsymbol{y}_{1,l})_{l=\boldsymbol{t}_1}^{\boldsymbol{t}_1+n'-1}\big)\in A^{(n')}_{\epsilon}[p^{(1)}]\Big)=1.
\end{equation}
By (\ref{wl1}), $\lim_{n\to\infty}\mathbb{P}(\,\hat{\boldsymbol{t}}_1\leq \boldsymbol{t}_1)=1$. As such, to show that $\hat{\boldsymbol{t}}_1=\boldsymbol{t}_1$ holds with high probability, it is enough to show that $\mathbb{P}(\,\hat{\boldsymbol{t}}_1<\boldsymbol{t}_1)$ is negligible for sufficiently large $n$. This is the content of the following proposition: 
\begin{proposition}
\label{prop_2}
We have 
\begin{eqnarray}
\mathbb{P}(\,\hat{\boldsymbol{t}}_1<\boldsymbol{t}_1)\leq \Theta(n)e^{-\Theta(n')}.
\end{eqnarray}
In particular, $\lim_{n\to\infty}\mathbb{P}(\,\hat{\boldsymbol{t}}_1<\boldsymbol{t}_1)=0$.
\end{proposition}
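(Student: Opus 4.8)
The plan is to show that $\{\hat{\boldsymbol t}_1<\boldsymbol t_1\}$ is a ``false alarm'' event --- some length-$n'$ window $(\boldsymbol y_{1,l})_{l=t}^{t+n'-1}$ with $t<\boldsymbol t_1$ is declared $\epsilon$-jointly typical with the preamble of Tx~$1$ with respect to $p^{(1)}$, or with the preamble of Tx~$2$ with respect to $p^{(3)}$ --- and then to bound the probability of each such false alarm by $e^{-\Theta(n')}$ and union-bound over the (order $n$) admissible shifts $t$. By the scenario of Fig.~\ref{bufpic3}, no transmitter emits any signal before time slot $\boldsymbol t_1=\boldsymbol\tau_1^{(1)}+1$, so $\boldsymbol y_{1,l}=\boldsymbol z_{1,l}$ for $l<\boldsymbol t_1$, and throughout $[\boldsymbol t_1,\boldsymbol t_1+n'-1]$ Tx~$2$ is still silent because $\boldsymbol t_2-\boldsymbol t_1=\Theta(n)\gg n'$ with high probability; the complementary low-probability events (Tx~$2$ starting or ending a burst inside that window) are exactly those covered by Proposition~\ref{prop1} and the paper's $\epsilon$-convention, and even when Tx~$2$ is active in part of the window it only enlarges the mismatch used below.

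First I would truncate the range of $\boldsymbol t_1$. From the representation $\boldsymbol\tau_1^{(1)}=\boldsymbol\xi_1^{(1)}+\lfloor n\nu_1\rfloor-1$ in (\ref{laws11}) together with a Chernoff bound on the underlying sum of i.i.d.\ geometric variables, $\mathbb{P}(\boldsymbol t_1>Cn)\le e^{-\Theta(n)}$ once $C$ exceeds $\nu_1+\frac{1}{N_1q_1}$; since $n'=o(n)$ this term is $O(e^{-\Theta(n')})$ and can be set aside. It remains to bound $\mathbb{P}\big(\hat{\boldsymbol t}_1<\boldsymbol t_1,\ \boldsymbol t_1\le Cn\big)\le\sum_{t=0}^{Cn-1}\big(\mathbb{P}(\mathcal E^{(1)}_t,\,t<\boldsymbol t_1)+\mathbb{P}(\mathcal E^{(3)}_t,\,t<\boldsymbol t_1)\big)$, where $\mathcal E^{(k)}_t$ is the event that the window starting at $t$ is $\epsilon$-typical with respect to $p^{(k)}$, and to show each summand is $e^{-\Theta(n')}$ uniformly in $t$.

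The crucial observation is that, conditioning on the whole data-arrival process (hence on $\boldsymbol t_1$) and taking any $t<\boldsymbol t_1$, for every index $l\in\{0,\dots,n'-1\}$ the pair compared against the $l$th preamble symbol, namely $(\boldsymbol s'_{i,l},\boldsymbol y_{1,t+l})$, has \emph{independent} coordinates: $\boldsymbol y_{1,t+l}$ is either a fresh noise sample (when $t+l<\boldsymbol t_1$) or $\boldsymbol s'_{1,\,t+l-\boldsymbol t_1}+\boldsymbol z_{1,t+l}$ (when $\boldsymbol t_1\le t+l\le\boldsymbol t_1+n'-1$), and in the latter case the preamble index $t+l-\boldsymbol t_1$ is strictly less than $l$ because $t<\boldsymbol t_1$, so it is independent of $\boldsymbol s'_{1,l}$, and a fortiori of $\boldsymbol s'_{2,l}$. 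Thus the window pairs are drawn from a product of Gaussian marginals, never from the correlated laws $p^{(1)}$ or $p^{(3)}$. A short computation then gives, with $m:=\min(\boldsymbol t_1-t,n')\ge1$, $\mathbb{E}\big[\tfrac1{n'}\sum_{l}\log p^{(1)}(\boldsymbol s'_{1,l},\boldsymbol y_{1,t+l})\big]+h(p^{(1)})=-\gamma_1\big(1-\tfrac{m}{2n'}\big)\le-\tfrac{\gamma_1}{2}$, so the mean of the quantity inside the absolute value in (\ref{smart1}) lies at distance at least $\gamma_1/2$ from $0$ for every admissible $t$; for $p^{(3)}$ the analogous distance is bounded below by a positive constant depending only on $(a_2,\gamma_2)$, because the residual $\boldsymbol y-a_2\boldsymbol x$ then has variance strictly larger than $1$ by a fixed amount on every coordinate.

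Choosing $\epsilon$ below these fixed margins, the third typicality inequality (\ref{smart1}) is violated in mean by a positive constant independent of $t$ and $n$; since $\tfrac1{n'}\sum_{l}\log p^{(k)}(\cdot,\cdot)$ is an average of $n'$ independent, uniformly bounded-variance, sub-exponential (quadratic-in-Gaussian) terms, a Chernoff/Bernstein bound yields $\mathbb{P}(\mathcal E^{(k)}_t,\,t<\boldsymbol t_1)\le e^{-\Theta(n')}$ uniformly in $t$, and this survives integrating out the conditioning. Summing over the $Cn$ shifts and adding the truncation term gives $\mathbb{P}(\hat{\boldsymbol t}_1<\boldsymbol t_1)\le e^{-\Theta(n)}+2Cn\,e^{-\Theta(n')}=\Theta(n)e^{-\Theta(n')}$, which tends to $0$ under the intended scaling $n'=\omega(\log n)$. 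I expect the main obstacle to be precisely the third-paragraph estimate for shifts $t$ just below $\boldsymbol t_1$, where the window already contains nearly all of the genuine preamble: one must argue that the mismatch caused by even a one-slot misalignment is bounded below \emph{uniformly} in the overlap $m$, and the ``paired against a shifted preamble index'' independence fact is exactly what makes this uniform and collapses the whole estimate to a single Chernoff bound.
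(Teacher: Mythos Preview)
Your overall architecture is the same as the paper's: condition on the arrival process, union-bound over shifts $t<\boldsymbol t_1$, show each false-alarm probability is $e^{-\Theta(n')}$ by exhibiting a mean mismatch of order $\gamma_1$ in the typicality statistic, and collect a $\Theta(n)$ multiplicative factor from the number of shifts. Your mean computation $-\gamma_1(1-\tfrac{m}{2n'})\le-\tfrac{\gamma_1}{2}$ is correct and matches the paper's $\epsilon_n\ge\tfrac{\gamma_1}{3}-\tfrac{\epsilon}{\log e}$. The truncation $\boldsymbol t_1\le Cn$ is a harmless variant of the paper's device of writing the bound as $\Theta(1)\,\mathbb E[\boldsymbol t_1]\,e^{-\Theta(n')}$.

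There is, however, a genuine gap in your concentration step for shifts $t$ with $t_1-n'<t<t_1$. You correctly note that for each fixed $l$ the pair $(\boldsymbol s'_{1,l},\boldsymbol y_{1,t+l})$ has independent coordinates, because $\boldsymbol y_{1,t+l}$ involves the preamble symbol at index $l-(t_1-t)\neq l$. But this does \emph{not} make the summands $\boldsymbol w_l:=\log p^{(1)}(\boldsymbol s'_{1,l},\boldsymbol y_{1,t+l})$ independent across $l$: the term $\boldsymbol w_l$ contains $\boldsymbol s'_{1,l}$, and the term $\boldsymbol w_{l+(t_1-t)}$ contains $\boldsymbol s'_{1,(l+(t_1-t))-(t_1-t)}=\boldsymbol s'_{1,l}$ as well, so $\boldsymbol w_l$ and $\boldsymbol w_{l'}$ are dependent whenever $|l-l'|=t_1-t$. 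Your sentence ``an average of $n'$ independent \ldots\ terms, a Chernoff/Bernstein bound yields'' therefore does not go through as stated. The paper confronts exactly this issue and resolves it with the index-partitioning trick from \cite{gamkim}, Appendix~24B: one splits $\{0,\dots,n'-1\}$ into two sets $\mathcal I,\mathcal J$ such that no two indices in the same set differ by $t_1-t$ (odd/even if $t_1-t$ is odd, a more careful partition otherwise), applies Bernstein separately to $\sum_{l\in\mathcal I}\boldsymbol w_l$ and $\sum_{l\in\mathcal J}\boldsymbol w_l$, and union-bounds. Once you insert this step your argument is complete and essentially coincides with the paper's.
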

\begin{proof}
See Appendix~D.
\end{proof}
Motivated by Proposition~\ref{prop_2}, we assume Rx~$1$ knows the exact value of $t_1$. In fact, for given $\epsilon>0$, we assume $n$ is large enough so that the probability of error in estimating the first arrival time is less than $\epsilon$ and add $\epsilon$ to the probability of error in decoding the codewords. Not only does Rx~1 know the exact value of $t_1$, but also it realizes that $t_1$ is $\tau_1^{(1)}+1$ and not $\tau_2^{(1)}+1$, i.e., it knows the first arriving burst belongs to~Tx~1. This is described in the next~step. 
 \item After finding $t_1$, Rx~$1$ decides whether the first burst belongs to Tx~$1$ or Tx~$2$. Towards this goal, Rx~$1$ verifies if
  \begin{eqnarray}
  \label{wlwl1}
\big((s'_{1,l})_{l=0}^{n'-1},(y_{1,l})_{l=t_1}^{t_1+n'-1}\big)\in A^{(n')}_{\epsilon}[p^{(1)}]
\end{eqnarray}
or 
 \begin{eqnarray}
 \label{wlwl2}
\big((s'_{2,l})_{l=0}^{n'-1},(y_{1,l})_{l=t_1}^{t_1+n'-1}\big)\in A^{(n')}_{\epsilon}[p^{(3)}].
\end{eqnarray}
 If (\ref{wlwl1}) holds, the first arriving burst is assumed to belong to Tx~1. If (\ref{wlwl2}) holds, the first arriving burst is assumed to belong to Tx~2. As mentioned earlier in (\ref{wl1}), (\ref{wlwl1}) holds with high probability in the asymptote of large $n$. In Appendix~E, it is shown that
\begin{eqnarray}
\label{canned_11}
\mathbb{P}\Big(\big((\boldsymbol{s}'_{2,l})_{l=0}^{n'-1},(\boldsymbol{y}_{1,l})_{l=t_1}^{t_1+n'-1}\big)\in A^{(n')}_{\epsilon}[p^{(3)}]\Big)\leq e^{-\Theta(n')}.
\end{eqnarray}
 Therefore, (\ref{wlwl2}) holds with a probability that decays exponentially with $n'$ and hence, Rx~$1$ can identify the sender of the first burst with high probability. 
  \item Up to this point, Rx~$1$ knows that the first burst belongs to Tx~$1$ and it lasts from time slot $t_1$ to time slot $t_1+n'+n_1-1$. If another burst arrives during this period, it must belong to Tx~$2$. As shown in Fig.~\ref{bufpic3}, a burst belonging to Tx~$2$ indeed arrives at time slot $t_2:=\tau_2^{(1)}+1$ when the first burst of Tx~$1$ is still arriving. The preamble sequence in the first burst by Tx~$2$ extends from time slot $t_2$ to time slot $t_2+n'-1$. By Proposition~\ref{prop1}, $p_{\boldsymbol{x}_{2,l},\boldsymbol{y}_{1,l}}=p^{(4)}$ for any $t_2\leq l\leq t_2+n'-1$ where $p^{(4)}$ is defined in (\ref{p4}). Based on these observations, Rx~$1$ estimates $t_2$ by 
  \begin{eqnarray}
  \label{est_t2}
\hat{t}_2=\min\left\{t_1\leq t\leq t_1+n'+n_1-1: \textrm{$\big((s'_{2,l})_{l=0}^{n'-1},(y_{1,l})_{l=t}^{t+n'-1}\big)\in A^{(n')}_{\epsilon}[p^{(4)}]$} \right\}.
\end{eqnarray}
Following similar lines of reasoning in the proof of Proposition~\ref{prop_2}, one can show that $\mathbb{P}(\,\hat{\boldsymbol{t}}_2\neq \boldsymbol{t}_2)\leq \Theta(n)e^{-\Theta(n')}$. As such, we can assume that Rx~$1$ knows the exact value of $t_2$, i.e., Rx~1 knows $\tau_2^{(1)}$. 

\textbf{Remark}- If (\ref{est_t2}) fails to return an estimate for $t_2$, Rx~$1$ concludes that no burst of Tx~$2$ is received by the time Tx~$1$ finishes its first burst. As such, starting at time slot $t_1+n'+n_1$, Rx~$1$ looks for the arrival time $t^*$ of a new transmission burst that might belong to Tx~$1$ or Tx~$2$. The time slot $t^*$ is estimated similar to (\ref{est_t1}), i.e., $\hat{t}^*$ is the smallest value of $t\geq t_1+n'+n_1$ such that $\big((s'_{1,l})_{l=0}^{n'-1},(y_{1,l})_{l=t}^{t+n'-1}\big)\in A^{(n')}_{\epsilon}[p^{(1)}]$ or $\big((s'_{2,l})_{l=0}^{n'-1},(y_{1,l})_{l=t}^{t+n'-1}\big)\in A^{(n')}_{\epsilon}[p^{(3)}]$.

\item After finding $t_2$ in step~(iii), Rx~$1$ knows that the first burst of Tx~$2$ lasts from time~slot $t_2$ to time~slot $t_2+n'+n_2-1$. Since the first burst of Tx~$1$ ends at time slot $t_1+n+n_1-1$, Rx~$1$ looks for possible arrival of the second burst of Rx~$1$ during time slots $t_1+n'+n_1$ to $t_2+n'+n_2-1$. In fact, as shown in Fig.~\ref{bufpic3}, the second burst of Tx~$1$ arrives at time slot $t_3:=\tau^{(2)}_1+1$ when the first burst by Tx~$2$ is still arriving. The preamble sequence in the second burst of Tx~$1$ extends from time slot $t_3$ to $t_3+n'-1$. By Proposition~\ref{prop1}, $p_{\boldsymbol{x}_{1,l},\boldsymbol{y}_{1,l}}=p^{(2)}$ for any $t_3\leq l\leq t_3+n'-1$ where $p^{(2)}$ is defined in (\ref{p2}). Based on these observations, Rx~$1$ estimates $t_3$ by 
  \begin{eqnarray}
  \label{est_t2}
\hat{t}_3=\min\left\{t_1+n'+n_1\leq t\leq t_2+n'+n_2-1: \textrm{$\big((s'_{1,l})_{l=0}^{n'-1},(y_{1,l})_{l=t}^{t+n'-1}\big)\in A^{(n')}_{\epsilon}[p^{(2)}]$} \right\},
\end{eqnarray}
where following the proof of Proposition~\ref{prop_2}, it can be shown that $\mathbb{P}(\,\hat{\boldsymbol{t}}_3\neq \boldsymbol{t}_3)\leq \Theta(n)e^{-\Theta(n')}$. 
\end{enumerate}
The detection/estimation procedure described here can be easily extended to scenarios other than the one depicted in Fig.~\ref{bufpic3}. Throughout the rest of the paper, we assume both receivers know the values of $\tau_i^{(j)}$ for any $i=1,2$ and $1\leq j\leq N_i$.   
\section{Decoding strategy and achievability results}
In the previous section, we described how each receiver is capable of estimating $\tau_i^{(j)}$ with vanishingly small probability of error. Our analysis heavily relied on the conditions in (\ref{res_11}) which are also used in the following proposition:
\begin{proposition}
\label{prop_3}
Let $i=1,2$ and $1\leq j_i\leq N_i$. Assuming (\ref{res_11}) holds, the $j_i^{th}$ codeword of Tx~$i$ and the $j_{i'}^{th}$ burst of Tx~$i'$ overlap with arbitrarily large probability in the asymptote of large $n$ if and only if
\begin{equation}
\label{cond1}
j_{i'}\mu_{i'}-j_i\mu_i+\nu_{i'}-\nu_i\in (0,\theta_i)\bigcup(-\theta_{i'},\theta_i-\theta_{i'}).
\end{equation} 
\end{proposition}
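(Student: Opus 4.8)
The plan is to divide every time index by $n$, show that the overlap event then converges to a fixed deterministic condition on the limiting burst positions, and use Assumption~(\ref{res_11}) to discard the degenerate boundary cases so that the equivalence holds with probability tending to $1$.

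First I would pin down the asymptotic location of every burst. By (\ref{laws11}), $\boldsymbol{\tau}_i^{(j)}=\boldsymbol{\xi}_i^{(j)}+\lfloor n\nu_i\rfloor-1$ with $\boldsymbol{\xi}_i^{(j)}\sim\mathrm{NB}(j\lfloor n\eta_i\rfloor/k_i,q_i)$, i.e.\ a sum of $j\lfloor n\eta_i\rfloor/k_i$ i.i.d.\ geometric$(q_i)$ variables; arguing exactly as for (\ref{pow44}), the SLLN gives $k_i\boldsymbol{\xi}_i^{(j)}/(j\lfloor n\eta_i\rfloor)\to 1/q_i$, hence
\[
\frac{\boldsymbol{\tau}_i^{(j)}}{n}\ \longrightarrow\ \frac{j\eta_i}{k_iq_i}+\nu_i=\frac{j}{N_iq_i}+\nu_i=j\mu_i+\nu_i,
\]
using $\eta_i=k_i/N_i$ and $\mu_i=1/(N_iq_i)$. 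Since moreover $n'/n\to0$, $n_i/n\to\theta_i$ and $n_{i'}/n\to\theta_{i'}$, the $j_i^{th}$ codeword of Tx~$i$, occupying slots $\boldsymbol{\tau}_i^{(j_i)}+n'+1,\dots,\boldsymbol{\tau}_i^{(j_i)}+n'+n_i$, asymptotically lives in the normalized interval $[\,j_i\mu_i+\nu_i,\ j_i\mu_i+\nu_i+\theta_i\,]$, while the $j_{i'}^{th}$ burst of Tx~$i'$, occupying slots $\boldsymbol{\tau}_{i'}^{(j_{i'})}+1,\dots,\boldsymbol{\tau}_{i'}^{(j_{i'})}+n'+n_{i'}$, asymptotically lives in $[\,j_{i'}\mu_{i'}+\nu_{i'},\ j_{i'}\mu_{i'}+\nu_{i'}+\theta_{i'}\,]$, the preamble contributing only an $o(1)$ term after dividing by $n$.

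Next I would write the overlap event as a union of endpoint inequalities. The $j_i^{th}$ codeword of Tx~$i$ overlaps the $j_{i'}^{th}$ burst of Tx~$i'$ precisely when one of the two endpoints $\boldsymbol{\tau}_{i'}^{(j_{i'})}+1$, $\boldsymbol{\tau}_{i'}^{(j_{i'})}+n'+n_{i'}$ of the burst falls inside the codeword interval $\{\boldsymbol{\tau}_i^{(j_i)}+n'+1,\dots,\boldsymbol{\tau}_i^{(j_i)}+n'+n_i\}$ --- equivalently, when some symbols of the $j_i^{th}$ codeword are hit by the $j_{i'}^{th}$ burst and some are not. Writing $x:=j_{i'}\mu_{i'}-j_i\mu_i+\nu_{i'}-\nu_i$ and dividing the two relevant pairs of inequalities by $n$ and using the limits above, the event ``$\boldsymbol{\tau}_{i'}^{(j_{i'})}+1$ inside the codeword'' has limiting form $x\in[0,\theta_i]$ and the event ``$\boldsymbol{\tau}_{i'}^{(j_{i'})}+n'+n_{i'}$ inside the codeword'' has limiting form $x\in[-\theta_{i'},\theta_i-\theta_{i'}]$, so the limiting overlap condition is $x\in[0,\theta_i]\cup[-\theta_{i'},\theta_i-\theta_{i'}]$.

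Finally I would strip off the boundary using Assumption~(\ref{res_11}), which says exactly that $x\notin\{0,\theta_i,-\theta_{i'},\theta_i-\theta_{i'}\}$; hence $x$ never sits on the boundary of either interval above, so $x$ belongs either to the open set $\mathcal{O}:=(0,\theta_i)\cup(-\theta_{i'},\theta_i-\theta_{i'})$ (the set in the statement) or to the complement of $\overline{\mathcal{O}}$. Because the corrections $n'+1$, $n'+n_i$, $n'+n_{i'}$ are $o(n)$ and wash out after dividing by $n$, the almost-sure convergences $\boldsymbol{\tau}_i^{(j_i)}/n\to j_i\mu_i+\nu_i$ and $\boldsymbol{\tau}_{i'}^{(j_{i'})}/n\to j_{i'}\mu_{i'}+\nu_{i'}$ force, on an event of probability one, all the relevant finite-$n$ inequalities to hold for all large $n$ when $x\in\mathcal{O}$ and to fail for all large $n$ when $x\notin\overline{\mathcal{O}}$; bounded convergence then yields $\mathbb{P}(\text{overlap})\to1$ in the first case and $\mathbb{P}(\text{overlap})\to0$ in the second, which is the asserted equivalence. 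I do not expect a genuine obstacle: the probabilistic input is the same SLLN already used for (\ref{pow44}), and the only real care is the bookkeeping in the previous paragraph --- translating ``overlap'' into the two ``endpoint-inside-codeword'' conditions, and checking that the four values excluded by~(\ref{res_11}) are precisely the endpoints of the two resulting intervals $[0,\theta_i]$ and $[-\theta_{i'},\theta_i-\theta_{i'}]$.
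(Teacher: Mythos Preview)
Your proposal is correct and follows essentially the same route as the paper's proof in Appendix~F: normalize by $n$, use the SLLN limits $\boldsymbol{\tau}_i^{(j_i)}/n\to j_i\mu_i+\nu_i$, decompose ``overlap'' into the two endpoint-inside events (the paper's $\mathcal{E}_n$ and $\mathcal{F}_n$), and invoke Assumption~(\ref{res_11}) to rule out the four boundary values so that the limiting condition is strict. The only cosmetic differences are that the paper packages the last step via Lemma~\ref{sim_lem} rather than bounded convergence, and it disposes of the preamble by citing Proposition~\ref{prop_2} instead of your $n'/n\to0$ remark.
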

\begin{proof}
See Appendix~F.
\end{proof}
\begin{figure*}[t]
\centering
\subfigure[]{
\includegraphics[scale=0.5]{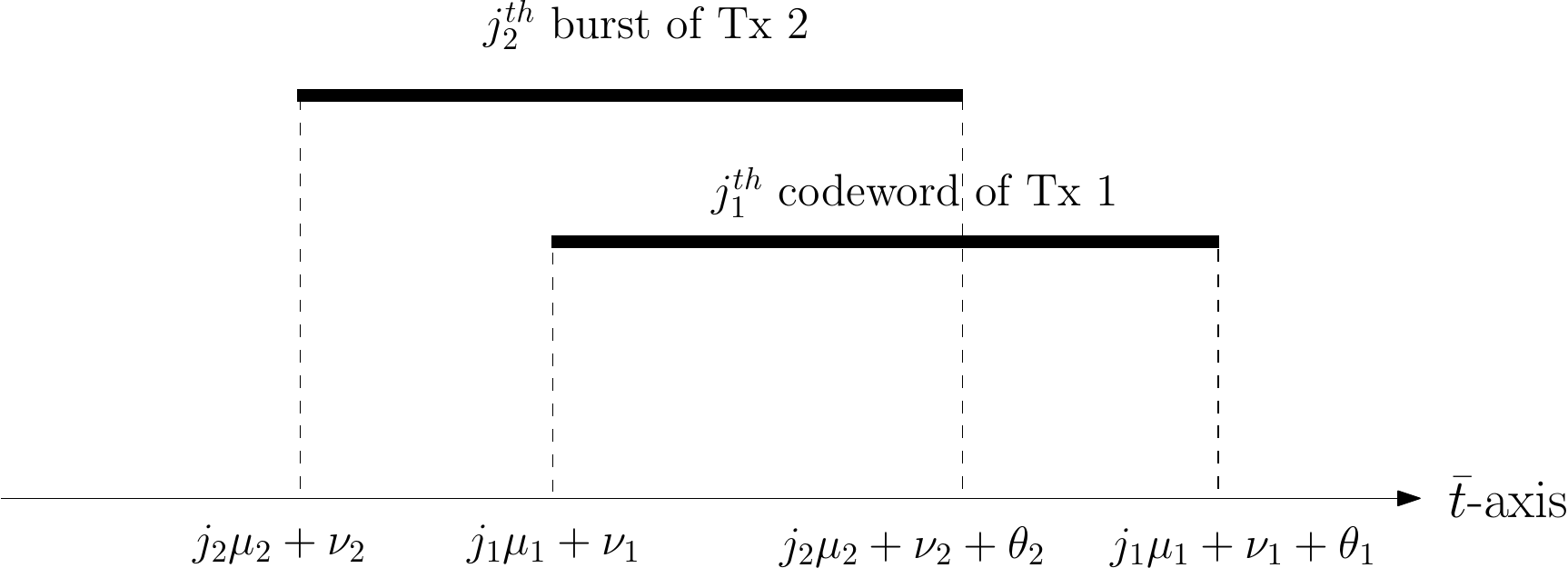}
\label{ref1}
}
\subfigure[]{
\includegraphics[scale=0.5]{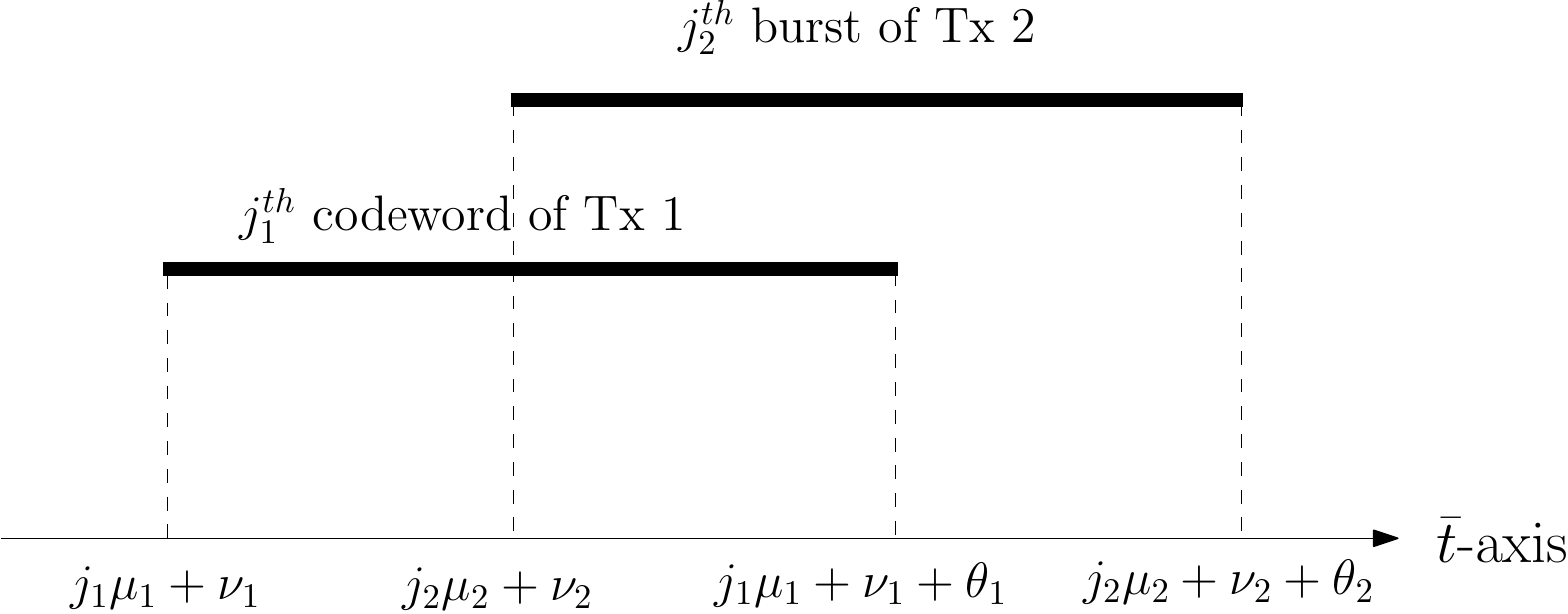}
\label{ref2}
}
\label{fig:subfigureExample}
\caption[Optional caption for list of figures]{If (\ref{pill1}) holds for $i=1$, the $j_2^{th}$ burst of Tx~$2$ overlaps (with high probability for large $n$) with the $j_1^{th}$ codeword of Tx~$1$ over its ``left~end'' as shown in panel~(a), while if (\ref{pill2}) holds for $i=1$, the $j_2^{th}$ burst of Tx~$2$ overlaps with the $j_1^{th}$ codeword of Tx~$1$ over its ``right~end'' as shown in panel~(b).}
\label{pic1}
\end{figure*}  
The result of Proposition~\ref{prop_3} can be intuitively described as follows. Define the scaled time variable 
\begin{equation}
\label{ }
\bar{t}:=\frac{t}{n}.
\end{equation}
 On the $\bar{t}$-axis, Tx~$i$ sends its $j_i^{th}$ codeword at times $\frac{1}{n}(\boldsymbol{\tau}_i^{(j_i)}+n'+1)$ to $\frac{1}{n}(\boldsymbol{\tau}_i^{(j_i)}+n'+n_i)$. In virtue of  SLLN, $\lim_{n\to\infty}\frac{1}{n}\boldsymbol{\tau}_i^{(j_i)}=j_i\mu_i+ \nu_i$. As such, in the limit as $n$ grows to infinity, the $j_i^{th}$ codeword of Tx~$i$ lies on the interval $\big(j_i\mu_i+\nu_i, j_i\mu_i+\nu_i+\theta_i\big)$ along the $\bar{t}$-axis. Similarly, one sees that the $j_{i'}^{th}$ burst of Tx~$i'$ lies on the interval $\big(j_{i'}\mu_{i'}+\nu_{i'}, j_{i'}\mu_{i'}+\nu_{i'}+\theta_{i'}\big)$ along the $\bar{t}$-axis. Provided~(\ref{res_11}) holds, the condition in (\ref{cond1}) is equivalent to saying that the two intervals $\big(j_i\mu_i+\nu_i, j_i\mu_i+\nu_i+\theta_i\big)$ and $\big(j_{i'}\mu_{i'}+\nu_{i'}, j_{i'}\mu_{i'}+\nu_{i'}+\theta_{i'}\big)$ overlap. More specifically, if 
\begin{equation}
\label{pill1}
-\theta_{i'}<j_{i'}\mu_{i'}-j_i\mu_i+\nu_{i'}-\nu_i<\min\{0,\theta_i-\theta_{i'}\},
\end{equation}
 the $j_{i'}^{th}$ burst of Tx~$i'$ overlaps (with high probability for large $n$) with the $j_i^{th}$ codeword of Tx~$i$ on its ``left~end'' as shown in Fig.~\ref{ref1} for $i=1$, while if 
 \begin{equation}
\label{pill2}
\max\{0,\theta_i-\theta_{i'}\}<j_{i'}\mu_{i'}-j_i\mu_i+\nu_{i'}-\nu_i<\theta_i,
\end{equation}
the $j_{i'}^{th}$ burst of Tx~$i'$ overlaps with the $j_i^{th}$ codeword of Tx~$i$ on its ``right~end'' as shown in Fig.~\ref{ref2} for $i=1$. Finally, 
\begin{equation}
\label{ }
\min\{0,\theta_i-\theta_{i'}\}<j_{i'}\mu_{i'}-j_i\mu_i+\nu_{i'}-\nu_i<\max\{0,\theta_i-\theta_{i'}\},
\end{equation}
implies that the $j_i^{th}$ codeword of Tx~$i$ is contained in the $j_{i'}^{th}$ burst of Tx~$i'$ or the other way around depending on whether $\theta_i<\theta_{i'}$ or $\theta_{i'}<\theta_i$, respectively. 

\textbf{Remark}- The geometric interpretation of the conditions in~(\ref{res_11}) is that the endpoints of the intervals $\big(j_i\mu_i+\nu_i, j_i\mu_i+\nu_i+\theta_i\big)$ and $\big(j_{i'}\mu_{i'}+\nu_{i'}, j_{i'}\mu_{i'}+\nu_{i'}+\theta_{i'}\big)$ do not coincide.

We make the following definitions: 
\begin{itemize}
  \item  Fixing $j_i=j$, there exists at most one positive integer $j_{i'}$ that satisfies (\ref{pill1}). We denote this value of $j_{i'}$ by $\omega_{i,j}^-$. In fact, $\omega_{i,j}^-$ is the index of the burst of Tx~$i'$ that overlaps with the left~end of the $j^{th}$ codeword of Tx~$i$. In case, $\omega_{i,j}^-$ does not exist, we write $\omega_{i,j}^-=0$.
  \item Fixing $j_i=j$, there exists at most one positive integer $j_{i'}$ that satisfies (\ref{pill2}). We denote this value of $j_{i'}$ by $\omega_{i,j}^{+}$. In fact, $\omega_{i,j}^+$ is the index of the burst of Tx~$i'$ that overlaps with the right~end of the $j^{th}$ codeword of Tx~$i$. In case, $\omega_{i,j}^+$ does not exist, we write $\omega_{i,j}^+=0$.
  \item We define $\omega_{i,j}$ as the number of bursts of Tx~$i'$ that are completely contained within the $j^{th}$ codeword of Tx~$i$.
\end{itemize}
  \begin{figure}[t]
  \centering
  \includegraphics[scale=.9] {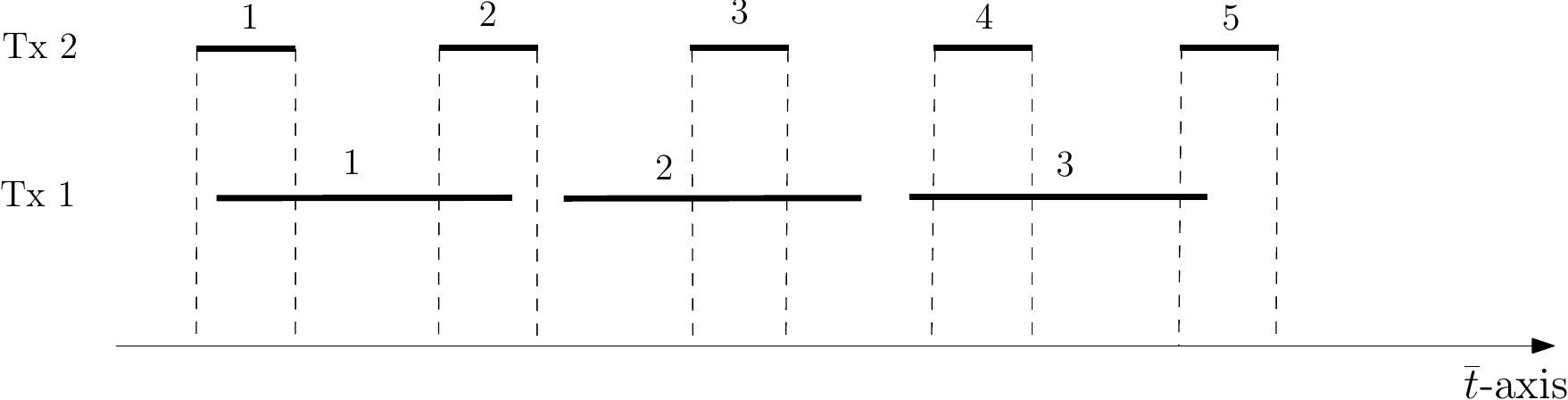}
  \caption{Positions of the bursts along the $\overline{t}$-axis in a scenario where $N_1=3$ and $N_2=5$.}
  \label{bufpic11}
 \end{figure} 
 For example, Fig.~\ref{bufpic11} presents a scenario where $N_1=3$ and $N_2=5$ and we have 
 \begin{eqnarray}
(\omega_{1,1}^-,\omega_{1,1}^+, \omega_{1,1})=(1,2,0),\,\,\,(\omega_{1,2}^-,\omega_{1,2}^+, \omega_{1,2})=(0,0,1),\,\,\,(\omega_{1,3}^-,\omega_{1,3}^+, \omega_{1,3})=(0,5,1).
\end{eqnarray}
Next, we study  achievability results for the $j^{th}$ transmitted codeword of Tx~$i$, i.e., we look for sufficient conditions that guarantee the $j^{th}$ transmitted codeword by Tx~$i$ is decoded successfully at Rx~$i$.  In order to describe the decoding strategy, we focus on Rx~1. For notational simplicity, in some equations we show $\omega_{1,j}^-$ and $\omega_{1,j}^+$ by $\omega^-$ and $\omega^+$,~respectively.  
 \begin{itemize}
  \item  Assume $\omega^-\neq 0$, $\omega^+\neq 0$ and $\omega_{1,j}=0$. Then  
 \begin{eqnarray}
\tau_{2}^{(\omega^{-})}+1<\tau_1^{(j)}+n'+1\leq \tau_{2}^{(\omega^{-})}+n'+n_{2}\leq \tau_{2}^{(\omega^{+})}+1\leq \tau_{1}^{(j)}+n'+n_1<\tau_{2}^{(\omega^+)}+n'+n_2.
\end{eqnarray}
This situation is shown in Fig.~\ref{bufpic7}. 
   \begin{figure}[t]
  \centering
  \includegraphics[scale=.7] {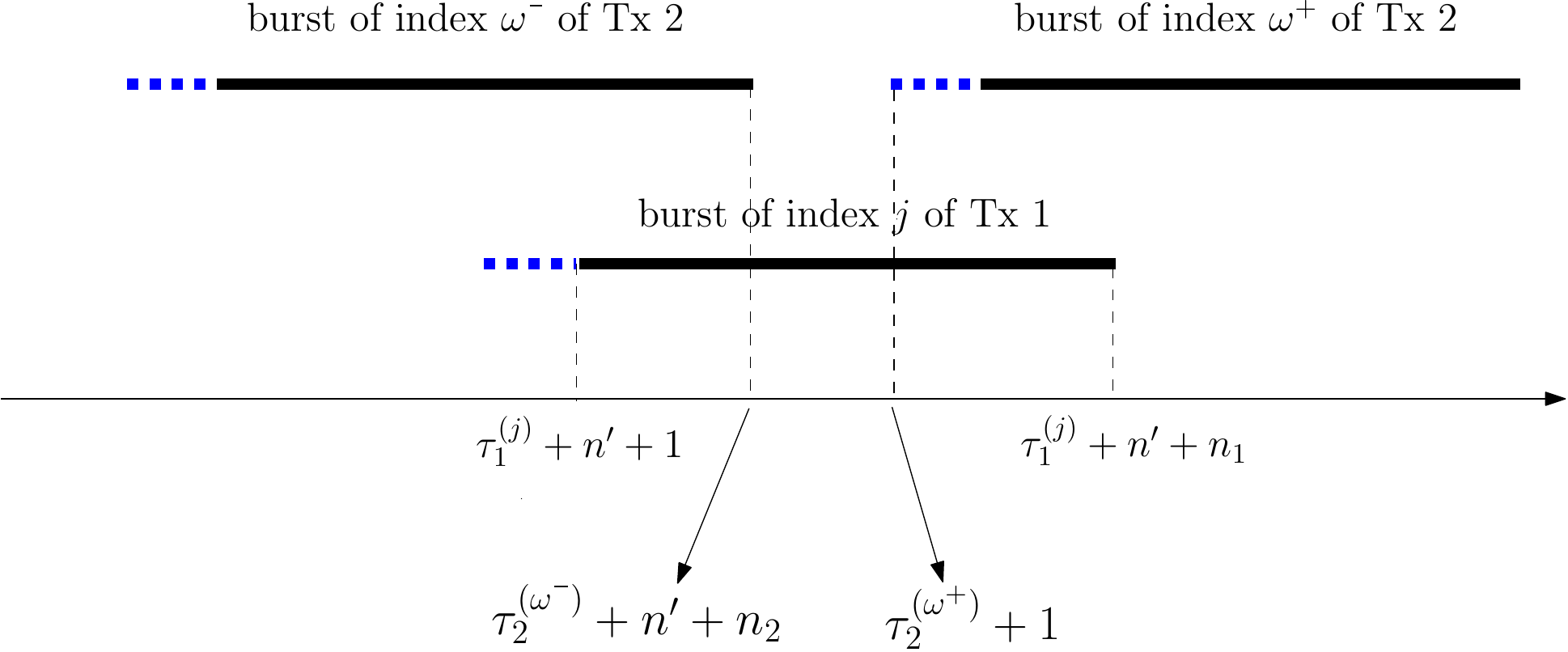}
  \caption{A scenario where $\omega_{1,j}^-\neq 0$, $\omega_{1,j}^+\neq 0$ and $\omega_{1,j}=0$. For notational simplicity, we have shown $\omega_{1,j}^-$ and $\omega_{1,j}^+$ by $\omega^-$ and $\omega^+$, respectively.}
  \label{bufpic7}
 \end{figure} 
The $j^{th}$ codeword of Tx~$1$ is transmitted during time slots $\tau_1^{(j)}+n'+1$ to $\tau_1^{(j)}+n'+n_1$. The interference pattern over this codeword is described as follows: 
\begin{itemize}
  \item Any symbol of the codeword transmitted during time slots $\tau_1^{(j)}+n'+1$ to $\tau_2^{(\omega^{-})}+n'+n_2$ is received in the presence of interference. For any $\tau_1^{(j)}+n'+1\leq l\leq \tau_2^{(\omega^{-})}+n'+n_2$, we have $p_{\boldsymbol{x}_{1,l},\boldsymbol{y}_{1,l}}=p^{(2)}$ where the PDF $p^{(2)}$ is defined in (\ref{p2}). 
  \item Any symbol of the codeword transmitted during time slots $\tau_2^{(\omega^{-})}+n'+n_2+1$ to $\tau_2^{(\omega^{+})}$ does not experience interference. For any $\tau_2^{(\omega^{-})}+n'+n_2+1\leq l\leq \tau_2^{(\omega^{+})}$, we have $p_{\boldsymbol{x}_{1,l},\boldsymbol{y}_{1,l}}=p^{(1)}$ where the PDF $p^{(1)}$ is defined in (\ref{p1}).
  \item Any symbol of the codeword transmitted during time slots $\tau_2^{(\omega^{+})}+1$ to $\tau_1^{(j)}+n'+n_1$ is received in the presence of interference. For any $\tau_2^{(\omega^{+})}+1\leq l\leq\tau_1^{(j)}+n'+n_1$, we have $p_{\boldsymbol{x}_{1,l},\boldsymbol{y}_{1,l}}=p^{(2)}$ where the PDF $p^{(2)}$ is defined in (\ref{p2}). 
\end{itemize}
According to the interference pattern just described, Rx~$1$ finds the unique codeword $(s_{1,l})_{l=0}^{n_1-1}$ such that all three statements
\begin{eqnarray}
\label{mid1}
\Big((s_{1,l})_{l=0}^{\tau_2^{(\omega^{-})}-\tau_1^{(j)}+n_{2}-1},(y_{1,l})_{l=\tau_1^{(j)}+n'+1}^{\tau_2^{(\omega^{-})}+n'+n_2}\Big)\in A_{\epsilon}^{(\tau_2^{(\omega^{-})}-\tau_1^{(j)}+n_2)}[p^{(2)}],
\end{eqnarray} 
\begin{eqnarray}
\label{mid2}
\Big((s_{1,l})_{l=\tau_2^{(\omega^{-})}-\tau_1^{(j)}+n_{2}}^{\tau_2^{(\omega^{+})}-\tau_1^{(j)}-n'-1},(y_{1,l})_{l=\tau_2^{(\omega^{-})}+n'+n_2+1}^{\tau_2^{(\omega^{+})}}\Big)\in A_{\epsilon}^{(\tau_2^{(\omega^{+})}-\tau_2^{(\omega^{-})}-n'-n_{2})}[p^{(1)}]
\end{eqnarray}
and 
\begin{eqnarray}
\label{mid3}
\Big((s_{1,l})_{l=\tau_2^{(\omega^{+})}-\tau_1^{(j)}-n'}^{n_{1}-1},(y_{1,l})_{l=\tau_2^{(\omega^{+})}+1}^{\tau_1^{(j)}+n'+n_1}\Big)\in A_{\epsilon}^{(\tau_1^{(j)}-\tau_2^{(\omega^{+})}+n'+n_1)}[p^{(2)}]
\end{eqnarray}
hold. We have the following proposition:
\begin{proposition}
\label{prop_4}
Given the index $j$ of a transmitted codeword of Tx~$i$, assume $\omega_{i,j}^{-}\neq0$ and $\omega_{i,j}^{+}\neq 0$. If $\omega_{i,j}^{-}\neq \omega_{i,j}^{+}$, then
 \begin{eqnarray}
 \label{help0}
\eta_i-\frac{1}{\lambda_{i'}}(1+\omega_{i,j})(\phi_i- \psi_i)\eta_{i'}<\theta_i \psi_i-\theta_{i'}(1+\omega_{i,j}) (\phi_i- \psi_i),
\end{eqnarray}
is a sufficient condition for reliable decoding of the $j^{th}$ message of Tx~$i$ where
\begin{equation}
\label{kapp_1}
\phi_i:=\mathsf{C}(\gamma_i),\,\,\,\psi_i:=\mathsf{C}\Big(\frac{\gamma_i}{1+a_{i'}\gamma_{i'}}\Big)
\end{equation}
and the function $\mathsf{C}(\cdot)$ is defined in (\ref{cap_fun11}). If $\omega_{i,j}^{-}=\omega_{i,j}^{+}$, then (\ref{help0}) is replaced by
\begin{equation}
\eta_i<\theta_i\psi_i.
\end{equation}
\end{proposition}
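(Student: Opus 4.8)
The plan is to pass to the scaled time axis $\bar t=t/n$, read off the (deterministic, in the limit) interference pattern seen by the $j$-th codeword of Tx~$i$, and then run a block-by-block joint-typicality/packing argument whose rate penalty is exactly the length of the interfered part.

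\emph{Step 1 (geometry).} By SLLN applied to~(\ref{laws11}) we have $\tfrac1n\boldsymbol{\tau}_i^{(j)}\to j\mu_i+\nu_i$ and $\tfrac1n\boldsymbol{\tau}_{i'}^{(m)}\to m\mu_{i'}+\nu_{i'}$ almost surely, while $n'/n\to0$ and $n_i/n\to\theta_i$; condition on a probability-one event on which all these limits hold. By Proposition~\ref{prop_3} and the definitions of $\omega_{i,j}^{\pm},\omega_{i,j}$, the $j$-th codeword of Tx~$i$ overlaps exactly bursts $\omega_{i,j}^-,\omega_{i,j}^-+1,\dots,\omega_{i,j}^+$ of Tx~$i'$, and in the sub-case $\omega_{i,j}^-\neq\omega_{i,j}^+$ one has $\omega_{i,j}^+=\omega_{i,j}^-+\omega_{i,j}+1$. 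Since consecutive bursts of Tx~$i'$ are separated on the $\bar t$-axis by a gap of length $\mu_{i'}-\theta_{i'}>0$ (using~(\ref{buffstab})), and since the portion between the \emph{end} of burst $\omega_{i,j}^-$ and the \emph{start} of burst $\omega_{i,j}^+$ lies strictly inside the codeword and consists precisely of the $\omega_{i,j}$ bursts $\omega_{i,j}^-+1,\dots,\omega_{i,j}^+-1$ together with the $\omega_{i,j}+1$ inter-burst gaps, the codeword of Tx~$i$ partitions into finitely many ($2\omega_{i,j}+3$) contiguous blocks on each of which either Tx~$i'$ is silent throughout (the $\omega_{i,j}+1$ gaps, of total scaled length $\bar M_1:=(\omega_{i,j}+1)(\mu_{i'}-\theta_{i'})$) or Tx~$i'$ is active throughout (total scaled length $\bar M_2:=\theta_i-\bar M_1$). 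On the active blocks the input--output law is $p^{(2)}$ (for Tx~$1$; the symmetric law for Tx~$2$), with mutual information $\psi_i$ of~(\ref{kapp_1}); on the silent blocks it is $p^{(1)}$, with mutual information $\phi_i$.

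\emph{Step 2 (typicality decoding and error bound).} Rx~$i$ knows all the arrival times (Section~III), hence the block partition, and decodes to the unique codeword that is $\epsilon$-jointly typical with the received signal, block by block, under the corresponding $p^{(1)}$ or $p^{(2)}$ (as in~(\ref{mid1})--(\ref{mid3})). The error event lies in the union of: (a) the transmitted codeword failing some block's typicality test, and (b) a wrong codeword passing all of them. Event (a) has vanishing probability by the weak law of large numbers, since, conditioned on the transmitted codeword, the received signal on each of the finitely many $\Theta(n)$-length blocks is i.i.d.\ under the corresponding $p^{(1)}$ or $p^{(2)}$. For (b), note that the pairs (transmitted-codeword chunk, received chunk) are independent across blocks (distinct chunks of one Gaussian codeword, independent noise, and either no interference or independent codewords of Tx~$i'$), and a competing codeword is independent of all of them with i.i.d.\ components; applying the joint-typicality (packing) lemma on each block and using this independence, the probability that a fixed competing codeword passes all blocks is at most $\prod_k 2^{-m_k(I_k-\delta(\epsilon))}=2^{-(M_1\phi_i+M_2\psi_i)+n_i\delta(\epsilon)}$, where the $m_k$ are the block lengths ($\sum_k m_k=n_i$), $I_k\in\{\phi_i,\psi_i\}$, $M_1$ (resp.\ $M_2$) is the total length of the $\phi_i$-blocks (resp.\ $\psi_i$-blocks), and $\delta(\epsilon)\to0$ as $\epsilon\to0$. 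A union bound over the at most $2^{\lfloor n\eta_i\rfloor}$ competing codewords gives $\mathbb{P}(\text{(b)})\le 2^{\lfloor n\eta_i\rfloor-(M_1\phi_i+M_2\psi_i)+n_i\delta(\epsilon)}$.

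\emph{Step 3 (limit) and main obstacle.} Since $\tfrac1n M_1\to\bar M_1$ and $\tfrac1n M_2\to\bar M_2$, the bound in Step~2 tends to zero (first $n\to\infty$, then $\epsilon\to0$) whenever
\[
\eta_i<\bar M_1\phi_i+\bar M_2\psi_i=\theta_i\psi_i+(\omega_{i,j}+1)(\mu_{i'}-\theta_{i'})(\phi_i-\psi_i),
\]
and substituting $\mu_{i'}=\eta_{i'}/\lambda_{i'}$ and rearranging reproduces~(\ref{help0}); when $\omega_{i,j}^-=\omega_{i,j}^+$ the codeword of Tx~$i$ sits inside a single burst of Tx~$i'$, so $\bar M_1=0$, $\bar M_2=\theta_i$, and the condition collapses to $\eta_i<\theta_i\psi_i$. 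The substantive step is the geometric bookkeeping in Step~1 --- in particular identifying the interference-free portion of the codeword as having scaled length exactly $(\omega_{i,j}+1)(\mu_{i'}-\theta_{i'})$, which is what forces the particular combination of $\eta_{i'}$, $\theta_{i'}$ and $\phi_i-\psi_i$ in~(\ref{help0}). A secondary delicacy is that the block lengths are themselves random; this is handled by conditioning on the probability-one event of Step~1 before invoking the (now deterministic-length) packing lemma, and the LLN and packing estimates themselves are routine.
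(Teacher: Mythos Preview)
Your argument is essentially the same as the paper's: fix the geometry on the $\bar t$-axis, partition the $j$-th codeword into interfered ($p^{(2)}$, mutual information $\psi_i$) and interference-free ($p^{(1)}$, mutual information $\phi_i$) sub-blocks, run the block-wise joint-typicality/packing bound, and read off the rate condition from the limiting block lengths. Your identification of the total interference-free length as $(\omega_{i,j}+1)(\mu_{i'}-\theta_{i'})$ and the resulting rearrangement into~(\ref{help0}) are correct, and you treat general $\omega_{i,j}\ge 0$ directly whereas the paper writes out only $\omega_{i,j}=0$ and claims the extension.

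The one place where your write-up is looser than the paper is the phrase ``conditioning on the probability-one event of Step~1 before invoking the (now deterministic-length) packing lemma.'' On that almost-sure event the block lengths at finite $n$ are still random; what actually makes the argument go through is either (i) conditioning on the realized arrival times $(\boldsymbol{\tau}_i^{(j)},\boldsymbol{\tau}_{i'}^{(\cdot)})$, which \emph{does} make the $m_k$ deterministic, applying the packing bound pointwise, and then using $M_1/n\to\bar M_1$ a.s.\ to drive the bound to zero; or (ii) the paper's device of introducing a high-probability event $\mathcal{V}_n=\{|\boldsymbol{\tau}/n-\text{limit}|<\delta\}$ on which the exponent is within $O(\delta)$ of its limiting value. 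The paper also spends more effort on your event~(a), using Chernoff/Bernstein bounds and the negative-binomial MGF to get exponential decay uniformly over the random block lengths; your appeal to the WLLN on each $\Theta(n)$-length block is adequate once you have already localized the $\boldsymbol{\tau}$'s, since under~(\ref{res_11}) every sub-block has limiting length strictly positive. These are presentational rather than substantive differences.
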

\begin{proof}
See Appendix~G.
\end{proof}
  \item Assume $\omega^{-}\neq 0$, $\omega^{+}=0$ and $\omega_{1,j}=0$. Then 
  \begin{equation}
\label{ }
\tau_{2}^{(\omega^{-})}+1<\tau_1^{(j)}+n'+1\leq \tau_{2}^{(\omega^{-})}+n'+n_2\leq \tau_{1}^{(j)}+n'+n_1<\tau_{2}^{(\omega^-+1)}+1.
\end{equation}
This situation is shown in Fig. \ref{bufpic7} after removing the bust with index $\omega^{+}$ of Tx~$2$ from the picture. The interference pattern over the $j^{th}$ codeword of Tx~$1$ is described as follows: 
\begin{itemize}
  \item Any symbol of the codeword transmitted during time slots $\tau_1^{(j)}+n'+1$ to $\tau_2^{(\omega^{-})}+n'+n_2$ is received in the presence of interference. For any $\tau_1^{(j)}+n'+1\leq l\leq \tau_2^{(\omega^{-})}+n'+n_2$, we have $p_{\boldsymbol{x}_{1,l},\boldsymbol{y}_{1,l}}=p^{(2)}$ where the PDF $p^{(2)}$ is defined in (\ref{p2}). 
  \item Any symbol of the codeword transmitted during time slots $\tau_2^{(\omega^{-})}+n'+n_2+1$ to $\tau_1^{(j)}+n'+n_1$ does not experience interference. For any $\tau_2^{(\omega^{-})}+n'+n_2+1\leq l\leq \tau_1^{(j)}+n'+n_1$, we have $p_{\boldsymbol{x}_{1,l},\boldsymbol{y}_{1,l}}=p^{(1)}$ where the PDF $p^{(1)}$ is defined in (\ref{p1}).
\end{itemize}
According to the interference pattern just described, Rx~$1$ finds the unique codeword $(s_{1,l})_{l=0}^{n_1-1}$ such that the two constraints
\begin{eqnarray}
\label{mid1_11}
\Big((s_{1,l})_{l=0}^{\tau_2^{(\omega^{-})}-\tau_1^{(j)}+n_2-1},(y_{1,l})_{l=\tau_1^{(j)}+n'+1}^{\tau_2^{(\omega^{-})}+n'+n_2}\Big)\in A_{\epsilon}^{(\tau_2^{(\omega^{-})}-\tau_1^{(j)}+n_2)}[p^{(2)}]
\end{eqnarray} 
and
\begin{eqnarray}
\label{mid2_22}
\Big((s_{1,l})_{l=\tau_2^{(\omega^{-})}-\tau_1^{(j)}+n_2}^{n_1-1},(y_{1,l})_{l=\tau_2^{(\omega^{-})}+n'+n_2+1}^{\tau_1^{(j)}+n'+n_1}\Big)\in A_{\epsilon}^{(\tau_1^{(j)}-\tau_2^{(\omega^{-})}+n_1-n_2)}[p^{(1)}]
\end{eqnarray}
hold. 
\begin{proposition}
\label{prop_5}
Given the index $j$ of a transmitted codeword of Tx~$i$, assume $\omega_{i,j}^{-}\neq 0$ and $\omega_{i,j}^{+}=0$. Then
 \begin{eqnarray}
 \label{help1}
      \big(1-\frac{j}{\lambda_i}( \phi_i- \psi_i)\big)\eta_i+\frac{\omega_{i,j}^-}{\lambda_{i'}}( \phi_i- \psi_i)\eta_{i'}<\theta_i\phi_i-\big(\nu_{i'}-\nu_i+\theta_{i'}(1+\omega_{i,j})\big)(\phi_i- \psi_i)
\end{eqnarray}
is a sufficient condition for reliable decoding of the $j^{th}$ message of Tx~$i$ where $\phi_i$ and $\psi_i$ are defined in (\ref{kapp_1}). 
\end{proposition}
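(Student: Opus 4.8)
The plan is to run the joint-typicality error analysis that underlies Proposition~\ref{prop_4}, now for the interference pattern produced by $\omega_{i,j}^-\neq 0$ and $\omega_{i,j}^+=0$. Take $i=1$ (the case $i=2$ is symmetric). By Section~III the receiver knows all the $\tau_\ell^{(m)}$, so the $n_1$ time slots occupied by the $j^{th}$ codeword of Tx~$1$ split into $2\omega_{1,j}+2$ consecutive blocks that alternate between ``interference present'' and ``interference absent'': the codeword opens with an interference block caused by the tail of burst $\omega_{1,j}^-$ of Tx~$2$; then for each of the $\omega_{1,j}$ bursts of Tx~$2$ wholly contained in the codeword there is a silent gap followed by an interference block; and a final silent gap runs to the end of the codeword. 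When $\omega_{1,j}=0$ this is exactly the two-block decomposition behind \eqref{mid1_11}--\eqref{mid2_22}. On every interference block the pair $(\boldsymbol{x}_{1,l},\boldsymbol{y}_{1,l})$ has law $p^{(2)}$ and on every silent block it has law $p^{(1)}$, because a Tx~$2$ symbol is $\mathrm{N}(0,\gamma_2)$ whether it belongs to a preamble or to a codeword, and Proposition~\ref{prop1} prevents a burst edge from falling inside a preamble.

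Next I would carry out the standard argument. By the weak law of large numbers the transmitted codeword is $\epsilon$-jointly typical (with respect to $p^{(1)}$ or $p^{(2)}$ as appropriate) with the received sequence in each of the $2\omega_{1,j}+2$ blocks with probability tending to $1$. For a fixed incorrect codeword the blocks are read off pairwise-disjoint coordinates, so the ``jointly typical'' events for different blocks are independent given $(\boldsymbol{y}_{1,l})$, and the probability that this codeword is typical in all of them is at most $\prod_r 2^{-m_r(I_r-\delta_\epsilon)}$, where $m_r$ is the (random) length of block $r$, $I_r$ is the mutual information $I(\boldsymbol{x};\boldsymbol{y})$ under the relevant law — equal to $\mathsf{C}(\gamma_1)=\phi_1$ on silent blocks and to $\mathsf{C}\!\big(\tfrac{\gamma_1}{1+a_2\gamma_2}\big)=\psi_1$ on interference blocks — and $\delta_\epsilon\to 0$ as $\epsilon\to0$. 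A union bound over the $2^{\lfloor n\eta_1\rfloor}$ codewords then bounds the decoding error probability by a vanishing term plus $2^{\lfloor n\eta_1\rfloor}2^{-M_{\mathrm{int}}(\psi_1-\delta_\epsilon)-M_{\mathrm{sil}}(\phi_1-\delta_\epsilon)}$, where $M_{\mathrm{int}}$ and $M_{\mathrm{sil}}$ are the total interference and total silent lengths.

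It then remains to compute the asymptotic lengths and rearrange. Using $\tfrac1n\boldsymbol{\tau}_\ell^{(m)}\to m\mu_\ell+\nu_\ell$ and $\tfrac1n n_\ell\to\theta_\ell$ (SLLN, as already used after Proposition~\ref{prop_3}), the tail of burst $\omega_{1,j}^-$ contributes $\omega_{1,j}^-\mu_2+\nu_2+\theta_2-(j\mu_1+\nu_1)$ to $\tfrac1n M_{\mathrm{int}}$ and each of the $\omega_{1,j}$ contained bursts contributes $\theta_2$, so
\begin{equation}
\tfrac1n M_{\mathrm{int}}\to\alpha_1:=\omega_{1,j}^-\mu_2+\nu_2-j\mu_1-\nu_1+\theta_2(1+\omega_{1,j}),\qquad \tfrac1n M_{\mathrm{sil}}\to\theta_1-\alpha_1 .
\end{equation}
Hence the error bound vanishes once $\eta_1<\alpha_1\psi_1+(\theta_1-\alpha_1)\phi_1=\theta_1\phi_1-\alpha_1(\phi_1-\psi_1)$; substituting $\mu_1=\eta_1/\lambda_1$ and $\mu_2=\eta_2/\lambda_2$ and collecting the $\eta_1$ and $\eta_2$ terms turns this into \eqref{help1}.

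The point that needs the most care is that $M_{\mathrm{int}}$, $M_{\mathrm{sil}}$ and the individual $m_r$ are random, so the union-bound exponent $\lfloor n\eta_1\rfloor-M_{\mathrm{int}}(\psi_1-\delta_\epsilon)-M_{\mathrm{sil}}(\phi_1-\delta_\epsilon)$ is itself random; one restricts to the event on which every $m_r/n$ lies within a small tolerance of its limit — the arrival times being offsets plus sums of $\Theta(n)$ i.i.d.\ increments, the complement of this event has exponentially small probability, which (for the tolerance and $\epsilon$ small enough) dominates the factor $2^{\lfloor n\eta_1\rfloor}$ coming from the complementary event — and on this event the exponent is bounded above by $-\kappa n$ for some $\kappa>0$ precisely because \eqref{help1} is a \emph{strict} inequality. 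Beyond that, the only bookkeeping is to check that the $2\omega_{1,j}+2$ block lengths sum to $n_1$ (equivalently, that the scaled limits sum to $\theta_1$), which pins down $M_{\mathrm{sil}}$ once $M_{\mathrm{int}}$ is known; everything else is the routine typicality computation.
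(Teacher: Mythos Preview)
Your approach is the paper's: decompose the codeword into alternating interference/silent blocks, bound the probability that a wrong codeword is jointly typical on every block by the standard mutual-information exponents, union-bound over $2^{\lfloor n\eta_1\rfloor}$ codewords, and restrict to a concentration event on which the block lengths are close to their SLLN limits. The limits you compute for $M_{\mathrm{int}}$ and $M_{\mathrm{sil}}$ match the paper's \eqref{three_0}--\eqref{three_2}, and the rearrangement to \eqref{help1} is correct.

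One step is off. You say the complement of the concentration event has exponentially small probability which ``dominates the factor $2^{\lfloor n\eta_1\rfloor}$ coming from the complementary event''. This would require the large-deviation rate of the complement to exceed $\eta_1\ln 2$; but that rate tends to $0$ as your tolerance shrinks, and you \emph{do} need the tolerance small for the main exponent to be negative, so the two requirements can conflict. The paper avoids this by splitting \emph{before} the union bound:
\[
\mathbb{P}(\mathrm{error}_2,\mathcal{U}_n)\le \mathbb{P}(\mathrm{error}_2,\mathcal{U}_n,\mathcal{V}_n)+\mathbb{P}(\mathcal{V}_n^c),
\]
applies the $2^{\lfloor n\eta_1\rfloor}$ factor only to the first term, and bounds the second by $\epsilon$ via mere convergence in probability---no rate is needed at all. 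With that adjustment your argument goes through.
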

\begin{proof}
See Appendix~H.
\end{proof}
  \item Assume $\omega^{+}\neq 0$, $\omega^{-}=0$ and $\omega_{1,j}=0$. We have 
  \begin{equation}
\label{}
\tau_2^{(\omega^+-1)}+n'+n_2<\tau_1^{(j)}+n'+1\leq \tau_{2}^{(\omega^{+})}+1\leq \tau_{1}^{(j)}+n'+n_1<\tau_{2}^{(\omega^{+})}+n'+n_2.
\end{equation}
This situation is shown in Fig. \ref{bufpic7} after removing the burst with index $\omega^{-}$ of Tx~$2$ from the picture. The interference pattern over the $j^{th}$ codeword of Tx~$1$ is described as follows: 
\begin{itemize}
  \item Any symbol of the codeword transmitted during time slots $\tau_1^{(j)}+n'+1$ to $\tau_2^{(j^{+})}$ does not experience interference. For any $\tau_1^{(j)}+n'+1\leq l\leq \tau_2^{(\omega^{+})}$, we have $p_{\boldsymbol{x}_{1,l},\boldsymbol{y}_{1,l}}=p^{(1)}$ where the PDF $p^{(1)}$ is defined in (\ref{p1}). 
  \item Any symbol of the codeword transmitted during time slots $\tau_2^{(\omega^{+})}+1$ to $\tau_1^{(j)}+n'+n_1$ is received in the presence of interference. For any $\tau_2^{(\omega^{+})}+1\leq l\leq \tau_1^{(j)}+n'+n_1$, we have $p_{\boldsymbol{x}_{1,l},\boldsymbol{y}_{1,l}}=p^{(2)}$ where the PDF $p^{(2)}$ is defined in (\ref{p2}).
\end{itemize}
According the to the interference pattern just described, Rx~$1$ finds the unique codeword $(s_{1,l})_{l=0}^{n_1-1}$ such that the two constraints
\begin{eqnarray}
\label{mid1_11}
\Big((s_{1,l})_{l=0}^{\tau_2^{(\omega^{+})}-\tau_1^{(j)}-n'-1},(y_{1,l})_{l=\tau_1^{(j)}+n'+1}^{\tau_2^{(\omega^{+})}}\Big)\in A_{\epsilon}^{(\tau_2^{(\omega^{+})}-\tau_1^{(j)}-n')}[p^{(1)}]
\end{eqnarray} 
and
\begin{eqnarray}
\label{mid2_22}
\Big((s_{1,l})_{l=\tau_2^{(\omega^{+})}-\tau_1^{(j)}-n'}^{n_1-1},(y_{1,l})_{l=\tau_2^{(\omega^{+})}+1}^{\tau_1^{(j)}+n'+n_1}\Big)\in A_{\epsilon}^{(\tau_1^{(j)}-\tau_2^{(\omega^{+})}+n'+n_1)}[p^{(2)}]
\end{eqnarray}
hold. 
\begin{proposition}
\label{prop_6}
Given the index $j$ of a transmitted codeword of Tx~$i$, assume $\omega_{i,j}^{-}=0$ and $\omega_{i,j}^{+}\neq 0$. Then
\begin{equation}
\label{help2}
      \big(1+\frac{j}{\lambda_i}(\phi_i- \psi_i)\big)\eta_i-\frac{\omega_{i,j}^+}{\lambda_{i'}}(\phi_i- \psi_i)\eta_{i'}< \theta_i \psi_i+(\nu_{i'}-\nu_i-\theta_{i'}\omega_{i,j})(\phi_i- \psi_i).
\end{equation} 
is a sufficient condition for reliable decoding of the $j^{th}$ message of Tx~$i$ where $\phi_i$ and $\psi_i$ are defined in (\ref{kapp_1}).
\end{proposition}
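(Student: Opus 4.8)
The plan is to analyze the joint-typicality decoder for the $j^{th}$ codeword of Tx~$i$, using the two-segment decoder described just before the statement of the proposition but keeping $\omega_{i,j}$ arbitrary. First I would pin down the interference pattern over this codeword on the $\bar t$-axis. Since the receiver knows every arrival time $\tau_\ell^{(k)}$ (Section~III) it knows exactly which of its symbols are interfered; and by Proposition~\ref{prop_3}, the geometric picture, and the strictness built into (\ref{res_11}) (with Proposition~\ref{prop1} ruling out preamble intervals straddling burst boundaries), the pattern is: because $\omega_{i,j}^{-}=0$, a prefix of the $j^{th}$ codeword of Tx~$i$ is interference-free; then come the $\omega_{i,j}$ bursts of Tx~$i'$ fully contained in the codeword (indices $\omega_{i,j}^{+}-\omega_{i,j},\dots,\omega_{i,j}^{+}-1$), each separated from the next by an interference-free gap of $\bar t$-length $\mu_{i'}-\theta_{i'}>0$ (here the constraint (\ref{buffstab}), $\mu_{i'}>\theta_{i'}$, is used); and finally burst $\omega_{i,j}^{+}$, which enters the codeword and runs past its right end. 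Lumping all interference-free positions together ($m_1$ of them) and all interfered positions together ($m_2$ of them, with $m_1+m_2=n_i$), the decoder picks the unique codeword whose restriction to the first set is $\epsilon$-jointly typical with the received samples under $p^{(1)}$ and whose restriction to the second set is $\epsilon$-jointly typical under $p^{(2)}$.

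Next I would run the standard error analysis. The true codeword fails the typicality tests with probability $\to0$ by the weak law of large numbers, exactly as in (\ref{wl1}). For a wrong codeword the symbols are i.i.d.\ $\mathrm{N}(0,\gamma_i)$; conditioning on the received sequence, the two tests act on disjoint blocks of that wrong codeword and are therefore independent, so both are passed with probability at most $2^{-m_1(I_1-\epsilon')}2^{-m_2(I_2-\epsilon')}$, where $I_1=I(\boldsymbol x;\boldsymbol y)$ under $p^{(1)}$ equals $\mathsf{C}(\gamma_i)=\phi_i$ and $I_2=I(\boldsymbol x;\boldsymbol y)$ under $p^{(2)}$ equals $\mathsf{C}\!\big(\tfrac{\gamma_i}{1+a_{i'}\gamma_{i'}}\big)=\psi_i$. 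A union bound over the $2^{\lfloor n\eta_i\rfloor}-1$ wrong codewords shows the error probability vanishes, with a strictly positive exponent, as soon as
\begin{equation*}
\eta_i<\Big(\lim_{n\to\infty}\tfrac1n m_1\Big)\phi_i+\Big(\lim_{n\to\infty}\tfrac1n m_2\Big)\psi_i .
\end{equation*}

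To evaluate the two limits I would invoke SLLN in the form $\frac1n\boldsymbol\tau_\ell^{(k)}\to k\mu_\ell+\nu_\ell$, which simultaneously makes the preamble length $n'=o(n)$, the floor operations, and the negative-binomial fluctuations negligible on the $\bar t$-axis. The interfered part of the codeword is the union of the $\omega_{i,j}$ contained bursts (total $\bar t$-length $\omega_{i,j}\theta_{i'}$) with the portion of burst $\omega_{i,j}^{+}$ lying inside the codeword, i.e.\ from $\omega_{i,j}^{+}\mu_{i'}+\nu_{i'}$ to the right end $j\mu_i+\nu_i+\theta_i$; hence
$\lim\frac1n m_1=\omega_{i,j}^{+}\mu_{i'}+\nu_{i'}-j\mu_i-\nu_i-\omega_{i,j}\theta_{i'}=:\alpha$ and $\lim\frac1n m_2=\theta_i-\alpha$. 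Substituting into the displayed inequality, its right-hand side is $\theta_i\psi_i+\alpha(\phi_i-\psi_i)$, so the condition becomes $\eta_i-\alpha(\phi_i-\psi_i)<\theta_i\psi_i$; expanding $\alpha$, replacing $\mu_i=\eta_i/\lambda_i$ and $\mu_{i'}=\eta_{i'}/\lambda_{i'}$, and collecting terms gives exactly (\ref{help2}).

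The main obstacle is the bookkeeping in the third paragraph: one must verify that the interference-free gaps between consecutive contained bursts are genuinely present, that burst $\omega_{i,j}^{+}$ really straddles the right end while no other burst of Tx~$i'$ touches the codeword (this is where Proposition~\ref{prop_3} together with the ``no coincidence'' content of (\ref{res_11}) is essential), and that all the $o(n)$ errors from $n'$, the floors, and the $O(\sqrt n)$-type fluctuations of the partial sums $\boldsymbol\xi_i^{(j)}$ are dominated by the $\Theta(n)$ slack in the strict inequality (\ref{help2}), so that the error exponent stays positive. Everything else reduces to the AEP counting bound and the algebraic rearrangement above, which is routine.
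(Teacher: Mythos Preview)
Your proposal is correct and follows essentially the same route as the paper. The paper omits the proof of Proposition~\ref{prop_6} entirely, pointing instead to Appendix~H (the proof of Proposition~\ref{prop_5}), which does exactly what you outline: it isolates the high-probability geometric event via Proposition~\ref{prop_3}, introduces a concentration event around the SLLN limits $\frac{1}{n}\boldsymbol{\tau}_\ell^{(k)}\to k\mu_\ell+\nu_\ell$, applies the standard joint-typicality count $2^{-m(\text{mutual info}-3\epsilon)}$ on each segment, takes a union bound over $2^{\lfloor n\eta_i\rfloor}$ codewords, and reads off the exponent. Your device of lumping all interference-free positions into one block and all interfered positions into another is a harmless simplification (the exponents add either way) and in fact matches the shortcut the paper itself advertises in the Remark after Corollary~\ref{coro_1}; it also makes the extension to $\omega_{i,j}\ge1$ transparent, whereas Appendix~H treats only $\omega_{i,j}=0$ and defers the general case to the reader.
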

\begin{proof}
The proof is similar to the proof of Proposition~\ref{prop_5} and is omitted. 
\end{proof}
\item If $\omega_{i,j}^-=\omega_{i,j}^+=0$, 
 \begin{equation}
\label{ynbt11}
\eta_i< \theta_i\phi_i-\theta_{i'}\omega_{i,j}(\phi_i-\psi_i)
\end{equation}
is a sufficient condition for reliable decoding of the $j^{th}$ message of Tx~$i$.
    \end{itemize}
        \begin{corollary}
    \label{coro_1}
    If $\eta_i<\theta_i  \psi_i$, the probability of error in decoding the $j^{th}$ message of Tx~$i$ vanishes by increasing $n$ for any $j$ regardless of the values of $\omega_{i,j}^-$, $\omega_{i,j}^+$, $\omega_{i,j}$, $\nu_i$ and $\nu_{i'}$.  If $\eta_i\geq  \theta_i\phi_i$, the probability of error in decoding any message of Tx~$i$  approaches one by increasing $n$.
    \end{corollary}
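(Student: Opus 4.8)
The plan is to read both halves of the statement off the error analysis already performed for Propositions~\ref{prop_4}--\ref{prop_6}, rather than to manipulate the individual sufficient conditions~(\ref{help0})--(\ref{ynbt11}) one by one. By Proposition~\ref{prop_3} and the geometric description that follows it, the $j^{th}$ codeword of Tx~$i$, of length $n_i=\lfloor n\theta_i\rfloor$, is received over a concatenation of finitely many consecutive stretches, each of which is either interference-free --- where, with the $\mathrm{N}(0,\gamma_i)$ codebook, the relevant per-symbol mutual information is $\phi_i$ --- or interfered, where it is $\psi_i$. Writing $\bar{L}_{i,j}$ for the asymptotic normalized total length of the interference-free stretches, the random-coding/union-bound argument used in those propositions shows that the probability of decoding the $j^{th}$ message of Tx~$i$ incorrectly is bounded by a term that vanishes by the law of large numbers (the sent codeword failing to be jointly typical on one of the constantly many stretches) plus a term of order $2^{-n\left(\theta_i\psi_i+\bar{L}_{i,j}(\phi_i-\psi_i)-\eta_i\right)+o(n)}$. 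The only two properties of $\bar{L}_{i,j}$ I will need are $\bar{L}_{i,j}\ge 0$ and $\bar{L}_{i,j}\le\theta_i$, the latter because the interference-free part is a portion of the codeword.

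For the first claim I would argue as follows. Since $a_{i'}\gamma_{i'}>0$ we have $\phi_i\ge\psi_i$, so for every realization of $\omega_{i,j}^{-},\omega_{i,j}^{+},\omega_{i,j},\nu_i$ and $\nu_{i'}$, $\theta_i\psi_i+\bar{L}_{i,j}(\phi_i-\psi_i)-\eta_i\ge\theta_i\psi_i-\eta_i>0$ whenever $\eta_i<\theta_i\psi_i$; hence the exponent above is bounded below by a positive constant uniformly in all these parameters and the decoding error tends to zero. As a consistency check I would also verify directly that $\eta_i<\theta_i\psi_i$ forces each of~(\ref{help0}),~(\ref{help1}),~(\ref{help2}) and~(\ref{ynbt11}); for example~(\ref{help0}) rearranges to $\eta_i-\theta_i\psi_i<(1+\omega_{i,j})(\phi_i-\psi_i)(\mu_{i'}-\theta_{i'})$, whose right-hand side is non-negative because $\phi_i\ge\psi_i$ and $\mu_{i'}>\theta_{i'}$ by~(\ref{buffstab}).

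For the second claim, suppose $\eta_i\ge\theta_i\phi_i$. Using $\bar{L}_{i,j}\le\theta_i$ together with $\phi_i\ge\psi_i$, $\theta_i\psi_i+\bar{L}_{i,j}(\phi_i-\psi_i)\le\theta_i\phi_i\le\eta_i$, so the per-symbol rate of the $j^{th}$ codeword, $\frac{\lfloor n\eta_i\rfloor}{n_i}\to\eta_i/\theta_i$, is at least the largest mutual-information rate its effective channel can support. Viewing that codeword over its most favourable instance --- an interference-free AWGN channel of per-symbol power $\gamma_i$ and capacity $\phi_i$ --- and invoking the strong converse for the Gaussian channel, a rate at or above $\phi_i$ drives the decoding-error probability to one; and if the codeword does experience interference the effective capacity is strictly below $\phi_i$, so the conclusion is only easier.

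The first claim is essentially bookkeeping once the exponent $\theta_i\psi_i+\bar{L}_{i,j}(\phi_i-\psi_i)-\eta_i$ is in hand. The hard part will be the converse: strengthening ``error bounded away from $0$'' to ``error $\to 1$'' genuinely needs the strong converse, and the borderline case $\eta_i=\theta_i\phi_i$ with a codeword received entirely free of interference is a knife's edge --- the per-symbol rate then approaches $\phi_i$ only within the $O(1/n)$ slack created by the floor functions --- which I would handle separately, leaning on the strong converse together with the observation that any interference strictly lowers the effective capacity.
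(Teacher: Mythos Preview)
Your proposal is correct and aligns with the paper's proof. The paper argues the first claim by checking each of the four cases for $(\omega_{i,j}^-,\omega_{i,j}^+)$ directly---for instance, its verification of~(\ref{help0}) via $\mu_{i'}>\theta_{i'}$ is exactly your ``consistency check''---while your primary argument packages all cases at once through the exponent $\theta_i\psi_i+\bar{L}_{i,j}(\phi_i-\psi_i)-\eta_i$ and the bound $\bar{L}_{i,j}\ge 0$; this is just the unified form of what the paper is doing (and indeed is the content of the remark and~(\ref{three_2}) immediately following the corollary). For the second claim, the paper, like you, simply observes that the codebook rate $\eta_i/\theta_i\ge\phi_i$ meets or exceeds the interference-free AWGN capacity and invokes the (strong) converse; your added care about the borderline $\eta_i=\theta_i\phi_i$ case is a refinement the paper leaves implicit.
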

    \begin{proof}
     Let $\eta_i<\theta_i\psi_i$. We consider the following cases:
     \begin{itemize}
  \item Assume $\omega^-_{i,j}\neq 0$, $\omega^+_{i,j}\neq 0$ and $\omega^-_{i,j}\neq \omega^+_{i,j}$. Since $\eta_i<\theta_i\psi_i$ and $\theta_{i'}<\mu_{i'}$ by (\ref{buffstab}), we must have $\eta_i<\theta_i \psi_i+(\mu_{i'}-\theta_{i'})(1+\omega_{i,j}) (\phi_i- \psi_i)$ which is exactly (\ref{help0}) after rearranging terms.
  \item Assume $\omega^-_{i,j}\neq 0$ and $\omega^+_{i,j}=0$. On the $\overline{t}$-axis, the interval $\mathcal{J}_0=(j\mu_i+\nu_i,\omega^-\mu_{i'}+\nu_{i'}+\theta_{i'})$ together with $\omega_{i,j}$ intervals $\mathcal{J}_1,\cdots, \mathcal{J}_{\omega_{i,j}}$ each of length $\theta_{i'}$ corresponding to the bursts of indices $\omega^-+1,\cdots, \omega^-+\omega_{i,j}$ of Tx~$i'$ are disjoint intervals and all are included in the interval $\mathcal{J}=(j\mu_i+\nu_i,j\mu_{i}+\nu_{i}+\theta_{i})$ corresponding to the $j^{th}$ codeword of Tx~$i$. Hence, the sum of the lengths of the intervals $\mathcal{J}_0,\mathcal{J}_1,\cdots, \mathcal{J}_{\omega_{i,j}}$ which is $\vartheta:=\omega^-\mu_{i'}+\nu_{i'}+\theta_{i'}-(j\mu_i+\nu_i)+\theta_{i'}\omega_{i,j}$ must be less than or equal to the length $\theta_i$ of $\mathcal{J}$. Then one can write $\theta_i\psi_i\leq (\theta_i-\vartheta)\phi_i+\vartheta\psi_i$ due to $\theta_i-\vartheta\geq 0$. Since $\eta_i<\theta_i\psi_i$ by assumption, we get $\eta_i<(\theta_i-\vartheta)\phi_i+\vartheta\psi_i$ which is exactly~(\ref{help1}) after rearranging terms.
  \item The cases $\omega^-_{i,j}=0, \omega^+_{i,j}\neq 0$ and $\omega^-_{i,j}=\omega^+_{i,j}= 0$ are analyzed similarly. We omit the details.  
\end{itemize}
     If $\eta_i>\theta_i\phi_i$, reliable communication is impossible due to the fact that the capacity of an AWGN~channel with SNR $\gamma_i$ is $ \phi_i$.  The codebook~rate for Tx~$i$ is $\lim_{n\to\infty}\frac{\lfloor n\eta_i\rfloor}{\lfloor n\theta_i\rfloor}=\frac{\eta_i}{\theta_i}$. The probability of error tends to one if $\frac{\eta_i}{\theta_i}\geq \phi_i$.  
     \end{proof}
     
     \textbf{Remark}- Let us describe a simple method to obtain the constraints (\ref{help0}), (\ref{help1}) and (\ref{help2}) in Propositions~\ref{prop_4}, \ref{prop_5} and~\ref{prop_6}, respectively.  For example, let us discuss how to obtain (\ref{help1}) by looking at the positions of the bursts on the $\overline{t}$-axis. Fig.~\ref{bic_fig2} depicts the $j^{th}$ codeword of Tx~$i$ in a situation where $\omega^-_{i,j}\neq 0$ and $\omega^+_{i,j}=0$. The table in~(\ref{table1}) shows the numbers on the $\overline{t}$-axis corresponding to different points in Fig~\ref{bic_fig2}. The interval during which the $j^{th}$ burst of Tx~$i$ is sent can be divided into two subintervals, i.e., subinterval~1 where the two users interfere and subinterval~2 where there is no interference. Using~(\ref{table1}) it is easy to see that 
        \begin{figure}[t]
  \centering
  \includegraphics[scale=.8] {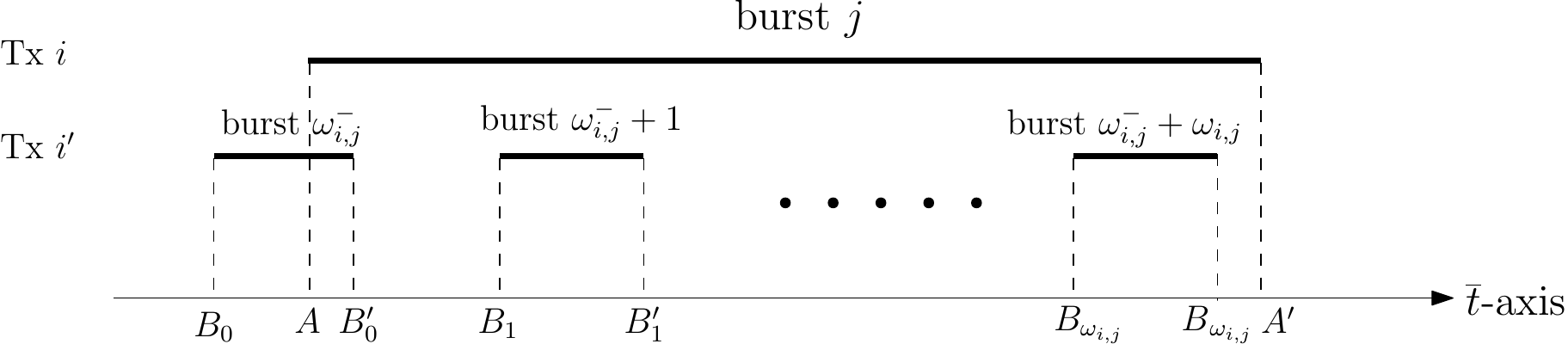}
  \caption{This picture shows the bursts of both users on the $\overline{t}$-axis in a situation where $\omega^-_{i,j}\neq 0$ and $\omega^{+}_{i,j}=0$}
  \label{bic_fig2}
 \end{figure} 
     \begin{figure*}
      \begin{eqnarray}
 \label{table1}
\begin{tabular}{|c|c|c|c|}
\hline
&\\
   $A=j\mu_i+\nu_i$ & $A'=j\mu_i+\nu_i+\theta_i$ \\
   &\\
   \hline
   &\\
   $\begin{array}{c}
      B_m=(\omega^-_{i,j}+m)\mu_{i'}+\nu_{i'}    \\\\
       m=0,1,\cdots,\omega_{i,j}   
\end{array}$ & $\begin{array}{c}
      B'_m=(\omega^-_{i,j}+m)\mu_{i'}+\nu_{i'}+\theta_{i'}    \\\\
       m=0,1,\cdots,\omega_{i,j}   
\end{array}$\\
   &\\
\hline
\end{tabular}
\end{eqnarray}
\end{figure*}
     \begin{eqnarray}
     \label{three_0}
\textrm{length of subinterval~1}&=&AB'_0+B_1B'_1+\cdots+B_{\omega_{i,j}}B'_{\omega_{i,j}}\notag\\
&=&\omega^-_{i,j}\mu_{i'}+\nu_{i'}+\theta_{i'}-(j\mu_i+\nu_i)+\omega_{i,j}\theta_{i'}
\end{eqnarray}
and 
  \begin{eqnarray}
  \label{three_1}
\textrm{length of subinterval~2}&=&\theta_i-\textrm{length of subinterval~1}\notag\\
&=&\theta_i-\big(\omega^-_{i,j}\mu_{i'}+\nu_{i'}+\theta_{i'}-(j\mu_i+\nu_i)+\omega_{i,j}\theta_{i'}\big).
\end{eqnarray}
Finally, the criterion for successful decoding of the $j^{th}$ transmitted codeword of Tx~$i$ is given by 
\begin{eqnarray}
\label{three_2}
\eta_{i}<\textrm{(length of subinterval~1)}\,\psi_{i}+\textrm{(length of subinterval~2)}\,\phi_{i}.
\end{eqnarray}
Substituting (\ref{three_0}) and (\ref{three_1}) in (\ref{three_2}) and rearranging terms, we get (\ref{help1}).
    \section{The admissible set $\mathcal{A}$ for $(\nu_1,\nu_2)$ and the probability of outage}
    \subsection{System Design}
In Section~II.D we obtained the average transmission power and the average transmission rate for Tx~$i$ as $Q_i=\frac{N_i\gamma_i}{1+\frac{1}{q_i\theta_i}}$ and $R_i=\frac{\lambda_i}{1+q_i\theta_i}$, respectively, in the asymptote of large $n$. Throughout this section we assume none of the transmitters performs power control and both transmit at full power, i.e., 
\begin{equation}
\label{ }
Q_i=P_i,\,\,\,i=1,2.
\end{equation}
We get  
\begin{eqnarray}
\label{buffstab1}
\theta_i=\hat{\theta}_i:=\frac{1}{q_i}\Big(\frac{\lambda_i}{R_i}-1\Big)
\end{eqnarray}
and
  \begin{equation}
\label{ }
\gamma_i=\hat{\gamma}_{i}:=\frac{P_i}{N_i}\Big(1+\frac{1}{q_i\hat{\theta}_i}\Big)=\frac{P_i}{N_i}\frac{\lambda_i}{\lambda_i-R_i}.
\end{equation}
If $N_i>1$, then (\ref{buffstab}) together with (\ref{buffstab1}) imply that  
\begin{equation}
\label{r_11}
\frac{N_i}{N_i+1}\lambda_i<R_i<\lambda_i.
\end{equation}
If $N_i=1$, we simply have 
\begin{equation}
\label{r_22}
0<R_i<\lambda_i.
\end{equation}
 In Section~IV, we derived sufficient conditions for successful decoding at the receivers. Letting $\theta_i=\hat{\theta}_i$ and $\gamma_i=\hat{\gamma}_{i}$, we aim to characterize an admissible region $\mathcal{A}$ for $(\nu_1,\nu_2)$ such that reliable communication is guaranteed for all transmitted codewords. Define
\begin{equation}
\label{ }
\hat{\phi}_i:=\phi_i\big|_{\gamma_i=\hat{\gamma}_i },\,\,\,\,\hat{\psi}_i:=\psi_i\big|_{\gamma_1=\hat{\gamma}_1,\gamma_2=\hat{\gamma}_2 }
\end{equation}
If $\eta_i\geq \hat{\theta}_i\hat{\phi}_i$, reliable communication is impossible for Tx~$i$ as stated in Corollary~\ref{coro_1}. For any value of $N_i$, define\footnote{\label{ft_3}Using the change of variable $x:=-\frac{1}{N_i}\frac{\lambda_i}{\lambda_i-R_i}-\frac{1}{P_i}$, the inequality $\eta_i <\hat{\theta}_i\hat{\phi}_i$ can be written as $x2^x>-\frac{1}{P_i}2^{-\frac{1}{N_i}-\frac{1}{P_i}}$ which is equivalent to $x>x_0$ for some $x_0$. This in turn results in the solution $0<R_i<\overline{R}_i(N_i)$ for $R_i$.}
\begin{eqnarray}
\label{r_33}
\overline{R}_i(N_i):=\sup\{R_i: \eta_i<\hat{\theta}_i\hat{\phi}_i\}.
\end{eqnarray}
By (\ref{r_11}), (\ref{r_22}) and (\ref{r_33}), we demand that $R_i$ be in the interval
\begin{equation}
\label{r_44}
\frac{N_i}{N_i+1}\mathds{1}_{N_i>1}\lambda_i<R_i<\min\{\lambda_i,\overline{R}_i(N_i)\}=\overline{R}_i(N_i),
\end{equation}
where we have used the fact that\footnote{Recall from Footnote~\ref{ft_3} that $\eta_i <\hat{\theta}_i\hat{\phi}_i$ is equivalent to $0<R_i<\overline{R}_i(N_i)$. It is easy to see that $\lim_{R_i\to\lambda^-_i}\hat{\theta}_i\hat{\phi}_i=0$. Hence, there exists a $0<\delta<\lambda_i$ such that $\hat{\theta}_i\hat{\phi}_i<\eta_i$ for $R_i=\lambda_i-\delta$. This gives $\overline{R}_i(N_i)\leq\lambda_i-\delta$ as desired.} $\overline{R}_i(N_i)<\lambda_i$. 
 For any $R_1,R_2$ define 
\begin{eqnarray}
\mathcal{N}_{R_1,R_2}:=\big\{(N_1,N_2): \textrm{$R_i$ satisfies (\ref{r_44}) for $i=1,2$}\big\}.
\end{eqnarray}
We call $\mathcal{N}_{R_1,R_2}$ the \textit{active set} for the pair $(R_1,R_2)$. Note that $\mathcal{N}_{R_1,R_2}$ is finite due to the constraint $\frac{N_i}{N_i+1}\mathds{1}_{N_i>1}<\frac{R_i}{\lambda_i}<1$ in (\ref{r_44}). 

By Corollary~\ref{coro_1}, if $\eta_i< \hat{\theta}_i \hat{\psi}_i$, the codewords of user~$i$ are received reliably regardless of the values of $\nu_1$ and $\nu_2$. A more interesting situation occurs when $R_1,R_2$ satisfy 
\begin{equation}
\label{please1}
 \min_{(N_1,N_2)\in\,\mathcal{N}_{R_1,R_2}}\,\max\bigg\{\frac{\eta_1}{\hat{\theta}_1 \hat{\psi}_1}, \frac{\eta_2}{\hat{\theta}_2 \hat{\psi}_2}\bigg\}\geq 1.
\end{equation}
 The inequality in (\ref{please1})  implies that for any $(N_1,N_2)\in\mathcal{N}_{R_1,R_2}$, there is $i\in\{1,2\}$ such that $\eta_i\geq\hat{\theta}_i\hat{\psi}_i$ and hence, the values of $\nu_1$ and $\nu_2$ \textit{may} potentially affect reliable communication for Tx~$i$. 
 
 \textbf{Remark}- Throughout the rest of this section, we are only interested in rate pairs $(R_1,R_2)$ and system parameters $q_1,q_2,\lambda_1,\lambda_2, a_1,a_2, P_1,P_2$ such that (\ref{please1}) holds.   

Towards characterizing an admissible region $\mathcal{A}$ for $(\nu_1,\nu_2)$,  we need to define the concept of the \textit{state} in a GIC-SDA under immediate transmissions. Let $A_l=l\mu_1+\nu_1$ and $A'_{l}=l\mu_1+\nu_1+\theta_1$ be the starting point and the ending point of the $l^{th}$ codeword of Tx~1 on the $\overline{t}$-axis. The points $A_l, A'_l$ for $1\leq l\leq N_1$ partition the $\overline{t}$-axis into $2N_1+1$ disjoint intervals 
\begin{eqnarray}
\begin{array}{c}
        \mathcal{I}_1:=(-\infty,A_1)  \\
     \mathcal{I}_2:=(A_1,A'_1) \\   
     \mathcal{I}_3:=(A'_1,A_2)\\
     \vdots\\
     \mathcal{I}_{2N_1-1}:=(A'_{N_1-1},A_{N_1})\\
     \mathcal{I}_{2N_1}:=(A_{N_1},A'_{N_1})\\
     \mathcal{I}_{2N_1+1}=(A'_{N_1},\infty)
\end{array}.
\end{eqnarray}
 For any $1\leq j\leq N_2$, we assign a tuple $(u_j,v_j)$ to the $j^{th}$ transmitted codeword of Tx~2 where 
 \begin{itemize}
  \item $u_j$ is the unique index~$m$ such that the \textit{starting} point of the $j^{th}$ codeword of Tx~2 lies in interval $\mathcal{I}_m$.
  \item $v_j$ is the unique index~$m$ such that the \textit{ending} point of the $j^{th}$ codeword of Tx~2 lies in interval $\mathcal{I}_m$.
\end{itemize}
 Define the state of the asynchronous GIC-SDA by 
\begin{eqnarray}
S=\{(j; u_j,v_j): 1\leq j\leq N_2\}.
\end{eqnarray}
  \begin{figure}[t]
  \centering
  \includegraphics[scale=.8] {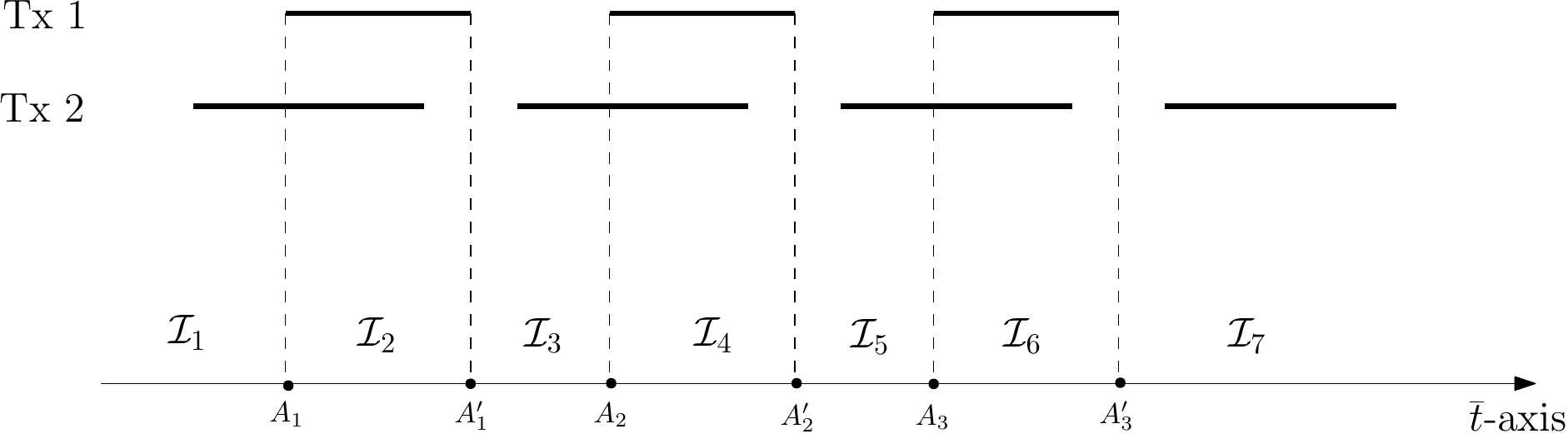}
  \caption{This picture shows the positions of the bursts of both users on the $\overline{t}$-axis in a situation where $N_1=3$ and $N_2=4$. For $1\leq l\leq 3$, the points $A_l=l\mu_1+\nu_1$ and $A'_{l}=l\mu_1+\nu_1+\theta_1$ are the starting point and the ending point of the $l^{th}$ codeword of Tx~1. These points partition the $\overline{t}$-axis into seven disjoint intervals $\mathcal{I}_1,\cdots,\mathcal{I}_7$.}
  \label{bic_fig10}
 \end{figure} 
 The set of all states is shown by $\mathscr{S}$. For example, Fig.~\ref{bic_fig10} depicts a situation where $N_1=3$ and $N_2=4$. The $\overline{t}$-axis is partitioned into seven intervals $\mathcal{I}_1,\cdots,\mathcal{I}_7$. We have $(u_1,v_1)=(1,2), (u_2,v_2)=(3,4), (u_3,v_3)=(5,6), (u_4,v_4)=(7,7)$ and the state of the channel is given by 
 \begin{eqnarray}
S=\{(1;1,2),(2;3,4),(3;5,6),(4;7,7)\}.
\end{eqnarray}   
In general, the number of states in a two-user asynchronous GIC-SDA is $|\mathscr{S}|=\binom{2N_1+2N_2}{2N_2}$. A proof of this fact is given in Appendix~I. Note that any state uniquely determines the parameters $\omega^-_{i,j}, \omega^+_{i,j}$ and $\omega_{i,j}$ for any $1\leq i\leq 2$ and $1\leq j\leq N_i$.

 For any state $S$, we impose two sets of constraints on $(\nu_1,\nu_2)$ referred to as the geometric constraints $\mathcal{A}^{\mathrm{(geom)}}_S$ and the reliability constraints $\mathcal{A}^{\mathrm{(rel)}}_S$. The admissible region $\mathcal{A}$ is defined by 
\begin{equation}
\label{ }
\mathcal{A}:=\bigcup_{S\in\mathscr{S}}\big(\mathcal{A}^{(\mathrm{geom})}_S\bigcap\mathcal{A}^{(\mathrm{rel})}_{S}\big).
\end{equation}
 The geometric constraints are dictated by the positions of the bursts on the $\overline{t}$-axis.  
   \begin{figure}[t]
  \centering
  \includegraphics[scale=.8] {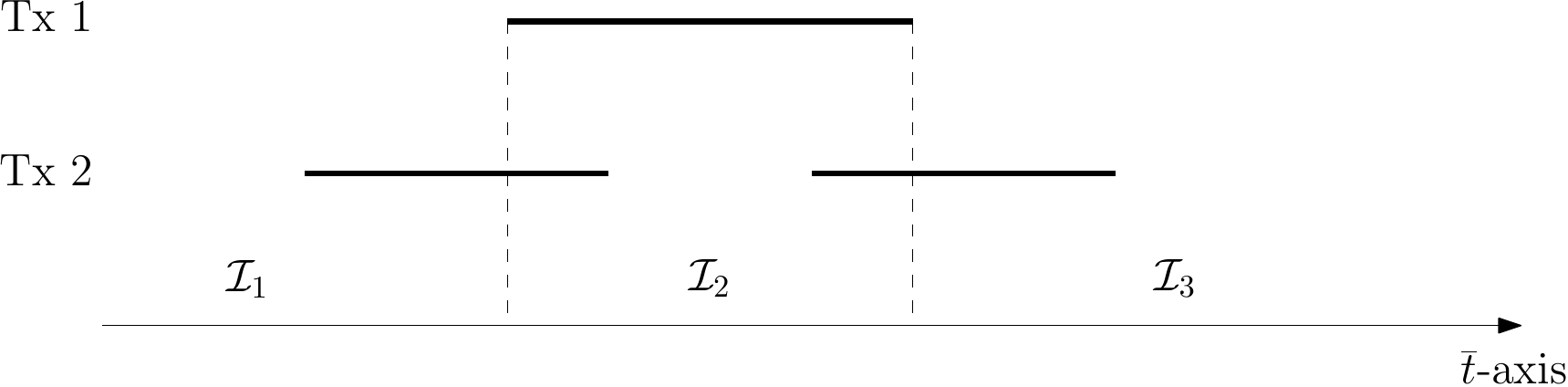}
  \caption{This picture shows the positions of the bursts of both users on the $\overline{t}$-axis in a situation where $N_1=1$ and $N_2=2$. The state of the channel is $\{(1;1,2),(2;2,3)\}$.}
  \label{bic_fig11}
 \end{figure} 
 For example, let $N_1=1$, $N_2=2$ and the state of the channel be $S=\{(1;1,2),(2;2,3)\}$ as shown in Fig.~\ref{bic_fig11}. Then  
 \begin{equation}
\label{oak11}
\mathcal{A}^{(\mathrm{geom})}_S=\big\{(\nu_1,\nu_2):\mu_2+\nu_2<\mu_1+\nu_1<\mu_2+\nu_2+\theta_2<2\mu_2+\nu_2<\mu_1+\nu_1+\theta_1<2\mu_2+\nu_2+\theta_2\big\}.
\end{equation}
The reliability constraints guarantee reliable communications for all transmitted codewords. For example, for the situation in Fig.~\ref{bic_fig11} there are three reliability constraints: 
\begin{itemize}
  \item For the first codeword of Tx~1, $\omega^-_{1,1}=1, \omega^+_{1,1}=2$ and $\omega_{1,1}=0$. By Proposition~\ref{prop_4}, 
  \begin{eqnarray}
  \label{ineq_1}
\eta_1-\frac{1}{\lambda_{2}}(\hat{\phi}_1- \hat{\psi}_1)\eta_{2}<\hat{\theta}_1 \hat{\psi}_1-\hat{\theta}_{2}(\hat{\phi}_1- \hat{\psi}_1).
\end{eqnarray}
  \item For the first codeword of Tx~2, $\omega^-_{2,1}=0, \omega^+_{2,1}=1$ and $\omega_{2,1}=0$. By Proposition~\ref{prop_6}, 
  \begin{eqnarray}
  \label{ineq_2}
\big(1+\frac{1}{\lambda_2}(\hat{\phi}_2- \hat{\psi}_2)\big)\eta_2-\frac{1}{\lambda_{1}}(\hat{\phi}_2- \hat{\psi}_2)\eta_{1}< \hat{\theta}_2 \hat{\psi}_2+(\nu_{1}-\nu_2)(\hat{\phi}_2- \hat{\psi}_2).
\end{eqnarray}
   \item For the second codeword of Tx~2, $\omega^-_{2,2}=1, \omega^+_{2,2}=0$ and $\omega_{2,2}=0$. By Proposition~\ref{prop_5}, 
  \begin{eqnarray}
  \label{ineq_3}
\big(1-\frac{2}{\lambda_2}( \hat{\phi}_2- \hat{\psi}_2)\big)\eta_2+\frac{1}{\lambda_{1}}( \hat{\phi}_2- \hat{\psi}_2)\eta_{1}<\hat{\theta}_2\hat{\phi}_2-(\nu_{1}-\nu_2+\hat{\theta}_{1})(\hat{\phi}_2- \hat{\psi}_2).
\end{eqnarray}
\end{itemize} 
Then $\mathcal{A}^{(\mathrm{rel})}_{S}$ for $S=\{(1;1,2),(2;2,3)\}$ is the set of all $(\nu_1,\nu_2)$ such that the three inequalities in (\ref{ineq_1}), (\ref{ineq_2}) and (\ref{ineq_3}) hold. 

We are ready to state our design problem. Let $(R_1,R_2)$ be such that (\ref{please1}) holds and  $\nu_1$ and $\nu_2$ be realizations of independent \textit{uniform} random variables\footnote{One can consider any arbitrary continuous distribution for $\boldsymbol{\nu}_1$ and $\boldsymbol{\nu}_2$. We consider the uniform distribution due to its realistic~nature.} $\boldsymbol{\nu}_1$ and $\boldsymbol{\nu}_2$, respectively, with support~$[0,d]$ for some $d>0$. We aim to find $(N_1,N_2)\in\mathcal{N}_{R_1,R_2}$ such that the probability of $(\boldsymbol{\nu}_1,\boldsymbol{\nu}_2)$ not being in the admissible region $\mathcal{A}$ is minimized, i.e., 
\begin{eqnarray}
\label{des_11}
(\widehat{N}_1,\widehat{N}_2)=\arg\,\min_{(N_1,N_2)\in\,\mathcal{N}_{R_1,R_2}}\,\mathbb{P}\big((\boldsymbol{\nu}_1,\boldsymbol{\nu}_2)\notin\mathcal{A}\big).
\end{eqnarray}
In words, (\ref{des_11}) answers the following question:

\textit{Given the value of $d$ and assuming the rate pair $(R_1,R_2)$ is such that (\ref{please1}) holds, What is the optimum number of transmission bursts $N_i$ for Tx~$i$ in order to minimize the probability of the outage event, i.e., the event that $(\boldsymbol{\nu}_1,\boldsymbol{\nu}_2)$ does not lie in the admissible region $\mathcal{A}$? }

 Since the sets $\mathcal{A}^{(\mathrm{geom})}_S\bigcap\mathcal{A}^{(\mathrm{rel})}_{S}$ are disjoint\footnote{This is due to the fact that the geometric constraints $\mathcal{A}^{(\mathrm{geom})}_S$ are disjoint for different states $S$.} for different states $S$, we can write 
 \begin{eqnarray}
 \label{out_11}
\mathbb{P}\big((\boldsymbol{\nu}_1,\boldsymbol{\nu}_2)\notin\mathcal{A}\big)=1-\sum_{S\in\mathscr{S}}\mathbb{P}\big((\boldsymbol{\nu}_1,\boldsymbol{\nu}_2)\in\,\mathcal{A}^{(\mathrm{geom})}_S\bigcap\mathcal{A}^{(\mathrm{rel})}_{S}\big).
\end{eqnarray}
Define 
\begin{eqnarray}
\boldsymbol{\alpha}:=\boldsymbol{\nu}_2-\boldsymbol{\nu}_1.
\end{eqnarray}
Each constraint $(\boldsymbol{\nu}_1,\boldsymbol{\nu}_2)\in\,\mathcal{A}^{(\mathrm{geom})}_S\bigcap\mathcal{A}^{(\mathrm{rel})}_{S}$ is in fact a constraint on $\boldsymbol{\alpha}$, i.e., for any state $S$ there are real numbers $\alpha^{(l)}_S$ and $\alpha^{(u)}_S$ such that\footnote{For example, look at the geometric and reliability constraints given for the state depicted in Fig.~\ref{bic_fig11}.} 
\begin{eqnarray}
\label{out_22}
(\boldsymbol{\nu}_1,\boldsymbol{\nu}_2)\in\,\mathcal{A}^{(\mathrm{geom})}_S\bigcap\mathcal{A}^{(\mathrm{rel})}_{S} \iff \alpha^{(l)}_S<\boldsymbol{\alpha}<\alpha^{(u)}_S.
\end{eqnarray} 
By (\ref{out_11}) and (\ref{out_22}),  
\begin{eqnarray}
\mathbb{P}\big((\boldsymbol{\nu}_1,\boldsymbol{\nu}_2)\notin\mathcal{A}\big)=1-\sum_{S\in\mathscr{S}}\big(F_{\boldsymbol{\alpha}}(\alpha^{(u)}_{S})-F_{\boldsymbol{\alpha}}(\alpha^{(l)}_S)\big),
\end{eqnarray}
where $F_{\boldsymbol{\alpha}}(\alpha)=\frac{1}{d}(1-\frac{|\alpha|}{d})\mathds{1}_{|\alpha|\le d}$ is the cumulative distribution function of $\boldsymbol{\alpha}$. 

The next proposition provides conditions on the parameter $d>0$ such that reliable communication is guaranteed for both users regardless of the values of $\nu_1,\nu_2\in[0,d]$: 
\begin{proposition}
\label{prop_r}
Let $\mathcal{S}_d:=(0,d)\times(0,d)$. Then  $\mathbb{P}\big((\boldsymbol{\nu}_1,\boldsymbol{\nu}_2)\notin\mathcal{A}\big)=0$ if and only if 
\begin{eqnarray}
\label{empty_11}
\mathcal{S}_d\bigcap(\mathcal{A}^{(\mathrm{geom})}_S\setminus\mathcal{A}^{(\mathrm{rel})}_S)=\emptyset,
\end{eqnarray}
for any $S\in\mathscr{S}$.
\end{proposition}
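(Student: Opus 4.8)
The plan is to reduce the whole statement to a one–dimensional computation in the variable $\boldsymbol{\alpha}=\boldsymbol{\nu}_2-\boldsymbol{\nu}_1$ introduced above, as already foreshadowed by (\ref{out_22}). First I would record three structural facts. (i) Each $\mathcal{A}^{(\mathrm{geom})}_S$ depends on $(\nu_1,\nu_2)$ only through $\alpha:=\nu_2-\nu_1$: the state $S$ is read off from the relative ordering of the $2N_1+2N_2$ burst endpoints on the $\overline{t}$-axis; an ordering between two endpoints of the same transmitter does not involve $\nu_1,\nu_2$, while an ordering between an endpoint of a Tx~$1$ burst and an endpoint of a Tx~$2$ burst is a linear inequality in $\alpha$. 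Thus $\mathcal{A}^{(\mathrm{geom})}_S=\{(\nu_1,\nu_2):\alpha\in G_S\}$ for some $G_S\subseteq\mathbb{R}$ that is a finite union of intervals (in fact a single interval, as $u_j,v_j$ are monotone in $\alpha$). (ii) Since the geometric constraints for distinct states are disjoint, so are the $G_S$, and $\bigcup_{S\in\mathscr{S}}G_S=\mathbb{R}\setminus E$, where $E$ is the finite set of values of $\alpha$ for which (\ref{res_11}) fails, i.e. for which some Tx~$1$ burst endpoint coincides with some Tx~$2$ burst endpoint. (iii) By (\ref{out_22}), $\mathcal{A}^{(\mathrm{geom})}_S\cap\mathcal{A}^{(\mathrm{rel})}_S=\{(\nu_1,\nu_2):\alpha\in(\alpha^{(l)}_S,\alpha^{(u)}_S)\}$, and necessarily $(\alpha^{(l)}_S,\alpha^{(u)}_S)\subseteq G_S$. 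Consequently $\mathcal{A}=\{(\nu_1,\nu_2):\alpha\in J\}$ with $J:=\bigcup_{S\in\mathscr{S}}(\alpha^{(l)}_S,\alpha^{(u)}_S)$.

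Next I would pass to probabilities. Because $\boldsymbol{\nu}_1,\boldsymbol{\nu}_2$ are independent and uniform on $[0,d]$, the density of $\boldsymbol{\alpha}$ is strictly positive on $(-d,d)$ and vanishes outside $[-d,d]$; in particular $\mathbb{P}\big((\boldsymbol{\nu}_1,\boldsymbol{\nu}_2)\in\mathcal{S}_d\big)=1$, and for any Borel $B\subseteq\mathbb{R}$ one has $\mathbb{P}(\boldsymbol{\alpha}\in B)=0$ if and only if $B\cap(-d,d)$ is Lebesgue–null. Hence
\begin{equation}
\mathbb{P}\big((\boldsymbol{\nu}_1,\boldsymbol{\nu}_2)\notin\mathcal{A}\big)=\mathbb{P}(\boldsymbol{\alpha}\notin J)=0\ \Longleftrightarrow\ \big|(-d,d)\setminus J\big|=0 .
\end{equation}

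Then I would decompose $(-d,d)\setminus J$. For $\alpha\in(-d,d)\setminus E$ there is, by (ii), a unique state $S=S(\alpha)$ with $\alpha\in G_{S}$, and then (iii) together with disjointness gives $\alpha\in J$ if and only if $\alpha\in(\alpha^{(l)}_{S},\alpha^{(u)}_{S})$. Setting $D_S:=G_S\setminus(\alpha^{(l)}_S,\alpha^{(u)}_S)$ — equivalently, by (iii), $D_S=\mathcal{A}^{(\mathrm{geom})}_S\setminus\mathcal{A}^{(\mathrm{rel})}_S$ read in the variable $\alpha$, which is again a finite union of intervals — we obtain, up to the finite set $E$, the disjoint decomposition $(-d,d)\setminus J=\bigcup_{S\in\mathscr{S}}\big((-d,d)\cap D_S\big)$, whence $\big|(-d,d)\setminus J\big|=\sum_{S\in\mathscr{S}}\big|(-d,d)\cap D_S\big|$. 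Since for each $S$ the set $(-d,d)\cap D_S$ is a finite union of intervals intersected with an open interval, it is Lebesgue–null precisely when it is empty. Therefore $\big|(-d,d)\setminus J\big|=0$ if and only if $(-d,d)\cap D_S=\emptyset$ for every $S\in\mathscr{S}$.

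It remains to translate this to the plane. The map $(\nu_1,\nu_2)\mapsto\nu_2-\nu_1$ sends $\mathcal{S}_d=(0,d)\times(0,d)$ \emph{onto} $(-d,d)$ and sends $\mathcal{A}^{(\mathrm{geom})}_S\setminus\mathcal{A}^{(\mathrm{rel})}_S=\{(\nu_1,\nu_2):\alpha\in D_S\}$ onto $D_S$, so $\mathcal{S}_d\cap\big(\mathcal{A}^{(\mathrm{geom})}_S\setminus\mathcal{A}^{(\mathrm{rel})}_S\big)=\emptyset$ if and only if $(-d,d)\cap D_S=\emptyset$. Chaining this with the previous paragraph gives exactly the asserted equivalence: $\mathbb{P}\big((\boldsymbol{\nu}_1,\boldsymbol{\nu}_2)\notin\mathcal{A}\big)=0$ iff (\ref{empty_11}) holds for every $S\in\mathscr{S}$. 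The only genuinely delicate part is the bookkeeping behind (i) and (ii) — namely that the geometric regions, viewed in the single coordinate $\alpha$, are intervals that partition the line up to the finitely many exceptional values of $\alpha$ excluded by (\ref{res_11}); everything else reduces to the elementary fact that a finite union of intervals contained in an open interval has zero Lebesgue measure if and only if it is empty.
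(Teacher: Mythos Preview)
Your proof is correct and follows essentially the same route as the paper's: both exploit that the sets $\mathcal{A}^{(\mathrm{geom})}_S$ partition the sample space (up to a null set), that all relevant constraints depend on $(\nu_1,\nu_2)$ only through $\alpha=\nu_2-\nu_1$, and that a finite union of nondegenerate intervals (equivalently, strips in the plane) intersected with $(-d,d)$ (equivalently, $\mathcal{S}_d$) has Lebesgue measure zero if and only if it is empty. The only cosmetic difference is that you project explicitly to the $\alpha$-axis and argue in one dimension, whereas the paper stays in the plane and phrases the same step in terms of strips $\{(\nu_1,\nu_2):\beta^{(l)}<\nu_2-\nu_1<\beta^{(u)}\}$.
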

\begin{proof}
Assume $\mathbb{P}\big((\boldsymbol{\nu}_1,\boldsymbol{\nu}_2)\in\mathcal{A}\big)=\sum_{S\in\mathscr{S}}\mathbb{P}\big((\boldsymbol{\nu}_1,\boldsymbol{\nu}_2)\in\,\mathcal{A}^{(\mathrm{geom})}_S\bigcap\mathcal{A}^{(\mathrm{rel})}_{S}\big)=1$. Since $\{\mathcal{A}_S^{(\mathrm{geom})}: S\in\mathscr{S}\}$ is a partition of the sample space, $\sum_{S\in\mathscr{S}}\mathbb{P}\big((\boldsymbol{\nu}_1,\boldsymbol{\nu}_2)\in \mathcal{A}^{(\mathrm{geom})}_S\big)=1$. But, $\mathbb{P}\big((\boldsymbol{\nu}_1,\boldsymbol{\nu}_2)\in\,\mathcal{A}^{(\mathrm{geom})}_S\bigcap\mathcal{A}^{(\mathrm{rel})}_{S}\big)\leq \mathbb{P}\big((\boldsymbol{\nu}_1,\boldsymbol{\nu}_2)\in \mathcal{A}^{(\mathrm{geom})}_S\big)$ for any $S\in\mathscr{S}$. Hence, $\mathbb{P}\big((\boldsymbol{\nu}_1,\boldsymbol{\nu}_2)\in\,\mathcal{A}^{(\mathrm{geom})}_S\bigcap\mathcal{A}^{(\mathrm{rel})}_{S}\big)=\mathbb{P}\big((\boldsymbol{\nu}_1,\boldsymbol{\nu}_2)\in \mathcal{A}^{(\mathrm{geom})}_S\big)$ or equivalently, $\mathbb{P}\big((\boldsymbol{\nu}_1,\boldsymbol{\nu}_2)\in\,\mathcal{A}^{(\mathrm{geom})}_S\setminus\mathcal{A}^{(\mathrm{rel})}_{S}\big)=0$ for any $S\in\mathscr{S}$. This in turn means $\mathcal{S}_d\bigcap(\mathcal{A}^{(\mathrm{geom})}_S\setminus\mathcal{A}^{(\mathrm{rel})}_{S})$ has Lebesgue measure zero. But, $\mathcal{A}^{(\mathrm{geom})}_S\setminus\mathcal{A}^{(\mathrm{rel})}_S$ is a union of a finite number of (disjoint) strips of the form $\{(\nu_1,\nu_2): \beta^{(l)}<\nu_2-\nu_1<\beta^{(u)}\}$ where $\beta^{(l)}$ and $\beta^{(u)}$ are real numbers. As~such, $\mathcal{S}_d\bigcap(\mathcal{A}^{(\mathrm{geom})}_S\setminus\mathcal{A}^{(\mathrm{rel})}_{S})$ has Lebesgue measure zero if and only if it is empty. 
\end{proof}

Motivated by Proposition~\ref{prop_r}, we define
\begin{eqnarray}
d_{\max}:=\sup\Big\{d>0: \textrm{$\mathcal{S}_d\bigcap(\mathcal{A}^{(\mathrm{geom})}_S\setminus\mathcal{A}^{(\mathrm{rel})}_{S})=\emptyset$ for any $S\in\mathscr{S}$}\Big\}.
\end{eqnarray}
If $d_{\max}>0$, then the probability of outage is zero for any $d<d_{\max}$.
 In the next subsection, we offer simulation results to study the effects of different system parameters on the optimal choices for $N_1,N_2$~in~(\ref{des_11}). In particular, we will see an example of a rate tuple $(R_1,R_2)$ that satisfies~(\ref{please1}) and still there exists $d>0$ such that (\ref{empty_11}) holds for any $S\in\mathscr{S}$. Therefore, if $(R_1,R_2)$ satisfies (\ref{please1}), it does not necessarily mean that $\mathbb{P}\big((\boldsymbol{\nu}_1,\boldsymbol{\nu}_2)\notin\mathcal{A}\big)>0$.
\subsection{Simulations}
In this subsection, we study the optimum choices for $N_1,N_2$ in~(\ref{des_11}) in a few examples.  
\begin{figure}[t]
\centering
\subfigure[]{
\includegraphics[scale=0.65]{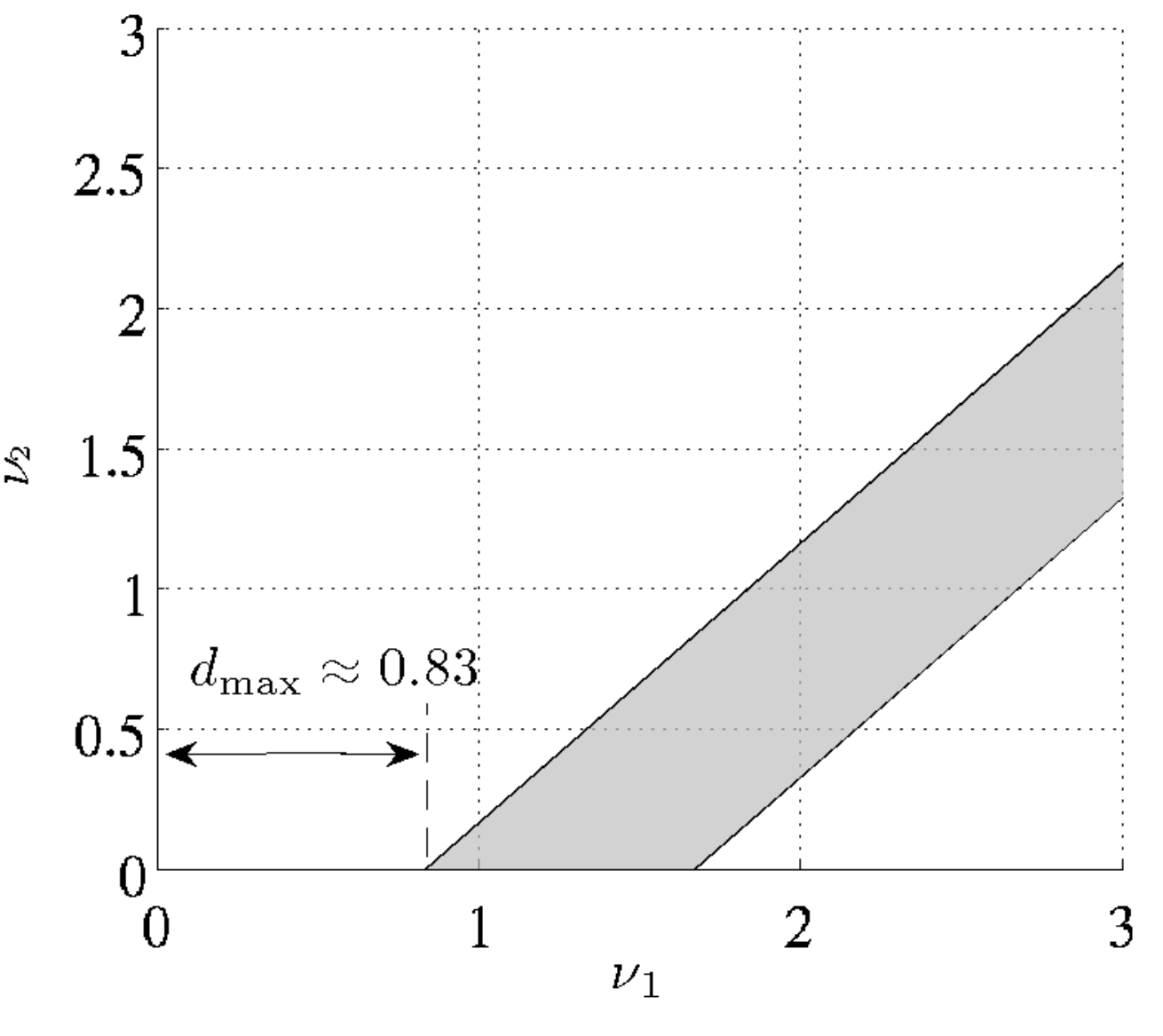}
\label{ref1}
}
\subfigure[]{
\includegraphics[scale=0.65]{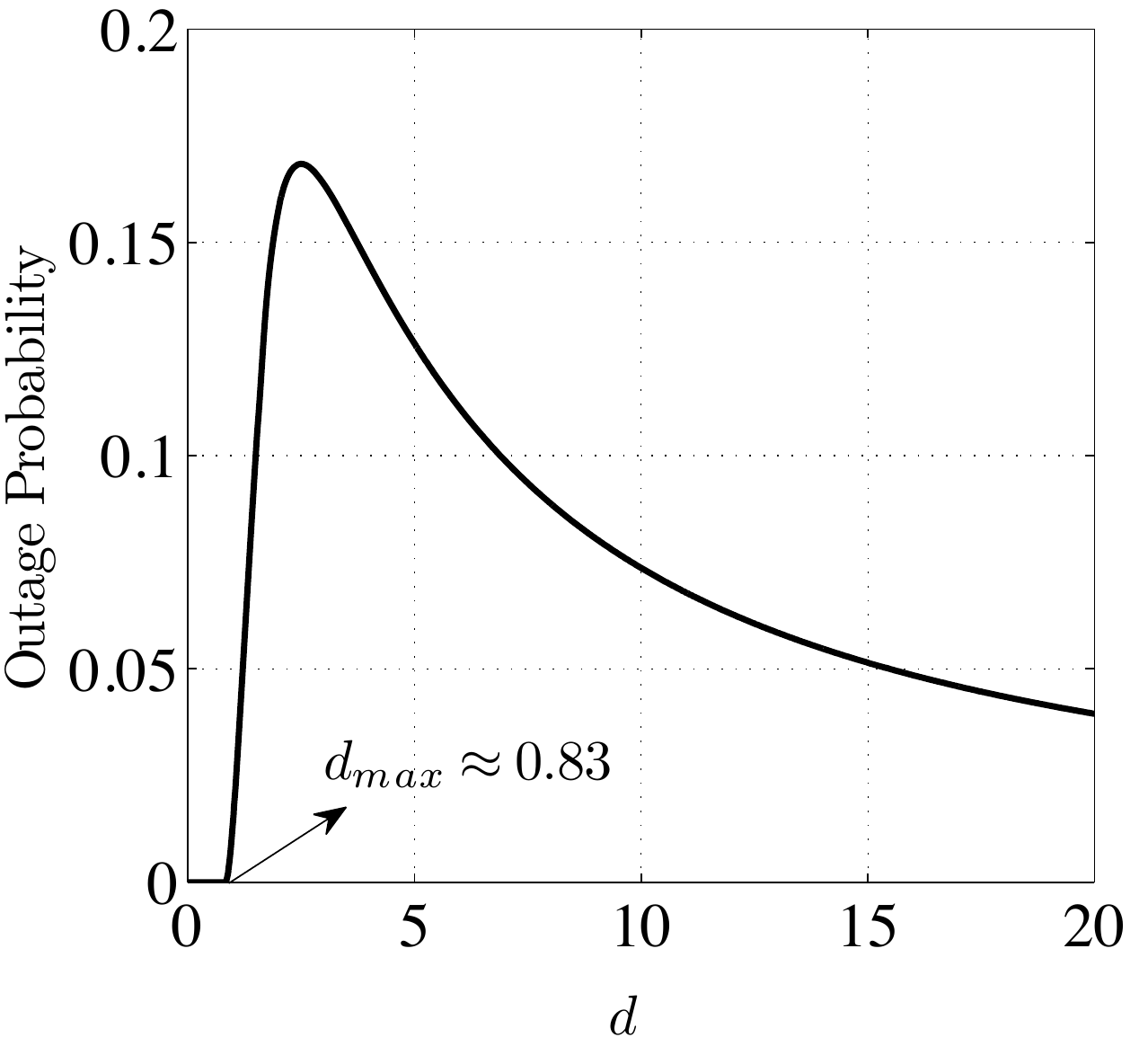}
\label{ref2}
}
\label{fig:subfigureExample}
\caption[Optional caption for list of figures]{Let $k_1=3, k_2=2, q_1=0.3, q_2=0.4, a_1=0.5, a_2=0.7$, $P_1=P_2=30\,\mathrm{dB}$, $R_1=0.5\lambda_1=0.45$ and $R_2=0.5 \lambda_2=0.4$. Then $\mathcal{N}_{R_1,R_2}=\{(1,1)\}$ and (\ref{please1}) is satisfied. Fixing $N_1=N_2=1$, panel~(a) shows the region $\bigcup_{S\in\mathscr{S}}(\mathcal{A}^{(\mathrm{geom})}_S\setminus\mathcal{A}^{(\mathrm{rel})}_{S})$ in grey shade. Panel~(b) shows the probability of outage in terms of $d$.}
\label{pict_390}
\end{figure}
\begin{figure}[t]
\centering
\subfigure[$R_1=0.7\lambda_1, R_2=0.7\lambda_2$]{
\includegraphics[scale=0.63]{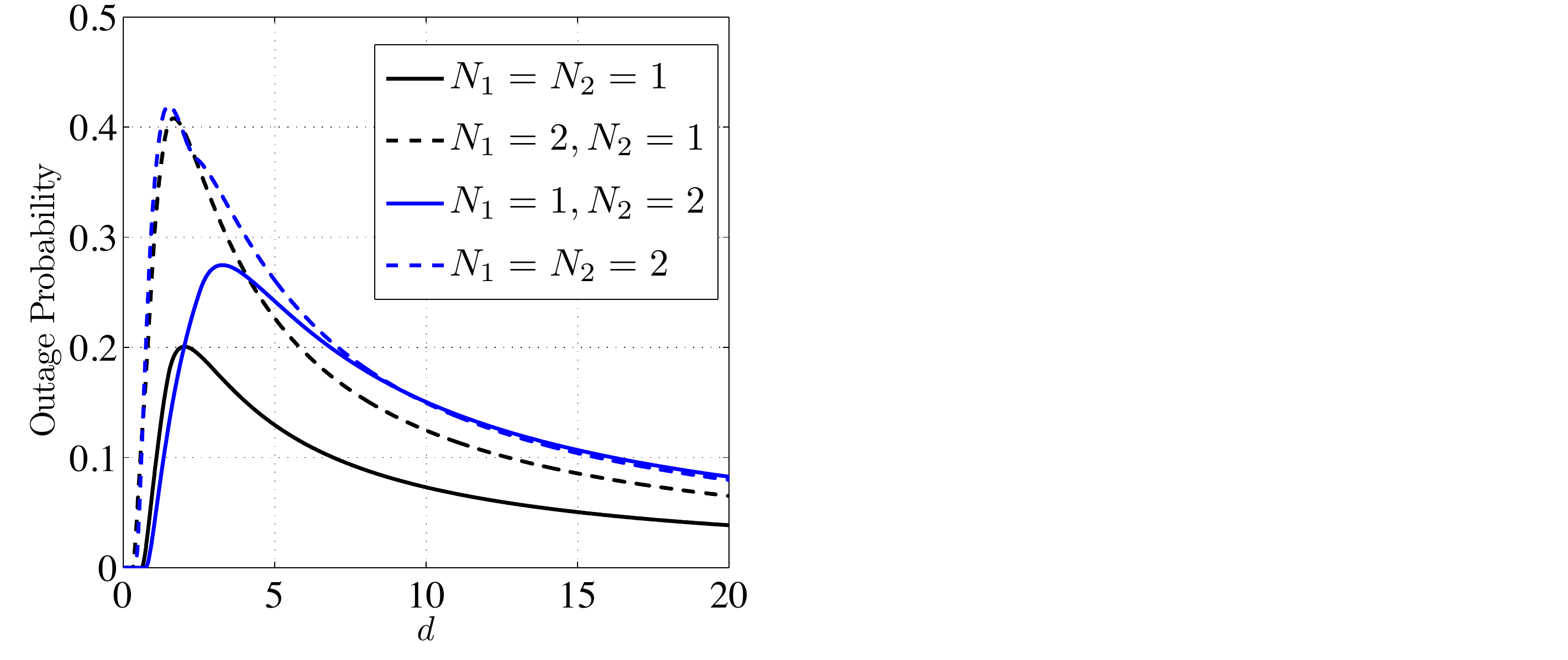}
\label{ref1}
}
\subfigure[$R_1=0.8\lambda_1, R_2=0.8\lambda_2$]{
\includegraphics[scale=0.62]{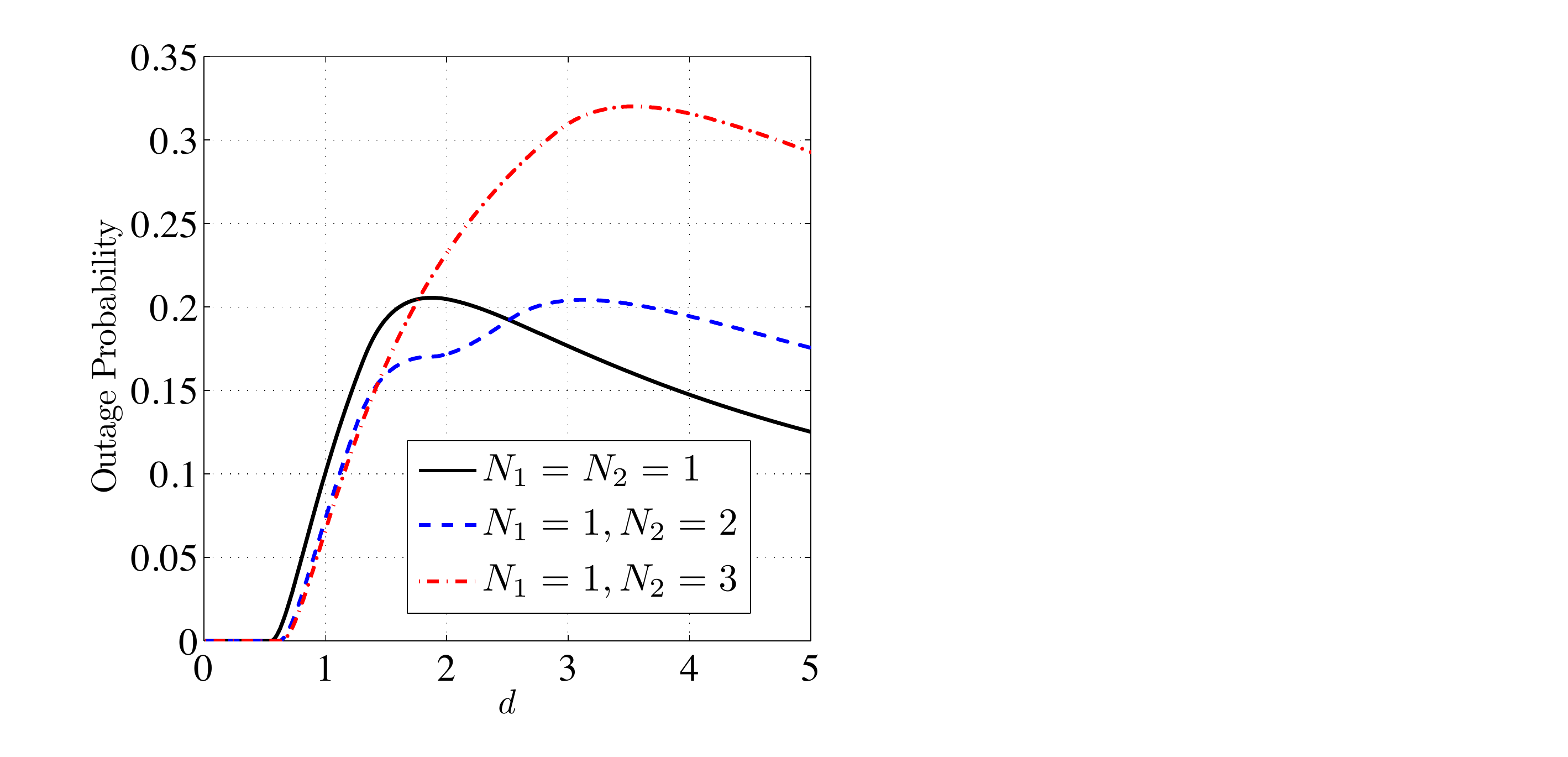}
\label{ref2}
}
\label{fig:subfigureExample}
\caption[Optional caption for list of figures]{Let $k_1=3, k_2=2, q_1=0.3, q_2=0.4, a_1=0.5, a_2=0.7$ and $P_1=P_2=30\,\mathrm{dB}$. If $R_1=0.7\lambda_1=0.63$ and $R_2=0.7\lambda_2=0.56$, then $\mathcal{N}_{R_1,R_2}=\{(m_1,m_2): 1\leq m_1,m_2\leq 2\}$ and (\ref{please1}) is satisfied. Panel~(a) shows the probability of outage in terms of $d$ for different values of $(N_1,N_2)$. If $R_1=0.8\lambda_1=0.72$ and $R_2=0.8\lambda_2=0.64$, then $\mathcal{N}_{R_1,R_2}=\{(m_1,m_2): 1\leq m_1,m_2\leq 3\}$ and (\ref{please1}) is satisfied. It turns out that depending on the value of $d$, the best choices are $(N_1,N_2)=(1,1),(1,2)$ or $(1,3)$ as shown in panel~(b).}
\label{pict_350}
\end{figure} 

\textbf{Example}- Let $k_1=3, k_2=2, q_1=0.3, q_2=0.4, a_1=0.5, a_2=0.7$ and $P_1=P_2=30\,\mathrm{dB}$. We consider several possibilities for $(R_1,R_2)$: 
\begin{itemize}
 \item Let $R_1=0.4\lambda_1=0.36$ and $R_2=0.4\lambda_2=0.32$. Then $\mathcal{N}_{R_1,R_2}=\{(1,1)\}$, however, condition~(\ref{please1}) is not satisfied. This means that by setting $N_1=N_2=1$, both receivers successfully decode the messages regardless of the values of $\nu_1$ and $\nu_2$.  
  \item Let $R_1=0.5\lambda_1=0.45$ and $R_2=0.5 \lambda_2=0.4$. Then $\mathcal{N}_{R_1,R_2}=\{(1,1)\}$ and the condition in (\ref{please1}) is satisfied. We fix $N_1=N_2=1$. Fig.~\ref{pict_390} in panel~(a) shows the region $\bigcup_{S\in\mathscr{S}}(\mathcal{A}^{(\mathrm{geom})}_S\setminus\mathcal{A}^{(\mathrm{rel})}_{S})$. By~Proposition~\ref{prop_r}, if $d<d_{\max}\approx 0.83$, the probability of outage is zero. Fig.~\ref{pict_390} in panel~(b) shows the probability of outage in terms of $d$.  
  \item Let $R_1=0.7\lambda_1=0.63$ and $R_2=0.7\lambda_2=0.56$. Then $\mathcal{N}_{R_1,R_2}=\{(1,1),(1,2),(2,1),(2,2)\}$ and the condition in (\ref{please1}) is satisfied. Fig.~\ref{pict_350} in panel~(a) shows the probability of outage in terms of $d$ for different values of $(N_1,N_2)\in\mathcal{N}_{R_1,R_2}$. If $d<2$, $(N_1,N_2)=(1,2)$ and if $d>2$, $(N_1,N_2)=(1,1)$ are the optimum choices. 
  \item Let $R_1=0.8\lambda_1=0.72$ and $R_2=0.8\lambda_2=0.64$. Then $\mathcal{N}_{R_1,R_2}=\{(m_1,m_2): 1\leq m_1,m_2\leq 3 \}$ and the condition in (\ref{please1}) is satisfied. It turns out that depending on the value of $d$, the best choices are $(N_1,N_2)=(1,1),(1,2)$ or $(1,3)$. Fig.~\ref{pict_350} in panel~(b) shows the probability of outage in terms of $d$ for these values of $(N_1,N_2)$. If $d<1.43$, $(N_1,N_2)=(1,3)$, if $1.43<d<2.51$, $(N_1,N_2)=(1,2)$ and if $d>2.51$, $(N_1,N_2)=(1,1)$ are the best choices.
\end{itemize}

\begin{figure}[t]
\centering
\subfigure[$P_1=P_2=10\,\mathrm{dB}$]{
\includegraphics[scale=0.63]{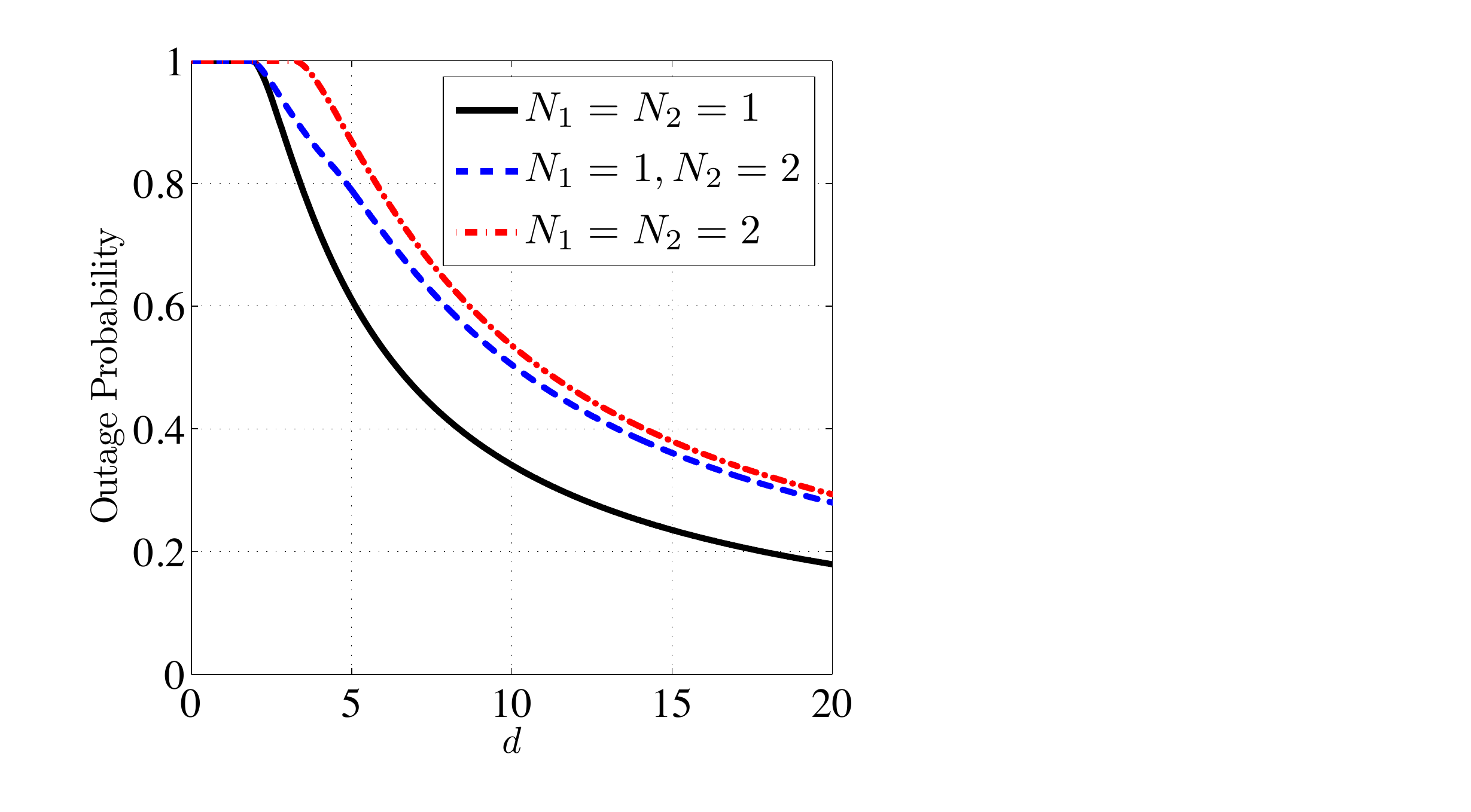}
\label{ref1}
}
\subfigure[$P_1=P_2=30\,\mathrm{dB}$]{
\includegraphics[scale=0.65]{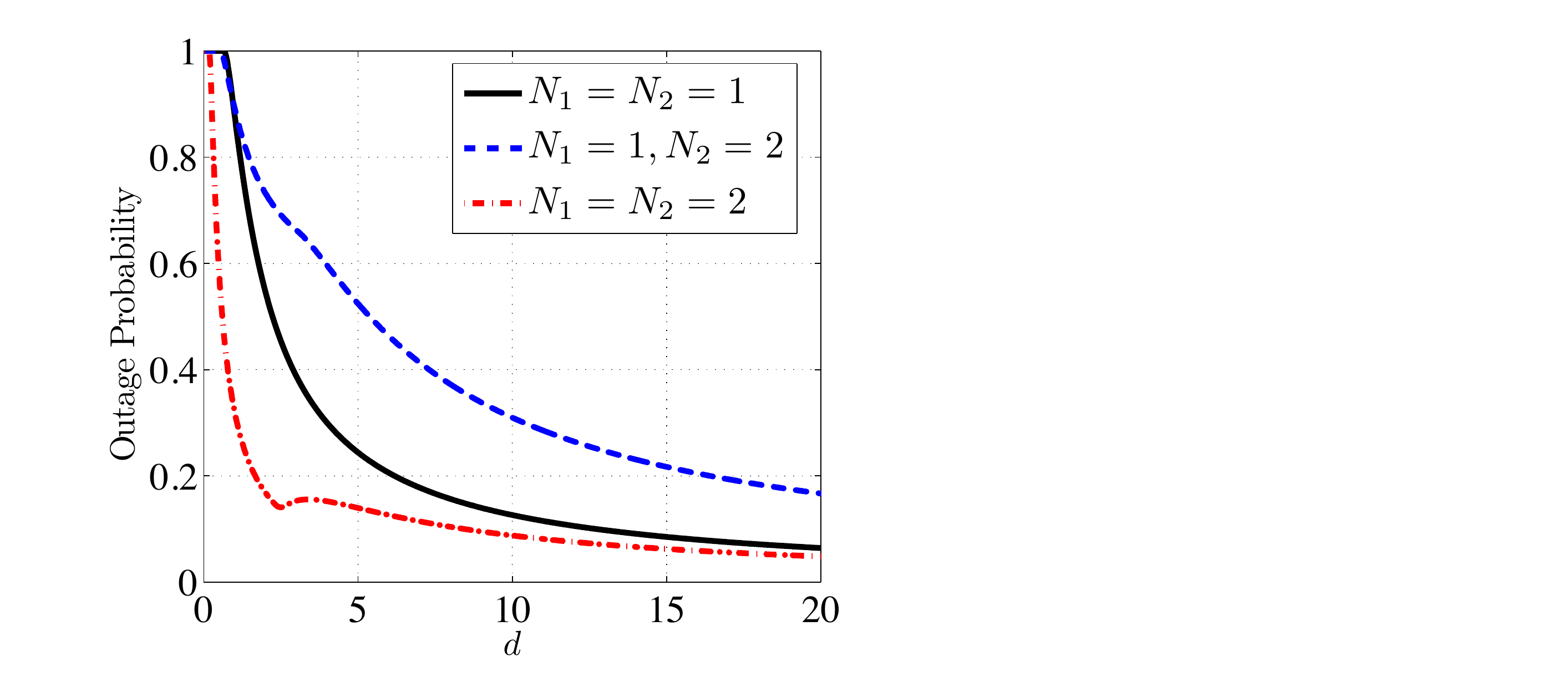}
\label{ref2}
}
\label{fig:subfigureExample}
\caption[Optional caption for list of figures]{A symmetric scenario where $k_1=k_2=5$, $q_1=q_2=0.2$, $a_1=a_2=0.5$, $R_1=0.7\lambda_1=0.7$ and $R_2=0.7\lambda_2=0.7$. Panel~(a) presents the probability of outage in terms of $d$ for different values of $(N_1,N_2)$ for $P_1=P_2=10\,\mathrm{dB}$. We see that  $N_1=N_2=1$ is the optimum choice for any value of $d$. Panel~(b) presents the probability of outage in terms of $d$ for different values of $(N_1,N_2)$ for $P_1=P_2=20\,\mathrm{dB}$. Here,  $N_1=N_2=2$ is the optimum choice for any value of $d$.}
\label{pict_3500}
\end{figure} 

\textbf{Example}- It is possible that $d_{\max}=0$. By Proposition~\ref{prop_r}, this happens when the line $\nu_1=\nu_2$ lies in the interior of the region $\bigcup_{S\in\mathscr{S}}(\mathcal{A}^{(\mathrm{geom})}_S\setminus\mathcal{A}^{(\mathrm{rel})}_S)$ in the $\nu_1$-$\nu_2$~plane. An example of this situation is a symmetric scenario where $k_1=k_2=5$, $q_1=q_2=0.2$, $a_1=a_2=0.5$, $R_1=0.7\lambda_1=0.7$ and $R_2=0.7\lambda_2=0.7$.  We consider two cases for the average transmission power, i.e., $P_1=P_2=10\,\mathrm{dB}$ and $P_1=P_2=30\,\mathrm{dB}$. It turns out that in both cases, $\mathcal{N}_{R_1,R_2}=\{(1,1),(1,2),(2,1),(2,2)\}$ and the condition in (\ref{please1}) is satisfied. 
\begin{itemize}
  \item Let $P_1=P_2=10\,\mathrm{dB}$. Fig.~\ref{pict_3500} in panel~(a) presents the probability of outage in terms of $d$ for different values of $(N_1,N_2)$. Due to symmetry, the cases $N_1=1, N_2=2$ and $N_1=2, N_2=1$ offer the the same performance. We see that  $N_1=N_2=1$ is the optimum choice for any value of $d$.
  \item Let $P_1=P_2=30\,\mathrm{dB}$. Fig.~\ref{pict_3500} in panel~(b) presents the probability of outage in terms of $d$ for different values of $(N_1,N_2)$. In contrast to the case $P_1=P_2=10\,\mathrm{dB}$ in panel~(a), the situation is reversed. Here,  $N_1=N_2=2$ is the optimum choice for any value of $d$.
\end{itemize}
\section{An achievable region for the asynchronous GIC-SDA}
\subsection{The General Model}
In this section we consider a different setting where 
\begin{itemize}
  \item The source of Tx~$i$ no longer turns off after generating a number of $k_in$ bits.
  \item The parameters $\nu_1$ and $\nu_2$ are known at both transmitters\footnote{This requires a certain level of coordination between the two transmitters in order to inform each other about their initial instants of activity.}. As before, we let $\alpha:=\nu_2-\nu_1$. 
\end{itemize}
 Accordingly, we adopt a slightly different notation where we assume Tx~$i$ has a codebook with rate $R_{c,i}$ consisting of $2^{\lfloor n_iR_{c,i}\rfloor}$ codewords of length $n_i=\lfloor n\theta_i\rfloor$ where $\theta_i>0$ is a constant. We pose the following~problem:

\textit{Given positive integers $N_1$ and $N_2$, determine the possible values for the codebook~rates $(R_{c,1},R_{c,2})$ such that
\begin{enumerate}
  \item The first $N_i$ codewords sent by Tx~$i$ are transmitted immediately in the sense defined in Section~II.B and decoded successfully at Rx~$i$.
  \item The average transmission power for Tx~$i$ satisfies (\ref{power_11}) where $\mathcal{T}_i$ is the period of activity for Tx~$i$ until the time slot it transmits the last symbol in its $N_i^{th}$ burst.
\end{enumerate}
}
Since the codewords are transmitted immediately, the content of the buffer of Tx~$i$ never exceeds $\lfloor n_iR_{c,i}\rfloor+k_i$ and therefore, the buffers are stable. The first $N_i$ bursts sent by Tx~$i$ represent a total number of $N_i\lfloor n_iR_{c,i}\rfloor$ bits. Therefore, the average transmission rate for Tx~$i$ is $R_i=\frac{N_i\lfloor n_iR_{c,i}\rfloor}{|\mathcal{T}_i|}$. Following similar steps in Section~II.D, 
\begin{eqnarray}
\label{rate_region_11}
R_i=\frac{N_iR_{c,i}}{1+\frac{N_iR_{c,i}}{\lambda_i}},
\end{eqnarray}
in the asymptote of large $n$. Similarly, the average transmission power for Tx~$i$ is 
\begin{eqnarray}
Q_i=\frac{N_i\gamma_i}{1+\frac{N_iR_{c,i}}{\lambda_i}}.
\end{eqnarray}
Then the average power constraint $Q_i\leq P_i$ in~(\ref{power_11}) becomes
\begin{equation}
\label{power_33}
0\le\gamma_i\le \Big(\frac{1}{N_i}+ \frac{R_{c,i}}{\lambda_i}\Big)P_i.
\end{equation}
Before proceeding further, let us reiterate the major differences between the current setup and the setup in the previous section: 
\begin{itemize}
  \item We show the codebook~rate of Tx~$i$ by $R_{c,i}$. No notation was selected for the codebook~rate $\frac{\eta_i}{\theta_i}$ in the previous section. All the achievability results in Propositions~\ref{prop_4},~\ref{prop_5}~and~\ref{prop_6} remain valid after replacing $\eta_i$ by $\theta_iR_{c,i}$.
  \item $N_i$ and $\theta_i$ are constants, while they served as design parameters in the previous section. 
 \item The parameters $\nu_1$ and $\nu_2$ are known at both transmitters, while $\nu_i$ was unknown to Tx~$i'$ in the previous section. 
  \item The information source at Tx~$i$ turns on at time slot $\lfloor n\nu_i\rfloor$ and remains active indeterminately.
\end{itemize}

We aim to characterize a region $\mathcal{R}$ of all codebook~rate tuples $(R_{c,1},R_{c,2})$ such that both transmitters send their codewords immediately and reliably and such that the power constraints in (\ref{power_11}) are not violated. We call $\mathcal{R}$ the achievable~(codebook)~rate~region. One can also define an achievable rate region $\mathcal{R}'$ of all transmission rate tuples $(R_1,R_2)$ such that the aforementioned properties hold. Since $\mathcal{R}'$ and $\mathcal{R}$ are related through the mappings in (\ref{rate_region_11}), we only focus on $\mathcal{R}$. 
  Immediate transmission of a scheduled codeword is impossible if a previously scheduled codeword is not fully transmitted. By~(\ref{buffstab}), immediate transmission of the codewords is guaranteed if $\frac{\theta_i R_{c,i}}{\lambda_i}>\theta_i$ for $N_i>1$, i.e., 
\begin{equation}
\label{rate_value1}
R_{c,i}>\lambda_i\mathds{1}_{N_i>1},\,\,\,i=1,2.
\end{equation}
An achievable $R_{c,i}$ must satisfy\footnote{By~Corollary~\ref{coro_1}, we must have $\theta_iR_{c,i}<\theta_i\phi_i$ which simplifies to (\ref{nav_11}).}
\begin{equation}
\label{nav_11}
R_{c,i}<\frac{1}{2}\log(1+\gamma_i).
\end{equation}
Combining~(\ref{power_33}) and (\ref{nav_11}), 
\begin{equation}
\label{ }
R_{c,i}<\frac{1}{2}\log\Big(1+\Big(\frac{1}{N_i}+\frac{R_{c,i}}{\lambda_i}\Big)P_i\Big).
\end{equation}
This is equivalent to
\begin{equation}
\label{bo_22}
R_{c,i}<\overline{R}_{c,i},\,\,\,\,\,i=1,2,
\end{equation}
where $\overline{R}_{c,i}$ is the unique positive solution\footnote{Define $f(x)=x-\frac{1}{2}\log(1+(\frac{1}{N_i}+\frac{x}{\lambda_i})P_i)$. Note that $f(0)=-\frac{1}{2}\log(1+\frac{P_i}{N_i})<0$, $\lim_{x\to\infty}f(x)=\infty$ and $f'(x)=1-\frac{1}{2\ln 2}\frac{\frac{P_i}{\lambda_i}}{1+(\frac{1}{N_i}+\frac{x}{\lambda_i})P_i}$. If $f'(x)> 0$ for all $x>0$, then $f(x)=0$ has only one positive solution. If there is $x_0>0$ such that $f'(x_0)=0$, then $f$ is decreasing over $(0,x_0)$ and increasing over $(x_0,\infty)$. This again implies that $f(x)=0$ has only one positive solution.} for $R_{c,i}$ in the equation $R_{c,i}=\frac{1}{2}\log(1+(\frac{1}{N_i}+\frac{R_{c,i}}{\lambda_i})P_i)$. By~(\ref{rate_value1}) and~(\ref{bo_22}), $\mathcal{R}$ lies inside the rectangle $[\lambda_1\mathds{1}_{N_1>1},\overline{R}_{c,1}]\times [\lambda_2\mathds{1}_{N_2>1},\overline{R}_{c,2}]$.

 \begin{figure}[t]
  \centering
  \includegraphics[scale=1] {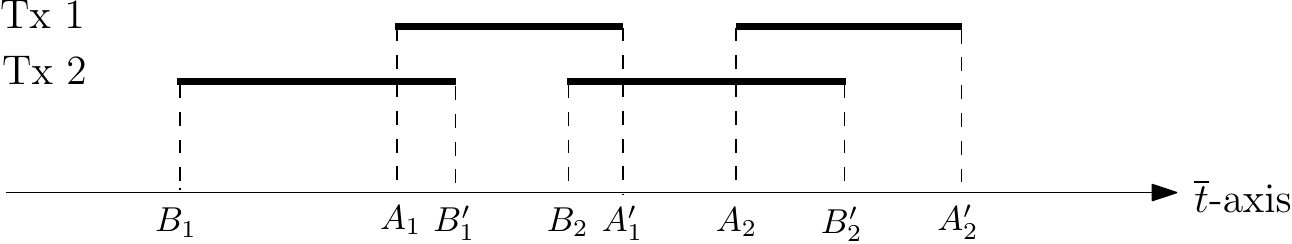}
  \caption{This picture shows the positions of different bursts on the $\overline{t}$-axis corresponding to the state $S=\{(1;1,2),(2;2,4)\}$ in a scenario where $N_1=N_2=2$. The table in~(\ref{table2}) shows the numbers on the $\overline{t}$-axis corresponding to different points on the $\overline{t}$-axis.}
  \label{blow_1}
 \end{figure}
 \begin{figure*}
 \begin{eqnarray}
 \centering
\begin{tabular}{|c|c|c|c|}
\hline
&\\
   $A_m=\frac{m\theta_1R_{c,1}}{\lambda_1}+\nu_1$ & $B_m=\frac{m\theta_2R_{c,2}}{\lambda_2}+\nu_2$ \\
   &\\
   \hline 
   &\\
   $A'_m=\frac{m\theta_1R_{c,1}}{\lambda_1}+\nu_1+\theta_1$ & $B'_m=\frac{m\theta_2R_{c,2}}{\lambda_2}+\nu_2+\theta_2$ \\
   &\\
\hline
\end{tabular}
\label{table2}
\end{eqnarray}
\end{figure*}

To describe $\mathcal{R}$, we need the concept of the state introduced in Section~V.A for a GIC-SDA with immediate transmissions. For each state $S$, we impose two sets of constraints on $(R_{c,1},R_{c,2})$, i.e., the geometric constraints and the reliability constraints shown by $\mathcal{R}^{(\mathrm{geom})}_S$ and $\mathcal{R}^{(\mathrm{rel})}_{S}$, respectively. To describe these constraints, let us consider the situation shown in Fig.~\ref{blow_1} where $N_1=N_2=2$ and the state of the channel is $S=\{(1;1,2),(2;2,4)\}$. This is only one of $\binom{8}{4}=70$ possible states. The table in~(\ref{table2}) shows the numbers on the $\overline{t}$-axis corresponding to different points in Fig~\ref{blow_1}. 
\begin{itemize}
  \item The geometric constraints are imposed by the positions of the bursts along the $\overline{t}$-axis. For example, point $B_1$ is on left of point $A_1$ which gives $\frac{\theta_2R_{c,2}}{\lambda_2}+\nu_2<\frac{\theta_1R_{c,1}}{\lambda_1}+\nu_1$. A complete list of the geometric constraints is given by the polyhedron 
  \begin{eqnarray}
\label{const_set1}
\begin{bmatrix}
     -\frac{\theta_1}{\lambda_1} & \frac{\theta_2}{\lambda_2}   \\
     \frac{\theta_1}{\lambda_1}  & -\frac{\theta_2}{\lambda_2}   \\
     -\frac{\theta_1}{\lambda_1}  & \frac{2\theta_2}{\lambda_2}   \\
     \frac{2\theta_1}{\lambda_1} & -\frac{2\theta_2}{\lambda_2}  \\
     -\frac{2\theta_1}{\lambda_1} &\frac{2\theta_2}{\lambda_2} \\
        -1 &0\\
     0 &-1
\end{bmatrix}\begin{bmatrix}
      R_{c,1}    \\
      R_{c,2} 
\end{bmatrix}<\begin{bmatrix}
      -\alpha \\
      \theta_2+\alpha\\
      \theta_1-\alpha\\
      \theta_2+\alpha\\
     \theta_1-\theta_2-\alpha\\
         -\lambda_1\\
      -\lambda_2
 \end{bmatrix},
\end{eqnarray}
 i.e.,  $\mathcal{R}^{(\mathrm{geom})}_S$ for $S=\{(1;1,2),(2;2,4)\}$ in the set of all $(R_{c,1},R_{c,2})$ such that (\ref{const_set1}) holds.  The last two constraints in (\ref{const_set1}) are the inequalities in (\ref{rate_value1}).
  \item The reliability constraints guarantee successful decoding for all $N_1+N_2=2+2=4$ transmitted codewords subject to the power conditions in~(\ref{power_33}).  For example, the second codeword of Tx~1 in Fig.~\ref{blow_1} only interferes with the second codeword of Tx~2 on its ``left~end'', i.e., $\omega_{2,2}^-=2$, $\omega_{2,2}^+=0$, $\omega_{2,2}=0$. We invoke Proposition~\ref{prop_5} to write 
  \begin{eqnarray}
\big(1-\frac{2}{\lambda_1}(\phi_1-\psi_1)\big)\theta_1R_{c,1}+\frac{2}{\lambda_2}(\phi_1-\psi_1)\theta_2R_{c,2}<\theta_1\phi_1-(\alpha+\theta_2)(\phi_1-\psi_1).
\end{eqnarray}
A complete list of reliability constraints is given by the polyhedra
\begin{eqnarray}
\label{const_set22}
\begin{bmatrix}
   \theta_1   & -\frac{1}{\lambda_2}(\phi_1-\psi_1)\theta_2   \\
      (1-\frac{2}{\lambda_1}(\phi_1-\psi_1))\theta_1 & \frac{2}{\lambda_2}(\phi_1-\psi_1)\theta_2  \\
      -\frac{1}{\lambda_1}(\phi_2-\psi_2)\theta_1 &(1+\frac{1}{\lambda_2}(\phi_2-\psi_2))\theta_2\\
      -\frac{1}{\lambda_1}(\phi_2-\psi_2)\theta_1 & \theta_2\\
      -1 & 0\\
      0 &-1
\end{bmatrix}\begin{bmatrix}
      R_{c,1}   \\
      R_{c,2} 
\end{bmatrix}<\begin{bmatrix}
      \theta_1\psi_1-\theta_2(\phi_1-\psi_1)    \\ 
       \theta_1\phi_1-(\alpha+\theta_2)(\phi_1-\psi_1)\\
       \theta_2\psi_2-\alpha(\phi_2-\psi_2)\\
       \theta_2\psi_2-\theta_1(\phi_2-\psi_2)\\
       \lambda_1(\frac{1}{2}-\frac{\gamma_1}{P_1})\\
        \lambda_2(\frac{1}{2}-\frac{\gamma_2}{P_2})
\end{bmatrix},
\end{eqnarray}
for some $\gamma_1,\gamma_2\geq 0$, i.e., $\mathcal{R}^{(\mathrm{rel})}_S$ for $S=\{(1;1,2),(2;2,4)\}$ is the set of all $(R_{c,1},R_{c,2})$ such that~(\ref{const_set22}) holds for some $\gamma_1,\gamma_2\geq0$. The last two constraints in (\ref{const_set22}) are the inequalities in~(\ref{power_33}). Note that $\mathcal{R}^{(\mathrm{rel})}_S$ is the union of infinitely many polyhedra. More precisely, if we denote the polyhedron in (\ref{const_set22}) for fixed $\gamma_1,\gamma_2\ge0$ by $\mathcal{P}_S(\gamma_1,\gamma_2)$, then 
\begin{equation}
\label{gen_11}
\mathcal{R}^{(\mathrm{rel})}_S=\bigcup_{\gamma_1,\gamma_2\geq 0}\mathcal{P}_S(\gamma_1,\gamma_2).
\end{equation}
 It is needless to mention that $\phi_i$ and $\psi_{i}$ are functions of $\gamma_1,\gamma_2$.
\end{itemize}
Having $\mathcal{R}_S^{\mathrm{(geom)}}$ and $\mathcal{R}_S^{\mathrm{(rel)}}$ defined for any state $S$, the achievable rate region~$\mathcal{R}$ is given by
\begin{eqnarray}
\label{region_11}
\mathcal{R}=\bigcup_{S\in\mathscr{S}}\big(\mathcal{R}^{(\mathrm{geom})}_S\bigcap\mathcal{R}^{(\mathrm{rel})}_{S}\big).
\end{eqnarray} 
 \begin{figure}[t]
\centering
\subfigure[]{
\includegraphics[scale=0.67]{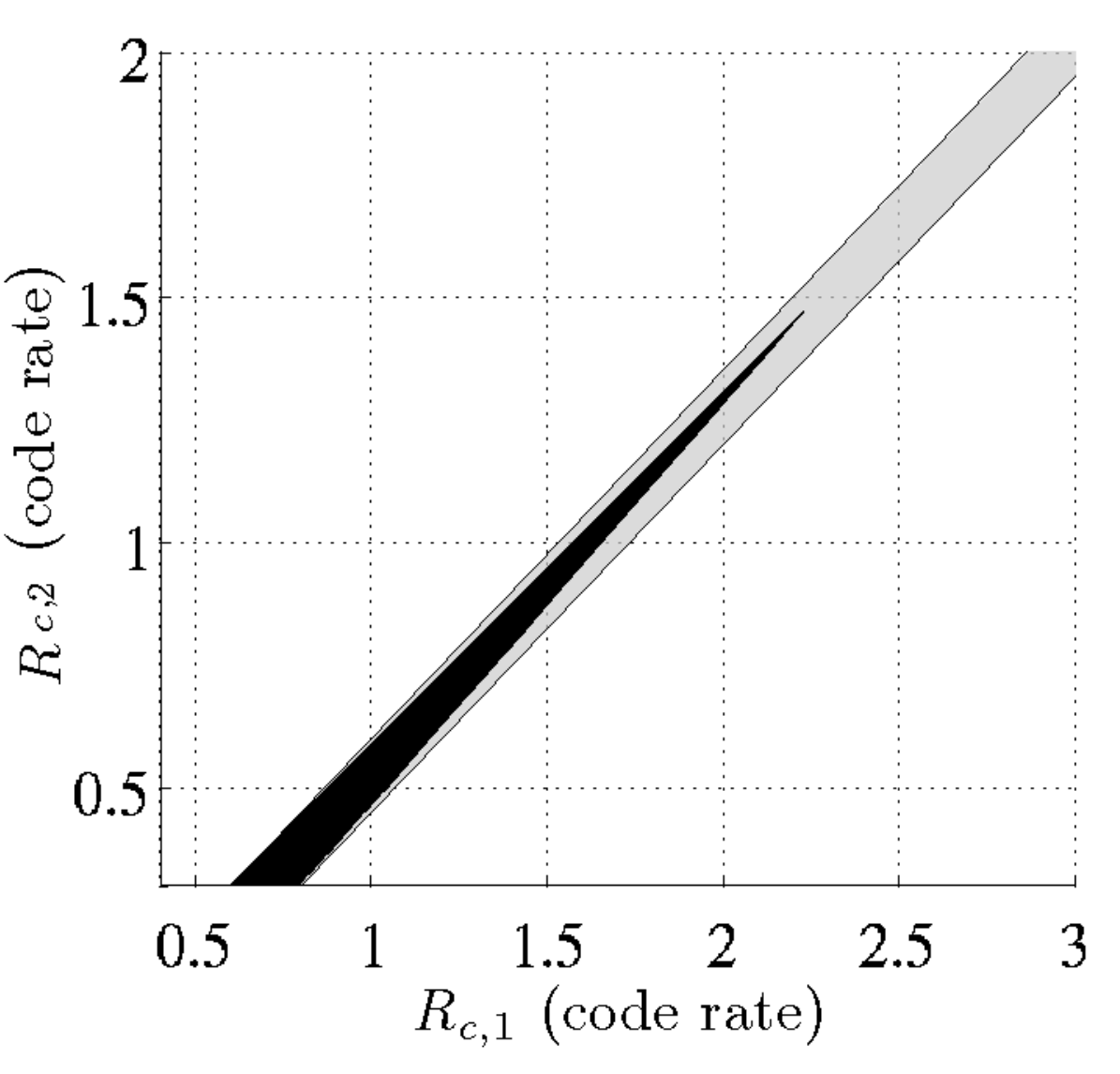}
\label{ref1}
}
\subfigure[]{
\includegraphics[scale=0.67]{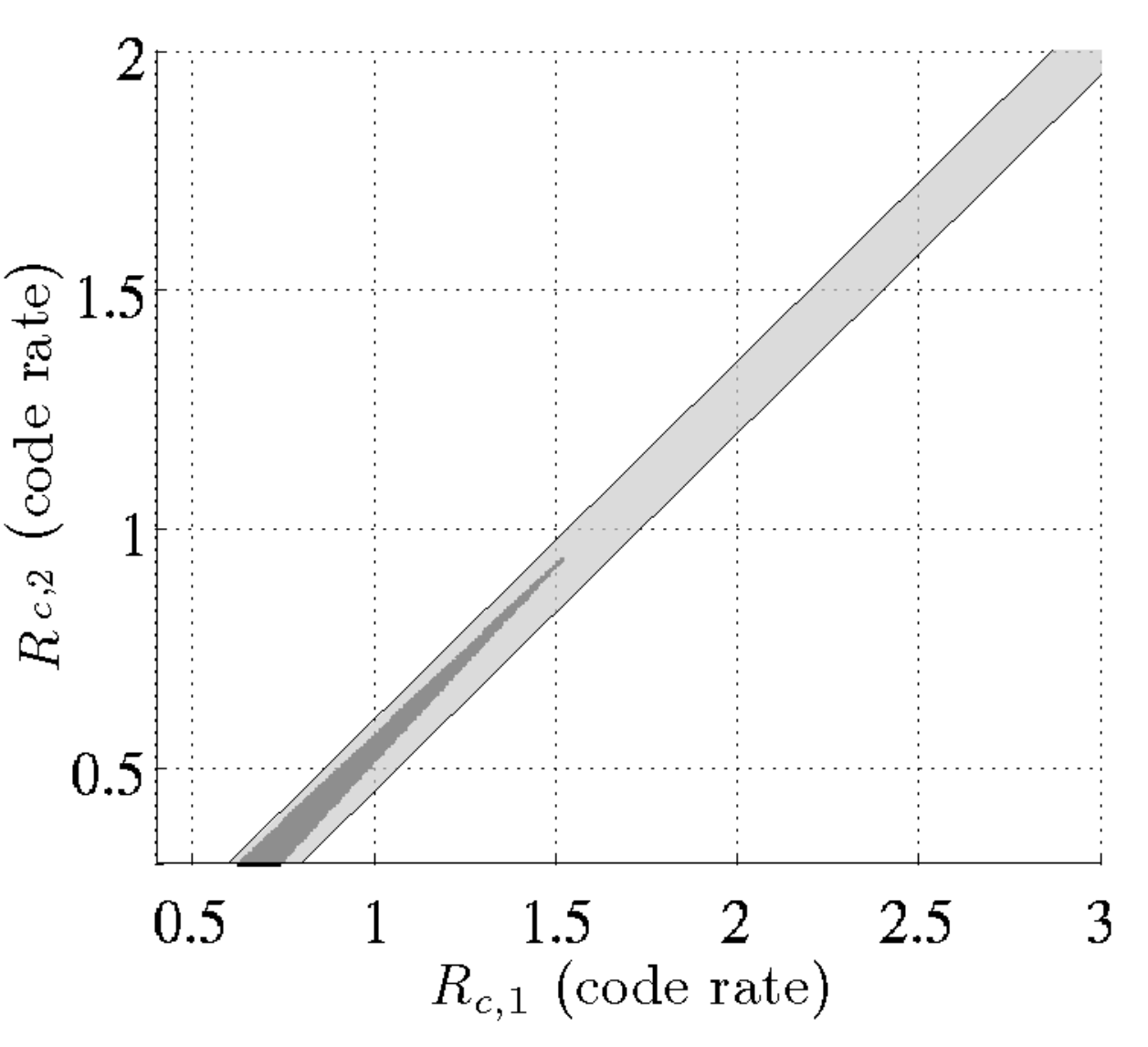}
\label{ref2}
}
\label{fig:subfigureExample}
\caption[Optional caption for list of figures]{Consider a setup where $N_1=3, N_2=2, \theta_1=\theta_2=1, k_1=2, k_2=3, q_1=0.2, q_2=0.1, a_1=1.5, a_2=0.5, P_1=20\,\mathrm{dB}, P_2=30\,\mathrm{dB}$ and $\alpha=1$. Panel~(a) shows the regions $\mathcal{R}^{(\mathrm{geom})}_S$ in grey and $\mathcal{R}^{(\mathrm{rel})}_S$ in black for $S=\{(1;2,3),(2;3,4)\}$. Panel~(b) shows the same regions under full~power transmission. It is seen that $\mathcal{R}^{(\mathrm{rel})}_S$ under full~power transmission is strictly smaller than $\mathcal{R}^{(\mathrm{rel})}_S$ in its general form given in~(\ref{gen_11}). }
\label{pict_1}
\end{figure}   
 \begin{figure}[t]
\centering
\subfigure[]{
\includegraphics[scale=0.75]{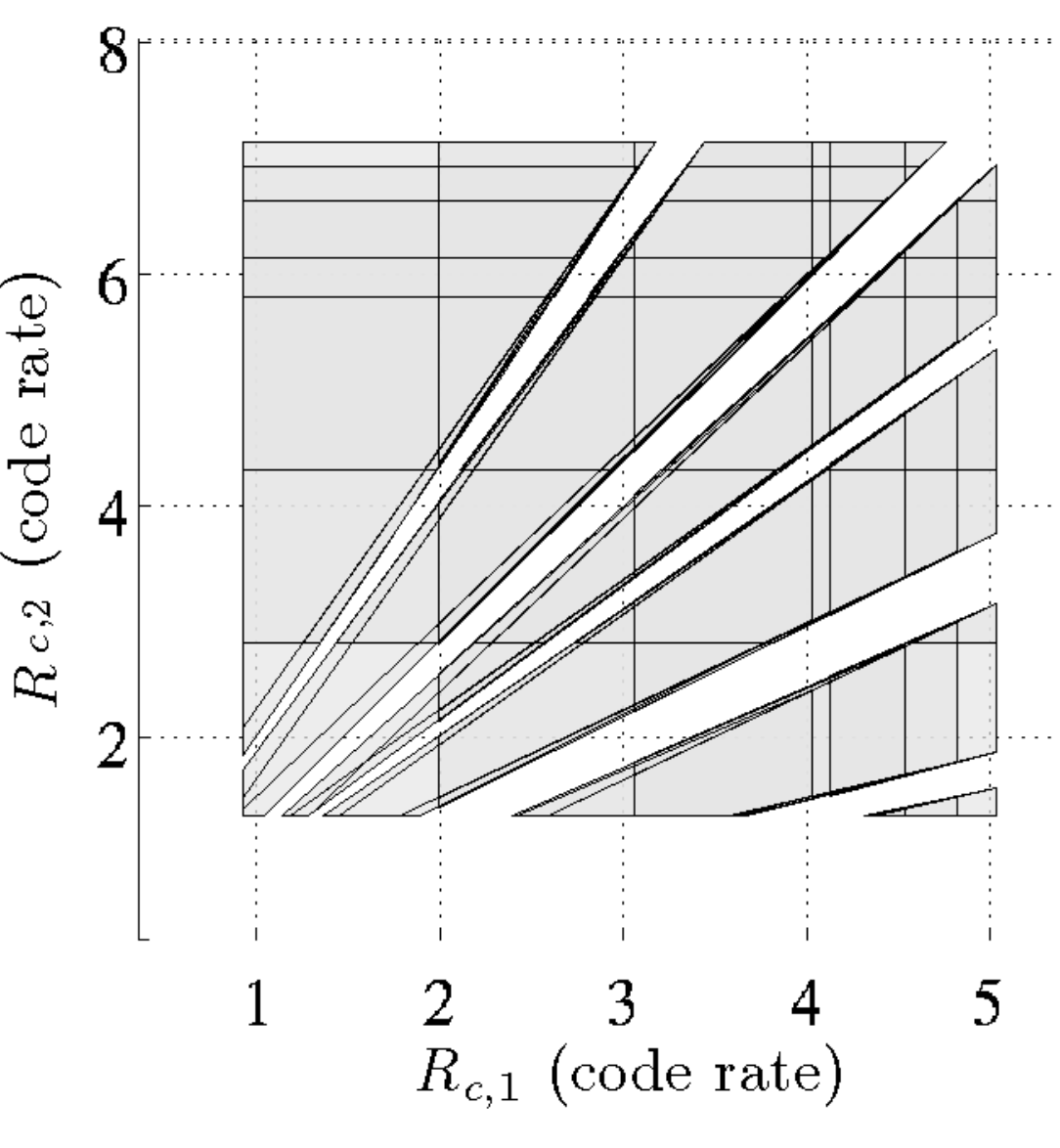}
\label{ref1}
}
\subfigure[]{
\includegraphics[scale=0.70]{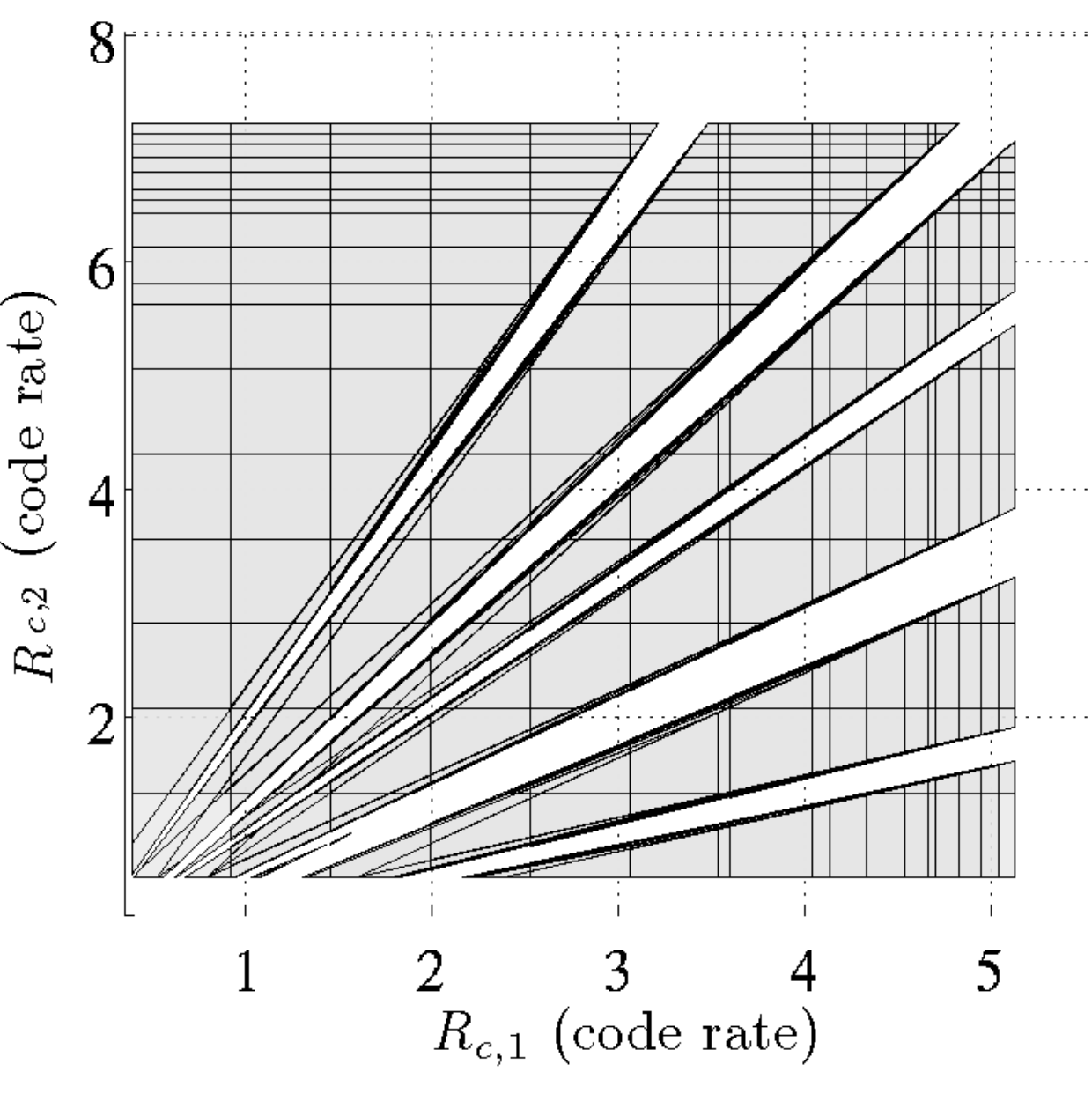}
\label{ref2}
}
\label{fig:subfigureExample}
\caption[Optional caption for list of figures]{Consider a setup where $N_1=3, N_2=2, \theta_1=\theta_2=1, k_1=2, k_2=3, q_1=0.2, q_2=0.1, a_1=1.5, a_2=0.5, P_1=20\,\mathrm{dB}, P_2=30\,\mathrm{dB}$ and $\alpha=1$. Panel~(a) shows the region $\widetilde{\mathcal{R}}^{(\mathrm{rel})}_S$ in~(\ref{pil_11}) where $S=\{(1;2,3),(2;3,4)\}$ and $\Gamma_i=\{\frac{l}{5}\overline{\gamma}_i: 1\leq l\leq 4\}$. Panel~(b) shows the same region for $\Gamma_i=\{\frac{l}{10}\overline{\gamma}_i: 1\leq l\leq 9\}$.}
\label{pict_123}
\end{figure}   
\begin{figure}[t]
\centering
\subfigure[$N_1=N_2=1$]{
\includegraphics[scale=0.69]{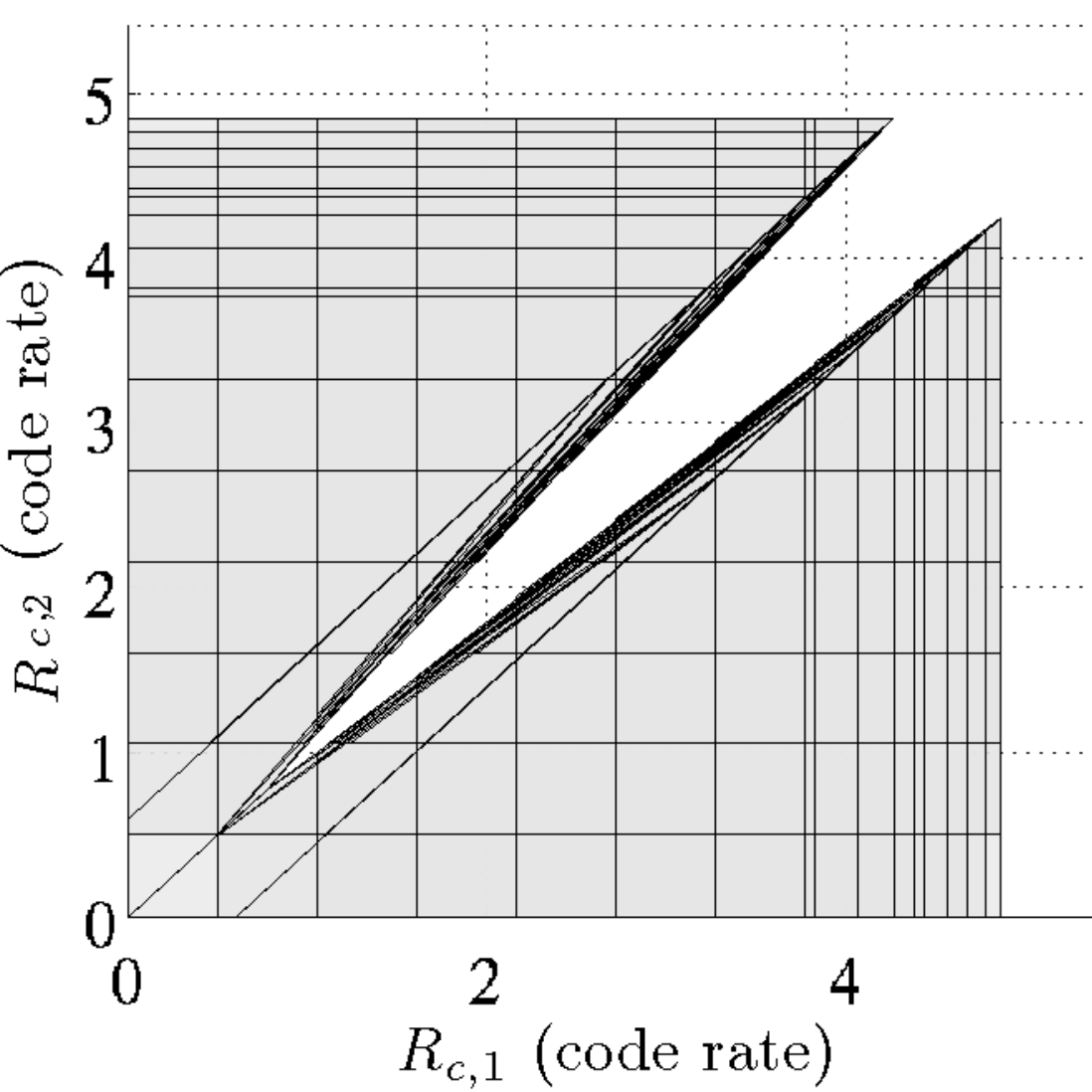}
\label{ref1}
}
\subfigure[$N_1=N_2=2$]{
\includegraphics[scale=0.73]{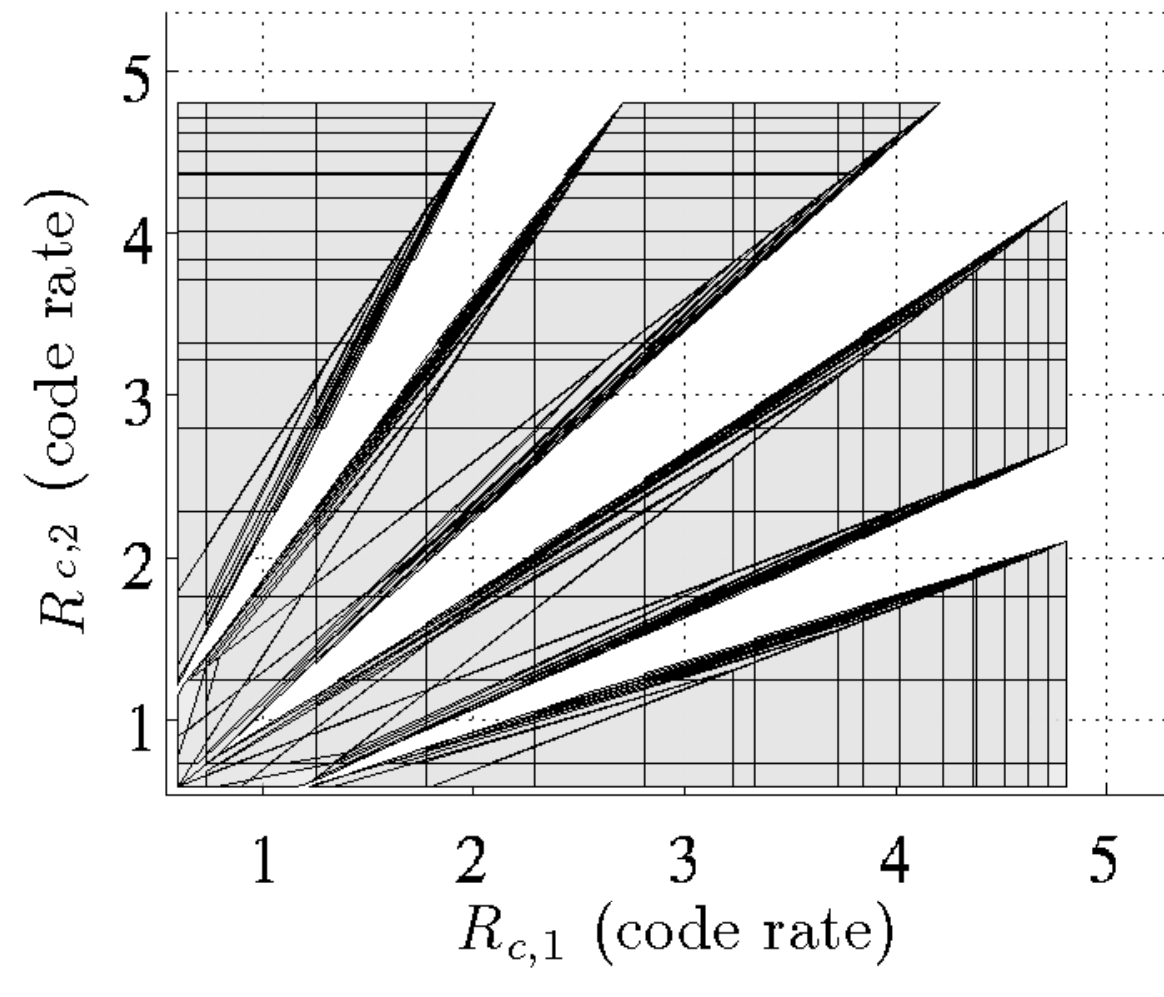}
\label{ref2}
}
\subfigure[$N_1=N_2=3$]{
\includegraphics[scale=0.77]{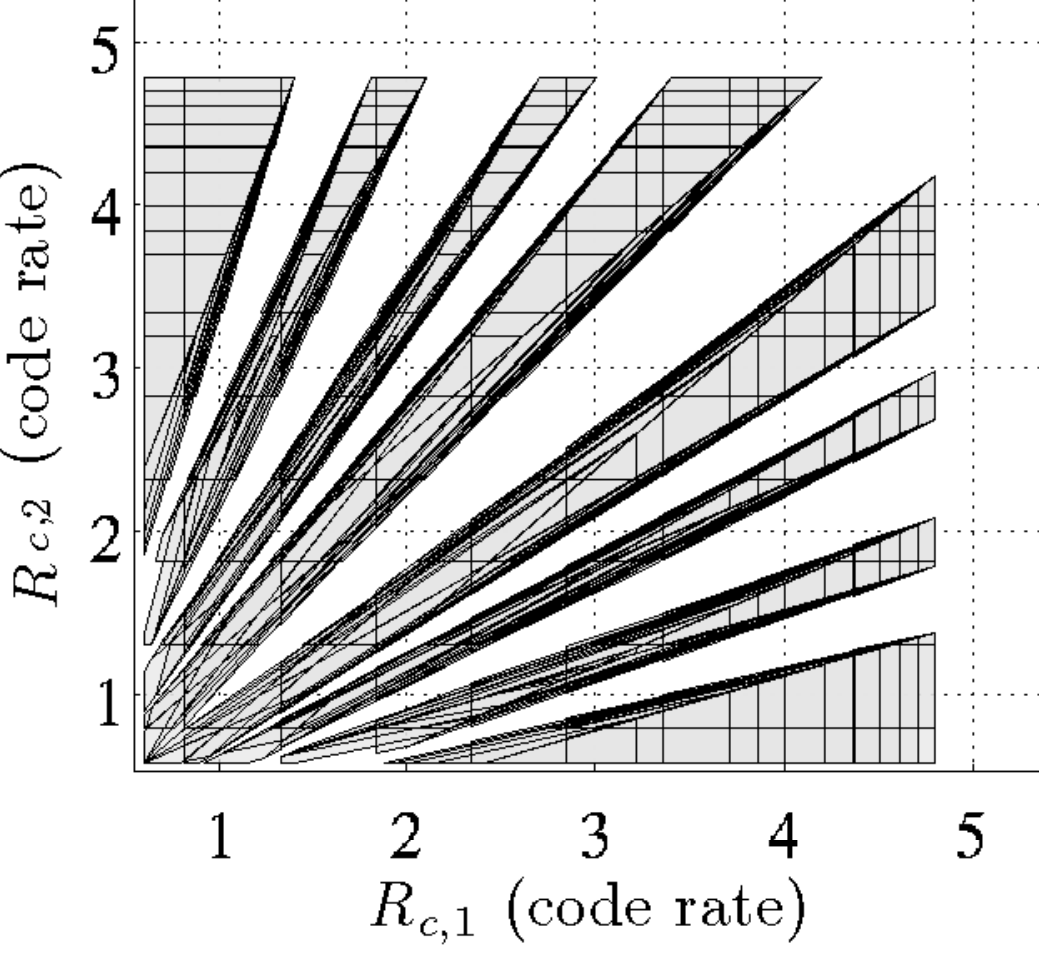}
\label{ref1}
}
\subfigure[$N_1=N_2=4$]{
\includegraphics[scale=0.73]{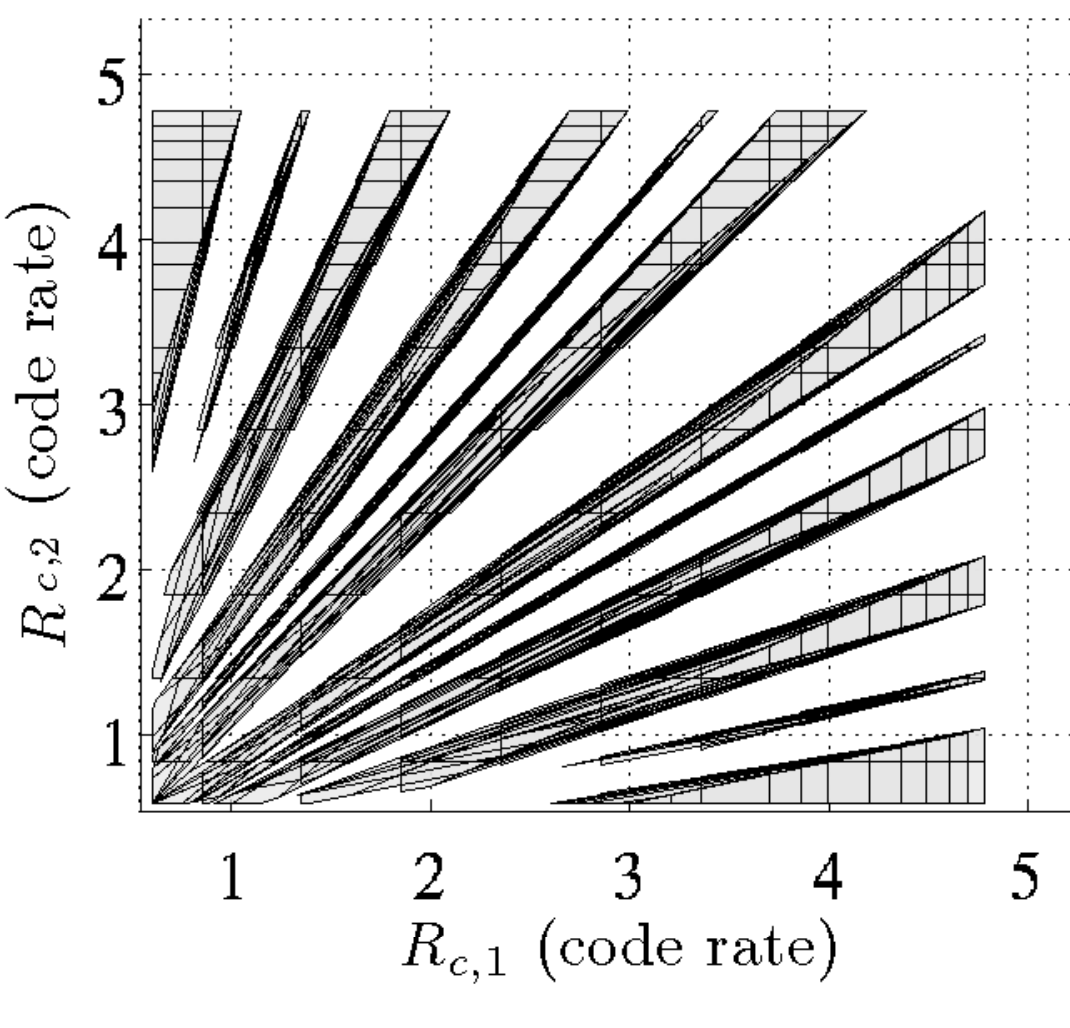}
\label{ref1}
}
\label{fig:subfigureExample}
\caption[Optional caption for list of figures]{A setting where $\theta_1=\theta_2=1, k_1=k_2=2, q_1=q_2=0.3, a_1=a_2=0.5, P_1=P_2=20\,\mathrm{dB}$ and $\alpha=0$. As the number of codewords $N_1=N_2$ increases, $\mathcal{R}$ becomes strictly smaller.   }
\label{pict_2}
\end{figure}  
A few remarks are in order:
\begin{itemize}
  \item In general, none of $\mathcal{R}_S^{(\mathrm{geom})}$ and $\mathcal{R}_S^{(\mathrm{rel})}$ is a subset of the other. 
  \item Depending on the system parameters, there might exist states $S$ such that $\mathcal{R}^{(\mathrm{geom})}_S\bigcap\mathcal{R}^{(\mathrm{rel})}_{S}=\emptyset$.
  \item Full~power transmission, i.e., $\gamma_i=(\frac{1}{N_i}+\frac{R_{c,i}}{\lambda_i})P_i$ is not in general optimum. For example, let $N_1=3$, $N_2=2$, $\theta_1=\theta_2=1, k_1=2, k_2=3, q_1=0.2, q_2=0.1, a_1=1.5, a_2=0.5, P_1=20\,\mathrm{dB}, P_2=30\,\mathrm{dB}$ and $\alpha=1$. Fig.~\ref{pict_1} in panel~(a) shows the regions $\mathcal{R}^{(\mathrm{geom})}_S$ in grey and $\mathcal{R}^{(\mathrm{rel})}_S$ in black for $S=\{(1;2,3),(2;3,4)\}$. Fig.~\ref{pict_1} in panel~(b) shows the same regions under full~power transmission. It is seen that $\mathcal{R}^{(\mathrm{rel})}_S$ under full~power transmission is strictly smaller than $\mathcal{R}^{(\mathrm{rel})}_S=\bigcup_{\gamma_1,\gamma_2\geq 0}\mathcal{P}_S(\gamma_1,\gamma_2)$. 
  \item In order to plot $\mathcal{R}_S^{(\mathrm{rel})}$ for a given state $S$, we choose a finite set of values for $\gamma_i$, namely $\Gamma_i$, and approximate $\mathcal{R}_S^{(\mathrm{rel})}$ by 
\begin{equation}
\label{pil_11}
\widetilde{\mathcal{R}}_S^{(\mathrm{rel})}:=\bigcup_{\gamma_1\in \Gamma_1,\gamma_2\in \Gamma_2}\mathcal{P}_S(\gamma_1,\gamma_2)\subseteq \mathcal{R}_S^{(\mathrm{rel})}.
\end{equation} 
To choose $\Gamma_i$, we observe that 
\begin{eqnarray}
 0\leq \gamma_i<\overline{\gamma}_i:=\Big(\frac{1}{N_i}+\frac{\overline{R}_{c,i}}{\lambda_i}\Big)P_i,
 \end{eqnarray}
 due to (\ref{power_33}) and (\ref{bo_22}).  Fix a natural number $m$ and let 
 \begin{equation}
\label{gamma_11}
\Gamma_i:=\Big\{\frac{l}{m}\overline{\gamma}_i: 1\leq l\leq m-1\Big\}.
\end{equation}
 The set difference $\mathcal{R}_S^{(\mathrm{rel})}\setminus\widetilde{\mathcal{R}}_S^{(\mathrm{rel})}$ becomes smaller as $m$ increases. For example, Fig.~\ref{pict_123} shows the region $\widetilde{R}_{S}^{(\mathrm{rel})}$ for $S=\{(1;2,3),(2;3,4)\}$ in a setup where $N_1=3, N_2=2, \theta_1=\theta_2=1, k_1=2, k_2=3, q_1=0.2, q_2=0.1, a_1=1.5, a_2=0.5, P_1=20\,\mathrm{dB}, P_2=30\,\mathrm{dB}$ and $\alpha=1$. In panel~(a), we have $\Gamma_i=\{\frac{l}{5}\overline{\gamma}_i: 1\leq l\leq 4\}$ and in panel~(b), $\Gamma_i=\{\frac{l}{10}\overline{\gamma}_i: 1\leq l\leq 9\}$. 
\end{itemize}

In the next two examples, we fix $\theta_1=\theta_2=1, k_1=k_2=2, q_1=q_2=0.3, a_1=a_2=0.5$, $P_1=P_2=20\,\mathrm{dB}$ and study the effects of $N_1,N_2$ and $\alpha$ on $\mathcal{R}$ in (\ref{region_11}). We also fix $\Gamma_i=\{\frac{l}{10}\overline{\gamma}_i: 1\leq l\leq 9\}$. 

\textbf{Example}- Let $\alpha=0$, i.e., both users become active at the same time. Fig.~\ref{pict_2} shows the region $\mathcal{R}$ for different values of $N_1=N_2$. As the number of transmitted codewords increases, $\mathcal{R}$ becomes strictly smaller.  

\begin{figure}[t]
\centering
\subfigure[$\alpha=5$]{
\includegraphics[scale=0.75]{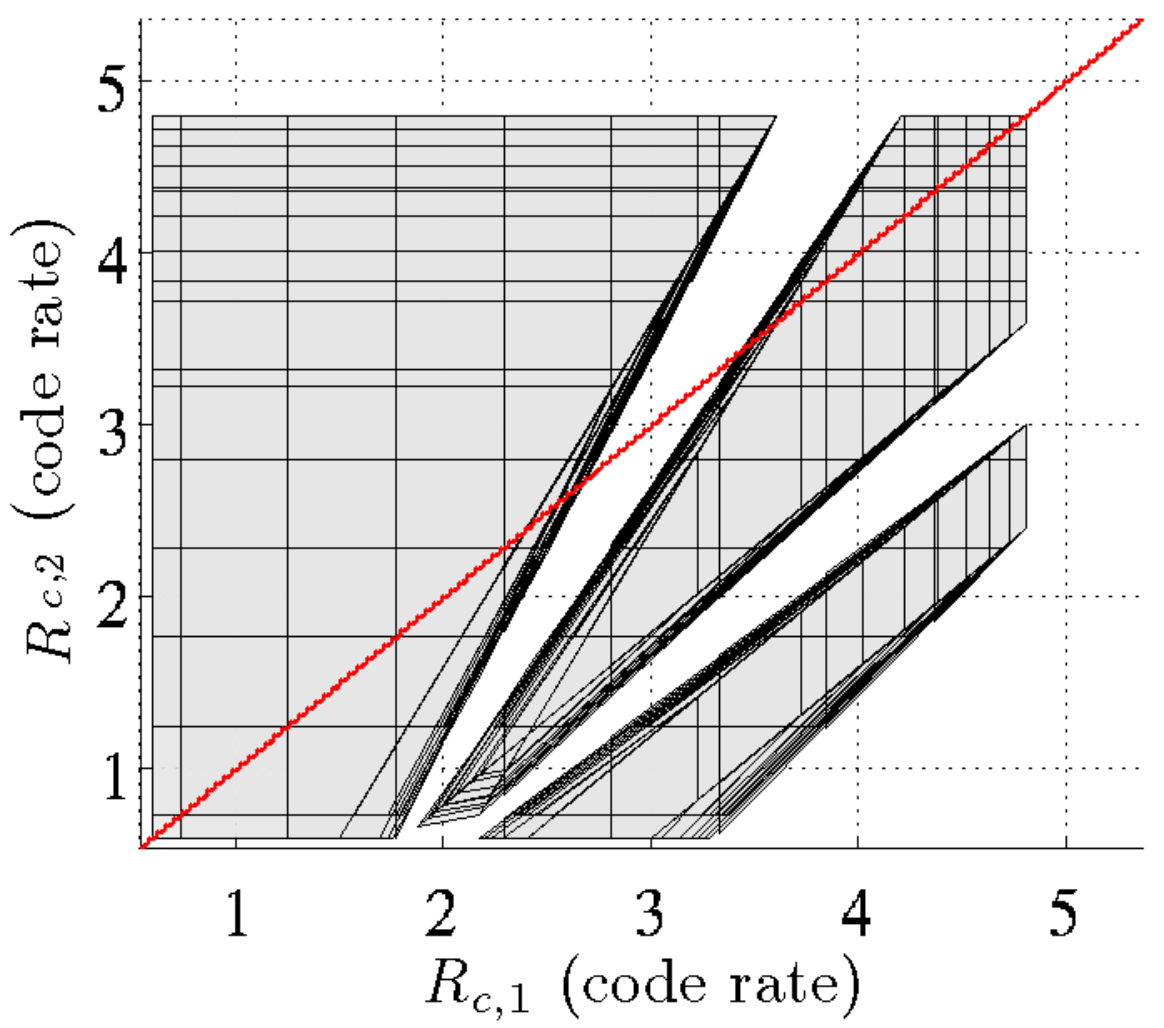}
\label{ref1}
}
\subfigure[$\alpha=10$]{
\includegraphics[scale=0.75]{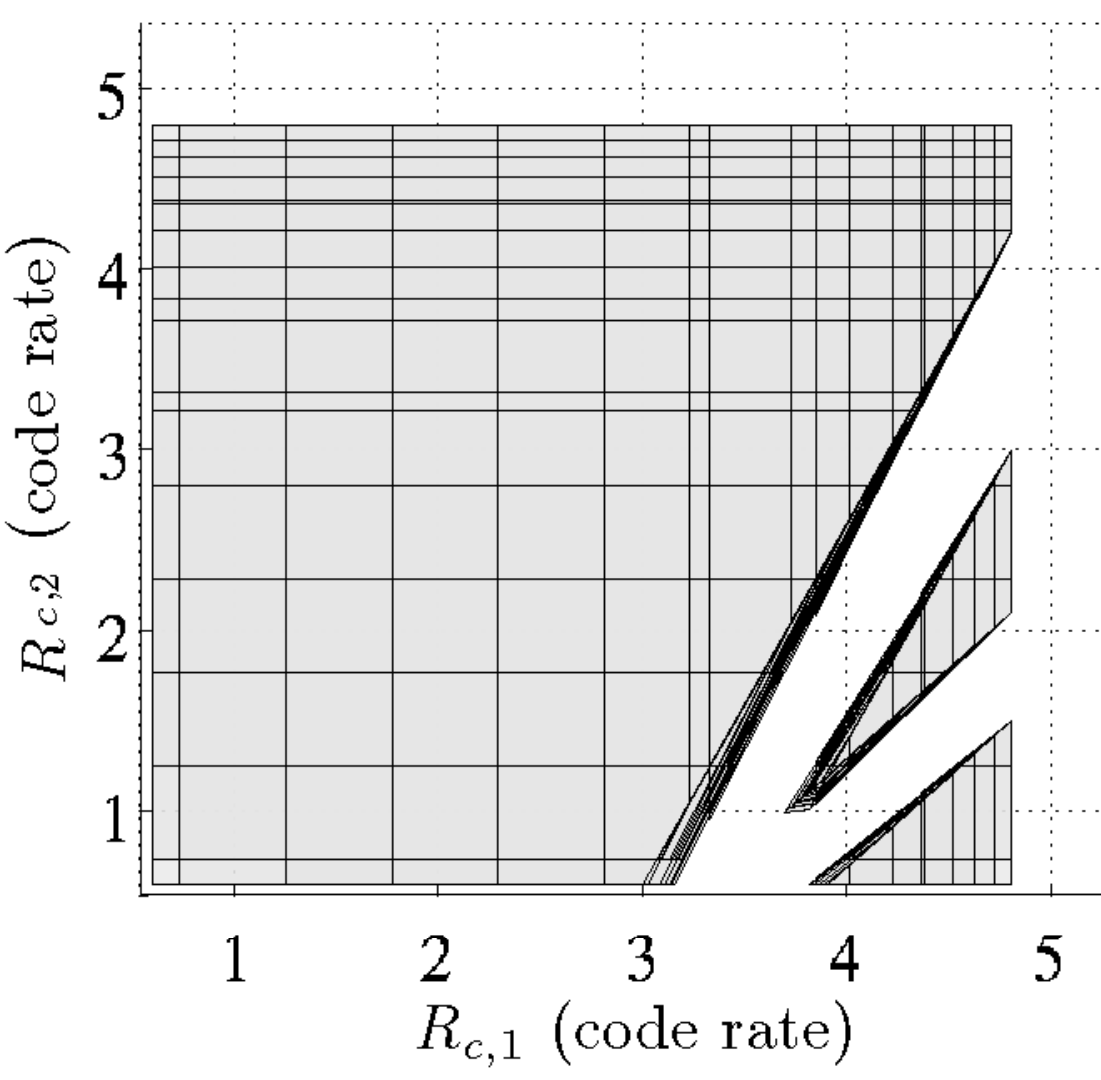}
\label{ref2}
}
\label{fig:subfigureExample}
\caption[Optional caption for list of figures]{A setting where $N_1=N_2=2, \theta_1=\theta_2=1, k_1=k_2=2, q_1=q_2=0.3, a_1=a_2=0.5$ and $P_1=P_2=20\,\mathrm{dB}$. As $\alpha$ increases, the region $\mathcal{R}$ converges to the square $\{(R_{c,1},R_{c,2}): \lambda_i<R_{c,i}<\overline{R}_{c,i},\,i=1,2\}$ where~$\overline{R}_{c,1}=\overline{R}_{c,2}\approx4.8774$. }
\label{pict_3}
\end{figure} 

\textbf{Example}- Let $N_1=N_2=2$. Fig.~\ref{pict_3} shows the region $\mathcal{R}$ for different values of $\alpha$. As $\alpha$ increases, the region $\mathcal{R}$ converges to the square $\{(R_{c,1},R_{c,2}): \lambda_i<R_{c,i}<\overline{R}_{c,i},,\,i=1,2\}$ where~$\overline{R}_{c,1}=\overline{R}_{c,2}\approx4.8774$.

\subsection{The Symmetric Model}
In this section we study a symmetric setting where except for $\nu_1$ and $\nu_2$, other system parameters for the two users are identical. In this case, we drop the index~$i=1,2$, i.e., $N_1=N_2=N$, $\lambda_1=\lambda_2=\lambda$, $R_{c,1}=R_{c,2}=R_c$, $\theta_1=\theta_2=\theta$,  $a_1=a_2=a$, $\gamma_1=\gamma_2=\gamma$ and $P_1=P_2=P$. Without loss of generality,  
 \begin{equation}
\label{ }
 \nu_2\geq \nu_1.
\end{equation}
Let\footnote{More precisely, $\mathcal{R}_{\mathrm{sym}}$ is the set of codebook~rates $R_c>\lambda \mathds{1}_{N>1}$ such that $\gamma<(\frac{1}{N}+\frac{R_c}{\lambda})P$ and all $2N$ transmitted codewords are decoded successfully according to the sufficient conditions put forth by Propositions~\ref{prop_4}, \ref{prop_5} and \ref{prop_6}.  } $\mathcal{R}_{\mathrm{sym}}$ be the set of all $R_c>\lambda \mathds{1}_{N>1}$ such that 
\begin{itemize}
  \item All $2N$ transmitted codewords are decoded successfully.
  \item The average transmission power for Tx~$i$ satisfies $\frac{1}{|\mathcal{T}_i|}\sum_{t\in \mathcal{T}_i}x^2_{i,t}\leq P$  where $\mathcal{T}_i$ is the period of activity for Tx~$i$ until the time slot it transmits the last symbol in its $N^{th}$ burst.
\end{itemize}
 Towards characterizing $\mathcal{R}_{\mathrm{sym}}$,  we define the set $\mathcal{P}(x,y;\gamma)$ for real numbers $x$ and $y$ by
   \begin{equation}
   \label{vart4}
   \mathcal{P}(x,y;\gamma)=\left\{R_c>\lambda\mathds{1}_{N>1}: \big(1-\frac{x}{\lambda}(\phi_{\gamma}-\phi_{\gamma}')\big)R_c<\phi_{\gamma}'-(\phi_{\gamma}-\phi_{\gamma}')\frac{y}{\theta}, \gamma<\big(\frac{1}{N}+\frac{R_c}{\lambda}\big)P \right\},
\end{equation}
where 
\begin{equation}
\label{ }
\phi_{\gamma}:=\mathsf{C}(\gamma),\,\,\,\,\,\,\psi_{\gamma}:=\mathsf{C}\big(\frac{\gamma}{1+a\gamma}\big).
\end{equation}
One can rephrase the statements in Propositions~\ref{prop_4}, \ref{prop_5} and \ref{prop_6} in Proposition~\ref{prop_7}:
\begin{proposition}
\label{prop_7}
For $1\leq i\leq 2$ and $1\leq j\leq N $ assume the following conditions hold:
\begin{itemize}
  \item If $\omega^-_{i,j}, \omega^+_{i,j}\neq 0$, then $R_c\in \mathcal{P}(1,\theta;\gamma)$.
  \item If $\omega^-_{i,j}\neq 0$ and $\omega^+_{i,j}=0$, then $R_c\in \mathcal{P}(j-\omega^-_{i,j},\nu_{i'}-\nu_{i};\gamma)$.
  \item If $\omega^-_{i,j}=0$ and $\omega^+_{i,j}\neq0$, then $R_c\in \mathcal{P}(\omega^+_{i,j}-j,\nu_{i}-\nu_{i'};\gamma)$.
   \item If $\omega^-_{i,j}=\omega^+_{i,j}=0$, then $R_c\in \mathcal{P}(0,-\theta;\gamma)$.
\end{itemize}
Then the probability of error in decoding the $j^{th}$ message of Tx~$i$ can be made arbitrarily small by choosing $n$ sufficiently large.
\end{proposition}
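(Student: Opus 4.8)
The plan is to recognise Proposition~\ref{prop_7} as nothing more than Propositions~\ref{prop_4},~\ref{prop_5},~\ref{prop_6} together with the elementary criterion (\ref{ynbt11}), rewritten in the notation of the symmetric model. As recorded in the remark opening Section~VI.A, those propositions remain valid once every occurrence of $\eta_i$ is replaced by $\theta_iR_{c,i}$; under the symmetric identifications $N_1=N_2=N$, $\lambda_1=\lambda_2=\lambda$, $R_{c,1}=R_{c,2}=R_c$, $\theta_1=\theta_2=\theta$, $a_1=a_2=a$, $\gamma_1=\gamma_2=\gamma$, $P_1=P_2=P$ this amounts to $\eta_i\mapsto\theta R_c$, while the quantities (\ref{kapp_1}) collapse to $\phi_i=\phi_\gamma=\mathsf{C}(\gamma)$ and $\psi_i=\psi_\gamma=\mathsf{C}\!\big(\tfrac{\gamma}{1+a\gamma}\big)$ (so that the $\phi'_\gamma$ appearing in (\ref{vart4}) is just $\psi_\gamma$). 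The power bound (\ref{power_33}) reads $\gamma<\big(\tfrac1N+\tfrac{R_c}{\lambda}\big)P$ and the immediate-transmission requirement (\ref{rate_value1}) reads $R_c>\lambda\mathds{1}_{N>1}$; these are exactly the two side conditions hard-wired into the set $\mathcal{P}(x,y;\gamma)$ of (\ref{vart4}). So the whole proof reduces to matching, term by term, the decoding inequality supplied by each of the earlier propositions with the linear inequality that defines $\mathcal{P}$ for the announced arguments $x$ and $y$.

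The single genuinely new input is that the parameters $\omega_{i,j}$ all vanish in the symmetric model. Indeed, in the symmetric case $\theta_i=\theta_{i'}=\theta$, so the ``containment'' region $\big(\min\{0,\theta_i-\theta_{i'}\},\max\{0,\theta_i-\theta_{i'}\}\big)$ discussed after Proposition~\ref{prop_3} is empty; equivalently, every codeword and every burst occupies an open $\overline t$-interval of the same length $\theta$, and by the genericity hypothesis (\ref{res_11}) no two endpoints coincide, so no burst of Tx~$i'$ can be strictly contained in a codeword of Tx~$i$. Hence $\omega_{i,j}=0$ for all $i,j$; moreover, when $\omega^-_{i,j}\neq0$ and $\omega^+_{i,j}\neq0$ they must index \emph{distinct} bursts of Tx~$i'$ (a single burst overlapping both ends of the codeword would contain it), so the degenerate sub-case $\omega^-_{i,j}=\omega^+_{i,j}$ of Proposition~\ref{prop_4} never arises.

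With $\omega_{i,j}=0$ the four bullets are then pure bookkeeping. If $\omega^-_{i,j},\omega^+_{i,j}\neq0$, substituting the symmetric parameters into (\ref{help0}) and dividing by $\theta$ gives $\big(1-\tfrac1\lambda(\phi_\gamma-\psi_\gamma)\big)R_c<\psi_\gamma-(\phi_\gamma-\psi_\gamma)$, which, since $\theta/\theta=1$, is precisely $R_c\in\mathcal{P}(1,\theta;\gamma)$. If $\omega^-_{i,j}\neq0$ and $\omega^+_{i,j}=0$, doing the same to (\ref{help1}) produces the coefficient $1-\tfrac{j-\omega^-_{i,j}}{\lambda}(\phi_\gamma-\psi_\gamma)$ on the left, and, after using $\phi_\gamma-(\phi_\gamma-\psi_\gamma)=\psi_\gamma$ to rewrite $\theta\phi_\gamma-(\nu_{i'}-\nu_i+\theta)(\phi_\gamma-\psi_\gamma)$ as $\theta\psi_\gamma-(\nu_{i'}-\nu_i)(\phi_\gamma-\psi_\gamma)$, the right-hand side $\psi_\gamma-(\phi_\gamma-\psi_\gamma)\tfrac{\nu_{i'}-\nu_i}{\theta}$, i.e.\ $R_c\in\mathcal{P}(j-\omega^-_{i,j},\,\nu_{i'}-\nu_i;\gamma)$. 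The case $\omega^-_{i,j}=0$, $\omega^+_{i,j}\neq0$ is the mirror image of this via (\ref{help2}) and yields $R_c\in\mathcal{P}(\omega^+_{i,j}-j,\,\nu_i-\nu_{i'};\gamma)$, and when $\omega^-_{i,j}=\omega^+_{i,j}=0$ the criterion (\ref{ynbt11}) collapses to $R_c<\phi_\gamma$, which is $R_c\in\mathcal{P}(0,-\theta;\gamma)$ because $\psi_\gamma-(\phi_\gamma-\psi_\gamma)\tfrac{-\theta}{\theta}=\phi_\gamma$. In every case the invoked proposition already certifies that the probability of mis-decoding the $j^{th}$ message of Tx~$i$ tends to $0$ as $n\to\infty$, which is the assertion.

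I expect no real obstacle here: the only content is the observation $\omega_{i,j}=0$ (immediate from equal interval lengths together with (\ref{res_11})) plus four short algebraic identifications, the mildly delicate one being the rewriting $\phi_\gamma-(\phi_\gamma-\psi_\gamma)=\psi_\gamma$ needed to bring (\ref{help1}), (\ref{help2}) and (\ref{ynbt11}) into the canonical $\psi_\gamma-(\phi_\gamma-\psi_\gamma)\tfrac{y}{\theta}$ form on the right-hand side of $\mathcal{P}(x,y;\gamma)$. I would present the argument in the order above: first the reduction dictionary, then $\omega_{i,j}=0$, then the four case-by-case matchings.
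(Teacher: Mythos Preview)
Your proposal is correct and matches the paper's approach exactly: the paper offers no proof beyond the sentence ``One can rephrase the statements in Propositions~\ref{prop_4}, \ref{prop_5} and \ref{prop_6} in Proposition~\ref{prop_7}'', and your write-up supplies precisely the bookkeeping that sentence leaves implicit. Your observation that $\omega_{i,j}=0$ (and that $\omega^-_{i,j}\neq\omega^+_{i,j}$ whenever both are nonzero) in the symmetric setting is the one genuine ingredient needed to make the rephrasing go through, and you justify it correctly from equal burst lengths together with~(\ref{res_11}).
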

If $N=1$, one can easily find $\mathcal{R}_{\mathrm{sym}}$ by considering the cases $\alpha<\theta$ and $\alpha>\theta$, separately. If $\alpha<\theta$, the two transmitted codewords overlap and we have $(\omega^-_{1,1},\omega^+_{1,1},\omega_{1,1})=(0,1,0)$ and $(\omega^-_{2,1},\omega^+_{2,1},\omega_{2,1})=(1,0,0)$. Applying Proposition~\ref{prop_7}, $R_c\in \mathcal{P}(1-1,\nu_1-\nu_2;\gamma)=\mathcal{P}(0,-\alpha;\gamma)$. If $\alpha>\theta$, none of the transmitted codewords experiences interference and hence, $R_c\in \mathcal{P}(0,-\theta;\gamma)$. Therefore,
\begin{eqnarray}
\mathcal{R}_{\mathrm{sym}}=\left\{\begin{array}{cc}
    \bigcup_{\gamma\geq 0}\mathcal{P}(0,-\alpha;\gamma)  &  \alpha<\theta  \\
    \bigcup_{\gamma\geq 0} \mathcal{P}(0,-\theta;\gamma) &   \alpha>\theta
\end{array}\right..
\end{eqnarray}
Define
\begin{equation}
\label{ }
\overline{R}_c:=\overline{R}_{c,1}=\overline{R}_{c,2},
\end{equation}
where $\overline{R}_{c,i}$ is given in~(\ref{bo_22}) and let $\gamma^*$ be the unique positive solution for $\gamma$ in the equation $\phi_{\gamma}'+\frac{\alpha}{\theta}(\phi_{\gamma}-\phi_{\gamma}')=\big(\frac{\gamma}{P}-1\big)\lambda$. Then it is easy to see that $\mathcal{R}_{\mathrm{sym}}\big|_{N=1}=\big(0,R_{c,\max}\big)$ where $R_{c,\max}$ is given by
\begin{equation}
\label{Rc_11}
R_{c,\max}=\left\{\begin{array}{cc}
    \big(\frac{\gamma^*}{P}-\frac{1}{N}\big)\lambda & \alpha<\theta   \\
    \overline{R}_c  &   \alpha>\theta
\end{array}\right..
\end{equation}
For $N\geq 2$, it is not necessarily the case that $\mathcal{R}_{\mathrm{sym}}=\big(\lambda,R_{\max}\big)$. For example, consider the setup  in panel~(a) of Fig.~\ref{pict_3}  where the line $R_{c,1}=R_{c,2}$ is shown in red. We see that $\mathcal{R}_{\mathrm{sym}}$ is the union of two disjoint intervals. 

Throughout the rest of this section let $N\geq 2$. Our goal is to characterize $\mathcal{R}_{\mathrm{sym}}$. Define\footnote{In Section~II.B we defined $\mu_i:=\frac{\eta_i}{\lambda_i}$ in (\ref{buffstab}).  Since $\eta_i$ is replaced by $\theta_iR_i$ in our new system model in this section, the choice of the letter~$\mu$ for the quotient $\frac{\theta R}{\lambda}$ in~(\ref{bio_11}) is in accordance with the one in~(\ref{buffstab}).}
  \begin{eqnarray}
\label{bio_11}
\mu:=\frac{\theta R_c}{\lambda}.
\end{eqnarray} 
Recall from Section~IV that the burst with index $j$ of Tx~$i$ extends from $j\mu+\nu_i$ to $j\mu+\nu_i+\theta$ on the $\overline{t}$-axis. Let $ j^*\geq 1$ be such that 
 \begin{eqnarray}
j^*\mu+\nu_1< \mu+\nu_2<(j^*+1)\mu+\nu_1,
\end{eqnarray} 
or equivalently,\footnote{If $j^*=1$, we drop the upper bound in (\ref{JSTAR1}).} 
\begin{equation}
\label{JSTAR1}
\frac{\alpha}{j^*}< \mu< \frac{\alpha}{j^*-1}.
\end{equation}
i.e., the starting point of the first burst of Tx~2 lies between the starting points of the bursts with indices $j^{*}$ and $j^*+1$ of Tx~1 as shown in~Fig.~\ref{bufpic6}. The interference pattern on the transmitted codewords depends on how the numbers $\mu$, $\frac{\alpha-\theta}{j^*-1}$ and $\frac{\alpha+\theta}{j^*}$ compare with each other. For example, Fig.~\ref{bufpic6} shows the case where $\frac{\alpha-\theta}{j^*-1}<\mu<\frac{\alpha+\theta}{j^*}$. As a result, each codeword of Tx~1 with index $j\ge j^*+1$ experiences interference on its both ends, the codeword of Tx~1 with index $j^*$ experiences interference only on its right~end and any codeword of Tx~1 with index $j\le j^*-1$ does not experience any interference. In general, it is easy to see that
\begin{eqnarray}
\label{sdb11}
\omega^-_{1,j}=\left\{\begin{array}{cc}
0 & j\leq j^*\\
    j-j^*  &  \mu<\frac{\alpha+\theta}{j^*},\,\,j\geq j^*+1  \\
     0 &  \mu>\frac{\alpha+\theta}{j^*} ,\,\, j\geq j^*+1
\end{array}\right.,\,\,\,\,\,\,
\end{eqnarray}
\begin{eqnarray}
\label{sdb22}
\omega^+_{1,j}=\left\{\begin{array}{cc}
0 & j\leq j^*-1\\
    j-j^*+1  &  \mu>\frac{\alpha-\theta}{j^*-1},\,\,j\geq j^*  \\
     0 &   \mu<\frac{\alpha-\theta}{j^*-1},\,\, j\geq j^*
\end{array}\right.,\,\,\,\,\,\,
\end{eqnarray}
\begin{figure}[t]
  \centering
  \includegraphics[scale=.7] {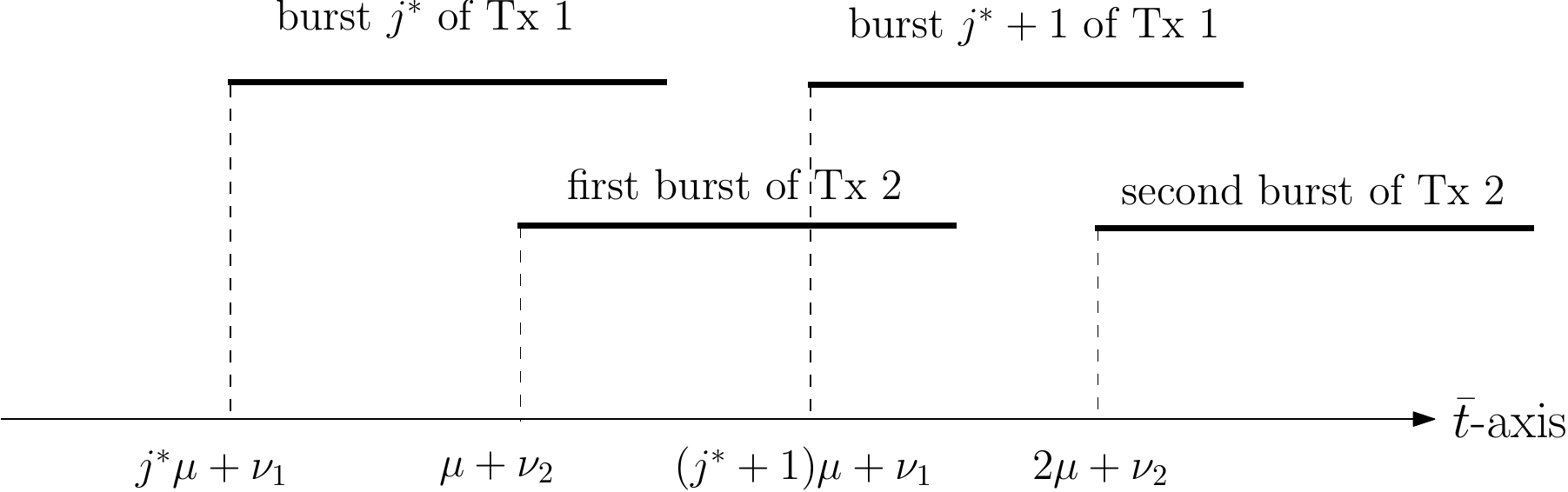}
  \caption{The integer $j^*\geq 1$ is such that the starting point of the first burst of Tx~2 lies between the bursts with indices $j^{*}$ and $j^*+1$ of Tx~1. The length of any burst is $\theta$ on the $\bar{t}$-axis. This picture shows the case where $(\mu+\nu_2)-(j^*\mu+\nu_1)<\theta$ and $((j^*+1)\mu+\nu_1)-(\mu+\nu_2)<\theta$, or equivalently, $\frac{\alpha-\theta}{j^*-1}<\mu<\frac{\alpha+\theta}{j^*}$. This implies that each codeword of Tx~1 with index $j\geq j^*+1$ experiences interference on its both ends, the codeword with index $j^{*}$ experiences interference only on its right~end and any codeword with index $j\leq j^*-1$ does not experience any interference.}
  \label{bufpic6}
 \end{figure} 
\begin{eqnarray}
\label{sdb33}
\omega^-_{2,j}=\left\{\begin{array}{cc}
    j+j^*-1  &   \mu>\frac{\alpha-\theta}{j^*-1},\,\,j\leq N-j^*+1  \\
     0 &    \mu<\frac{\alpha-\theta}{j^*-1},\,\,j\leq N-j^*+1\\
    0 & j\geq N-j^*+2
     \end{array}\right.\,\,\,\,\,\,
\end{eqnarray}
and
\begin{eqnarray}
\label{sdb44}
\omega^+_{2,j}=\left\{\begin{array}{cc}
    j+j^*  & \mu<\frac{\alpha+\theta}{j^*},\,\,j\leq N-j^*  \\
     0 & \mu>\frac{\alpha+\theta}{j^*},\,\,j\leq N-j^*\\
     0 & j\geq N-j^*+1
     \end{array}\right..\,\,\,\,\,\,
\end{eqnarray}
In view of the interference pattern described in~(\ref{sdb11}) to~(\ref{sdb44}) and considering the constraints in~(\ref{JSTAR1}), we define the four disjoint sets  
      \begin{eqnarray}
\label{con_11}
\mathscr{A}_{j^*}&:=&\left\{R_c>\lambda: \frac{\alpha-\theta}{j^*-1}<\mu<\frac{\alpha+\theta}{j^*},\,\,\frac{\alpha}{j^*}< \mu<\frac{\alpha}{j^*-1}\right\}\notag\\
&=&\left\{R_c>\lambda: \max\Big\{\frac{\alpha-\theta}{j^*-1}, \frac{\alpha}{j^*}\Big\}<\mu<\min\Big\{\frac{\alpha+\theta}{j^*}, \frac{\alpha}{j^*-1}\Big\}\right\},
\end{eqnarray}
 \begin{eqnarray}
\label{ }
\mathscr{B}_{j^*}&:=&\left\{R_c>\lambda: \mu<\min\Big\{\frac{\alpha-\theta}{j^*-1},\frac{\alpha+\theta}{j^*}\Big\},\,\,\,\frac{\alpha}{j^*}< \mu<\frac{\alpha}{j^*-1}\right\}\notag\\
&=&\left\{R_c>\lambda:\frac{\alpha}{j^*}<\mu<\min\Big\{\frac{\alpha-\theta}{j^*-1},\frac{\alpha+\theta}{j^*}\Big\}\right\},
\end{eqnarray}
 \begin{eqnarray}
\label{ }
\mathscr{C}_{j^*}&:=&\left\{R_c>\lambda: \mu>\max\Big\{\frac{\alpha-\theta}{j^*-1},\frac{\alpha+\theta}{j^*}\Big\},\,\,\,\frac{\alpha}{j^*}< \mu<\frac{\alpha}{j^*-1}\right\}\notag\\
&=&\left\{R_c>\lambda: \max\Big\{\frac{\alpha-\theta}{j^*-1},\frac{\alpha+\theta}{j^*}\Big\}<\mu<\frac{\alpha}{j^*-1}\right\}
\end{eqnarray}
and
 \begin{eqnarray}
\label{con_44}
\mathscr{D}_{j^*}&:=&\left\{R_c>\lambda:\frac{\alpha+\theta}{j^*}<\mu<\frac{\alpha-\theta}{j^*-1},\,\,\,\frac{\alpha}{j^*}< \mu<\frac{\alpha}{j^*-1} \right\}\notag\\
&=&\left\{R_c>\lambda:\frac{\alpha+\theta}{j^*}<\mu<\frac{\alpha-\theta}{j^*-1}\right\}.
\end{eqnarray}
  Next, we explicitly compute the sets $\mathcal{R}_{\mathrm{sym}}\bigcap\mathscr{A}_{j^*}, \mathcal{R}_{\mathrm{sym}}\bigcap\mathscr{B}_{j^*}, \mathcal{R}_{\mathrm{sym}}\bigcap\mathscr{C}_{j^*}$ and $\mathcal{R}_{\mathrm{sym}}\bigcap\mathscr{D}_{j^*}$. We will frequently invoke Proposition~\ref{prop_7} without mentioning to do so. 
     \begin{itemize}
   \item   \textbf{Computing $\mathcal{R}_{\mathrm{sym}}\bigcap\mathscr{A}_{j^*}$}:
   \begin{enumerate}
    \item Conditions for successful decoding at Rx~1
    \begin{itemize}
  \item  Any codeword of Tx~1 with index $j\leq j^*-1$ does not experience any interference.
  \item  The codeword of Tx~1 with index $j^*$ experiences interference only on its right~end and $\omega^+_{1,j^*}=1$.  We require $R_c\in \mathcal{P}(1-j^*,\nu_1-\nu_2;\gamma)=\mathcal{P}(1-j^*,-\alpha;\gamma)$.
 \item Any codeword of Tx~1 with index $j^*+1\leq j\leq N$ experiences interference on its both ends. We require $R_c\in \mathcal{P}(1,\theta;\gamma)$.
\end{itemize}
  \item Conditions for successful decoding at Rx~2
    \begin{itemize}
  \item  Any codeword of Tx~2 with index $1\leq j\leq N-j^*$ experiences interference on its both ends. We require $R_c\in \mathcal{P}(1,\theta;\gamma)$.
  \item  The codeword of Tx~2 with index $j=N-j^*+1$ experiences interference only on its left~end and  $\omega^-_{2,N-j^*+1}=N$. We require $R_c\in \mathcal{P}\big((N-j^*+1)-N,\nu_1-\nu_2;\gamma\big)=\mathcal{P}(1-j^*,-\alpha;\gamma)$.
  \item Any codeword of Tx~2 with index $N-j^*+2\leq j\leq N$ does not experience any interference. 
  \end{itemize}
\end{enumerate}
 It follows that 
  \begin{equation}
\label{PERT1}
\mathcal{R}_{\mathrm{sym}}\bigcap\mathscr{A}_{j^*}=\left\{\begin{array}{cc}
 \bigcup_{\gamma\geq 0}\big( \mathcal{P}(1,\theta;\gamma)\bigcap\mathcal{P}(1-j^*,-\alpha;\gamma)\big) \bigcap\mathscr{A}_{j^*}   &  1\leq j^*\leq N-1\\
  \bigcup_{\gamma\geq 0}\mathcal{P}(1-N,-\alpha;\gamma) \bigcap\mathscr{A}_{N}&  j^*=N\\
  \bigcup_{\gamma\geq 0}\mathcal{P}(0,-\theta;\gamma)\bigcap \mathscr{A}_{j^*}& j^*\geq N+1
\end{array}\right..
\end{equation}
 \item \textbf{Computing $\mathcal{R}_{\mathrm{sym}}\bigcap\mathscr{B}_{j^*}$}:
 \begin{enumerate}
 \item Conditions for successful decoding at Rx~1
    \begin{itemize}
  \item  Any codeword of Tx~1 with index $j\leq j^*$ does not experience any interference. 
  \item Any codeword of Tx~1 with index $j^*+1\leq j\leq N$ experiences interference only on its left~end and $\omega^-_{1,j}=j-j^*$. We require $R_c\in\mathcal{P}\big(j-(j-j^*),\nu_2-\nu_1;\gamma\big)=\mathcal{P}(j^*,\alpha;\gamma)$. 
\end{itemize}
  \item Conditions for successful decoding at Rx~2
    \begin{itemize}
  \item  Any codeword of Tx~2 with index $1\leq j\leq N-j^*$ experiences interference only on its right~end and $\omega_{2,j}^+=j+j^*$. We require $R_c\in \mathcal{P}\big((j+j^*)-j,\nu_2-\nu_1;\gamma\big)=\mathcal{P}(j^*,\alpha;\gamma)$.
    \item  Any codeword of Tx~2 with index $N-j^*+1\leq j\leq N$ does not experience any interference.  
  \end{itemize}   
  \end{enumerate}
  It follows that
    \begin{equation}
\label{PERT2}
\mathcal{R}_{\mathrm{sym}}\bigcap\mathscr{B}_{j^*}=\left\{\begin{array}{cc}
  \bigcup_{\gamma\ge0} \mathcal{P}(j^*,\alpha;\gamma) \bigcap\mathscr{B}_{j^*} &  1\leq j^*\leq N-1  \\
  \bigcup_{\gamma\ge0}\mathcal{P}(0,-\theta;\gamma)\bigcap\mathscr{B}_{j^*} &  j^*\geq N
\end{array}\right..
\end{equation}
\item \textbf{Computing $\mathcal{R}_{\mathrm{sym}}\bigcap\mathscr{C}_{j^*}$}:
 \begin{enumerate}
  \item Conditions for successful decoding at Rx~1
    \begin{itemize}
  \item  Any codeword of Tx~1 with index $j\leq j^*-1$ does not experience any interference. 
  \item Any codeword of Tx~1 with index $j^*\leq j\leq N$ experiences interference only on its right~end and $\omega_{1,j}^+=j-j^*+1$. We require $R_c\in \mathcal{P}\big((j-j^*+1)-j,\nu_1-\nu_2;\gamma\big)=\mathcal{P}(1-j^*,-\alpha;\gamma)$. 
\end{itemize}
  \item Conditions for successful decoding at Rx~2
    \begin{itemize}
  \item  Any codeword of Tx~2 with index $1\leq j\leq N-j^*+1$ experiences interference only on its left~end and $\omega_{2,j}^-=j+j^*-1$. We require $R_c\in \mathcal{P}\big(j-(j+j^*-1),\nu_1-\nu_2;\gamma\big)=\mathcal{P}(1-j^*,-\alpha;\gamma)$. 
    \item  Any codeword of Tx~2 with index $N-j^*+2\leq j\leq N$ does not experience any interference. 
  \end{itemize}  
  \end{enumerate}
  It follows that 
\begin{eqnarray}
\label{PERT3}
\mathcal{R}_{\mathrm{sym}}\bigcap\mathscr{C}_{j^*}=\left\{\begin{array}{cc}
   \bigcup_{\gamma\geq 0}\mathcal{P}(1-j^*,-\alpha;\gamma) \bigcap\mathscr{C}_{j^*}& 1\leq j^*\leq N\\
  \bigcup_{\gamma\ge0}\mathcal{P}(0,-\theta;\gamma)\bigcap\mathscr{C}_{j^*}& j^*\geq N+1
\end{array}\right..
\end{eqnarray}
\item \textbf{Computing $\mathcal{R}_{\mathrm{sym}}\bigcap\mathscr{D}_{j^*}$}: In this case, any codeword sent by Tx~1 or Tx~2 is received in the absence of interference. Hence, 
 \begin{equation}
\label{PERT4}
\mathcal{R}_{\mathrm{sym}}\bigcap\mathscr{D}_{j^*}=\bigcup_{\gamma\geq 0}\mathcal{P}(0,-\theta;\gamma)\bigcap\mathscr{D}_{j^*},\,\,\,\,j^*\ge1.
\end{equation}
 \end{itemize}
In general, one can characterize $\mathcal{R}_{\mathrm{sym}}$ for $N\geq 2$ by taking the following steps:
\begin{enumerate}
  \item Write 
  \begin{eqnarray}
\mathcal{R}_{\mathrm{sym}}=\bigcup_{j^*=1}^\infty\mathcal{R}_{j^*},
\end{eqnarray}
where
\begin{eqnarray}
\mathcal{R}_{j^*}=\mathcal{R}_{\mathrm{sym}}\bigcap\big(\mathscr{A}_{j^*}\bigcup\mathscr{B}_{j^*}\bigcup\mathscr{C}_{j^*}\bigcup\mathscr{D}_{j^*}\big).
\end{eqnarray}
  \item Use~(\ref{PERT1}),~(\ref{PERT2}),~(\ref{PERT3})~and~(\ref{PERT4}) to describe $\mathcal{R}_{j^*}$ for any $j^*\geq 1$.
\end{enumerate}
If $\alpha<\theta$, 
   \begin{equation}
\label{dim_11}
\mathscr{A}_{j^*}=\mathscr{B}_{j^*}=\mathscr{C}_{j^*}=\mathscr{D}_{j^*}=\emptyset,
\end{equation} 
for any $j^*\geq 2$, i.e., $\mathcal{R}_{\mathrm{sym}}=\mathcal{R}_1$. The following proposition characterizes $\mathcal{R}_{\mathrm{sym}}$ provided that $\alpha<\theta$.
\begin{proposition}
\label{prop_8}
Assume $\alpha<\theta$. Let $\gamma_0$, $\gamma_1$ and $\gamma_2$ be the solutions for $\gamma$ in $2\psi_{\gamma}=\phi_{\gamma}$, $\psi_{\gamma}=\lambda$ and $\psi_\gamma+\frac{\alpha}{\theta}(\phi_{\gamma}-\psi_{\gamma})=\lambda$, respectively. If $\gamma_1$ does not exist, let $\gamma_1=\infty$. 
\begin{itemize}
  \item If $\lambda<\psi_{\gamma_0}$, define
  \begin{eqnarray}
  \label{f_1111}
f(\gamma):=\left\{\begin{array}{cc}
    0  &  \gamma\leq \gamma_1  \\
    \lambda  &   \gamma>\gamma_1    
\end{array}\right.,\,\,\,\,\,\,\,\,\, g(\gamma)=\left\{\begin{array}{cc}
    0  &  \gamma\leq \gamma_1  \\
   \frac{2\psi_{\gamma}-\phi_{\gamma}}{1-\frac{1}{\lambda}(\phi_{\gamma}-\psi_{\gamma})} &   \gamma_1<\gamma\leq\gamma_2    \\
  \psi_\gamma+\frac{\alpha}{\theta}(\phi_{\gamma}-\psi_{\gamma}) &\gamma>\gamma_2
\end{array}\right..
\end{eqnarray}
  \item If $\lambda\geq \psi_{\gamma_0}$, define
    \begin{eqnarray}
    \label{g_1111}
f(\gamma):=\left\{\begin{array}{cc}
    0  &  \gamma\leq \gamma_2  \\
    \frac{2\psi_{\gamma}-\phi_{\gamma}}{1-\frac{1}{\lambda}(\phi_{\gamma}-\psi_{\gamma})}  &   \gamma_2<\gamma\leq \gamma_1  \\
  \lambda  & \gamma>\gamma_1
\end{array}\right.,\,\,\,\,\,\,\,\,\, g(\gamma)=\left\{\begin{array}{cc}
 0   &\gamma\leq\gamma_2\\
\psi_\gamma+\frac{\alpha}{\theta}(\phi_{\gamma}-\psi_{\gamma}) &  \gamma>\gamma_2
\end{array}\right..
\end{eqnarray}
\end{itemize} 
Then 
\begin{eqnarray}
\label{R_1234}
\mathcal{R}_{\mathrm{sym}}=\bigcup_{\gamma\geq 0}\Big(\max\Big\{f(\gamma), \big(\frac{\gamma}{P}-\frac{1}{N}\big)\lambda\Big\}, g(\gamma)\Big).
\end{eqnarray}
\end{proposition}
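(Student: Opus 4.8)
The plan is to combine the four pieces $\mathcal{R}_{\mathrm{sym}}\cap\mathscr{A}_{1}$, $\mathcal{R}_{\mathrm{sym}}\cap\mathscr{B}_{1}$, $\mathcal{R}_{\mathrm{sym}}\cap\mathscr{C}_{1}$ and $\mathcal{R}_{\mathrm{sym}}\cap\mathscr{D}_{1}$ computed in the general recipe, specialized to $j^*=1$ (which is forced once $\alpha<\theta$ by (\ref{dim_11})), and then to reorganize the resulting union over states into a single union over the power parameter $\gamma$. First I would record the degeneracies at $j^*=1$: the factor $\frac{\alpha-\theta}{j^*-1}$ is $-\infty$ (so conditions of the form $\mu>\frac{\alpha-\theta}{j^*-1}$ are vacuous and their negations are empty), hence $\mathscr{D}_1=\emptyset$ and $\mathscr{B}_1=\emptyset$; what survives is $\mathscr{A}_1=\{R_c>\lambda:\ \mu<\alpha+\theta\}$ (equivalently $R_c<\frac{(\alpha+\theta)\lambda}{\theta}$, together with $R_c>\lambda$) and $\mathscr{C}_1=\{R_c>\lambda:\ \mu>\alpha+\theta\}$. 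Plugging $j^*=1$ into (\ref{PERT1}) and (\ref{PERT3}) gives $\mathcal{R}_{\mathrm{sym}}\cap\mathscr{A}_1=\bigcup_{\gamma\ge0}\big(\mathcal{P}(1,\theta;\gamma)\cap\mathcal{P}(0,-\alpha;\gamma)\big)\cap\mathscr{A}_1$ and $\mathcal{R}_{\mathrm{sym}}\cap\mathscr{C}_1=\bigcup_{\gamma\ge0}\mathcal{P}(0,-\alpha;\gamma)\cap\mathscr{C}_1$.

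Next I would translate each $\mathcal{P}(\cdot,\cdot;\gamma)$ into an explicit interval for $R_c$ using the definition (\ref{vart4}). The set $\mathcal{P}(0,-\theta;\gamma)$ (no interference, only relevant on $\mathscr{D}_1$, which is empty, so it does not appear) and $\mathcal{P}(0,-\alpha;\gamma)$ gives $R_c<\psi_\gamma+\frac{\alpha}{\theta}(\phi_\gamma-\psi_\gamma)$ together with the power condition $\gamma<(\frac1N+\frac{R_c}{\lambda})P$, i.e. $R_c>(\frac{\gamma}{P}-\frac1N)\lambda$; and $\mathcal{P}(1,\theta;\gamma)$ gives $(1-\frac1\lambda(\phi_\gamma-\psi_\gamma))R_c<\phi_\gamma-(\phi_\gamma-\psi_\gamma)$, i.e. the two-sided-interference constraint, which after isolating $R_c$ reads $R_c<\frac{2\psi_\gamma-\phi_\gamma}{1-\frac1\lambda(\phi_\gamma-\psi_\gamma)}$ when the coefficient $1-\frac1\lambda(\phi_\gamma-\psi_\gamma)$ is positive, and is vacuous or reverses otherwise. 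This is where the case split in the proposition originates: $\gamma_0$ (solving $2\psi_\gamma=\phi_\gamma$) is exactly where the numerator $2\psi_\gamma-\phi_\gamma$ changes sign, $\gamma_1$ (solving $\psi_\gamma=\lambda$) is where the denominator $1-\frac1\lambda(\phi_\gamma-\psi_\gamma)$ compares with certain thresholds and also governs whether $\mathcal{P}(1,\theta;\gamma)$ forces $R_c>\lambda$ automatically, and $\gamma_2$ (solving $\psi_\gamma+\frac{\alpha}{\theta}(\phi_\gamma-\psi_\gamma)=\lambda$) is where the $\mathcal{P}(0,-\alpha;\gamma)$ upper bound crosses $\lambda$. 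I would carefully check the monotonicity of $\phi_\gamma$, $\psi_\gamma$, of the ratio $\frac{\phi_\gamma}{\psi_\gamma}$ and of $1-\frac1\lambda(\phi_\gamma-\psi_\gamma)$ in $\gamma$, so that the relative ordering of $\gamma_0,\gamma_1,\gamma_2$ is pinned down under each of the hypotheses $\lambda<\psi_{\gamma_0}$ and $\lambda\ge\psi_{\gamma_0}$; the condition $\lambda<\psi_{\gamma_0}$ is precisely the condition that, at the critical $\gamma_0$, the single-codeword (interference-free) rate floor still exceeds $\lambda$, which decides whether the regime $\gamma\le\gamma_1$ contributes nothing or the two-sided bound is already active.

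Then I would assemble: for a fixed $\gamma$, $\mathcal{R}_{\mathrm{sym}}$ picks up the $R_c$ that lie in $\mathscr{A}_1$ and satisfy both $\mathcal{P}(1,\theta;\gamma)$ and $\mathcal{P}(0,-\alpha;\gamma)$, OR lie in $\mathscr{C}_1$ and satisfy $\mathcal{P}(0,-\alpha;\gamma)$; since the $\mathscr{A}_1$/$\mathscr{C}_1$ dichotomy is exactly whether $R_c<\frac{(\alpha+\theta)\lambda}{\theta}$ or $R_c>\frac{(\alpha+\theta)\lambda}{\theta}$, and since on $\mathscr{C}_1$ no interference-free-end constraint beyond $\mathcal{P}(0,-\alpha;\gamma)$ is present, the two contributions merge into one interval whose upper endpoint is $g(\gamma)$ and whose lower endpoint is $\max\{f(\gamma),(\frac{\gamma}{P}-\frac1N)\lambda\}$ — here $f(\gamma)$ absorbs the ``$R_c>\lambda$ forced by $\mathscr{A}_1$'s two-sided interference'' part and the rearranged two-sided bound when that is the binding lower constraint, exactly as written in (\ref{f_1111})–(\ref{g_1111}); one must verify that in the intermediate band the two-sided upper bound $\frac{2\psi_\gamma-\phi_\gamma}{1-\frac1\lambda(\phi_\gamma-\psi_\gamma)}$ is consistent (i.e. the interval is nonempty exactly where claimed) and that above $\gamma_2$ the relevant upper bound switches to $\psi_\gamma+\frac{\alpha}{\theta}(\phi_\gamma-\psi_\gamma)$ because that is then the smaller of the competing bounds. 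Finally, taking the union over all $\gamma\ge0$ yields (\ref{R_1234}). The main obstacle I anticipate is purely bookkeeping: keeping track of which of the bounds $\lambda$, $\frac{2\psi_\gamma-\phi_\gamma}{1-\frac1\lambda(\phi_\gamma-\psi_\gamma)}$, $\psi_\gamma+\frac{\alpha}{\theta}(\phi_\gamma-\psi_\gamma)$ and $(\frac{\gamma}{P}-\frac1N)\lambda$ is active on each $\gamma$-subinterval, and confirming that the denominator $1-\frac1\lambda(\phi_\gamma-\psi_\gamma)$ never vanishes inside the relevant range (so no sign reversal is missed) and that the sign of $2\psi_\gamma-\phi_\gamma$ is handled correctly at $\gamma_0$ — i.e. verifying the two-branch structure of $f,g$ is exhaustive and mutually consistent.
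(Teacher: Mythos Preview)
Your proposal is correct and follows essentially the same route as the paper: reduce to $j^*=1$ via (\ref{dim_11}), observe $\mathscr{B}_1=\mathscr{D}_1=\emptyset$ so only the $\mathscr{A}_1$ and $\mathscr{C}_1$ pieces from (\ref{PERT1}) and (\ref{PERT3}) survive, unpack $\mathcal{P}(1,\theta;\gamma)$ and $\mathcal{P}(0,-\alpha;\gamma)$ into explicit $R_c$-intervals, and then do a case analysis in $\gamma$ governed by $\gamma_0,\gamma_1,\gamma_2$ to read off $f(\gamma)$ and $g(\gamma)$. Two small points worth tightening: (i) in your display for $\mathcal{P}(1,\theta;\gamma)$ the right-hand side should be $\psi_\gamma-(\phi_\gamma-\psi_\gamma)=2\psi_\gamma-\phi_\gamma$, not $\phi_\gamma-(\phi_\gamma-\psi_\gamma)$ (you recover the correct fraction immediately after, so this is just a slip); (ii) the denominator $1-\tfrac{1}{\lambda}(\phi_\gamma-\psi_\gamma)$ vanishes at the auxiliary point $\gamma_3$ where $\phi_\gamma-\psi_\gamma=\lambda$, not at $\gamma_1$, and the paper handles the resulting sign flip by an algebraic identity relating $\frac{2\psi_\gamma-\phi_\gamma}{1-\frac{1}{\lambda}(\phi_\gamma-\psi_\gamma)}$ to $\psi_\gamma+\frac{\alpha}{\theta}(\phi_\gamma-\psi_\gamma)$ and $(1+\frac{\alpha}{\theta})\lambda$, which collapses several subcases at once --- you will want something like this to keep the bookkeeping manageable.
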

\begin{proof}
See Appendix~J. 
\end{proof}
A few remarks are in order:
\begin{itemize}
  \item Explicit expressions for $\gamma_0$ and $\gamma_1$ are 
\begin{equation}
\label{ }
\gamma_0=\frac{1+\sqrt{1+4a^2}}{2a^2},\,\,\,\,\,\gamma_1=\frac{2^{2\lambda}-1}{1-a(2^{2\lambda}-1)}.
\end{equation}
There is no closed-form expression for $\gamma_2$ and it must be computed numerically. 
  \item If $P$ is sufficiently large, e.g., $P>N\max\{\gamma_1,\gamma_2\}$, it is easy to see that $\mathcal{R}_{\mathrm{sym}}=(\lambda,R_{c,\max})$ where $R_{c,\max}$ is given in~(\ref{Rc_11}). For ``smaller'' values of $P$, $\inf_{R_c\in\mathcal{R}_{\mathrm{sym}}}R_c$ can be larger than $\lambda$ as we will see in the example in below.  
  \item In (\ref{R_1234}), $\mathcal{R}_{\mathrm{sym}}$ is given as the union of uncountably many intervals. It is more convenient to represent $\mathcal{R}_{\mathrm{sym}}$ as follows. Define the regions $\Omega'$, $\Omega''$ and $\Omega$ by
\begin{eqnarray}
\Omega':=\big\{(\gamma,R_c): f(\gamma)<R_c<g(\gamma)\big\},
\end{eqnarray}
\begin{eqnarray}
\Omega'':=\Big\{(\gamma,R_c): R_c\geq \big(\frac{\gamma}{P}-\frac{1}{N}\big)\lambda\Big\}
\end{eqnarray}
and 
\begin{equation}
\label{ }
\Omega:=\Omega'\bigcap\Omega'',
\end{equation}
i.e., $\Omega'$ is the set of all $(\gamma,R_c)$ such that the $2N$ transmitted codewords are sent immediately and decoded successfully at the receivers and $\Omega''$ is the set of all $(\gamma,R_c)$ such that the average power constraint in (\ref{power_33}) holds.  Then 
\begin{eqnarray}
\label{proj_11}
\mathcal{R}_{\mathrm{sym}}=\Pi(\Omega)
\end{eqnarray}
where the map $\Pi(\gamma,R_c)= R_c$ is the projection on the $R_c$-axis.
\end{itemize}
\begin{figure}[t]
\centering
\subfigure[]{
\includegraphics[scale=0.52]{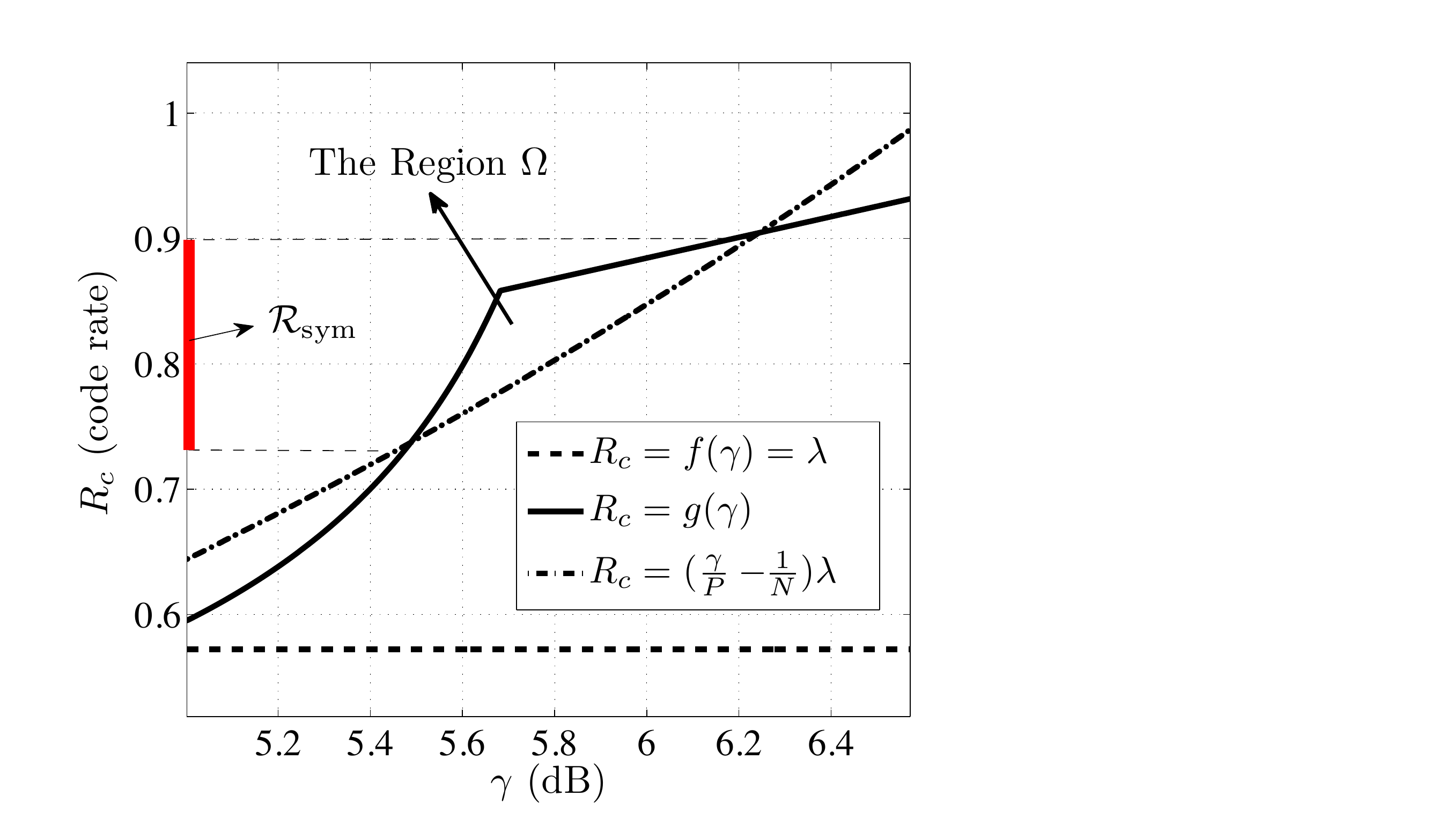}
\label{ref1}
}
\subfigure[]{
\includegraphics[scale=0.5]{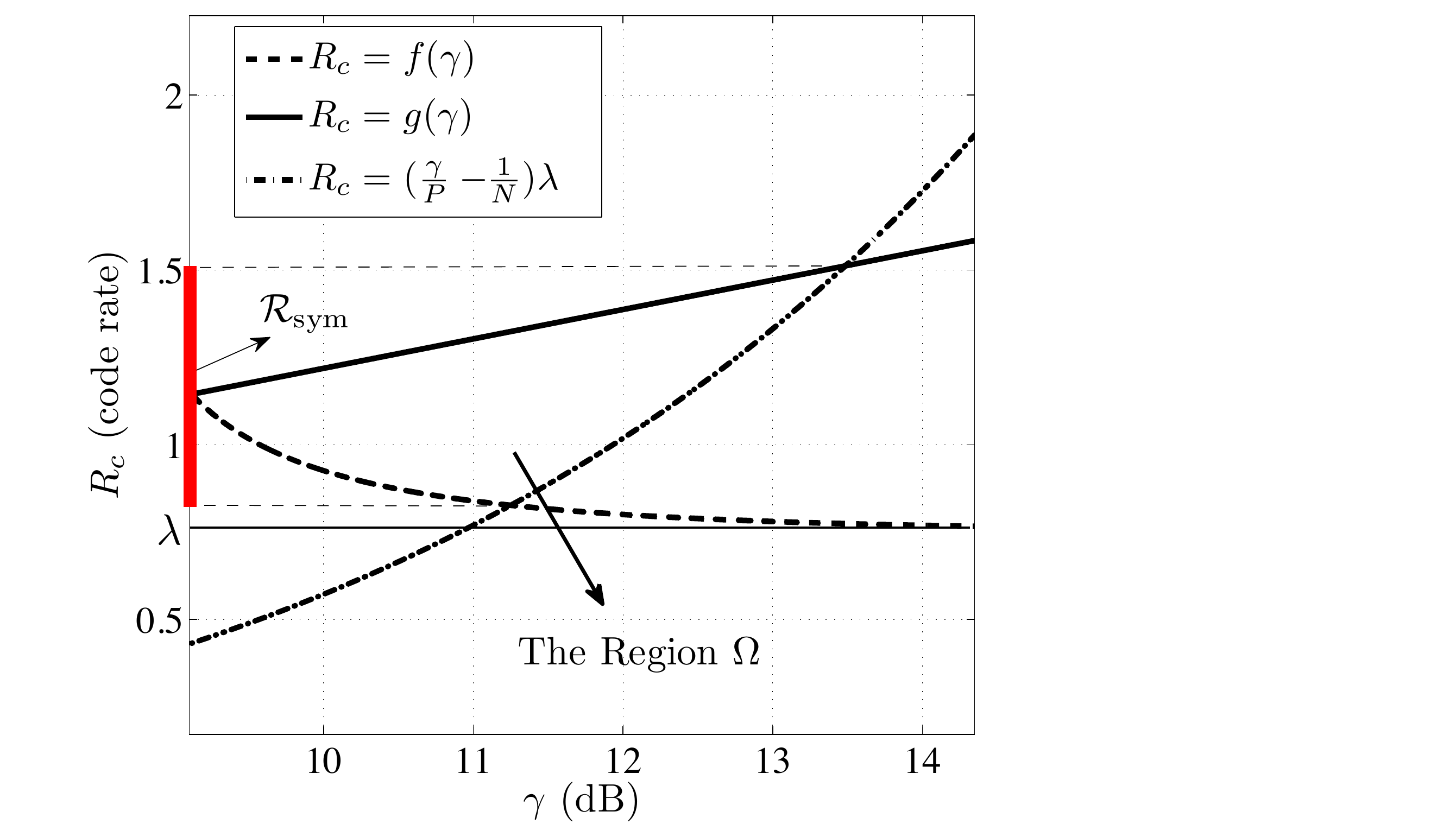}
\label{ref2}
}
\label{fig:subfigureExample}
\caption[Optional caption for list of figures]{Panel~(a) shows a setting where $N=4$, $\theta=1$, $\lambda=0.5722$, $a=0.5$, $P\approx 3.617\,\mathrm{dB}$ and $\alpha=0.5$. In this case, $\gamma_0\approx6.84\,\mathrm{dB}$ and $\lambda<\psi_{\gamma_0}\approx 0.6358$, i.e., $f(\gamma)$ and $g(\gamma)$ are given by~(\ref{f_1111}).  Panel~(b) shows a scenario where $N=4$, $\theta=1$, $\lambda=0.7629$, $a=0.5$, $P=10\,\mathrm{dB}$ and $\alpha=0.5$. In this case, $\gamma_0\approx 6.84\,\mathrm{dB}$ and $\lambda>\psi_{\gamma_0}\approx 0.6358$, i.e., $f(\gamma)$ and $g(\gamma)$ are given by~(\ref{g_1111}). The red strip on the $R_c$-axis is $\mathcal{R}_{\mathrm{sym}}$ as the projection of the region $\Omega$ on the $R_c$-axis. In both panel~(a) and panel~(b), $\inf_{R_c\in\mathcal{R}_{\mathrm{sym}}}R_c>\lambda$.  }
\label{pict_34}
\end{figure} 

\textbf{Example}- Let $N=4$, $\theta=1$, $a=0.5$ and $\alpha=0.5$. Then $\gamma_0\approx 6.84\,\mathrm{dB}$ and $\psi_{\gamma_0}\approx 0.6358$. We consider two cases:
\begin{itemize}
  \item If $\lambda<0.9\psi_{\gamma_0}=0.5722$, then $f(\gamma)$ and $g(\gamma)$ are given by~(\ref{f_1111}). Assuming $P\approx 3.617\,\mathrm{dB}$, Fig.~\ref{pict_34} in panel~(a) shows $f(\gamma)$, $g(\gamma)$ and $(\frac{\gamma}{P}-\frac{1}{N})\lambda$ as functions of $\gamma$. The set $\mathcal{R}_{\mathrm{sym}}$ is shown by a red strip as the projection of the region $\Omega$ on the $R_c$-axis.  
   \item If $\lambda\ge0.9\psi_{\gamma_0}=0.5722$, then $f(\gamma)$ and $g(\gamma)$ are given by~(\ref{g_1111}). Assuming $P=10\,\mathrm{dB}$, Fig.~\ref{pict_34} in panel~(b) shows $f(\gamma)$, $g(\gamma)$ and $(\frac{\gamma}{P}-\frac{1}{N})\lambda$ as functions of $\gamma$. The set $\mathcal{R}_{\mathrm{sym}}$ is shown by a red strip as the projection of the region $\Omega$ on the $R_c$-axis. 
\end{itemize}
In both cases we observe that $\inf_{R_c\in\mathcal{R}_{\mathrm{sym}}}R_c>\lambda$.

\section{Conclusion}
We have studied a two-user GIC-SDA with immediate transmissions under two different settings. In one scenario, the information source at each transmitter turned off after generating a given total number of bits and the transmitters only knew the statistics of the mutual delay between their bit streams. The  codebook rate at each transmitter was optimized in order to achieve a target average transmission rate and transmission power and maximize the probability of successful decoding at the receivers. In another scenario, the information sources were active indeterminately and the transmitters were aware of the exact mutual delay between their bit streams. We characterized an achievable rate region for the codebook~rates assuming the receivers treat interference as noise. This region was given as a union of uncountably many polyhedrons which is in general disconnected and non-convex due to infeasibility of time~sharing. 

\section*{Appendix A; Proof of (\ref{buffer3})}
We need the following Lemma which is a slightly weaker version of Theorem~4.4 in \cite{upfal}:
\begin{lem}
\label{lem1}
Let $\boldsymbol{x}$ be a $\mathrm{Bin}(N,p)$ random variable. Then  
\begin{equation}
\label{ }
\mathbb{P}\big(\boldsymbol{x}\geq(1+\epsilon)Np\big)\leq e^{-((1+\epsilon)\ln(1+\epsilon)-\epsilon)Np},
\end{equation} 
for any $\epsilon>0$.
\end{lem}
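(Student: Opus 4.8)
The plan is the standard Chernoff bounding argument via the moment generating function. First I would fix an arbitrary $t>0$ and apply Markov's inequality to the nonnegative random variable $e^{t\boldsymbol{x}}$, writing
\begin{equation}
\mathbb{P}\big(\boldsymbol{x}\geq(1+\epsilon)Np\big)=\mathbb{P}\big(e^{t\boldsymbol{x}}\geq e^{t(1+\epsilon)Np}\big)\leq e^{-t(1+\epsilon)Np}\,\mathbb{E}\big[e^{t\boldsymbol{x}}\big].
\end{equation}
Next I would compute the moment generating function using the fact that $\boldsymbol{x}$ is a sum of $N$ independent Bernoulli$(p)$ random variables, so that $\mathbb{E}[e^{t\boldsymbol{x}}]=(1-p+pe^t)^N=(1+p(e^t-1))^N$, and then bound it by $e^{Np(e^t-1)}$ using the elementary inequality $1+x\leq e^x$ with $x=p(e^t-1)$.

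Combining the two displays gives $\mathbb{P}(\boldsymbol{x}\geq(1+\epsilon)Np)\leq \exp\!\big(Np(e^t-1-t(1+\epsilon))\big)$ for every $t>0$. I would then optimize the exponent over $t$: differentiating $e^t-1-t(1+\epsilon)$ in $t$ and setting the derivative to zero yields $e^t=1+\epsilon$, i.e. $t=\ln(1+\epsilon)$, which is strictly positive precisely because $\epsilon>0$, so it is an admissible choice. A quick second-derivative check (or convexity of $e^t$) confirms this is the minimizer. Substituting $t=\ln(1+\epsilon)$ gives $e^t-1-t(1+\epsilon)=(1+\epsilon)-1-(1+\epsilon)\ln(1+\epsilon)=\epsilon-(1+\epsilon)\ln(1+\epsilon)$, and hence
\begin{equation}
\mathbb{P}\big(\boldsymbol{x}\geq(1+\epsilon)Np\big)\leq e^{Np(\epsilon-(1+\epsilon)\ln(1+\epsilon))}=e^{-((1+\epsilon)\ln(1+\epsilon)-\epsilon)Np},
\end{equation}
which is the claimed bound.

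There is no genuine obstacle here; the statement is a textbook multiplicative Chernoff bound, and the only points requiring a modicum of care are (i) justifying the factorization of the moment generating function through independence of the Bernoulli summands, and (ii) checking that the optimal $t=\ln(1+\epsilon)$ is indeed positive and lies in the allowed range, which is exactly where the hypothesis $\epsilon>0$ is used. Since the reference \cite{upfal} states a stronger version (Theorem~4.4 there), one could alternatively just cite it; I would include the short self-contained derivation above for completeness.
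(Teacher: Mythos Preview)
Your derivation is correct and is exactly the standard Chernoff argument. The paper, however, does not prove this lemma at all: it simply states it as ``a slightly weaker version of Theorem~4.4 in \cite{upfal}'' and uses it as a black box. So your self-contained proof goes beyond what the paper does, and your closing remark about citing \cite{upfal} in lieu of a proof is precisely the route the paper takes.
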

At the ``beginning'' of time slot $t=0$ the buffer is empty. Recall that $t_0:=\tau_i^{(1)}$ is the smallest $t$ such that $b_{i,t}+b'_{i,t}\geq \lfloor n\eta_i\rfloor$.  This implies that $b_{i,t_0}+b'_{i,t_0}=\big\lceil \frac{\lfloor n\eta_i\rfloor}{k_i}\big\rceil k_i$ is the smallest  multiple of $k_i$ which is larger than or equal to $\lfloor n\eta_i\rfloor$. The first codeword together with the preamble sequence are transmitted during the time slots $t_0+1,\cdots, t_0+n'+n_i$ and the content of the buffer at the beginning of time slot $t_0+1$ becomes
\begin{equation}
\label{cont}
b_{i,t_0+1}=\left\lceil \frac{\lfloor n\eta_i\rfloor}{k_i}\right\rceil k_i-\lfloor n\eta_i\rfloor.
\end{equation}
We are interested in computing the probability of the event that a new codeword is scheduled for transmission before or at the time slot $t_0+n'+n_i-1$, i.e., $\boldsymbol{b}_{i,\boldsymbol{t}_0+1}+\sum_{t=\boldsymbol{t}_0+1}^{\boldsymbol{t}_0+n'+n_i-1}\boldsymbol{b}'_{i,t}\geq \lfloor n\eta_i\rfloor$. We~have
\begin{eqnarray}
\label{cont1}
\mathbb{P}\bigg(\boldsymbol{b}_{i,\boldsymbol{t}_0+1}+\sum_{t=\boldsymbol{t}_0+1}^{\boldsymbol{t}_0+n'+n_i-1}\boldsymbol{b}'_{i,t}\geq \lfloor n\eta_i\rfloor\bigg)&\stackrel{}{\leq}&\mathbb{P}\bigg(\sum_{t=\boldsymbol{t}_0+1}^{\boldsymbol{t}_0+n'+n_i-1}\boldsymbol{b}'_{i,t}\geq \lfloor n\eta_i\rfloor-k_i\bigg)\notag\\
&=&\mathbb{P}\bigg(\sum_{t=\boldsymbol{t}_0+1}^{\boldsymbol{t}_0+n'+n_i-1}\frac{\boldsymbol{b}'_{i,t}}{k_i}\geq \frac{\lfloor n\eta_i\rfloor}{k_i}-1\bigg),
\end{eqnarray}
where the first step is due to the fact that $0\leq \boldsymbol{b}_{i,\boldsymbol{t}_0+1}< k_i$ due to (\ref{cont}). By assumption,  
\begin{equation}
\label{limit_11}
\lim_{n\to\infty}\frac{1}{(n'+n_i-1)q_i}\left(\frac{\lfloor n\eta_i\rfloor}{k_i}-1\right)=\frac{\eta_i}{\lambda_i \theta_i}=\frac{\mu_i}{\theta_i}>1.
\end{equation}
 Let 
 \begin{equation}
\label{delta111}
\epsilon_i:=\frac{1}{2}\Big(\frac{\mu_i}{\theta_i}-1\Big).
\end{equation}
 In view of (\ref{limit_11}), assume $n$ is large enough such that 
 \begin{eqnarray}
 \label{limit_22}
 \frac{1}{(n'+n_i-1)q_i}\left(\frac{\lfloor n\eta_i\rfloor}{k_i}-1\right)>1+\epsilon_i. 
\end{eqnarray}
 Since $\sum_{t=\boldsymbol{t}_0+1}^{\boldsymbol{t}_0+n'+n_i-1}\frac{\boldsymbol{b}'_{i,l}}{k_i}$ is a $\mathrm{Bin}(n'+n_i-1,q_i)$ random variable, the right side of (\ref{cont1}) is bounded from above by $\mathbb{P}(\mathrm{Bin}(n'+n_i-1,q_i)\geq (1+\epsilon_i)(n'+n_i-1)q_i)$ due to (\ref{limit_22}). Then Lemma~\ref{lem1} applies, i.e., 
 \begin{equation}
\label{ }
\mathbb{P}(\mathrm{Bin}(n'+n_i-1,q_i)\geq (1+\epsilon_i)(n'+n_i-1)q)\leq e^{-c_in},
\end{equation}
where $c_i=q_i\theta_i((1+\epsilon_i)\ln(1+\epsilon_i)-\epsilon_i)$ and we have assumed $n$ is large enough such that $n'+n_i-1=\ln n+\lfloor n\theta_i\rfloor-1>n\theta_i$.

\section{Appendix~B; Proof of Proposition~\ref{prop_opt11}}
We will use the following simple fact:
\begin{lem}
\label{sim_lem}
Let $\boldsymbol{x}_n$ for $n\geq 1$ be a sequence of real-valued random variables and $\lim_{n\to\infty}\boldsymbol{x}_n=a$ where $a>0$ is a real number. Then $\lim_{n\to\infty}\mathbb{P}(\boldsymbol{x}_n> 0)=1$.
\end{lem}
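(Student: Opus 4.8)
The plan is to lean on the paper's standing convention that convergence of random variables is understood almost surely, so that $\lim_{n\to\infty}\boldsymbol{x}_n=a$ means there is an event $\mathscr{A}$ with $\mathbb{P}(\mathscr{A})=1$ on which $\boldsymbol{x}_n(\omega)\to a$ pointwise. Since $a>0$, the natural threshold to use is $a/2$. First I would observe that for every $\omega\in\mathscr{A}$ there is some $N(\omega)$ with $|\boldsymbol{x}_n(\omega)-a|<a/2$, hence $\boldsymbol{x}_n(\omega)>a/2>0$, for all $n\ge N(\omega)$. This says exactly that $\mathscr{A}\subseteq\liminf_n\{\boldsymbol{x}_n>0\}=\bigcup_{m\ge1}\bigcap_{n\ge m}\{\boldsymbol{x}_n>0\}$.

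Next I would invoke the elementary (reverse) Fatou inequality for events, $\mathbb{P}\big(\liminf_n A_n\big)\le\liminf_n\mathbb{P}(A_n)$, applied to $A_n=\{\boldsymbol{x}_n>0\}$, to obtain
\[
1=\mathbb{P}(\mathscr{A})\le\mathbb{P}\Big(\liminf_n\{\boldsymbol{x}_n>0\}\Big)\le\liminf_n\mathbb{P}(\boldsymbol{x}_n>0).
\]
Because $\mathbb{P}(\boldsymbol{x}_n>0)\le1$ for every $n$, this pins the limit inferior at $1$ and forces $\lim_{n\to\infty}\mathbb{P}(\boldsymbol{x}_n>0)=1$, which is the assertion.

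If one instead prefers to read $\boldsymbol{x}_n\to a$ only as convergence in probability, the same conclusion follows even more directly: from $\mathbb{P}(|\boldsymbol{x}_n-a|\ge a/2)\to0$ together with the inclusion $\{|\boldsymbol{x}_n-a|<a/2\}\subseteq\{\boldsymbol{x}_n>0\}$ (which again uses $a>0$), one gets $\mathbb{P}(\boldsymbol{x}_n>0)\ge1-\mathbb{P}(|\boldsymbol{x}_n-a|\ge a/2)\to1$. Either way this is a two-line estimate once the threshold $a/2$ is fixed, so there is no genuine obstacle; the only point worth stating explicitly is that strict positivity of $a$ is precisely what allows a neighbourhood of $a$ to lie inside the positive half-line.
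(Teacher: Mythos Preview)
Your proposal is correct. Your second argument---passing to convergence in probability and using the inclusion $\{|\boldsymbol{x}_n-a|<a/2\}\subseteq\{\boldsymbol{x}_n>0\}$---is exactly the paper's proof (the paper fixes an arbitrary $0<\epsilon<a$ rather than $a/2$, but that is immaterial). Your first argument, working directly from almost sure convergence via $\mathscr{A}\subseteq\liminf_n\{\boldsymbol{x}_n>0\}$ and the Fatou inequality for events, is a slightly different route that avoids invoking the implication ``a.s.\ convergence $\Rightarrow$ convergence in probability''; both approaches are equally short and hinge on the same observation that $a>0$ lets a small ball around $a$ sit inside $(0,\infty)$.
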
 
\begin{proof}
Since $\lim_{n\to\infty}\boldsymbol{x}_n=a$, then $\boldsymbol{x}_n$ also converges to $a$ in probability. Fix $0<\epsilon<a$. We have $\lim_{n\to\infty}\mathbb{P}(\boldsymbol{x}_n> 0)\geq \lim_{n\to\infty}\mathbb{P}(|\boldsymbol{x}_n-a|<\epsilon)=1$.
\end{proof}
We only study the cases $j=1$ and $j=2$. The proof can be extended to any $j\geq 3$. Let $\beta_m:=\sum_{t=1}^{mn_i}b'_{i,t}$ for $m\ge1$ be the total number of bits arriving at the buffer of Tx~$i$ until the time slot of index $mn_i$. Under the Tx-Rx~synchronous scheme, Tx~$i$ checks its buffer at time slots $mn_i$ for $m\ge1$ and if its buffer content is more than $\lfloor n\eta_i\rfloor$, a codeword is sent over the channel during time slots $mn_i+1,\cdots,(m+1)n_i$. By SLLN, 
\begin{eqnarray}
\label{ehim_5}
\lim_{n\to\infty}\frac{\boldsymbol{\beta}_m}{n}=m\theta_i\lambda_i,
\end{eqnarray}
for any $m\geq 1$.
\begin{itemize}
  \item Let $j=1$. We have 
\begin{eqnarray}
\label{ehim_0}
 \varsigma_i^{(1)}=mn_i \iff \beta_1,\cdots,\beta_{m-1}<\lfloor n\eta_i\rfloor,\,\,\, \beta_m\ge\lfloor n\eta_i\rfloor
\end{eqnarray}
for any $m\geq 1$. Recall $\mu_i:=\frac{\eta_i}{\lambda_i}>\theta_i$. By assumption, $\mu_i$ is not an integer multiple of $\theta_i$.  Let $m^*\geq 1$ be such that  
\begin{equation}
\label{ehim_1}
m^*\theta_i<\mu_i<(m^*+1)\theta_i.
\end{equation}
 Putting (\ref{ehim_5}), (\ref{ehim_0}) and (\ref{ehim_1}) together and invoking Lemma~\ref{sim_lem}, 
\begin{eqnarray}
\label{tanem_1}
\lim_{n\to\infty}\mathbb{P}\big(\boldsymbol{\varsigma}_{i}^{(1)}=(m^*+1)n_i\big)=1.
\end{eqnarray}
By~(\ref{ehim_1}), fix $\delta_1>0$ such that $(1+\delta_1)\mu_i<(m^*+1)\theta_i$. Then  
\begin{eqnarray}
\label{tanem_3}
\mathbb{P}\Big(\boldsymbol{\varsigma}_i^{(1)}> (1+\delta_1)\boldsymbol{\tau}_i^{(1)}\Big)&\geq&\mathbb{P}\Big(\boldsymbol{\varsigma}_i^{(1)}> (1+\delta_1)\boldsymbol{\tau}_i^{(1)}, \boldsymbol{\varsigma}_{i}^{(1)}=(m^*+1)n_i\Big)\notag\\&=&\mathbb{P}\Big((m^*+1)n_i> (1+\delta_1)\boldsymbol{\tau}_i^{(1)}, \boldsymbol{\varsigma}_{i}^{(1)}=(m^*+1)n_i\Big).
\end{eqnarray}
Since $\boldsymbol{\tau}_i^{(1)}\sim\mathrm{NB}(\frac{\lfloor n\eta_i\rfloor}{k_i},q_i)$, we can apply SLLN to get $\lim_{n\to\infty}\frac{1}{n}\big((m^*+1)n_i-(1+\delta_1)\boldsymbol{\tau}_i^{(1)}\big)=(m^*+1)\theta_i-(1+\delta_1)\mu_i>0$. Using this together with Lemma~\ref{sim_lem},
\begin{eqnarray}
\label{tanem_2}
\lim_{n\to\infty}\mathbb{P}\big((m^*+1)n_i> (1+\delta_1)\boldsymbol{\tau}_i^{(1)}\big)=1. 
\end{eqnarray}
By (\ref{tanem_1}) and (\ref{tanem_2}), the expression on the right side of (\ref{tanem_3}) goes to $1$ as $n$ grows\footnote{If $\mathcal{A}_n$ and $\mathcal{B}_n$ are events such that $\lim_{n\to\infty}\mathbb{P}(\mathcal{A}_n)=\lim_{n\to\infty}\mathbb{P}(\mathcal{B}_n)=1$, then $\lim_{n\to\infty}\mathbb{P}(\mathcal{A}_n\bigcap\mathcal{B}_n)=1$.} and we get the desired result. 
  \item Let $j=2$. We have $ \varsigma_i^{(2)}=mn_i$ for some $m\geq 2$ if and only if there is $1\leq m'\leq m-1$ such that 
  \begin{eqnarray}
  \label{ehim_7}
\beta_1,\cdots,\beta_{m'-1}<\lfloor n\eta_i\rfloor,\,\,\beta_{m'}\geq \lfloor n\eta_i\rfloor,\,\,\beta_{m'+1},\cdots,\beta_{m-1}<2\lfloor n\eta_i\rfloor,\,\,\beta_{m}\ge2\lfloor n\eta_i\rfloor
\end{eqnarray}
Recall that $\mu_i$ is not a multiple of $\frac{\theta_i}{2}$. Let $m^*\geq 2$ be such that  
\begin{equation}
\label{ehim_9}
\frac{m^*\theta_i}{2}<\mu_i<\frac{(m^*+1)\theta_i}{2}.
\end{equation}
Then we can invoke Lemma~\ref{sim_lem} together with (\ref{ehim_5}), (\ref{ehim_7}) and (\ref{ehim_9}) to write
\begin{eqnarray}
\label{tanem_11}
\lim_{n\to\infty}\mathbb{P}\big(\boldsymbol{\varsigma}_{i}^{(2)}=(m^*+1)n_i\big)=1,
\end{eqnarray}
where $m'$ in (\ref{ehim_7}) is selected as $m'=\frac{m^*}{2}+1$ for even $m^*$ and $m'=\frac{m^*+1}{2}$ for odd $m^*$. By~(\ref{ehim_9}), fix $\delta_2>0$ such that $(1+\delta_2)\mu_i<\frac{(m^*+1)\theta_i}{2}$. Then  
\begin{eqnarray}
\label{tanem_3}
\mathbb{P}\Big(\boldsymbol{\varsigma}_i^{(2)}> (1+\delta_2)\boldsymbol{\tau}_i^{(2)}\Big)&\geq&\mathbb{P}\Big(\boldsymbol{\varsigma}_i^{(2)}> (1+\delta_2)\boldsymbol{\tau}_i^{(1)}, \boldsymbol{\varsigma}_{i}^{(2)}=(m^*+1)n_i\Big)\notag\\&=&\mathbb{P}\Big((m^*+1)n_i> (1+\delta_2)\boldsymbol{\tau}_i^{(2)}, \boldsymbol{\varsigma}_{i}^{(2)}=(m^*+1)n_i\Big).
\end{eqnarray}
Since $\boldsymbol{\tau}_i^{(2)}\sim\mathrm{NB}(\frac{2\lfloor n\eta_i\rfloor}{k_i},q_i)$, we can apply SLLN to get $\lim_{n\to\infty}\frac{1}{n}\big((m^*+1)n_i-(1+\delta_2)\boldsymbol{\tau}_i^{(2)}\big)=(m^*+1)\theta_i-2(1+\delta_2)\mu_i>0$. Using this together with Lemma~\ref{sim_lem},
\begin{eqnarray}
\label{tanem_2}
\lim_{n\to\infty}\mathbb{P}\big((m^*+1)n_i> (1+\delta_2)\boldsymbol{\tau}_i^{(2)}\big)=1. 
\end{eqnarray}
By (\ref{tanem_1}) and (\ref{tanem_2}), the expression on the right side of (\ref{tanem_3}) goes to $1$ as $n$ grows and we get the desired result. 
\end{itemize} 
Finally, we select $\delta:=\min_{1\leq j\leq N_i}\delta_j$ to make sure (\ref{sugar_11}) holds for any $1\leq j\leq N_i$. 

\section*{Appendix C; Proof of Proposition~\ref{prop1}}
Let $1\leq j_1\leq N_1$ and $1\leq j_2\leq N_2$. If $\tau_1^{(j_1)}+1\leq \tau_2^{(j_2)}+1\leq \tau_1^{(j_1)}+n'$, then the $j_2^{th}$  burst of Tx~$2$ starts while Tx~$1$ is  sending the preamble sequence in its $j_1^{th}$  burst. If  $\tau_1^{(j_1)}+1\leq \tau_2^{(j_2)}+n'+n_2\leq \tau_1^{(j_1)}+n'$, then the $j_2^{th}$ burst of Tx~$2$ ends while Tx~$1$ is sending the preamble sequence in its $j_1^{th}$ burst. Let $\mathcal{E}_{j_1,j_2}$ be the union of these two events, i.e., 
\begin{equation}
\label{ }
\mathcal{E}_{j_1,j_2}:=\big\{0\leq \boldsymbol{\tau}_2^{(j_2)}-\boldsymbol{\tau}_1^{(j_1)}\leq n'-1\big\}\bigcup\big\{n_2\leq \boldsymbol{\tau}_1^{(j_1)}-\boldsymbol{\tau}_2^{(j_2)}\leq n'+n_2-1\big\}.
\end{equation}
We have
\begin{equation}
\label{ref11}
\mathbb{P}\big(\bigcup_{j_1,j_2}\mathcal{E}_{j_1,j_2}\big)\leq \sum_{\substack{1\leq j_1\leq N_1\\ 1\leq j_2\leq N_2}}\mathbb{P}(\mathcal{E}_{j_1,j_2}).
\end{equation}
Moreover, 
\begin{eqnarray}
\label{ref22}
\mathbb{P}(\mathcal{E}_{j_1,j_2})\leq \mathbb{P}\big(0\leq \boldsymbol{\tau}_2^{(j_2)}-\boldsymbol{\tau}_1^{(j_1)}\leq n'-1\big)+\mathbb{P}\big(n_2\leq \boldsymbol{\tau}_1^{(j_1)}-\boldsymbol{\tau}_2^{(j_2)}\leq n'+n_2-1\big).
\end{eqnarray}
In view of (\ref{ref11}) and (\ref{ref22}), it is enough to show that 
\begin{equation}
\label{ber11}
\lim_{n\to\infty}\mathbb{P}\big(0\leq \boldsymbol{\tau}_2^{(j_2)}-\boldsymbol{\tau}_1^{(j_1)}\leq n'-1\big)=0
\end{equation}
and
\begin{equation}
\label{ber22}
\lim_{n\to\infty}\mathbb{P}\big(n_2\leq \boldsymbol{\tau}_1^{(j_1)}-\boldsymbol{\tau}_2^{(j_2)}\leq n'+n_2-1\big)=0
\end{equation}
for arbitrary choices of $j_1$ and $j_2$. To verify~(\ref{ber11}), define 
\begin{eqnarray}
\label{this00}
\boldsymbol{\rho}_n:=\frac{1}{n}\big(\boldsymbol{\tau}_2^{(j_2)}-\boldsymbol{\tau}_1^{(j_1)}\big),\,\,\,\boldsymbol{\rho}'_n:=\frac{1}{n}\big(\boldsymbol{\tau}_1^{(j_1)}-\boldsymbol{\tau}_2^{(j_2)}+n'-1\big).
\end{eqnarray}
Then 
\begin{eqnarray}
\label{conc11}
\mathbb{P}\big(0\leq \boldsymbol{\tau}_2^{(j_2)}-\boldsymbol{\tau}_1^{(j_1)}\leq n'-1\big)=\mathbb{P}\big( \boldsymbol{\rho}_n\geq 0, \boldsymbol{\rho}'_n\geq0\big)=\mathbb{P}\big(\min\{\boldsymbol{\rho}_n,\boldsymbol{\rho}'_n\}\geq 0\big).
\end{eqnarray}
 By (\ref{laws11}), 
\begin{eqnarray}
\label{label33}
\lim_{n\to\infty}\frac{\boldsymbol{\tau}_1^{(j_1)}}{n}&=&\lim_{n\to\infty}\frac{\boldsymbol{\xi}_1^{(j_1)}}{n}+\lim_{n\to\infty}\frac{\lfloor n\nu_1\rfloor}{n}=\lim_{n\to\infty}\frac{\boldsymbol{\xi}_1^{(j_1)}}{n}+\nu_1.
\end{eqnarray}
Using SLLN,  $\lim_{n\to\infty}\frac{\boldsymbol{\xi}_1^{(j_1)}}{n}=j_1\mu_1$ and we get 
\begin{equation}
\label{this11}
\lim_{n\to\infty}\frac{\boldsymbol{\tau}_1^{(j_1)}}{n}=j_1\mu_1+\nu_1.
\end{equation} 
Similarly,  
\begin{equation}
\label{this22}
\lim_{n\to\infty}\frac{\boldsymbol{\tau}_2^{(j_2)}}{n}=j_2\mu_2+\nu_2.
\end{equation} 
Define 
\begin{equation}
\label{ }
\rho:=j_2\mu_2-j_1\mu_1+\nu_2-\nu_1.
\end{equation}
By (\ref{this00}), (\ref{this11}) and (\ref{this22}) and noting that $\lim_{n\to\infty}\frac{n'-1}{n}=0$, we get $\lim_{n\to\infty}\boldsymbol{\rho}_{n}=\rho$, $\lim_{n\to\infty}\boldsymbol{\rho}'_{n}=-\rho$ and hence,
\begin{equation}
\label{saken_1}
\lim_{n\to\infty}\min\{\boldsymbol{\rho}_n,\boldsymbol{\rho}'_n\}=\min\{\rho,-\rho\}<0,
\end{equation}
where the last step is due to~(\ref{res_11}).  By~(\ref{saken_1}) and Lemma~\ref{sim_lem}, the proof of (\ref{ber11}) is complete. The proof of (\ref{ber22}) is quite similar and is omitted for brevity. 
\section*{Appendix D; Proof of Proposition~\ref{prop_2}}
We need the following Lemma: 
\begin{lem}[\textbf{Bernstein's inequality} \cite{lugosi}]
\label{lem_2}
Let $\boldsymbol{x}_1,\cdots,\boldsymbol{x}_m$ be independent zero mean real-valued random variables, $\boldsymbol{x}_i\leq 1$ for any $1\leq i\leq m$ and $\epsilon>0$. Then 
\begin{equation}
\label{ }
\mathbb{P}\Big(\frac{1}{m}\sum_{i=1}^m\boldsymbol{x}_i\geq \epsilon\Big)\leq e^{-\frac{m\epsilon^2}{2\left(\sigma^2+\frac{\epsilon}{3}\right)}},
\end{equation}
where $\sigma^2$ is the arithmetic average of the variances of $\boldsymbol{x}_1,\cdots,\boldsymbol{x}_m$. 
\end{lem}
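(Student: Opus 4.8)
The plan is to prove this by the classical Chernoff-bounding argument: the exponential Markov inequality together with a pointwise bound on the moment generating function that exploits the one-sided constraint $\boldsymbol{x}_i\le 1$. First I would fix $\lambda>0$ and use Markov's inequality and independence to get
\[
\mathbb{P}\Big(\tfrac{1}{m}\sum_{i=1}^{m}\boldsymbol{x}_i\ge\epsilon\Big)=\mathbb{P}\big(e^{\lambda\sum_{i}\boldsymbol{x}_i}\ge e^{\lambda m\epsilon}\big)\le e^{-\lambda m\epsilon}\prod_{i=1}^{m}\mathbb{E}\big[e^{\lambda\boldsymbol{x}_i}\big].
\]

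The second step is the per-variable moment-generating-function estimate. The function $g(u):=(e^{u}-1-u)/u^{2}$, with $g(0):=\tfrac{1}{2}$, is nondecreasing on all of $\mathbb{R}$; this can be seen by computing $g'(u)=\big((u-2)e^{u}+u+2\big)/u^{3}$ and checking that the numerator is nonnegative for $u\ge 0$ and nonpositive for $u\le 0$ (itself a consequence of differentiating the numerator twice). Since $\lambda\boldsymbol{x}_i\le\lambda$, monotonicity of $g$ gives $e^{\lambda x}\le 1+\lambda x+(e^{\lambda}-1-\lambda)x^{2}$ for every $x\le 1$. Taking expectations and using $\mathbb{E}[\boldsymbol{x}_i]=0$, $\mathbb{E}[\boldsymbol{x}_i^{2}]=\sigma_i^{2}$, and $1+t\le e^{t}$ yields $\mathbb{E}[e^{\lambda\boldsymbol{x}_i}]\le\exp\!\big((e^{\lambda}-1-\lambda)\sigma_i^{2}\big)$. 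Substituting into the product above and writing $\sum_i\sigma_i^{2}=m\sigma^{2}$ gives the Bennett-type bound $\mathbb{P}\big(\tfrac{1}{m}\sum_i\boldsymbol{x}_i\ge\epsilon\big)\le\exp\!\big(-m[\lambda\epsilon-\sigma^{2}(e^{\lambda}-1-\lambda)]\big)$.

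Third, I would optimize the exponent over $\lambda>0$. The maximizer is $\lambda^{\star}=\ln(1+\epsilon/\sigma^{2})$, and substituting it produces $\mathbb{P}\big(\tfrac{1}{m}\sum_i\boldsymbol{x}_i\ge\epsilon\big)\le\exp\!\big(-m\sigma^{2}h(\epsilon/\sigma^{2})\big)$ with $h(u):=(1+u)\ln(1+u)-u$. Finally I would relax this to the stated Bernstein form using the scalar inequality $h(u)\ge u^{2}/\big(2(1+u/3)\big)$, valid for all $u\ge 0$; with $u=\epsilon/\sigma^{2}$ and after clearing denominators this is exactly $\exp\!\big(-m\epsilon^{2}/(2(\sigma^{2}+\epsilon/3))\big)$, which completes the proof.

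The main obstacle is the final scalar inequality, since a naive term-by-term comparison of the two power series is inconclusive. I would instead set $\phi(u):=(6+2u)\big[(1+u)\ln(1+u)-u\big]-3u^{2}$ and show $\phi(u)\ge 0$ for $u\ge 0$ by verifying $\phi(0)=\phi'(0)=\phi''(0)=0$ while $\phi'''(u)=4u/(1+u)^{2}\ge 0$, so that integrating three times from the origin preserves the sign. The monotonicity of $g$ needed in the second step is of the same flavour — two rounds of differentiation — and is otherwise routine; note that no part of the argument uses the burst structure of the channel model, so the lemma is genuinely a stand-alone concentration statement.
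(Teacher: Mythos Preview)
Your argument is correct and is exactly the standard textbook proof of Bernstein's inequality (Chernoff bound, per-variable MGF estimate via monotonicity of $(e^{u}-1-u)/u^{2}$, optimization to Bennett's form, and the scalar relaxation $h(u)\ge u^{2}/(2(1+u/3))$). The paper itself does not prove this lemma at all: it is stated with a citation to the Boucheron--Lugosi--Massart monograph and used as a black-box tool in Appendix~D, so there is nothing to compare against beyond noting that your derivation matches the one in that reference.
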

We have
\begin{equation}
\label{uyh1}
\mathbb{P}(\,\hat{\boldsymbol{t}}_1< \boldsymbol{t}_1)=\sum_{t_1\geq 0}\mathbb{P}(\,\hat{\boldsymbol{t}}_1<\boldsymbol{t}_1|\boldsymbol{t}_1=t_1)\mathbb{P}(\boldsymbol{t}_1=t_1).
\end{equation}
Moreover,
\begin{eqnarray}
\label{union1}
&&\mathbb{P}(\,\hat{\boldsymbol{t}}_1<\boldsymbol{t}_1|\boldsymbol{t}_1=t_1)\notag\\
&&=\mathbb{P}\left(\exists\,t<t_1:  \textrm{$\big((\boldsymbol{s}'_{1,l})_{l=0}^{n'-1},(\boldsymbol{y}_{1,l})_{l=t}^{t+n'-1}\big)\in A^{(n')}_{\epsilon}[p^{(1)}]$ or $\big((\boldsymbol{s}'_{2,l})_{l=0}^{n'-1},(\boldsymbol{y}_{1,l})_{l=t}^{t+n'-1}\big)\in A^{(n')}_{\epsilon}[p^{(3)}]$}\right)\notag\\
&&\leq \sum_{t=0}^{t_1-1}\mathbb{P}\left( \textrm{$\big((\boldsymbol{s}'_{1,l})_{l=0}^{n'-1},(\boldsymbol{y}_{1,l})_{l=t}^{t+n'-1}\big)\in A^{(n')}_{\epsilon}[p^{(1)}]$}\right)\notag\\&&\hskip0.5cm+\sum_{t=0}^{t_1-1}\mathbb{P}\left( \textrm{$\big((\boldsymbol{s}'_{2,l})_{l=0}^{n'-1},(\boldsymbol{y}_{1,l})_{l=t}^{t+n'-1}\big)\in A^{(n')}_{\epsilon}[p^{(3)}]$}\right).
\end{eqnarray}
In the following, we find an upper bound on each term on the right side of (\ref{union1}).
\begin{itemize}
  \item The term $\sum_{t=0}^{t_1-1}\mathbb{P}\big( \textrm{$\big((\boldsymbol{s}'_{1,l})_{l=0}^{n'-1},(\boldsymbol{y}_{1,l})_{l=t}^{t+n'-1}\big)\in A^{(n')}_{\epsilon}[p^{(1)}]$}\big)$: We study the cases $t\leq t_{1}-n'$ and $t_{1}-n'+1\leq  t\leq t_{1}-1$, separately: 
\begin{itemize}
  \item Let $t\leq t_{1}-n'$.  We explicitly write the constraint defining $A_{\epsilon}^{(m)}[p^{(1)}]$ in (\ref{smart1}) as the set of all $(\vec{x},\vec{y}\,)$ such that $\big|\frac{1}{m}\sum_{l=1}^{m}\big(\frac{1}{2\gamma_1}x_l^2+\frac{1}{2}\big(y_l-x_l\big)^2-1\big)\big|<\frac{\epsilon }{\log e}$. Then we have the thread of inequalities 
 \begin{eqnarray}
 \label{platini}
&&\mathbb{P}\big(\, \big((\boldsymbol{s}'_{1,l})_{l=0}^{n'-1},(\boldsymbol{y}_{1,l})_{l=t}^{t+n'-1}\big)\in A_{\epsilon}^{(n')}[p^{(1)}]\big )\notag\\&&\stackrel{}{\leq} \mathbb{P}\Big(\Big|\frac{1}{n'}\sum_{l=0}^{n'-1}\Big(\frac{1}{2\gamma_1}|\boldsymbol{s}'_{1,l}|^2+\frac{1}{2}(\boldsymbol{y}_{1,l+t}-\boldsymbol{s}'_{1,l})^2-1\Big)\Big|<\frac{\epsilon}{\log e}\Big)\notag\\
&&\leq \mathbb{P}\Big(\frac{1}{n'}\sum_{l=0}^{n'-1}\Big(\frac{1}{2\gamma_1}|\boldsymbol{s}'_{1,l}|^2+\frac{1}{2}\big(\boldsymbol{y}_{1,l+t}-\boldsymbol{s}'_{1,l}\,\big)^2-1\Big)<\frac{\epsilon}{\log e}\Big)\notag\\
&&\stackrel{(a)}{=} \mathbb{P}\Big(\frac{1}{n'}\sum_{l=0}^{n'-1}\Big(\frac{1}{2\gamma_1}|\boldsymbol{s}'_{1,l}|^2+\frac{1}{2}\big(\boldsymbol{z}_{1,l+t}-\boldsymbol{s}'_{1,l}\,\big)^2-1\Big)<\frac{\epsilon}{\log e}\Big)\notag\\
&&\stackrel{(b)}{\leq} \mathbb{P}\Big(\frac{1}{n'}\sum_{l=0}^{n'-1}\Big(1+\frac{\gamma_1}{2}-\frac{1}{2\gamma_1}|\boldsymbol{s}'_{1,l}|^2-\frac{1}{2}\big(\boldsymbol{z}_{1,l+t}-\boldsymbol{s}'_{1,l}\,\big)^2\Big)>\frac{\gamma_1}{2}-\frac{\epsilon}{\log e}\Big)\notag\\
&&\stackrel{(c)}{=}\mathbb{P}\Big(\frac{1}{ n'}\sum_{l=0}^{n'-1}\Big(1-\frac{\frac{1}{2\gamma_1}|\boldsymbol{s}'_{1,l}|^2+\frac{1}{2}\big(\boldsymbol{z}_{1,l+t}-\boldsymbol{s}'_{1,l}\,\big)^2}{1+\frac{\gamma_1}{2}}\Big)>\frac{\frac{\gamma_1}{2}-\frac{\epsilon}{\log e}}{1+\frac{\gamma_1}{2}}\Big)\notag\\
&&\stackrel{(d)}{\leq}e^{-\Theta(n')},
\end{eqnarray}
\normalsize
where $(a)$ is due to the fact that  the signal at Rx~1 during time slots $t$ to $t+n'-1$ only consists of the ambient noise, in $(b)$ we have added $\frac{\gamma_1}{2}$ to both sides of the inequality in $(a)$ after multiplying both sides of the inequality by $-1$, in $(c)$ both sides are divided by $1+\frac{\gamma_1}{2}$ and $(d)$ is due to Lemma~\ref{lem_2} (Bernstein's inequality)  where it is assumed that  $\epsilon<\frac{\gamma_1}{2}\log e$. In fact, the random variables $\boldsymbol{w}_l=1-\frac{\frac{1}{2\gamma_1}|\boldsymbol{s}'_{1,l}|^2+\frac{1}{2}(\boldsymbol{z}_{1,l+t}-\boldsymbol{s}'_{1,l}\,)^2}{1+\frac{\gamma_1}{2}}$ are independent with zero mean and finite variance and $\boldsymbol{w}_l\leq 1$. Therefore, one can apply Bernstein's inequality.
  \item Let $t_1-n'+1\leq  t\leq t_{1}-1$. Then we get
   \begin{eqnarray}
   \label{platini2}
&&\mathbb{P}\big(\, \big((\boldsymbol{s}'_{1,l})_{l=0}^{ n'-1},(\boldsymbol{y}_{1,l})_{l=t}^{t+n'-1}\big)\in A_{\epsilon}^{(n')}[p^{(1)}]\big )\notag\\
&&\leq\mathbb{P}\Big(\frac{1}{n'}\sum_{l=0}^{ n'-1}\Big(\frac{1}{2\gamma_1}|\boldsymbol{s}'_{1,l}|^2+\frac{1}{2}\big(\boldsymbol{y}_{1,l+t}-\boldsymbol{s}'_{1,l}\,\big)^2-1\Big)<\frac{\epsilon}{\log e}\Big)
\notag\\&&\stackrel{(a)}{=} \mathbb{P}\Big(\frac{1}{n'}\sum_{l=0}^{t_{1}-t-1}\Big(\frac{1}{2\gamma_1}|\boldsymbol{s}'_{1,l}|^2+\frac{1}{2}\big(\boldsymbol{z}_{1,l+t}-\boldsymbol{s}'_{1,l}\,\big)^2-1\Big)\Big.\Big.\notag\\&&\hskip2.5cm\Big.\Big.+\frac{1}{n'}\sum_{l=t_{1}-t}^{n'-1}\Big(\frac{1}{2\gamma_1}|\boldsymbol{s}'_{1,l}|^2+\frac{1}{2}\big(\boldsymbol{s}'_{1,l+t-t_{1}}+\boldsymbol{z}_{1,l+t}-\boldsymbol{s}'_{1,l}\,\big)^2-1\Big)<\frac{\epsilon}{\log e}\Big)\notag\\
&&\stackrel{(b)}{=} \mathbb{P}\Big(\frac{1}{n'}\sum_{l=0}^{t_{1}-t-1}\Big(1+\frac{\gamma_1}{2}-\frac{1}{2\gamma_1}|\boldsymbol{s}'_{1,l}|^2-\frac{1}{2}\big(\boldsymbol{z}_{1,l+t}-\boldsymbol{s}'_{1,l}\,\big)^2\Big)\Big.\Big.\notag\\&&\hskip2.5cm\Big.\Big.+\frac{1}{n'}\sum_{l=t_{1}-t}^{n'-1}\Big(1+\gamma_1-\frac{1}{2\gamma_1}|\boldsymbol{s}'_{1,l}|^2-\frac{1}{2}\big(\boldsymbol{s}'_{1,l+t-t_{1}}+\boldsymbol{z}_{1,l+t}-\boldsymbol{s}'_{1,l}\,\big)^2\Big)>\epsilon_n\Big),\notag\\
\end{eqnarray}
\normalsize
where in $(a)$ we have used the fact that $\boldsymbol{y}_{1,l+t}$ is the ambient noise for $l<t_{1}-t$ and $\boldsymbol{y}_{1,l+t}=\boldsymbol{s}'_{1,l+t-t_{1}}+\boldsymbol{z}_{1,l+t}$ for $t_1-t\leq l\leq n'-1$ and in $(b)$ we have added $\frac{\gamma_1}{2}$ to each term in the first sum and $\gamma_1$ to each term in the second sum after multiplying both sides of the inequality in $(a)$ by $-1$. Moreover, $\epsilon_n$ is given by 
\begin{eqnarray}
\epsilon_n:=\frac{t_1-t}{2n'}\gamma_1+\frac{n'+t-t_1}{n'}\gamma_1-\frac{\epsilon}{\log e}.
\end{eqnarray}
We can write 
\begin{eqnarray}
\label{plat_22}
\epsilon_n=\big(1-\frac{t_1-t}{2n'}\big)\gamma_1-\frac{\epsilon}{\log e}\geq \big(1-\frac{n'-1}{2n'}\big)\gamma_1-\frac{\epsilon}{\log e}\geq \frac{\gamma_1}{3}-\frac{\epsilon}{\log e},
\end{eqnarray}
where the penultimate step is due to $t_1-t\leq n'-1$ and in the last step we are assuming $n$ is sufficiently large\footnote{We have $\lim_{n\to\infty}(1-\frac{n'-1}{2n'})=\frac{1}{2}$ and hence, $1-\frac{n'-1}{2n'}>\frac{1}{3}$ for sufficiently large $n$.} such that $1-\frac{n'-1}{2n'}\geq \frac{1}{3}$. By (\ref{platini2}) and (\ref{plat_22}), 
\begin{eqnarray}
\label{plat_33}
&&\mathbb{P}\big(\, \big((\boldsymbol{s}'_{1,l})_{l=0}^{ n'-1},(\boldsymbol{y}_{1,l})_{l=t}^{t+n'-1}\big)\in A_{\epsilon}^{(n')}[p^{(1)}]\big )\notag\\
&&\stackrel{}{\le} \mathbb{P}\Big(\frac{1}{n'}\sum_{l=0}^{t_{1}-t-1}\Big(1+\frac{\gamma_1}{2}-\frac{1}{2\gamma_1}|\boldsymbol{s}'_{1,l}|^2-\frac{1}{2}\big(\boldsymbol{z}_{1,l+t}-\boldsymbol{s}'_{1,l}\,\big)^2\Big)\Big.\Big.\notag\\&&\hskip1.5cm\Big.\Big.+\frac{1}{n'}\sum_{l=t_{1}-t}^{n'-1}\Big(1+\gamma_1-\frac{1}{2\gamma_1}|\boldsymbol{s}'_{1,l}|^2-\frac{1}{2}\big(\boldsymbol{s}'_{1,l+t-t_{1}}+\boldsymbol{z}_{1,l+t}-\boldsymbol{s}'_{1,l}\,\big)^2\Big)>\frac{\gamma_1}{3}-\frac{\epsilon}{\log e}\Big),\notag\\
\end{eqnarray}
where we are assuming that $\epsilon<\frac{\gamma_1}{3}\log e$. Define 
\normalsize
\begin{equation}
\label{ }
\boldsymbol{w}_l=\left\{\begin{array}{cc}
    1-\frac{1}{1+\frac{\gamma_1}{2}}\big(\frac{1}{2\gamma_1}|\boldsymbol{s}'_{1,l}|^2-\frac{1}{2}\big(\boldsymbol{z}_{1,l+t}-\boldsymbol{s}'_{1,l}\,\big)^2\big)  & 0\leq l\leq t_1-t-1   \\
    1-\frac{1}{1+\gamma_1}\big(\frac{1}{2\gamma_1}|\boldsymbol{s}'_{1,l}|^2-\frac{1}{2}\big(\boldsymbol{s}'_{1,l+t-t_{1}}+\boldsymbol{z}_{1,l+t}-\boldsymbol{s}'_{1,l}\,\big)^2\big)  &   t_1-t\leq l\leq n'-1
\end{array}\right..
\end{equation}
Then we can write (\ref{plat_33}) as 
\begin{eqnarray}
&&\mathbb{P}\big(\, \big((\boldsymbol{s}'_{1,l})_{l=0}^{ n'-1},(\boldsymbol{y}_{1,l})_{l=t}^{t+n'-1}\big)\in A_{\epsilon}^{(n')}[p^{(1)}]\big )\notag\\
&&\leq\mathbb{P}\Big(\frac{1}{n'}\sum_{l=0}^{t_1-t-1}\big(1+\frac{\gamma_1}{2}\big)\boldsymbol{w}_l+\frac{1}{n'}\sum_{l=t_1-t}^{n'-1}(1+\gamma_1)\boldsymbol{w}_l>\frac{\gamma_1}{3}-\frac{\epsilon}{\log e}\Big)\notag\\
&&\leq\mathbb{P}\Big(\frac{1}{n'}\sum_{l=0}^{n'-1}\boldsymbol{w}_l>\frac{1}{1+\gamma_1}\big(\frac{\gamma_1}{3}-\frac{\epsilon}{\log e}\big)\Big),
\end{eqnarray}
where in the last step we have used the fact that $\frac{1}{n'}\sum_{l=0}^{t_1-t-1}\big(1+\frac{\gamma_1}{2}\big)\boldsymbol{w}_l+\frac{1}{n'}\sum_{l=t_1-t}^{n'-1}(1+\gamma_1)\boldsymbol{w}_l<(1+\gamma_1)\times \frac{1}{n'}\sum_{l=0}^{n'-1}\boldsymbol{w}_l$. Note that $\boldsymbol{w}_l$ have zero mean and finite variance and $\boldsymbol{w}_l\leq 1$ for all $0\leq l\leq n'-1$. However, in contrast with the previous case where we had $t\leq t_1-n'$, one can not apply Bernstein's Lemma because $\boldsymbol{w}_l$ are no longer independent random variables. In fact, any two $\boldsymbol{w}_{l}$ and $\boldsymbol{w}_{l'}$ are dependent if and only if   $|l-l'|=t_{1}-t$. To circumvent this difficulty, we use a trick in Appendix~24B in \cite{gamkim}. We consider two cases: 
\begin{itemize}
  \item If $t_{1}-t$ is odd, the terms with odd indices are independent. Similarly,  the terms with even indices are independent. Then 
  \begin{eqnarray}
  \label{platini3}
\mathbb{P}\Big(\frac{1}{n'}\sum_{l=0}^{n'-1}\boldsymbol{w}_l>\frac{1}{1+\gamma_1}\big(\frac{\gamma_1}{3}-\frac{\epsilon}{\log e}\big)\Big)&\leq&\mathbb{P}\bigg(\frac{1}{n'}\sum_{\substack{l=0\\\textrm{$l$ even}}}^{n'-1}\boldsymbol{w}_{l}>\frac{1}{2(1+\gamma_1)}\big(\frac{\gamma_1}{3}-\frac{\epsilon}{\log e}\big)\bigg)\notag\\
&&\hskip0.1cm+\mathbb{P}\bigg(\frac{1}{n'}\sum_{\substack{l=0\\\textrm{$l$ odd}}}^{ n'-1}\boldsymbol{w}_{l}>\frac{1}{2(1+\gamma_1)}\big(\frac{\gamma_1}{3}-\frac{\epsilon}{\log e}\big)\bigg).\notag\\
\end{eqnarray}
At this point, similar to (\ref{platini}), one can apply Bernstein's inequality to conclude that each term on the right side of (\ref{platini3}) is bounded from above by $e^{-\Theta(n')}$.
   \item If $t_{1}-t$ is even, we need to partition the set of integers into two disjoint sets $\mathcal{I}$ and $\mathcal{J}$ such that the difference of any two element in each of these sets is not equal to $t_{1}-t$. Such a partition is given in Appendix~24B in \cite{gamkim} or Appendix~D in \cite{kam}. Then 
\begin{eqnarray}
\label{platini4}
\mathbb{P}\Big(\frac{1}{n'}\sum_{l=0}^{n'-1}\boldsymbol{w}_l>\frac{1}{1+\gamma_1}\big(\frac{\gamma_1}{3}-\frac{\epsilon}{\log e}\big)\Big)&\le&\mathbb{P}\bigg(\frac{1}{ n'}\sum_{\substack{l=0\\\textrm{$l\in \mathcal{I}$}}}^{n'-1}\boldsymbol{w}_{l}>\frac{1}{1+\gamma_1}\big(\frac{\gamma_1}{3}-\frac{\epsilon}{\log e}\big)\bigg)\notag\\&&\hskip0.1cm+\mathbb{P}\bigg(\frac{1}{n'}\sum_{\substack{l=0\\\textrm{$l\in\mathcal{J}$}}}^{n'-1}\boldsymbol{w}_{l}>\frac{1}{1+\gamma_1}\big(\frac{\gamma_1}{3}-\frac{\epsilon}{\log e}\big)\bigg).\notag\\
\end{eqnarray}
As each of $\sum_{\substack{\\\textrm{$l\in \mathcal{I}$}}}\boldsymbol{w}_{l}$ and $\sum_{\substack{\\\textrm{$l\in \mathcal{J}$}}}\boldsymbol{w}_{l}$ are sums of independent random variables, it follows that each term on the right side of (\ref{platini4}) is bounded from above by $e^{-\Theta(n')}$.
\end{itemize}
\end{itemize}
We conclude that whether $t\leq t_{1}-n'$ or $t_{1}-n'+1\leq  t\leq t_{1}-1$, each term in the first sum on the right side of (\ref{union1}) is bounded from above by $\Theta(1)e^{-\Theta(n')}$. Since there are $t_1$ terms in this sum, we get 
\begin{equation}
\label{uyh2}
\sum_{t=0}^{t_1-1}\mathbb{P}\left( \textrm{$\big((\boldsymbol{s}'_{1,l})_{l=0}^{n'-1},(\boldsymbol{y}_{1,l})_{l=t}^{t+n'-1}\big)\in A^{(n')}_{\epsilon}[p^{(1)}]$}\right)\leq t_1\Theta(1)e^{-\Theta(n')}.
\end{equation}
 \item The term $\sum_{t=0}^{t_1-1}\mathbb{P}\big( \textrm{$\big((\boldsymbol{s}'_{2,l})_{l=0}^{n'-1},(\boldsymbol{y}_{1,l})_{l=t}^{t+n'-1}\big)\in A^{(n')}_{\epsilon}[p^{(3)}]$}\big)$: The analysis in this case follows similar lines of reasoning in the previous case and is omitted. The result is that
\begin{equation}
\label{uyh3}
\sum_{t=0}^{t_1-1}\mathbb{P}\left( \textrm{$\big((\boldsymbol{s}'_{2,l})_{l=0}^{n'-1},(\boldsymbol{y}_{1,l})_{l=t}^{t+n'-1}\big)\in A^{(n')}_{\epsilon}[p^{(3)}]$}\right)\leq t_1\Theta(1)e^{-\Theta(n')}.
\end{equation}
\end{itemize}
By (\ref{uyh1}), (\ref{union1}), (\ref{uyh2}) and (\ref{uyh3}), 
\begin{eqnarray}
\label{uyh4}
\mathbb{P}(\,\hat{\boldsymbol{t}}_1\leq \boldsymbol{t}_1)\leq \Theta(1)e^{-\Theta(n')}\sum_{t_1\geq0}t_1\mathbb{P}(\boldsymbol{t}_1=t_1)=\Theta(1)\mathbb{E}[\boldsymbol{t}_1]e^{-\Theta(n')}.
\end{eqnarray}
But, 
\begin{eqnarray}
\label{uyh5}
\mathbb{E}[\boldsymbol{t}_1]&=&\mathbb{E}[\boldsymbol{\tau}_1^{(1)}+1]\notag\\&=&\mathbb{E}[\boldsymbol{\xi}_1^{(1)}]+\mathbb{E}[\lfloor n\boldsymbol{\nu}_1\rfloor]+1\notag\\
&\leq&\mathbb{E}[\boldsymbol{\xi}_1^{(1)}]+\mathbb{E}[ n\boldsymbol{\nu}_1]+1\notag\\
&=&\frac{\lfloor n\eta_1\rfloor}{\lambda_1}+\frac{n}{2}+1\notag\\
&=&\Theta(n)
\end{eqnarray}
By (\ref{uyh4}) and (\ref{uyh5}), 
\begin{eqnarray}
\mathbb{P}(\,\hat{\boldsymbol{t}}_1< \boldsymbol{t}_1)\leq \Theta(n)e^{-\Theta(n')},
\end{eqnarray}
as desired. 
\section*{Appendix~E; Proof of (\ref{canned_11})}
The proof follows similar lines of reasoning in (\ref{platini}). We explicitly write the constraint defining $A_{\epsilon}^{(m)}[p^{(3)}]$ in (\ref{smart1}) as the set of all $(\vec{x},\vec{y}\,)$ such that $\big|\frac{1}{m}\sum_{l=1}^{m}\big(\frac{1}{2\gamma_2}x_l^2+\frac{1}{2}\big(y_l-\sqrt{a_2}\,x_l\big)^2-1\big)\big|<\frac{\epsilon }{\log e}$. Then  
\small
 \begin{eqnarray}
 \label{}
&&\mathbb{P}\big(\, \big((\boldsymbol{s}'_{2,l})_{l=0}^{n'-1},(\boldsymbol{y}_{1,l})_{l=t_1}^{t_1+n'-1}\big)\in A_{\epsilon}^{(n')}[p^{(3)}]\big )\notag\\&&\stackrel{}{\leq} \mathbb{P}\Big(\Big|\frac{1}{n'}\sum_{l=0}^{n'-1}\Big(\frac{1}{2\gamma_2}|\boldsymbol{s}'_{2,l}|^2+\frac{1}{2}(\boldsymbol{y}_{1,l+t_1}-\sqrt{a_2}\,\boldsymbol{s}'_{2,l})^2-1\Big)\Big|<\frac{\epsilon}{\log e}\Big)\notag\\
&&\leq \mathbb{P}\Big(\frac{1}{n'}\sum_{l=0}^{n'-1}\Big(\frac{1}{2\gamma_2}|\boldsymbol{s}'_{2,l}|^2+\frac{1}{2}\big(\boldsymbol{y}_{1,l+t_1}-\sqrt{a_2}\,\boldsymbol{s}'_{2,l}\,\big)^2-1\Big)<\frac{\epsilon}{\log e}\Big)\notag\\
&&\stackrel{(a)}{=} \mathbb{P}\Big(\frac{1}{n'}\sum_{l=0}^{n'-1}\Big(\frac{1}{2\gamma_2}|\boldsymbol{s}'_{2,l}|^2+\frac{1}{2}\big(\boldsymbol{s}'_{1,l}+\boldsymbol{z}_{1,l+t_1}-\sqrt{a_2}\,\boldsymbol{s}'_{2,l}\,\big)^2-1\Big)<\frac{\epsilon}{\log e}\Big)\notag\\
&&\stackrel{(b)}{\leq} \mathbb{P}\Big(\frac{1}{n'}\sum_{l=0}^{n'-1}\Big(1+\frac{\gamma_1+a_2\gamma_2}{2}-\frac{1}{2\gamma_2}|\boldsymbol{s}'_{1,l}|^2-\frac{1}{2}\big(\boldsymbol{s}'_{1,l}+\boldsymbol{z}_{1,l+t_1}-\sqrt{a_2}\,\boldsymbol{s}'_{2,l}\,\big)^2\Big)>\frac{\gamma_1+a_2\gamma_2}{2}-\frac{\epsilon}{\log e}\Big)\notag\\
&&\stackrel{(c)}{=}\mathbb{P}\Big(\frac{1}{ n'}\sum_{l=0}^{n'-1}\Big(1-\frac{\frac{1}{2\gamma_2}|\boldsymbol{s}'_{2,l}|^2+\frac{1}{2}\big(\boldsymbol{s}'_{1,l}+\boldsymbol{z}_{1,l+t_1}-\sqrt{a_2}\,\boldsymbol{s}'_{2,l}\,\big)^2}{1+\frac{\gamma_1+a_2\gamma_2}{2}}\Big)>\frac{\frac{\gamma_1+a_2\gamma_2}{2}-\frac{\epsilon}{\log e}}{1+\frac{\gamma_1+a_2\gamma_2}{2}}\Big)\notag\\
&&\stackrel{(d)}{\leq}e^{-\Theta(n')},
\end{eqnarray}
\normalsize
where $(a)$ is due to the fact that  the signal at Rx~1 during time slots $t_1\leq t\leq t_1+n'-1$ is $\boldsymbol{y}_{t}=\boldsymbol{s}'_{1,t}+\boldsymbol{z}_{1,t}$, in $(b)$ we have added $\frac{\gamma_1+a_2\gamma_2}{2}$ to both sides of the inequality in $(a)$ after multiplying both sides of the inequality by $-1$, in $(c)$ both sides are divided by $1+\frac{\gamma_1+a_2\gamma_2}{2}$ and $(d)$ is due to Lemma~\ref{lem_2} (Bernstein's inequality)  where it is assumed that  $\epsilon<\frac{\gamma_1+a_2\gamma_2}{2}\log e$. In fact, the random variables $\boldsymbol{w}_l=1-\frac{\frac{1}{2\gamma_2}|\boldsymbol{s}'_{1,l}|^2+\frac{1}{2}(\boldsymbol{s}'_{1,l}+\boldsymbol{z}_{1,l+t_1}-\sqrt{a_2}\,\boldsymbol{s}'_{2,l}\,)^2}{1+\frac{\gamma_1+a_2\gamma_2}{2}}$ are independent with zero mean and finite variance and $\boldsymbol{w}_l\leq 1$. Therefore, Bernstein's inequality applies.
 \section*{Appendix~F; Proof of Proposition~\ref{prop_3}}
The $j_1^{th}$  burst of Tx~$1$ overlaps with the $j_2^{th}$ burst of Tx~$2$ if and only if one of the events
\begin{equation}
\label{shart1}
\mathcal{E}_n:=\big\{\boldsymbol{\tau}_1^{(j_1)}+1\leq \boldsymbol{\tau}^{(j_2)}_2+1\leq \boldsymbol{\tau}_1^{(j_1)}+n'+n_1\big\}
\end{equation}
or 
\begin{equation}
\label{shart2}
\mathcal{F}_n:=\big\{\boldsymbol{\tau}_1^{(j_1)}+1\leq \boldsymbol{\tau}^{(j_2)}_2+n'+n_2\leq \boldsymbol{\tau}_1^{(j_1)}+n'+n_1\big\}
\end{equation}
holds. Let us show that $\lim_{n\to\infty}\mathbb{P}(\mathcal{E}_n)=0$ if and only if $j_2\mu_2-j_1\mu_1+\nu_2-\nu_1\in(0,\theta_1)$. Define
\begin{equation}
\label{ }
\boldsymbol{\rho}_n=\frac{1}{n}(\boldsymbol{\tau}_{2}^{(j_2)}-\boldsymbol{\tau}_1^{(j_1)}),\,\,\,\boldsymbol{\rho}'_n=\frac{1}{n}(\boldsymbol{\tau}_1^{(j_1)}-\boldsymbol{\tau}_2^{(j_2)}+n'+n_1-1).
\end{equation}
Then
\begin{equation}
\label{ }
\mathcal{E}_n=\big\{\boldsymbol{\rho}_n\geq 0, \boldsymbol{\rho}'_n\geq 0\big\}=\big\{\min\{\boldsymbol{\rho}_n,\boldsymbol{\rho}'_n\}\geq 0\big\}.
\end{equation}
Following similar arguments made in Appendix~D, 
\begin{equation}
\label{ }
\lim_{n\to\infty}\min\{\boldsymbol{\rho}_n,\boldsymbol{\rho}'_n\}=\min\{j_2\mu_2-j_1\mu_1+\nu_2-\nu_1,j_1\mu_1-j_2\mu_2+\nu_1-\nu_2+\theta_1\}.
\end{equation}
If $j_2\mu_2-j_1\mu_1+\nu_2-\nu_1\in(0,\theta_1)$, then $\min\{j_2\mu_2-j_1\mu_1+\nu_2-\nu_1,j_1\mu_1-j_2\mu_2+\nu_1-\nu_2+\theta_1\}> 0$. Hence, $\lim_{n\to\infty}\mathbb{P}(\mathcal{E}_n)=1$ by Lemma~\ref{sim_lem}. Similarly, one can show that if $j_2\mu_2-j_1\mu_1+\nu_2-\nu_1\in(-\theta_2,\theta_1-\theta_2)$, then $\lim_{n\to\infty}\mathbb{P}(\mathcal{F}_n)=1$. It follows that if $j_2\mu_2-j_1\mu_1+\nu_2-\nu_1\in(0,\theta_1)\bigcup(-\theta_2,\theta_1-\theta_2)$, then $\lim_{n\to\infty}\mathbb{P}(\mathcal{E}_n\bigcup\mathcal{F}_n)=1$. Next, assume\footnote{Recall that by (\ref{res_11}), $j_2\mu_2-j_1\mu_1+\nu_2-\nu_1\notin\{0,\theta_1,-\theta_2,\theta_1-\theta_2\}$.} $j_2\mu_2-j_1\mu_1+\nu_2-\nu_1\notin[0,\theta_1]\bigcup[-\theta_2,\theta_1-\theta_2]$. Since $j_2\mu_2-j_1\mu_1+\nu_2-\nu_1\notin[0,\theta_1]$, then $\min\{j_2\mu_2-j_1\mu_1+\nu_2-\nu_1,j_1\mu_1-j_2\mu_2+\nu_1-\nu_2+\theta_1\}<0$ and we have $\lim_{n\to\infty}\mathbb{P}(\mathcal{E}_n)=0$ by Lemma~\ref{sim_lem}. Similarly, $j_2\mu_2-j_1\mu_1+\nu_2-\nu_1\notin[-\theta_2,\theta_1-\theta_2]$ results in $\lim_{n\to\infty}\mathbb{P}(\mathcal{F}_n)=0$. But, $\mathbb{P}(\mathcal{E}_n\bigcup\mathcal{F}_n)\leq \mathbb{P}(\mathcal{E}_n)+\mathbb{P}(\mathcal{F}_n)$ and we get $\lim_{n\to\infty}\mathbb{P}(\mathcal{E}_n\bigcup\mathcal{F}_n)=0$.
Finally, the probability of the $j_{2}^{th}$ burst of Tx~$2$ overlaping only with the preamble sequence in the $j_1^{th}$ burst of Tx~$1$ vanishes as $n$ grows due to Proposition~\ref{prop_2}. This completes the proof.  

\section*{Appendix~G; Proof of Proposition~\ref{prop_4}}
Fix $\epsilon>0$. We assume $i=1$. Given the index $j$ of the codeword of Tx~$1$, let both $\omega^{-}:=\omega^-_{1,j}$ and $\omega^{+}:=\omega^+_{1,j}$ be nonzero, $\omega^-\neq \omega^+$ and $\omega_{1,j}=0$.  The proof can be easily extended to the cases $\omega^-=\omega^+$ or $\omega^-\neq \omega^+,\omega_{1,j}\ge1$. Define the event $\mathcal{U}_n$ by
 \begin{eqnarray}
 \label{PTY11}
\mathcal{U}_n&:=&\left\{\boldsymbol{\tau}_{2}^{(\omega^{-})}+1<\boldsymbol{\tau}_1^{(j)}+n'+1\leq \boldsymbol{\tau}_{2}^{(\omega^{-})}+n'+n_2\right.\notag\\
&&\hskip5cm\left.\leq \boldsymbol{\tau}_{2}^{(\omega^{+})}+1\leq \boldsymbol{\tau}_{1}^{(j)}+n'+n_1+\boldsymbol{\tau}_{2}^{(\omega^{+})}+n'+n_2\right\}.
\end{eqnarray}
The probability of error in decoding the $j^{th}$ codeword of Tx~$1$ at Rx~1 is bounded as 
\begin{equation}
\label{ }
\mathbb{P}(\mathrm{error})\leq \mathbb{P}(\mathrm{error},\mathcal{U}_n)+\mathbb{P}(\mathcal{U}_n^c)\leq \mathbb{P}(\mathrm{error},\mathcal{U}_n)+\epsilon,
\end{equation}
where in the last step we have assumed $n$ is large enough so that $\mathbb{P}(\mathcal{U}_n^c)\leq \epsilon$. This is due to Proposition~\ref{prop_3} together with the fact that $\omega^-,\omega^+\neq 0$. Under the event $\mathcal{U}_n$, error can happen in two ways. The first case is when at least one of (\ref{mid1}), (\ref{mid2}) or (\ref{mid3}) is not satisfied for the actual transmitted codeword by Tx~$1$. We denote this error event by $\mathrm{error}_1$. The second case is when all of (\ref{mid1}), (\ref{mid2}) and (\ref{mid3}) are satisfied for a codeword that is different from the transmitted codeword by Tx~$1$. We denote this error event by $\mathrm{error}_2$. Then 
\begin{equation}
\label{errevr2}
\mathbb{P}(\mathrm{error},\mathcal{U}_n)\leq \mathbb{P}(\mathrm{error}_1,\mathcal{U}_n)+\mathbb{P}(\mathrm{error}_2,\mathcal{U}_n).
\end{equation}
Next, we address the two terms on the right side of (\ref{errevr2}) separately: 
\begin{itemize}
  \item The term $ \mathbb{P}(\mathrm{error}_1,\mathcal{U}_n)$: Here, we verify that (\ref{mid1}) occurs with high probability for the actual transmitted codeword in the asymptote of large $n$.  One can establish a similar result for (\ref{mid2}) and (\ref{mid3}) yielding $\lim_{n\to\infty}\mathbb{P}(\mathrm{error}_1,\mathcal{U}_n)=0$. Define the set $\widetilde{\mathcal{U}}_n$ by
  \begin{eqnarray}
  \label{mathb11}
\widetilde{\mathcal{U}}_n&:=&\{(t_1,t_2,t_3)\in \mathds{Z}^3: t_2+1<t_1+n'+1\leq t_2+n'+n_2\notag\\&&\hskip6.5cm\leq t_3+1\leq t_1+n'+n_1<t_3+n'+n_2\}.
\end{eqnarray}
Then 
\begin{eqnarray}
\mathcal{U}_n=\left\{ (\boldsymbol{\tau}_1^{(j)}, \boldsymbol{\tau}_{2}^{(\omega^{-})},\boldsymbol{\tau}_{2}^{(\omega^{+})})\in \widetilde{\mathcal{U}}_n\right\}.
\end{eqnarray}
 Assume $(\boldsymbol{s}_{1,l})_{l=0}^{n_1-1}$ is the $j^{th}$ codeword sent by Tx~$1$. The probability that (\ref{mid1}) does not occur for the actual transmitted codeword under $\mathcal{U}_n$ can be written as
     \begin{eqnarray}
     \label{yhnk11}
&&\mathbb{P}\left(\Big((\boldsymbol{s}_{1,l})_{l=0}^{\boldsymbol{\tau}_2^{(\omega^{-})}-\boldsymbol{\tau}_1^{(j)}+n_2-1},(\boldsymbol{y}_{1,l})_{l=\boldsymbol{\tau}_1^{(j)}+n'+1}^{\boldsymbol{\tau}_2^{(\omega^{-})}+n'+n_2}\Big)\notin A_{\epsilon}^{(\boldsymbol{\tau}_2^{(\omega^{-})}-\boldsymbol{\tau}_1^{(j)}+n_2)}[p^{(2)}],\,\mathcal{U}_n\right)\notag\\
&&=\sum_{(t_1,t_2,t_3)\in\,\widetilde{\mathcal{U}}_n}\mathbb{P}\left(\big((\boldsymbol{s}_{1,l})_{l=0}^{t_2-t_1+n_2-1},(\boldsymbol{y}_{1,l})_{l=t_1+n'+1}^{t_2+n'+n_2}\big)\notin A_{\epsilon}^{(t_2-t_1+n_2)}[p^{(2)}]\right)\notag\\
&&\hskip7cm \times \mathbb{P}\big((\boldsymbol{\tau}_1^{(j)}, \boldsymbol{\tau}_{2}^{(\omega^{-})},\boldsymbol{\tau}_{2}^{(\omega^{+})})=(t_1,t_2,t_3)\big).
\end{eqnarray}
For any $(t_1,t_2,t_3)\in \widetilde{\mathcal{U}}_n$, 
\begin{eqnarray}
\label{seq11}
&&\mathbb{P}\left(\big((\boldsymbol{s}_{1,l})_{l=0}^{t_2-t_1+n_2-1},(\boldsymbol{y}_{1,l})_{l=t_1+n'+1}^{t_2+n'+n_2}\big)\notin A_{\epsilon}^{(t_2-t_1+n_2)}[p^{(2)}]\right)\notag\\
&&\leq \mathbb{P}\bigg(\Big|\frac{1}{t_2-t_1+n_2}\sum_{l=0}^{t_2-t_1+n_2-1}\log p^{(2)}(\boldsymbol{s}_{1,l},\boldsymbol{y}_{1,l+t_1+n'+1})+h(p^{(2)})\Big|>\epsilon\bigg)\notag\\
&&\,\,\,\,\,\,\,+\mathbb{P}\bigg(\Big|\frac{1}{t_2-t_1+n_2}\sum_{l=0}^{t_2-t_1+n_2-1}\log p^{(2)}_{1}(\boldsymbol{s}_{1,l})+h(p^{(2)}_1)\Big|>\epsilon\bigg)\notag\\
&&\,\,\,\,\,\,\,+\mathbb{P}\bigg(\Big|\frac{1}{t_2-t_1+n_2}\sum_{l=0}^{t_2-t_1+n_2-1}\log p^{(2)}_{2}(\boldsymbol{y}_{1,l+t_1+n'+1})+h(p^{(2)}_2)\Big|>\epsilon\bigg),
\end{eqnarray}
where $p^{(2)}_1$ and $p^{(2)}_2$ are the first and second marginals of $p^{(2)}$, respectively. The three terms on the right side of (\ref{seq11}) can be treated similarly. Here, we only study the first term.  Let us write 
\begin{eqnarray}
\label{seq22}
&&\mathbb{P}\bigg(\Big|\frac{1}{t_2-t_1+n_2}\sum_{l=0}^{t_2-t_1+n_2-1}\log p^{(2)}(\boldsymbol{s}_{1,l},\boldsymbol{y}_{1,l+t_1+n'+1})+h(p^{(2)})\Big|>\epsilon\bigg)\notag\\
&&\leq \mathbb{P}\Big(\frac{1}{t_2-t_1+n_2}\sum_{l=0}^{t_2-t_1+n_2-1}\log p^{(2)}(\boldsymbol{s}_{1,l},\boldsymbol{y}_{1,l+t_1+n'+1})+h(p^{(2)})>\epsilon\Big)\notag\\
&&\,\,\,\,\,\,\,+\mathbb{P}\Big(\frac{1}{t_2-t_1+n_2}\sum_{l=0}^{t_2-t_1+n_2-1}\log p^{(2)}(\boldsymbol{s}_{1,l},\boldsymbol{y}_{1,l+t_1+n'+1})+h(p^{(2)})<-\epsilon\Big).
\end{eqnarray}
The random variables $\log p^{(2)}(\boldsymbol{s}_{1,l},\boldsymbol{y}_{1,l+t_1+n'+1})$ for $0\leq l\leq t_2-t_1+n_2-1$ are independent and identically distributed with expectation $-h(p^{(2)})$. Using Chernoff's bounding technique \cite{16} and for $r>0$, we can find an upper bound on the first term on the right side of (\ref{seq22}) as
 \begin{eqnarray}
 \label{siu11}
&&\mathbb{P}\Big(\frac{1}{t_2-t_1+n_2}\sum_{l=0}^{t_2-t_1+n_2-1}\log p^{(2)}(\boldsymbol{s}_{1,l},\boldsymbol{y}_{1,l+t_1+n'+1})+h(p^{(2)})>\epsilon\Big)\notag\\
&&\leq 2^{-r(t_2-t_1+n_2)(\epsilon-h(p^{(2)}))}\left(\mathds{E}\left[2^{r\log p^{(2)}(\boldsymbol{s}_{1,0},\boldsymbol{y}_{1,t_1+n'+1})}\right]\right)^{t_2-t_1+n_2}\notag\\
&&=2^{-r(t_2-t_1+n_2)(\epsilon-h(p^{(2)}))}\left(\mathds{E}\left[ \left(p^{(2)}(\boldsymbol{s}_{1,0},\boldsymbol{y}_{1,t_1+n'+1})\right)^r\right]\right)^{t_2-t_1+n_2}.
\end{eqnarray}
For notational simplicity and with a slight abuse of notation, let us write $\boldsymbol{y}_{1,t_1+n'+1}=\boldsymbol{s}_{1,0}+\sqrt{a_2}\,\boldsymbol{s}_{2}+\boldsymbol{z}_{1}$ where $\boldsymbol{s}_{2}\sim \mathrm{N}(0,\gamma_2)$ is a symbol of the $\omega^{-\,\,th}$ transmitted codeword by Tx~$2$ and $\boldsymbol{z}_1:=\boldsymbol{z}_{1,t_1+n'+1}\sim\mathrm{N}(0,1)$. We have 
\begin{eqnarray}
p^{(2)}(\boldsymbol{s}_{1,0},\boldsymbol{y}_{1,t_1+n'+1})&=&\mathrm{g}(\boldsymbol{s}_{1,0};\gamma_1)\mathrm{g}(\boldsymbol{y}_{1,t_1+n'+1}-\boldsymbol{s}_{1,0};1+a_2\gamma_2)\notag\\
&=&\mathrm{g}(\boldsymbol{s}_{1,0};\gamma_1)\mathrm{g}(\sqrt{a_2}\,\boldsymbol{s}_{2}+\boldsymbol{z}_{1};1+a_2\gamma_2)
\end{eqnarray}
and 
\begin{eqnarray}
\label{siu22}
\mathds{E}\left[ \left(p^{(2)}(\boldsymbol{s}_{1,0},\boldsymbol{y}_{1,t_1+n'+1})\right)^r\right]&=&\mathds{E}\left[\left(\mathrm{g}(\boldsymbol{s}_{1,0};\gamma_1)\right)^r\left(\mathrm{g}(\sqrt{a_2}\,\boldsymbol{s}_{2}+\boldsymbol{z}_{1};1+a_2\gamma_2)\right)^r\right]\notag\\
&=&\mathds{E}\left[\left(\mathrm{g}(\boldsymbol{s}_{1,0};\gamma_1)\right)^r\right]\mathds{E}\left[\left(\mathrm{g}(\sqrt{a_2}\,\boldsymbol{s}_{2}+\boldsymbol{z}_{1};1+a_2\gamma_2)\right)^r\right]\notag\\
&=&\frac{1}{(1+r)\left(2\pi\sqrt{\gamma_1(1+a_2\gamma_2)}\right)^r}\notag\\
&=&\frac{e^r}{(1+r)2^{rh(p^{(2)})}},
\end{eqnarray}
where the penultimate step is due to the fact that for $\boldsymbol{x}\sim\mathrm{N}(0,1)$ and any $u>-\frac{1}{2}$, we have $\mathds{E}[e^{-u\boldsymbol{x}^2}]=\frac{1}{\sqrt{1+2u}}$ and the last step is due to $h(p^{(2)})=\log(2\pi e\sqrt{\gamma_1(1+a_2\gamma_2)})$. By (\ref{siu11}) and (\ref{siu22}), 
\begin{eqnarray}
\label{siu333}
\mathbb{P}\Big(\frac{1}{t_2-t_1+n_2}\sum_{l=0}^{t_2-t_1+n_2-1}\log p^{(2)}(\boldsymbol{s}_{1,l},\boldsymbol{y}_{1,l+t_1+n'+1})+h(p^{(2)})>\epsilon\Big)\leq e^{-(t_2-t_1+n_2)u(r)},
\end{eqnarray}
where
\begin{equation}
\label{ }
u(r):=\ln(1+r)-r(1-\epsilon\ln 2),\,\,\,r>0.
\end{equation}
 Following a similar approach that led us to (\ref{siu333}), one can show that the second term on the right side of (\ref{seq22}) is bounded from above as 
\begin{eqnarray}
\label{siu444}
\mathbb{P}\Big(\frac{1}{t_2-t_1+n_2}\sum_{l=0}^{t_2-t_1+n_2-1}\log p^{(2)}(\boldsymbol{s}_{1,l},\boldsymbol{y}_{1,l+t_1+n'+1})+h(p^{(2)})<-\epsilon\Big)\leq e^{-(t_2-t_1+n_2)v(r)}
\end{eqnarray}
where
\begin{equation}
\label{ }
v(r):=\ln(1-r)+r(1+\epsilon\ln 2),\,\,\, 0<r<1.
\end{equation}
Regardless of the value of $\epsilon>0$, there always exists an $0<r_0<1$ such that $u(r), v(r)>0$ for $0<r<r_0$. It is understood that we take $r$ inside $(0,r_0)$. It is easy to see that for any $0<r<1$, $v(r)<u(r)$.\footnote{We have $v(r)<u(r)$ if and only if $w(r):=\ln(1+r)-\ln(1-r)-2r>0$. Note that $w(0)=0$ and $\frac{\mathrm{d}w}{\mathrm{d}r}=\frac{2r^2}{1-r^2}>0$ for any $0<r<1$. Therefore, $w(r)>0$ for any $0<r<1$ by the mean value theorem.} Using this fact together with (\ref{seq22}), (\ref{siu333}) and (\ref{siu444}),  
\begin{eqnarray}
\mathbb{P}\bigg(\Big|\frac{1}{t_2-t_1+n_2}\sum_{l=0}^{t_2-t_1+n_2-1}\log p^{(2)}(\boldsymbol{s}_{1,l},\boldsymbol{y}_{1,l+t_1+n'+1})+h(p^{(2)})\Big|>\epsilon\bigg)\leq 2e^{-(t_2-t_1+n_2)v(r)}.
\end{eqnarray}
 It can be shown similarly that the second and third terms on the right side of (\ref{seq11}) are bounded from above by $2e^{-(t_2-t_1+n_2)v(r)}$. Therefore, 
 \begin{eqnarray}
 \label{yhnk22}
\mathbb{P}\left(\Big((\boldsymbol{s}_{1,l})_{l=0}^{t_2-t_1+n_2-1},(\boldsymbol{y}_{1,l})_{l=t_1+n'+1}^{t_2+n'+n_2}\Big)\notin A_{\epsilon}^{(t_2-t_1+n_2)}[p^{(2)}]\right)\leq 6e^{-(t_2-t_1+n_2)v(r)}.
\end{eqnarray}
Using (\ref{yhnk22}) in (\ref{yhnk11}), 
\begin{eqnarray}
\label{bool11}
&&\mathbb{P}\Big(\big((\boldsymbol{s}_{1,l})_{l=0}^{\boldsymbol{\tau}_2^{(\omega^{-})}-\boldsymbol{\tau}_1^{(j)}+n_2-1},(\boldsymbol{y}_{1,l})_{l=\boldsymbol{\tau}_1^{(j)}+n'+1}^{\boldsymbol{\tau}_2^{(\omega^{-})}+n'+n_2}\big)\notin A_{\epsilon}^{(\boldsymbol{\tau}_2^{(\omega^{-})}-\boldsymbol{\tau}_1^{(j)}+n_2)}[p^{(2)}],\,\mathcal{U}_n\Big)\notag\\
&&\leq \sum_{(t_1,t_2,t_3)\in\,\widetilde{\mathcal{U}}_n}6e^{-(t_2-t_1+n_2)v(r)}\mathbb{P}((\boldsymbol{\tau}_1^{(j)}, \boldsymbol{\tau}_{2}^{(\omega^{-})},\boldsymbol{\tau}_{2}^{(\omega^{+})})=(t_1,t_2,t_3))\notag\\
&&\stackrel{(a)}{\leq}\sum_{(t_1,t_2,t_3)\in\mathds{Z}^3}6e^{-(t_2-t_1+n_2)v(r)}\mathbb{P}((\boldsymbol{\tau}_1^{(j)}, \boldsymbol{\tau}_{2}^{(\omega^{-})},\boldsymbol{\tau}_{2}^{(\omega^{+})})=(t_1,t_2,t_3))\notag\\
&&=6\mathds{E}\big[e^{-(\boldsymbol{\tau}_{2}^{(\omega^{-})}-\boldsymbol{\tau}_{1}^{(j)}+n_2)v(r)}\big]\notag\\
&&=6e^{-(\lfloor n \nu_2\rfloor-\lfloor n \nu_1\rfloor+n_2)v(r)}\mathds{E}\big[e^{-(\boldsymbol{\xi}_{2}^{(\omega^{-})}-\boldsymbol{\xi}_{1}^{(j)})v(r)}\big]\notag\\
&&=6e^{-(\lfloor n \nu_2\rfloor-\lfloor n \nu_1\rfloor+n_2)v(r)}\mathds{E}\big[e^{-v(r)\boldsymbol{\xi}_{2}^{(\omega^{-})}}\big]\mathds{E}\big[e^{v(r)\boldsymbol{\xi}_{1}^{(j)})}\big],
\end{eqnarray}
where in $(a)$ we have removed the constraint $(t_1,t_2,t_3)\in\widetilde{\mathcal{U}}_n$ and the last step is due to independence of $\boldsymbol{\xi}_{2}^{(\omega^{-})}$ and $\boldsymbol{\xi}_{1}^{(j)}$. Recalling the expression for the moment generating function of a negative Binomial  random variable\footnote{\label{ft11}The moment generating function of $\boldsymbol{x}\sim\mathrm{NB}(n,p)$ is given by $\mathds{E}[e^{t\boldsymbol{x}}]=\left(\frac{pe^t}{1-(1-p)e^t}\right)^n$ for $t<-\ln(1-p)$.}, we get 
\begin{equation}
\label{bool2}
\mathds{E}\Big[e^{-v(r)\boldsymbol{\xi}_{2}^{(\omega^{-})}}\Big]=\left(\frac{q_2e^{-v(r)}}{1-(1-q_2)e^{-v(r)}}\right)^{\frac{\omega^{-}\lfloor n\eta_2\rfloor}{k_2}}
\end{equation}
and
\begin{equation}
\label{bool3}
\mathds{E}\Big[e^{v(r)\boldsymbol{\xi}_{1}^{(j)}}\Big]=\left(\frac{q_1e^{v(r)}}{1-(1-q_1)e^{v(r)}}\right)^{\frac{j\lfloor n\eta_1\rfloor}{k_1}},
\end{equation}
where (\ref{bool3}) holds as long as $v(r)<-\ln(1-q)$. Since $\lim_{r\to0^+}v(r)=0$, one can make sure the constraint $v(r)<-\ln(1-q)$ holds by choosing $r$ small enough. By (\ref{bool11}), (\ref{bool2}) and (\ref{bool3}),
\begin{eqnarray}
\label{bool4}
&&\mathbb{P}\Big(\big((\boldsymbol{s}_{1,l})_{l=0}^{\boldsymbol{\tau}_2^{(\omega^{-})}-\boldsymbol{\tau}_1^{(j)}+n_2-1},(\boldsymbol{y}_{1,l})_{l=\boldsymbol{\tau}_1^{(j)}+n'+1}^{\boldsymbol{\tau}_2^{(\omega^{-})}+n'+n_2}\big)\notin A_{\epsilon}^{(\boldsymbol{\tau}_2^{(\omega^{-})}-\boldsymbol{\tau}_1^{(j)}+n_2)}[p^{(2)}],\,\mathcal{U}_n\Big)\notag\\
&&\leq 6e^{-(\lfloor n \nu_2\rfloor-\lfloor n \nu_1\rfloor+n_2)v(r)}\left(\frac{q_2e^{-v(r)}}{1-(1-q_2)e^{-v(r)}}\right)^{\frac{\omega^{-}\lfloor n\eta_2\rfloor}{k_2}}\left(\frac{q_1e^{v(r)}}{1-(1-q_1)e^{v(r)}}\right)^{\frac{j\lfloor n\eta_1\rfloor}{k_1}}.
\end{eqnarray}
Using the identity 
\begin{eqnarray}
\ln\frac{ae^{x}}{1-(1-a)e^{x}}=\frac{x}{a}+o(x)
\end{eqnarray}
for $0<a<1$, one can write (\ref{bool4}) as 
\begin{eqnarray}
\label{bool5}
&&\mathbb{P}\Big(\big((\boldsymbol{s}_{1,l})_{l=0}^{\boldsymbol{\tau}_2^{(\omega^{-})}-\boldsymbol{\tau}_1^{(j)}+n_2-1},(\boldsymbol{y}_{1,l})_{l=\boldsymbol{\tau}_1^{(j)}+n'+1}^{\boldsymbol{\tau}_2^{(\omega^{-})}+n'+n_2}\big)\notin A_{\epsilon}^{(\boldsymbol{\tau}_2^{(\omega^{-})}-\boldsymbol{\tau}_1^{(j)}+n_2)}[p^{(2)}],\,\mathcal{U}_n\Big)\notag\\
&&\leq 6e^{-(\lfloor n \nu_2\rfloor-\lfloor n \nu_1\rfloor+n_2)v(r)}e^{\frac{\omega^{-}\lfloor n\eta_2\rfloor}{k_2}(-\frac{v(r)}{q_2}+o(v(r)))}e^{\frac{j\lfloor n\eta_1\rfloor}{k_1}(\frac{v(r)}{q_1}+o(v(r)))}\notag\\
&&=e^{-nv(r)f(n)},
\end{eqnarray}
where
\begin{eqnarray}
f(n):=\frac{1}{n}\left(\frac{\omega^{-}\lfloor n\eta_2\rfloor}{\lambda_2}-\frac{j\lfloor n\eta_1\rfloor}{\lambda_1}+\lfloor n \nu_2\rfloor-\lfloor n \nu_1\rfloor+n_2-(\lfloor n\eta_1\rfloor+\lfloor n\eta_2\rfloor)\frac{o(v(r))}{v(r)}\right).
\end{eqnarray}
But, $\lim_{n\to\infty}f(n)=\omega^{-}\mu_2-j\mu_1+\nu_2-\nu_1+\theta_2+ (\eta_1+\eta_2)\frac{o(v(r))}{v(r)}$. By definition, $\omega^{-}$ satisfies $\omega^{-}\mu_2-j\mu_1+\nu_2-\nu_1+\theta_2>0$. Since $ \frac{o(v(r))}{v(r)}$ can be made arbitrarily small by choosing $r$ small enough, we conclude that $\lim_{n\to\infty}f(n)>0$. This together with (\ref{bool5}) implies that $e^{-nv(r)f(n)}$ decays exponentially with $n$ as desired. 
  \item The term $ \mathbb{P}(\mathrm{error}_2,\,\mathcal{U}_n)$: 
  Let $\delta>0$ and define 
  \begin{eqnarray}
\mathcal{V}_n:=\bigg\{\max\Big\{\Big|\frac{\boldsymbol{\tau}_{2}^{(\omega^-)}}{n}-(\omega^-\mu_2+\nu_2)\Big|, \Big|\frac{\boldsymbol{\tau}_{2}^{(\omega^+)}}{n}-(\omega^+\mu_2+\nu_2)\Big|\Big\}<\delta\bigg\}.
\end{eqnarray}
Also define $\widetilde{\mathcal{V}}_n$ by  
 \begin{eqnarray}
&&\widetilde{\mathcal{V}}_n:=\left\{(t_2,t_3)\in\mathbb{Z}^2: \max\Big\{\Big|\frac{t_2}{n}-(\omega^-\mu_2+\nu_2)\Big| , \Big|\frac{t_3}{n}-(\omega^+\mu_2+\nu_2)\Big|\Big\}<\delta\right\}.
\end{eqnarray}
We can write
\begin{eqnarray}
\mathbb{P}(\mathrm{error}_2,\mathcal{U}_n)\leq \mathbb{P}(\mathrm{error}_2,\,\mathcal{U}_n, \mathcal{V}_n)+\mathbb{P}(\mathcal{V}_n^{c}).
\end{eqnarray}
By SLLN, $\lim_{n\to\infty}\frac{\boldsymbol{\tau}_2^{(\omega^{-})}}{n}=\omega^-\mu_2+\nu_2$. Therefore, $\frac{\boldsymbol{\tau}_2^{(\omega^-)}}{n}$ also converges to $\omega^-\mu_2+\nu_2$ in probability and one can select $n$ large enough so that 
 \begin{eqnarray}
 \label{bial_11}
\mathbb{P}\Big(\Big|\frac{\boldsymbol{\tau}_2^{(\omega^-)}}{n}-(\omega^-\mu_2+\nu_2)\Big|\geq \delta\Big)<\epsilon/3.\end{eqnarray}
Similarly,  
\begin{eqnarray}
\mathbb{P}\Big(\Big|\frac{\boldsymbol{\tau}_2^{(\omega^+)}}{n}-(\omega^+\mu_2+\nu_2)\Big|\geq \delta\Big)<\epsilon/3
\end{eqnarray}
holds for large enough $n$. It follows that if $n$ is sufficiently large, then $\mathbb{P}(\mathcal{V}^c_n)<\epsilon$. To find an upper bound on $ \mathbb{P}(\mathrm{error}_2,\mathcal{U}_n, \mathcal{V}_n)$, let us label the messages of Tx~$1$ as message~$1$ to message~$2^{\lfloor n\eta_1\rfloor}$. Assume the $j^{th}$ transmitted message of Tx~$1$ is message~$1$ and $(\boldsymbol{s}_{1,l})_{l=0}^{n_1-1}$ is the codeword of user~1 corresponding to message~$2$. Recall that $ \mathbb{P}(\mathrm{error}_2,\mathcal{U}_n,\mathcal{V}_n)$ is the probability of the event that  a codeword different from the transmitted codeword satisfies (\ref{mid1}), (\ref{mid2}) and $(\ref{mid3})$. Then 
  \begin{eqnarray}
  \label{mas33}
\mathbb{P}(\mathrm{error}_2,\mathcal{U}_n,\mathcal{V}_n) \leq 2^{\lfloor n\eta_1\rfloor}\mathbb{P}\left(\mathcal{E}\bigcap\mathcal{F}\bigcap\mathcal{G}\bigcap \mathcal{U}_n\bigcap\mathcal{V}_n\right),
\end{eqnarray}
where 
\begin{eqnarray}
\label{mid111}
\mathcal{E}=\left\{\Big((\boldsymbol{s}_{1,l})_{l=0}^{\boldsymbol{\tau}_2^{(\omega^{-})}-\boldsymbol{\tau}_1^{(j)}+n_2-1},(\boldsymbol{y}_{1,l})_{l=\boldsymbol{\tau}_1^{(j)}+n'+1}^{\boldsymbol{\tau}_2^{(\omega^{-})}+n'+n_2}\Big)\in A_{\epsilon}^{(\boldsymbol{\tau}_2^{(\omega^{-})}-\boldsymbol{\tau}_1^{(j)}+n_2)}[p^{(2)}]\right\},
\end{eqnarray} 
\begin{eqnarray}
\label{mid222}
\mathcal{F}=\left\{\Big((\boldsymbol{s}_{1,l})_{l=\boldsymbol{\tau}_2^{(\omega^{-})}-\boldsymbol{\tau}_1^{(j)}+n_2}^{\boldsymbol{\tau}_2^{(\omega^{+})}-\boldsymbol{\tau}_1^{(j)}-n'-1},(\boldsymbol{y}_{1,l})_{l=\boldsymbol{\tau}_2^{(\omega^{-})}+n'+n_2+1}^{\boldsymbol{\tau}_2^{(\omega^{+})}}\Big)\in A_{\epsilon}^{(\boldsymbol{\tau}_2^{(\omega^{+})}-\boldsymbol{\tau}_2^{(\omega^{-})}-n'-n_2)}[p^{(1)}]\right\}
\end{eqnarray}
and 
\begin{eqnarray}
\label{mid333}
\mathcal{G}=\left\{\Big((\boldsymbol{s}_{1,l})_{l=\boldsymbol{\tau}_2^{(\omega^{+})}-\boldsymbol{\tau}_1^{(j)}-n'}^{n_1-1},(\boldsymbol{y}_{1,l})_{l=\boldsymbol{\tau}_2^{(\omega^{+})}+1}^{\boldsymbol{\tau}_1^{(j)}+n'+n_1}\Big)\in A_{\epsilon}^{(\boldsymbol{\tau}_1^{(j)}-\boldsymbol{\tau}_2^{(\omega^{+})}+n'+n_1)}[p^{(2)}]\right\}.
\end{eqnarray}
Recalling the definition of $\mathcal{V}_n$ in (\ref{mathb11}), 
\begin{eqnarray}
\label{pok}
&&\mathbb{P}\left(\mathcal{E}\bigcap\mathcal{F}\bigcap\mathcal{G}\bigcap \mathcal{U}_n\bigcap\mathcal{V}_n\right)\notag\\&&=\sum_{\substack{(t_1,t_2,t_3)\in\,\widetilde{\mathcal{U}}_n\\(t_2,t_3)\in\widetilde{\mathcal{V}}_n}}\mathbb{P}\left(\mathcal{E}\bigcap\mathcal{F}\bigcap\mathcal{G}\,\Big|\,(\boldsymbol{\tau}_1^{(j)},\boldsymbol{\tau}_2^{(\omega^{-})},\boldsymbol{\tau}_2^{(\omega^{+})})=(t_1,t_2,t_3)\right)\notag\\&&\hskip4cm\times\mathbb{P}\big((\boldsymbol{\tau}_1^{(j)},\boldsymbol{\tau}_2^{(\omega^{-})},\boldsymbol{\tau}_2^{(\omega^{+})})=(t_1,t_2,t_3)\big).
\end{eqnarray}
For any $(t_1,t_2,t_3)\in\widetilde{\mathcal{U}}_n$,
\begin{eqnarray}
\label{enuy11}
\mathbb{P}\left(\mathcal{E}\bigcap\mathcal{F}\bigcap\mathcal{G}\,\Big|\,(\boldsymbol{\tau}_1^{(j)},\boldsymbol{\tau}_2^{(\omega^{-})},\boldsymbol{\tau}_2^{(\omega^{+})})=(t_1,t_2,t_3)\right)=\mathbb{P}(\mathcal{E}_{t_1,t_2})\mathbb{P}(\mathcal{F}_{t_1,t_2,t_3})\mathbb{P}(\mathcal{G}_{t_1,t_3}),
\end{eqnarray}
where
\begin{eqnarray}
\label{mid111}
\mathcal{E}_{t_1,t_2}=\left\{\big((\boldsymbol{s}_{1,l})_{l=0}^{t_2-t_1+n_2-1},(\boldsymbol{y}_{1,l})_{l=t_1+n'+1}^{t_2+n'+n_2}\big)\in A_{\epsilon}^{(t_2-t_1+n_2)}[p^{(2)}]\right\},
\end{eqnarray} 
\begin{eqnarray}
\label{mid222}
\mathcal{F}_{t_1,t_2,t_3}=\left\{\big((\boldsymbol{s}_{1,l})_{l=t_2-t_1+n_2}^{t_3-t_1-n'-1},(\boldsymbol{y}_{1,l})_{l=t_2+n'+n_2+1}^{t_3}\big)\in A_{\epsilon}^{(t_3-t_2-n'-n_2)}[p^{(1)}]\right\}
\end{eqnarray}
and 
\begin{eqnarray}
\label{mid333}
\mathcal{G}_{t_1,t_3}=\left\{\big((\boldsymbol{s}_{1,l})_{l=t_3-t_1-n'}^{n_1-1},(\boldsymbol{y}_{1,l})_{l=t_3+1}^{t_1+n'+n_1}\big)\in A_{\epsilon}^{(t_1-t_3+n'+n_1)}[p^{(2)}]\right\}.
\end{eqnarray}
The reason behind (\ref{enuy11}) is that fixing $(\boldsymbol{\tau}_1^{(j)},\boldsymbol{\tau}_2^{(\omega^{-})},\boldsymbol{\tau}_2^{(\omega^{+})})=(t_1,t_2,t_3)$, the events $\mathcal{E}$, $\mathcal{F}$ and $\mathcal{G}$ are independent as they depend on non-overlapping segments of the sequences $(\boldsymbol{s}_{1,l})_{l=0}^{n_1-1}$ and $(\boldsymbol{y}_{1,l})_{l=t_1+n'+1}^{t_1+n'+n_1}$. Using the standard properties of jointly typical sequences \cite{16}, we have 
\begin{eqnarray}
\label{pok11}
\mathbb{P}(\mathcal{E}_{t_1,t_2})\leq 2^{-(t_2-t_1+n_2)( \psi_1-3\epsilon)},
\end{eqnarray}  
\begin{eqnarray}
\label{pok22}
\mathbb{P}(\mathcal{F}_{t_1,t_2,t_3})\leq 2^{-(t_3-t_2-n'-n_2)(\phi_1-3\epsilon)}
\end{eqnarray}
and
\begin{eqnarray}
\label{pok33}
\mathbb{P}(\mathcal{G}_{t_1,t_3})\leq 2^{-(t_1-t_3+n'+n_1)( \psi_1-3\epsilon)}.
\end{eqnarray}
By (\ref{pok}), (\ref{enuy11}), (\ref{pok11}), (\ref{pok22}) and (\ref{pok33}), 
\begin{eqnarray}
\label{mas22}
&&\mathbb{P}\left(\mathcal{E}\bigcap\mathcal{F}\bigcap\mathcal{G}\bigcap \mathcal{U}_n\bigcap\mathcal{V}_n\right)\notag\\&&\leq2^{-n'( \psi_1-\phi_1)}2^{-n_2(\psi_1-\phi_1)}2^{-n_1( \psi_1-3\epsilon)}\notag\\&&\hskip2cm\times \sum_{\substack{(t_1,t_2,t_3)\in\,\widetilde{\mathcal{U}}_n\\(t_2,t_3)\in\widetilde{\mathcal{V}}_n}}2^{-(t_3-t_2)(\phi_1-\psi_1)}\mathbb{P}\big((\boldsymbol{\tau}_1^{(j)},\boldsymbol{\tau}_2^{(\omega^{-})},\boldsymbol{\tau}_2^{(\omega^{+})})=(t_1,t_2,t_3)\big).
\end{eqnarray}
We can write
\begin{eqnarray}
\label{mas11}
&&\sum_{\substack{(t_1,t_2,t_3)\in\,\widetilde{\mathcal{U}}_n\\(t_2,t_3)\in\widetilde{\mathcal{V}}_n}}2^{-(t_3-t_2)(\phi_1-\psi_1)}\mathbb{P}\big((\boldsymbol{\tau}_1^{(j)},\boldsymbol{\tau}_2^{(\omega^{-})},\boldsymbol{\tau}_2^{(\omega^{+})})=(t_1,t_2,t_3)\big)\notag\\
&&\stackrel{(a)}{\leq}2^{-n(\omega^+\mu_2+\nu_2-\delta-\omega^-\mu_2-\nu_2-\delta)(\phi_1-\psi_1)} \sum_{\substack{(t_1,t_2,t_3)\in\,\widetilde{\mathcal{U}}_n\\(t_2,t_3)\in\widetilde{\mathcal{V}}_n}}\mathbb{P}\big((\boldsymbol{\tau}_1^{(j)},\boldsymbol{\tau}_2^{(\omega^{-})},\boldsymbol{\tau}_2^{(\omega^{+})})=(t_1,t_2,t_3)\big)\notag\\ 
&&\stackrel{(b)}{\leq}2^{-n(\mu_2-2\delta)(\phi_1-\psi_1)},
\end{eqnarray}
where $(a)$ is due to the fact that if $(t_2,t_3)\in \widetilde{\mathcal{V}}_n$, then $t_2\leq n(\omega^-\mu_2+\nu_2+\delta)$ and $t_3\geq n(\omega^+\mu_2+\nu_2-\delta)$ and $(b)$ is due to $\sum_{(t_1,t_2,t_3)\in\,\widetilde{\mathcal{U}}_n,\,\,(t_2,t_3)\in\widetilde{\mathcal{V}}_n}\mathbb{P}\big((\boldsymbol{\tau}_1^{(j)},\boldsymbol{\tau}_2^{(\omega^{-})},\boldsymbol{\tau}_2^{(\omega^{+})})=(t_1,t_2,t_3)\big)\leq 1$ and the fact that $\omega^+-\omega^-=1$. By (\ref{mas33}), (\ref{mas22}) and (\ref{mas11}), 
\begin{eqnarray}
\mathbb{P}(\mathrm{error}_2,\,\mathcal{U}_n,\mathcal{V}_n)\leq 2^{-nf(n)}, 
\end{eqnarray}
where
\begin{eqnarray}
f(n):=(\mu_2-2\delta)(\phi_1-\psi_1)+\frac{n_2}{n}(\psi_1-\phi_1)+\frac{n_1}{n}(\psi_1-3\epsilon)+\frac{n'}{n}(\psi_1-\phi_1)-\frac{\lfloor n\eta_1\rfloor}{n}.
\end{eqnarray}
We have $\lim_{n\to\infty}f(n)=\mu_2(\phi_1-\psi_1)+\theta_2( \psi_1-\phi_1)+\theta_1 \psi_1-\eta_1-2(\phi_1-\psi_1)\delta-3\theta_1\epsilon$. By~(\ref{help0}), $\mu_2(\phi_1-\psi_1)+\theta_2( \psi_1-\phi_1)+\theta_1 \psi_1-\eta_1>0$. Therefore, $\lim_{n\to\infty}f(n)>0$ for sufficiently small $\delta$~and~$\epsilon$  and $\mathbb{P}(\mathrm{error}_2,\,\mathcal{U}_n,\mathcal{V}_n)$ decays exponentially with $n$ as desired.
\end{itemize}
\section*{Appendix~H; Proof of Proposition~\ref{prop_5}}
Given the index $j$ of the codeword of Tx~$1$, we assume $\omega^{-}\neq 0$, $\omega^{+}=0$ and $\omega_{1,j}=0$. The proof can be easily extended for arbitrary $\omega_{1,j}\geq 1$. Define $\mathcal{U}_n$ by
 \begin{eqnarray}
\mathcal{U}_n:=\left\{\boldsymbol{\tau}_2^{(\omega^-)}+1<\boldsymbol{\tau}_1^{(j)}+n'+1\leq \boldsymbol{\tau}_{2}^{(\omega^{-})}+n'+n_2\leq \boldsymbol{\tau}_{1}^{(j)}+n'+n_1<\boldsymbol{\tau}_2^{\omega^-+1}+1\right\}.
\end{eqnarray}
Also, let 
 \begin{eqnarray}
\widetilde{\mathcal{U}}_n:=\left\{(t_1,t_2,t_3)\in\mathds{Z}^3: t_2+1<t_1+n'+1\leq t_2+n'+n_2\leq t_1+n'+n_1<t_3+1\right\}.
\end{eqnarray}
The probability of error in decoding the $j^{th}$ codeword of Tx~$1$ at Rx~1 is bounded as 
\begin{equation}
\label{ }
\mathbb{P}(\mathrm{error})\leq \mathbb{P}(\mathrm{error},\mathcal{U}_n)+\mathbb{P}(\mathcal{U}_n^c)\leq \mathbb{P}(\mathrm{error},\mathcal{U}_n)+\epsilon,
\end{equation}
where in the last step we have assumed $n$ is large enough so that $\mathbb{P}(\mathcal{U}_n^c)\leq \epsilon$. This follows by Proposition~\ref{prop_3} together with the facts that $\omega^-\neq 0$, $\omega^+=0$ and $\omega_{1,j}=0$. Under the event $\mathcal{U}_n$, error can happen in two possible ways. The first case is when at least one of (\ref{mid1_11}) or (\ref{mid2_22}) is not satisfied for the actual transmitted codeword by Tx~$1$. We denote this error event by $\mathrm{error}_1$. The second case is when both (\ref{mid1_11}) and (\ref{mid2_22}) are satisfied for a codeword that is different from the transmitted codeword by Tx~$1$. We denote this error event by $\mathrm{error}_2$. Then 
\begin{equation}
\label{errev2}
\mathbb{P}(\mathrm{error},\mathcal{U}_n)\leq \mathbb{P}(\mathrm{error}_1,\mathcal{U}_n)+\mathbb{P}(\mathrm{error}_2,\mathcal{U}_n).
\end{equation}
Analysis of $\mathbb{P}(\mathrm{error}_1,\mathcal{U}_n)$ is quite similar to the one offered in Appendix~G. Here, we only address $\mathbb{P}(\mathrm{error}_2,\mathcal{U}_n)$. Let $\delta>0$ and define 
  \begin{eqnarray}
\mathcal{V}_n:=\bigg\{\max\Big\{\Big|\frac{\boldsymbol{\tau}_1^{(j)}}{n}-(j\mu_1+\nu_1)\Big|,\Big|\frac{\boldsymbol{\tau}_{2}^{(\omega^-)}}{n}-(\omega^-\mu_2+\nu_2)\Big|\Big\}<\delta \bigg\}.\notag\\
\end{eqnarray}
Also, let
 \begin{eqnarray}
\widetilde{\mathcal{V}}_n:=\bigg\{(t_1,t_2)\in\mathds{Z}^2: \max\Big\{\Big|\frac{t_1}{n}-(j\mu_1+\nu_1)\Big|,\Big|\frac{t_2}{n}-(\omega^-\mu_2+\nu_2)\Big|\Big\}<\delta\bigg\}.
\end{eqnarray}
We can write
\begin{eqnarray}
\mathbb{P}(\mathrm{error}_2,\mathcal{U}_n)\leq \mathbb{P}(\mathrm{error}_2,\mathcal{U}_n, \mathcal{V}_n)+\mathbb{P}(\mathcal{V}_n^{c})\leq \mathbb{P}(\mathrm{error}_2,\mathcal{U}_n, \mathcal{V}_n)+\epsilon,
\end{eqnarray}
where the last step we are assuming that $n$ is large enough such that $\mathbb{P}(\mathcal{V}_n^{c})\leq \epsilon$ following a similar reasoning in (\ref{bial_11}) in Appendix~F. Let us label the messages of Tx~$1$ as message~$1$ to message~$2^{\lfloor n\eta_1\rfloor}$. Assume the $j^{th}$ transmitted message of Tx~$1$ is message~$1$ and $(\boldsymbol{s}_{1,l})_{l=0}^{n_1-1}$ is the codeword corresponding to message~$2$. Then 
  \begin{eqnarray}
  \label{pok_1122}
\mathbb{P}(\mathrm{error}_2,\mathcal{U}_n,\mathcal{V}_n) \leq 2^{\lfloor n\eta_1\rfloor}\mathbb{P}\left(\mathcal{E}\bigcap\mathcal{F}\bigcap \mathcal{U}_n\bigcap\mathcal{V}_n\right),
\end{eqnarray}
where 
\begin{equation}
\label{ }
\mathcal{E}=\left\{\Big((\boldsymbol{s}_{1,l})_{l=0}^{\boldsymbol{\tau}_2^{(\omega^{-})}-\boldsymbol{\tau}_1^{(j)}+n_2-1},(\boldsymbol{y}_{1,l})_{l=\boldsymbol{\tau}_1^{(j)}+n'+1}^{\boldsymbol{\tau}_2^{(\omega^{-})}+n'+n_2}\Big)\in A_{\epsilon}^{(\boldsymbol{\tau}_2^{(\omega^{-})}-\boldsymbol{\tau}_1^{(j)}+n_2)}[p^{(2)}]\right\}
\end{equation}
and 
\begin{eqnarray}
\mathcal{F}=\left\{\Big((\boldsymbol{s}_{1,l})_{l=\boldsymbol{\tau}_2^{(\omega^{-})}-\boldsymbol{\tau}_1^{(j)}+n_2}^{n_1-1},(\boldsymbol{y}_{1,l})_{l=\boldsymbol{\tau}_2^{(\omega^{-})}+n'+n_2+1}^{\boldsymbol{\tau}_1^{(j)}+n'+n_1}\Big)\in A_{\epsilon}^{(\boldsymbol{\tau}_1^{(j)}-\boldsymbol{\tau}_2^{(\omega^{-})}+n_1-n_2)}[p^{(1)}]\right\}.
\end{eqnarray}
Then 
\begin{eqnarray}
\label{miss11}
\mathbb{P}\left(\mathcal{E}\bigcap\mathcal{F}\bigcap \mathcal{U}_n\bigcap\mathcal{V}_n\right)&=&\sum_{\substack{(t_1,t_2,t_3)\in\,\widetilde{\mathcal{U}}_n\\(t_1,t_2)\in\widetilde{\mathcal{V}}_n}}\mathbb{P}\left(\mathcal{E}\bigcap\mathcal{F}\,\big|\,(\boldsymbol{\tau}_1^{(j)},\boldsymbol{\tau}_2^{(\omega^{-})},\boldsymbol{\tau}_2^{(\omega^{-}+1)})=(t_1,t_2,t_3)\right)\notag\\
&&\hskip2.5cm\times\mathbb{P}\big((\boldsymbol{\tau}_1^{(j)},\boldsymbol{\tau}_2^{(\omega^{-})},\boldsymbol{\tau}_2^{(\omega^{-}+1)})=(t_1,t_2,t_3)\big).
\end{eqnarray}
For any $(t_1,t_2,t_3)\in\,\widetilde{\mathcal{U}}_n$,
\begin{equation}
\label{miss111}
\mathbb{P}\Big(\mathcal{E}\bigcap\mathcal{F}\,\big|\,(\boldsymbol{\tau}_1^{(j)},\boldsymbol{\tau}_2^{(\omega^{-})},\boldsymbol{\tau}_2^{(\omega^{-}+1)})=(t_1,t_2,t_3)\Big)=\mathbb{P}(\mathcal{E}_{t_1,t_2})\mathbb{P}(\mathcal{F}_{t_1,t_2}),
\end{equation}
where
\begin{equation}
\label{ }
\mathcal{E}_{t_1,t_2}=\left\{\Big((\boldsymbol{s}_{1,l})_{l=0}^{t_2-t_1+n_2-1},(\boldsymbol{y}_{1,l})_{l=t_1+n'+1}^{t_2+n'+n_2}\Big)\in A_{\epsilon}^{(t_2-t_1+n_2)}[p^{(2)}]\right\}
\end{equation}
and
\begin{eqnarray}
\mathcal{F}=\left\{\Big((\boldsymbol{s}_{1,l})_{l=t_2-t_1+n_2}^{n_1-1},(\boldsymbol{y}_{1,l})_{l=t_2+n'+n_2+1}^{t_1+n'+n_1}\Big)\in A_{\epsilon}^{(t_1-t_2+n_1-n_2)}[p^{(1)}]\right\}.
\end{eqnarray}
Using the standard properties of jointly typical sequences \cite{16}, we have
\begin{eqnarray}
\label{miss22}
\mathbb{P}(\mathcal{E}_{t_1,t_2})\leq 2^{-(t_2-t_1+n_2)( \psi_1-3\epsilon)}
\end{eqnarray} 
and 
\begin{equation}
\label{miss33}
\mathbb{P}(\mathcal{F}_{t_1,t_2})\leq 2^{-(t_1-t_2+n_1-n_2)( \phi_1-3\epsilon)}.
\end{equation}
By (\ref{miss11}), (\ref{miss111}), (\ref{miss22}) and (\ref{miss33}), 
\begin{eqnarray}
\label{miss444}
&&\mathbb{P}\left(\mathcal{E}\bigcap\mathcal{F}\bigcap \mathcal{U}_n\bigcap\mathcal{V}_n\right)\leq  2^{-n_1( \phi_1-3\epsilon)}2^{-n_2(  \psi_1-\phi_1)}\notag\\&&\hskip2.5cm\times\sum_{\substack{(t_1,t_2,t_3)\in\,\widetilde{\mathcal{U}}_n\\(t_1,t_2)\in\widetilde{\mathcal{V}}_n}}2^{-(t_1-t_2)( \phi_1-  \psi_1)}\mathbb{P}\big((\boldsymbol{\tau}_1^{(j)},\boldsymbol{\tau}_2^{(j^{-})},\boldsymbol{\tau}_2^{(\omega^{-}+1)})=(t_1,t_2,t_3)\big).
\end{eqnarray}
 For any $(t_1,t_2)\in \widetilde{\mathcal{V}}_n$, we have $t_1\geq n(j\mu_1+\nu_1-\delta)$ and $t_2\leq n(\omega^-\mu_2+\nu_2+\delta)$. Using these bounds in (\ref{miss444}), we get 
\begin{eqnarray}
\label{miss44}
\mathbb{P}\left(\mathcal{E}\bigcap\mathcal{F}\bigcap \mathcal{U}_n\bigcap\mathcal{V}_n\right)\leq 2^{-n_1( \phi_1-3\epsilon)}2^{-n_2(\psi_1-\phi_1)}2^{-n(j\mu_1-\omega^-\mu_2+\nu_1-\nu_2-2\delta)(\phi_1-\psi_1)}.
\end{eqnarray}
By (\ref{pok_1122}) and (\ref{miss44}), 
\begin{eqnarray}
\mathbb{P}(\mathrm{error}_2,\,\mathcal{U}_n,\mathcal{V}_n)\leq 2^{-nf(n)},
\end{eqnarray}
 where 
 \begin{eqnarray}
f(n)=(j\mu_1-\omega^-\mu_2+\nu_1-\nu_2-2\delta)(\phi_1-\psi_1)+\frac{n_1}{n}( \phi_1-3\epsilon)+\frac{n_2}{n}(\psi_1-\phi_1)-\frac{\lfloor n\eta_1\rfloor}{n}.
\end{eqnarray}
We have $\lim_{n\to\infty}f(n)=(j\mu_1-\omega^-\mu_2+\nu_1-\nu_2)(\phi_1-\psi_1)+\theta_1\phi_1+\theta_2(\psi_1-\phi_1)-\eta_1-2(\phi_1-\psi_1)\delta-3\theta_1\epsilon$. By~(\ref{help1}), $(j\mu_1-\omega^-\mu_2+\nu_1-\nu_2)(\phi_1-\psi_1)+\theta_1\phi_1+\theta_2(\psi_1-\phi_1)-\eta_1>0$. Therefore, $\lim_{n\to\infty}f(n)>0$ for sufficiently small $\delta$~and~$\epsilon$  and $\mathbb{P}(\mathrm{error}_2,\,\mathcal{U}_n,\mathcal{V}_n)$ decays exponentially with $n$ as desired.
\section*{Appendix~I}
We need the following lemma:
\begin{lem}
\label{lem_7}
Let $m,n$ be positive integers. The number of non-decreasing sequences of length $n$ whose entries are among the numbers $1,2,\cdots, m$ is $\binom{m+n-1}{n}$.
\end{lem}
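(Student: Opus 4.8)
The plan is to prove Lemma~\ref{lem_7} by exhibiting an explicit bijection between the set of non-decreasing sequences of length $n$ with entries in $\{1,2,\dots,m\}$ and the set of $n$-element subsets of an $(m+n-1)$-element ground set, since the latter has cardinality $\binom{m+n-1}{n}$ by the standard definition of binomial coefficients. First I would fix notation: let $\mathcal{D}_{m,n}$ denote the collection of sequences $(a_1,a_2,\dots,a_n)$ with $1\le a_1\le a_2\le\cdots\le a_n\le m$, and let $\binom{[m+n-1]}{n}$ denote the collection of $n$-element subsets of $\{1,2,\dots,m+n-1\}$.

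The key step is the map $\Phi:\mathcal{D}_{m,n}\to\binom{[m+n-1]}{n}$ defined by
\begin{equation}
\Phi(a_1,a_2,\dots,a_n):=\{a_1,\,a_2+1,\,a_3+2,\,\dots,\,a_n+(n-1)\}=\{a_l+(l-1):1\le l\le n\}.
\end{equation}
I would verify three things in order. (i) \emph{Well-definedness}: since $1\le a_l\le m$, we have $1\le a_l+(l-1)\le m+n-1$, so each element lies in the ground set; and since $a_l\le a_{l+1}$ implies $a_l+(l-1)<a_{l+1}+l$, the $n$ values are strictly increasing, hence distinct, so $\Phi(a_1,\dots,a_n)$ is genuinely an $n$-element subset. (ii) \emph{Injectivity}: the values $a_l+(l-1)$ listed in increasing order recover each $a_l$ by subtracting $l-1$, so $\Phi$ is one-to-one. (iii) \emph{Surjectivity}: given any subset $\{b_1<b_2<\cdots<b_n\}\subseteq\{1,\dots,m+n-1\}$, set $a_l:=b_l-(l-1)$; then $a_{l+1}-a_l=b_{l+1}-b_l-1\ge 0$ so the $a_l$ are non-decreasing, $a_1=b_1\ge 1$, and $a_n=b_n-(n-1)\le (m+n-1)-(n-1)=m$, so $(a_1,\dots,a_n)\in\mathcal{D}_{m,n}$ and clearly $\Phi(a_1,\dots,a_n)=\{b_1,\dots,b_n\}$.

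I expect no serious obstacle here; the only point that deserves care is making sure the ``shift by $l-1$'' is applied consistently so that weak inequalities among the $a_l$ correspond exactly to strict inequalities among the shifted values, which is precisely the classical ``stars and bars'' correspondence. Having established that $\Phi$ is a bijection, we conclude $|\mathcal{D}_{m,n}|=\left|\binom{[m+n-1]}{n}\right|=\binom{m+n-1}{n}$, which is the claim. An alternative, should a purely combinatorial identity be preferred, is to note that a non-decreasing sequence is determined by the multiplicities $c_i\ge 0$ of each value $i\in\{1,\dots,m\}$ with $\sum_{i=1}^m c_i=n$, and the number of such compositions is $\binom{m+n-1}{n}$ by stars and bars; but the explicit bijection above is cleaner and self-contained, so that is the route I would take.
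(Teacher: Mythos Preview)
Your proof is correct. Both you and the paper argue by bijection, but the target sets differ: you map the non-decreasing sequence $(a_1,\dots,a_n)$ directly to the $n$-element subset $\{a_l+(l-1)\}$ of $\{1,\dots,m+n-1\}$ via the shift, whereas the paper maps it to the tuple of consecutive differences $(y_1,\,y_2-y_1,\,\dots,\,y_n-y_{n-1},\,m-y_n)$, i.e.\ to a composition of $m$ into $n+1$ nonnegative parts with first part positive, and then invokes the known count $\binom{m+n-1}{n}$ for such compositions. Your route is slightly more self-contained, since the binomial coefficient appears immediately as the number of subsets rather than through an auxiliary stars-and-bars fact; the paper's route is essentially the ``multiplicities'' alternative you sketched at the end, except with the roles of $m$ and $n$ interchanged (gaps summing to $m$ rather than multiplicities summing to $n$).
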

\begin{proof}
Define the sets $\mathcal{A}$ and $\mathcal{B}$  by 
\begin{eqnarray}
\mathcal{A}:=\Big\{(x_1,\cdots, x_{n+1}): \textrm{$x_1,\cdots, x_{n+1}$ are integers}, x_1\geq 1, x_2,\cdots, x_{n+1}\geq 0,  x_1+\cdots+x_{n+1}=m\Big\}\notag\\
\end{eqnarray}
and
 \begin{eqnarray}
\mathcal{B}:=\Big\{(y_1,\cdots, y_n): \textrm{$y_1,\cdots, y_n$ are integers}, 1\leq y_1\leq y_2\leq \cdots \le y_n\leq m\Big\},
\end{eqnarray}
respectively. Define the map 
\begin{eqnarray}
\begin{array}{c}
      f:\mathcal{A}\to \mathcal{B}    \\
         (x_1,\cdots, x_{n+1})\mapsto (x_1,x_1+x_2, x_1+x_2+x_3,\cdots, x_1+\cdots+x_n).
\end{array}
\end{eqnarray}
The codomain of the map $f$ is as promised because for any $(x_1,\cdots, x_{n+1})\in\mathcal{A}$, $1\leq x_1\leq x_1+x_2\leq x_1+x_2+x_3\leq \cdots\leq x_1+\cdots+x_n\leq m$. We make the following observations:
\begin{itemize}
  \item The map $f$ is one-to-one. In fact, let $(x_1,\cdots, x_{n+1}), (x'_1,\cdots, x'_{n+1})\in\mathcal{A}$ and $f(x_1,\cdots, x_{n+1})=f(x'_1,\cdots, x'_{n+1})$. Then 
\begin{eqnarray}
\begin{array}{c}
      x_1=x'_1    \\
      x_1+x_2=x'_1+x'_2\\
      \vdots\\
      x_1+\cdots+x_n=x'_1+\cdots+x'_n
\end{array}.
\end{eqnarray}
It follows that $x_i=x'_i$ for any $1\leq i\leq n$ and hence, $x_{n+1}=m-\sum_{i=1}^n x_i=m-\sum_{i=1}^n x'_i=x'_{n+1}$. We conclude that $(x_1,\cdots, x_{n+1})=(x'_1,\cdots, x'_{n+1})$.
  \item The map $f$ is onto. To see this, let $(y_1,\cdots, y_n)\in \mathcal{B}$. Define $x_1:=y_1$, $x_{i}:=y_i-y_{i-1}$ for any $2\leq i\leq n$ and $x_{n+1}=m-\sum_{i=1}^{n} x_i$. Then it is easy to see that $(x_1,\cdots, x_{n+1})\in \mathcal{A}$ and $f(x_1,\cdots, x_{n+1})=(y_1,\cdots, y_n)$.
\end{itemize} 
It follows that $f$ is a bijection between $\mathcal{A}$ and $\mathcal{B}$ and hence, $|\mathcal{A}|=|\mathcal{B}|$. But, $|\mathcal{A}|$ is exactly the number of solutions for the tuples $(z_1,\cdots, z_{n+1})$ where $z_1:=x_1, z_2:=x_2+1\cdots, z_{n+1}:=x_{n+1}+1$ are positive integers and $z_1+\cdots+z_{n+1}=m+n$. This number is known to be $\binom{m+n-1}{n}$.
\end{proof}
Note that $\{(j,u_j,v_j): 1\leq j\leq N_2\}$ is a state if and only if 
\begin{equation}
\label{ }
1\leq i_1\leq j_1\leq i_2\leq j_2\leq \cdots\leq i_{N_2-1}\leq j_{N_2-1}\leq i_{N_2}\leq j_{N_2}\leq 2N_1+1,
\end{equation} 
i.e., $(i_1,j_1,i_2,j_2,\cdots, i_{N_2},j_{N_2})$ is a non-decreasing sequence of integers whose entries are among the numbers $1,2,\cdots,2N_1+1$. By Lemma~\ref{lem_7}, the number of such sequences is $\binom{2N_1+2N_2}{2N_2}$. 
\section*{Appendix~J; Proof of Proposition~\ref{prop_8}}
As shown earlier in (\ref{dim_11}), we only need to assume $j^*=1$. Then  $\mathscr{A}_{1}=\{R>\lambda: \alpha<\mu<\alpha+\theta\}=\left(\lambda,\big(1+\frac{\alpha}{\theta}\big)\lambda\right)$, $\mathscr{C}_{1}=\{R>\lambda: \mu>\alpha+\theta\}=\left(\big(1+\frac{\alpha}{\theta}\big)\lambda,\infty\right)$ and $\mathscr{B}_{1}=\mathscr{D}_{1}=\emptyset$. For notational simplicity, we show $\phi_\gamma$ and $\psi_{\gamma}$ by $\kappa$ and $\kappa'$, respectively. It is beneficial to  our presentation to write $\mathcal{P}(x,y;\gamma)$ in (\ref{vart4}) as  
\begin{eqnarray}
\mathcal{P}(x,y;\gamma)=\widetilde{\mathcal{P}}(x,y;\gamma)\bigcap\mathcal{U}(\gamma),
\end{eqnarray}
where
\begin{eqnarray}
\widetilde{\mathcal{P}}(x,y;\gamma):=\left\{R_c>\lambda: \big(1-\frac{x}{\lambda}(\kappa-\kappa')\big)R_c<\kappa'-(\kappa-\kappa')\frac{y}{\theta} \right\}
\end{eqnarray}
and
\begin{eqnarray}
\mathcal{U}(\gamma):=\left\{R_c>\lambda: \gamma\leq \big(\frac{1}{N}+\frac{R_c}{\lambda}\big)P \right\}.
\end{eqnarray}
 We have
  \begin{equation}
\label{huber_0}
\mathcal{R}_1=(\mathcal{R}_{\mathrm{sym}}\bigcap\mathscr{A}_{1})\bigcup(\mathcal{R}_{\mathrm{sym}}\bigcap\mathscr{C}_{1}),
\end{equation}
where
\begin{equation}
\label{huber_1}
\mathcal{R}_{\mathrm{sym}}\bigcap\mathscr{A}_{1}=\bigcup_{\gamma\geq 0}\left(\mathcal{U}(\gamma)\bigcap\widetilde{\mathcal{P}}(1,\theta;\gamma)\bigcap\widetilde{\mathcal{P}}(0,-\alpha;\gamma)\bigcap\mathscr{A}_{1}\right)
\end{equation}
and
\begin{eqnarray}
\label{huber_2}
\mathcal{R}_{\mathrm{sym}}\bigcap \mathscr{C}_{1}&=&\bigcup_{\gamma\ge0}\left(\mathcal{U}(\gamma)\bigcap\widetilde{\mathcal{P}}(0,-\alpha;\gamma)\bigcap\mathscr{C}_{1}\right),
\end{eqnarray}
by (\ref{PERT1}) and (\ref{PERT3}), respectively.  We have
\begin{eqnarray}
\label{huber_012}
\widetilde{\mathcal{P}}(0,-\alpha;\gamma)\bigcap\mathscr{C}_{1}=\Big(\big(1+\frac{\alpha}{\theta}\big)\lambda,\kappa'+\frac{\alpha}{\theta}(\kappa-\kappa')\Big).
\end{eqnarray}
To compute the right side of (\ref{huber_1}), it is enough to note that 
\begin{eqnarray}
\label{huber_5}
\widetilde{\mathcal{P}}(1,\theta;\gamma)&=&\Big\{R>\lambda: \big(1-\frac{1}{\lambda}(\kappa-\kappa')\big)R<2\kappa'-\kappa\Big\}\notag\\&=&\left\{\begin{array}{cc}
     \big(\lambda,\frac{2\kappa'-\kappa}{1-\frac{1}{\lambda}(\kappa-\kappa')}\big)   &  \kappa-\kappa'<\lambda<\kappa'\\
      \Big(\max\left\{\lambda,\frac{2\kappa'-\kappa}{1-\frac{1}{\lambda}(\kappa-\kappa')}\right\},\infty\Big)   &  \kappa\ge2\kappa',\,\,\kappa-\kappa'>\lambda\\
     (\lambda,\infty) & \kappa\le2\kappa',\,\,\,\kappa-\kappa'>\lambda\\
     \emptyset &\textrm{otherwise}
\end{array}\right.
\end{eqnarray}
and
\begin{eqnarray}
\label{Code_2}
\widetilde{\mathcal{P}}(0,-\alpha;\gamma)\bigcap\mathscr{A}_{1}=\Big(\lambda, \min\Big\{\big(1+\frac{\alpha}{\theta}\big)\lambda, \kappa'+\frac{\alpha}{\theta}(\kappa-\kappa')\Big\}\Big).
\end{eqnarray}
Having~(\ref{huber_5}) and (\ref{Code_2}), we can find $\mathcal{R}_{\mathrm{sym}}\bigcap\mathscr{A}_{1}$ in (\ref{huber_1}) which together with $\mathcal{R}_{\mathrm{sym}}\bigcap \mathscr{C}_{1}$ in (\ref{huber_2}) and (\ref{huber_012}) complete the description of $\mathcal{R}_1$ in (\ref{huber_0}). To simplify our computations, let us consider two cases:
\begin{enumerate}
  \item Assume 
  \begin{equation}
\label{code_1111}
\big(1+\frac{\alpha}{\theta}\big)\lambda<\kappa'+\frac{\alpha}{\theta}(\kappa-\kappa').
\end{equation}
Using this in (\ref{Code_2}), we see that $\widetilde{\mathcal{P}}(0,-\alpha;\gamma)\bigcap\mathscr{A}_{1}=\big(\lambda, \big(1+\frac{\alpha}{\theta}\big)\lambda\big)$ and we get
\begin{eqnarray}
\label{liver_1}
\widetilde{\mathcal{P}}(1,\theta;\gamma)\bigcap\widetilde{\mathcal{P}}(0,-\alpha;\gamma)\bigcap\mathscr{A}_{1}=\left\{\begin{array}{cc}
     \big(\lambda,\min\big\{(1+\frac{\alpha}{\theta})\lambda,\frac{2\kappa'-\kappa}{1-\frac{1}{\lambda}(\kappa-\kappa')}\big)\big\}   &  \kappa-\kappa'<\lambda<\kappa'\\
      \Big(\max\left\{\lambda,\frac{2\kappa'-\kappa}{1-\frac{1}{\lambda}(\kappa-\kappa')}\right\},(1+\frac{\alpha}{\theta})\lambda\Big)   &  \kappa\ge2\kappa',\,\,\kappa-\kappa'>\lambda\\
     (\lambda,(1+\frac{\alpha}{\theta})\lambda) & \kappa\le2\kappa',\,\,\,\kappa-\kappa'>\lambda\\
     \emptyset &\textrm{otherwise}
\end{array}\right..\notag\\
\end{eqnarray}
Simple algebra shows that 
\begin{eqnarray}
\label{sard_1}
&&\big(\lambda-(\kappa-\kappa')\big)\Big(\kappa'+\frac{\alpha}{\theta}(\kappa-\kappa')-\frac{2\kappa'-\kappa}{1-\frac{1}{\lambda}(\kappa-\kappa')}\Big)\notag\\&&=(\kappa-\kappa')\big((1+\frac{\alpha}{\theta})\lambda-(\kappa'+\frac{\alpha}{\theta}(\kappa-\kappa'))\big).
\end{eqnarray}
By (\ref{code_1111}) the right side of (\ref{sard_1}) is negative. Therefore, 
\begin{eqnarray}
\kappa-\kappa'<\lambda\implies \kappa'+\frac{\alpha}{\theta}(\kappa-\kappa')<\frac{2\kappa'-\kappa}{1-\frac{1}{\lambda}(\kappa-\kappa')}\implies \big(1+\frac{\alpha}{\theta}\big)\lambda <\frac{2\kappa'-\kappa}{1-\frac{1}{\lambda}(\kappa-\kappa')},
\end{eqnarray}
where in the last step we use (\ref{code_1111}). Therefore, the interval in the first row of (\ref{liver_1}) becomes $(\lambda,(1+\frac{\alpha}{\theta})\lambda)$, i.e., 
\begin{eqnarray}
\label{liver_2}
\widetilde{\mathcal{P}}(1,\theta;\gamma)\bigcap\widetilde{\mathcal{P}}(0,-\alpha;\gamma)\bigcap\mathscr{A}_{1}=\left\{\begin{array}{cc}
     \big(\lambda,(1+\frac{\alpha}{\theta})\lambda)  &  \kappa-\kappa'<\lambda<\kappa'\\
      \Big(\max\left\{\lambda,\frac{2\kappa'-\kappa}{1-\frac{1}{\lambda}(\kappa-\kappa')}\right\},(1+\frac{\alpha}{\theta})\lambda\Big)   &  \kappa\ge2\kappa',\,\,\kappa-\kappa'>\lambda\\
     (\lambda,(1+\frac{\alpha}{\theta})\lambda) & \kappa\le2\kappa',\,\,\,\kappa-\kappa'>\lambda\\
     \emptyset &\textrm{otherwise}
\end{array}\right..\notag\\
\end{eqnarray}
It is notable that in the second line in (\ref{liver_2}) it is always true that $\frac{2\kappa'-\kappa}{1-\frac{1}{\lambda}(\kappa-\kappa')}<(1+\frac{\alpha}{\theta})\lambda$ provided  $\kappa-\kappa'>\lambda$ and hence, the interval $\big(\max\big\{\lambda,\frac{2\kappa'-\kappa}{1-\frac{1}{\lambda}(\kappa-\kappa')}\big\},(1+\frac{\alpha}{\theta})\lambda\big)$ is nonempty.\footnote{Multiplying both sides of $\frac{2\kappa'-\kappa}{1-\frac{1}{\lambda}(\kappa-\kappa')}<(1+\frac{\alpha}{\theta})\lambda$ by the negative number $1-\frac{1}{\lambda}(\kappa-\kappa')$ yields $\big(1+\frac{\alpha}{\theta}\big)\lambda<\kappa'+\frac{\alpha}{\theta}(\kappa-\kappa')$ which is our assumption in (\ref{code_1111}).}
  \item Assume
   \begin{equation}
\label{rent_11}
\big(1+\frac{\alpha}{\theta}\big)\lambda\ge\kappa'+\frac{\alpha}{\theta}(\kappa-\kappa').
\end{equation}
Then 
\begin{equation}
\label{ }
\widetilde{\mathcal{P}}(0,-\alpha;\gamma)\bigcap\mathscr{C}_{1}=\emptyset.
\end{equation}
 By (\ref{Code_2}) and (\ref{rent_11}), $\widetilde{\mathcal{P}}(0,-\alpha;\gamma)\bigcap\mathscr{A}_{1}=\big(\lambda, \kappa'+\frac{\alpha}{\theta}(\kappa-\kappa')\big)$. Using this together with (\ref{huber_1}) and (\ref{huber_5}),   
\begin{eqnarray} 
\label{Sard_1111}
&&\widetilde{\mathcal{P}}(1,\theta;\gamma)\bigcap\widetilde{\mathcal{P}}(0,-\alpha;\gamma)\bigcap\mathscr{A}_{1}\notag\\&&=\left\{\begin{array}{cc}
    \Big(\lambda,\min\big\{\frac{2\kappa'-\kappa}{1-\frac{1}{\lambda}(\kappa-\kappa')}, \kappa'+\frac{\alpha}{\theta}(\kappa-\kappa')\big\}\Big)   &   \kappa-\kappa'<\lambda<\kappa'  \\
    \Big(\max\big\{\lambda,\frac{2\kappa'-\kappa}{1-\frac{1}{\lambda}(\kappa-\kappa')}\big\},\kappa'+\frac{\alpha}{\theta}(\kappa-\kappa')\Big)  & \kappa\ge2\kappa',\,\,\kappa-\kappa'>\lambda, \\
(\lambda,\kappa'+\frac{\alpha}{\theta}(\kappa-\kappa'))& \kappa\le2\kappa',\,\,\,\kappa-\kappa'>\lambda\\
\emptyset &\textrm{otherwise}
\end{array}\right..
\end{eqnarray}
The right side of (\ref{sard_1}) is nonnegative due to (\ref{rent_11}).  Therefore,
\begin{equation}
\label{soar_1}
\kappa-\kappa'<\lambda\implies\frac{2\kappa'-\kappa}{1-\frac{1}{\lambda}(\kappa-\kappa')}\leq\kappa'+\frac{\alpha}{\theta}(\kappa-\kappa')
\end{equation}
and
\begin{equation}
\label{soar_2}
\kappa-\kappa'>\lambda\implies\frac{2\kappa'-\kappa}{1-\frac{1}{\lambda}(\kappa-\kappa')}\geq\kappa'+\frac{\alpha}{\theta}(\kappa-\kappa').
\end{equation}
Let us make the following observations: 
\begin{itemize}
  \item By (\ref{soar_1}), the interval in the first row of the definition of $\mathcal{R}_{\mathrm{sym}}\bigcap\mathscr{A}_1$ in (\ref{Sard_1111}) becomes $\big(\lambda,\frac{2\kappa'-\kappa}{1-\frac{1}{\lambda}(\kappa-\kappa')}\big)$.
  \item By (\ref{soar_2}), the interval in the second row of  (\ref{Sard_1111}) is empty.
  \item Combining $\kappa-\kappa'>\lambda$ with (\ref{rent_11}), we get $\lambda>\kappa'$ and hence, $\kappa>2\kappa'$. This shows that the constraints in the third row of (\ref{Sard_1111}) are not compatible with (\ref{rent_11}).
\end{itemize}
Based on these observations, we can simplify (\ref{Sard_1111}) as 
 \begin{eqnarray}
\label{Sard_11}
\widetilde{\mathcal{P}}(1,\theta;\gamma)\bigcap\widetilde{\mathcal{P}}(0,-\alpha;\gamma)\bigcap\mathscr{A}_{1}=\left\{\begin{array}{cc}
    \big(\lambda,\frac{2\kappa'-\kappa}{1-\frac{1}{\lambda}(\kappa-\kappa')}\big)   &  \kappa-\kappa'<\lambda<\kappa'  \\
\emptyset &\textrm{otherwise}
\end{array}\right..
\end{eqnarray}
\end{enumerate} 

     \begin{figure}[t]
\centering
\subfigure[]{
\includegraphics[scale=0.67]{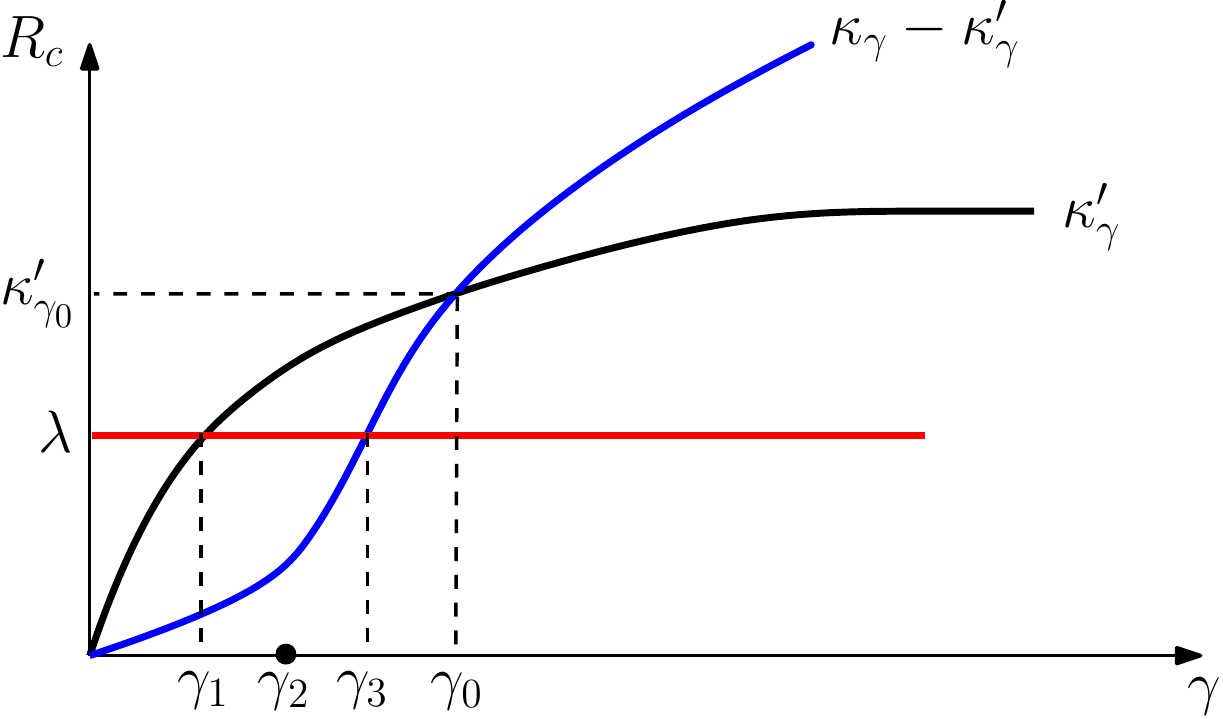}
\label{ref1}
}
\subfigure[]{
\includegraphics[scale=0.67]{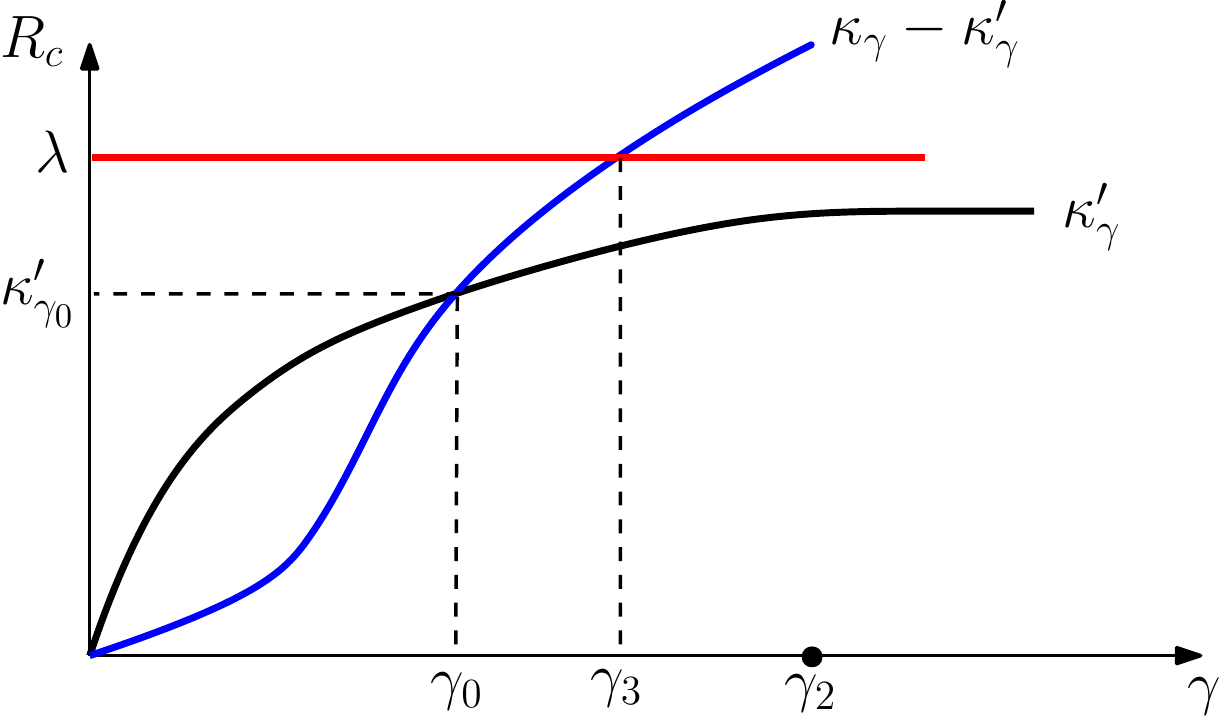}
\label{ref2}
}
\label{fig:subfigureExample}
\caption[Optional caption for list of figures]{Plots of $\psi_\gamma$ (black curve) and $\phi_{\gamma}-\psi_{\gamma}$ (blue curve) as function of $\gamma$. The constant level $\lambda$ is shown in red. The value of $\gamma$ for which $\psi_{\gamma}=\phi_{\gamma}-\psi_{\gamma}$ is denoted by $\gamma_0$. Panel~(a) shows a scenario where $\lambda<\psi_{\gamma_0}$. The solutions for $\gamma$ in $\psi_{\gamma}=\lambda$ and $\phi_{\gamma}-\psi_{\gamma}=\lambda$ are shown by $\gamma_1$ and $\gamma_3$, respectively. It is only for values of $\gamma$ in the interval $(\gamma_1,\gamma_3)$ that $\phi_{\gamma}-\psi_{\gamma}<\lambda<\psi_{\gamma}$. Moreover, the equation $\psi_\gamma+\frac{\alpha}{\theta}\phi_{\gamma}-\psi_{\gamma}=(1+\frac{\alpha}{\theta})$ is solved for $\gamma=\gamma_2$ where $\gamma_1<\gamma_2<\gamma_3$. Panel~(b) shows a scenario where $\lambda>\psi_{\gamma_0}$.  The conditions $\phi_{\gamma}-\psi_{\gamma}<\lambda<\psi_{\gamma}$ no longer hold.   }
\label{pict_101}
\end{figure}  

Define $\gamma_0$ as the value of $\gamma$ that solves $\kappa'=\kappa-\kappa'$. We consider two cases:
\begin{enumerate}
  \item Let $\lambda<\psi_{\gamma_0}$. This situation is shown in panel~(a) of Fig.~\ref{pict_101}. The solutions for $\gamma$ in $\psi_{\gamma}=\lambda$ and $\phi_{\gamma}-\psi_{\gamma}=\lambda$ are shown by $\gamma_1$ and $\gamma_3$, respectively. It is only for values of $\gamma$ in the interval $(\gamma_1,\gamma_3)$ that $\phi_{\gamma}-\psi_{\gamma}<\lambda<\psi_{\gamma}$. Moreover, the equation $\psi_\gamma+\frac{\alpha}{\theta}(\phi_{\gamma}-\psi_{\gamma})=(1+\frac{\alpha}{\theta})\lambda$ is solved for $\gamma=\gamma_2$ where $\gamma_1<\gamma_2<\gamma_3$.\footnote{if $\gamma<\gamma_1$, then $\kappa'$ and $\kappa-\kappa'$ are both smaller than $\lambda$ and if $\gamma>\gamma_3$, then $\kappa'$ and $\kappa-\kappa'$ are both larger than $\lambda$. In either case, $\psi_\gamma+\frac{\alpha}{\theta}(\phi_{\gamma}-\psi_{\gamma})\neq(1+\frac{\alpha}{\theta})\lambda$. } We have
  \begin{eqnarray}
  \label{hi_444}
\widetilde{\mathcal{P}}(0,-\alpha;\gamma)\bigcap\mathscr{C}_{1}=\left\{\begin{array}{cc}
    \emptyset  & \gamma\leq\gamma_2   \\
    \big(\big(1+\frac{\alpha}{\theta}\big)\lambda,\kappa'+\frac{\alpha}{\theta}(\kappa-\kappa')\big)  &   \gamma>\gamma_2
\end{array}\right.
\end{eqnarray}
and
 \begin{eqnarray}
 \label{hi_555}
\widetilde{\mathcal{P}}(1,\theta;\gamma)\bigcap\widetilde{\mathcal{P}}(0,-\alpha;\gamma)\bigcap\mathscr{A}_{1}=\left\{\begin{array}{cc}
    \emptyset  & \gamma\leq\gamma_1   \\
    \big(\lambda,\frac{2\kappa'-\kappa}{1-\frac{1}{\lambda}(\kappa-\kappa')}\big)  &   \gamma_1<\gamma<\gamma_2\\
   \big(\lambda,(1+\frac{\alpha}{\theta})\lambda\big) & \gamma>\gamma_2
\end{array}\right.
\end{eqnarray}
  \item Let $\lambda\geq \psi_{\gamma_0}$. This situation is shown in panel~(b) of Fig.~\ref{pict_101}. The conditions $\kappa-\kappa'<\lambda<\kappa'$ no longer hold. Moreover,  $\gamma_3<\gamma_2<\gamma_1$ where we let $\gamma_1=\infty$ if it does not exist. The set $\widetilde{\mathcal{P}}(0,-\alpha;\gamma)\bigcap\mathscr{C}_{1}$ is given by (\ref{hi_444}) and
  \begin{eqnarray}
  \label{hi_222}
\widetilde{\mathcal{P}}(1,\theta;\gamma)\bigcap\widetilde{\mathcal{P}}(0,-\alpha;\gamma)\bigcap\mathscr{A}_{1}=\left\{\begin{array}{cc}
    \emptyset  & \gamma\leq\gamma_2   \\
   \big(\frac{2\kappa'-\kappa}{1-\frac{1}{\lambda}(\kappa-\kappa')},(1+\frac{\alpha}{\theta})\lambda\big) & \gamma_2<\gamma\leq\gamma_1\\
   \big(\lambda,(1+\frac{\alpha}{\theta})\lambda\big)& \gamma> \gamma_1
\end{array}\right..
\end{eqnarray}
\end{enumerate}
By (\ref{huber_0}), (\ref{huber_1}) and (\ref{huber_2}), 
\begin{eqnarray}
\mathcal{R}_{\mathrm{sym}}=\bigcup_{\gamma\geq 0}\bigg(\mathcal{U}(\gamma)\bigcap\Big(\big(\widetilde{\mathcal{P}}(1,\theta;\gamma)\bigcap\widetilde{\mathcal{P}}(0,-\alpha;\gamma)\bigcap\mathscr{A}_{1}\big)\bigcup\big(\widetilde{\mathcal{P}}(0,-\alpha;\gamma)\bigcap\mathscr{C}_{1}\big)\Big)\bigg).
\end{eqnarray}
Using the expressions in (\ref{hi_444}), (\ref{hi_555}) and (\ref{hi_222}), 
\begin{eqnarray}
\big(\widetilde{\mathcal{P}}(1,\theta;\gamma)\bigcap\widetilde{\mathcal{P}}(0,-\alpha;\gamma)\bigcap\mathscr{A}_{1}\big)\bigcup\big(\widetilde{\mathcal{P}}(0,-\alpha;\gamma)\bigcap\mathscr{C}_{1}\big)=\big(f(\gamma),g(\gamma)\big),
\end{eqnarray}
where $f$ and $g$ are defined in (\ref{f_1111}) and (\ref{g_1111}) under the assumption that $\lambda<\psi_{\gamma_0}$ and $\lambda\ge\psi_{\gamma_0}$, respectively.


\end{document}